\font\tengoth=eufm10 at 10pt
\font\sevengoth=eufm7 at 6pt
\newcommand{\mlabel}[1]{\marginpar{#1}\label{#1}}
\newcommand{\g}{{\mathfrak g}}
\newcommand{\fa}{{\mathfrak a}}
\newcommand{\fb}{{\mathfrak b}}
\newcommand{\fh}{{\mathfrak h}}
\newcommand{\fk}{{\mathfrak k}}
\newcommand{\fq}{{\mathfrak q}}
\newcommand{\fp}{{\mathfrak p}}
\newcommand{\fz}{{\mathfrak z}}
\renewcommand{\:}{\colon}
\newcommand{\1}{\mathbf{1}}
\newcommand{\0}{{\bf 0}}
\newcommand{\cA}{\mathcal{A}}
\newcommand{\cD}{\mathcal{D}}
\newcommand{\cE}{\mathcal{E}}
\newcommand{\cH}{\mathcal{H}}
\newcommand{\cL}{\mathcal{L}}
\newcommand{\cO}{\mathcal{O}}
\newcommand{\cS}{\mathcal{S}}
\newcommand{\cT}{\mathcal{T}}
\newcommand{\cW}{\mathcal{W}}
\newcommand\bx{{\bf{x}}}
\newcommand\by{{\bf{y}}}
\newcommand\bz{{\bf{z}}}
\newcommand\bq{{\bf{q}}}
\newcommand{\eset}{\emptyset}
\newcommand{\dd}{{\tt d}}
\newcommand{\subeq}{\subseteq}
\newcommand{\supeq}{\supseteq}
\newcommand{\into}{\hookrightarrow}
\newcommand{\eps}{\varepsilon}
\newcommand{\shalf}{{\textstyle{\frac{1}{2}}}}
\newcommand{\N}{{\mathbb N}}
\newcommand{\Z}{{\mathbb Z}}
\newcommand{\R}{{\mathbb R}}
\newcommand{\C}{{\mathbb C}}
\newcommand{\bE}{{\mathbb E}}
\newcommand{\bH}{{\mathbb H}}
\newcommand{\bS}{{\mathbb S}}
\renewcommand{\hat}{\widehat}
\renewcommand{\tilde}{\widetilde}
\newcommand{\GL}{\mathop{{\rm GL}}\nolimits}
\newcommand{\SL}{\mathop{{\rm SL}}\nolimits}
\newcommand{\SO}{\mathop{{\rm SO}}\nolimits}
\newcommand{\OO}{\mathop{\rm O{}}\nolimits}
\newcommand{\U}{\mathop{\rm U{}}\nolimits}
\newcommand{\Sym}{\mathop{{\rm Sym}}\nolimits}
\newcommand{\fsl} {\mathop{{\mathfrak{sl} }}\nolimits}
\newcommand{\so}  {\mathop{{\mathfrak{so} }}\nolimits}
\newcommand{\Exp}{\mathop{{\rm Exp}}\nolimits}
\newcommand{\Fix}{\mathop{{\rm Fix}}\nolimits}
\newcommand{\ad}{\mathop{{\rm ad}}\nolimits}
\newcommand{\Ad}{\mathop{{\rm Ad}}\nolimits}
\renewcommand{\Re}{\mathop{{\rm Re}}\nolimits}
\renewcommand{\Im}{\mathop{{\rm Im}}\nolimits}
\renewcommand{\>}{>\!\!>}
\newcommand{\Aut}{\mathop{{\rm Aut}}\nolimits}
\newcommand{\diag}{\mathop{{\rm diag}}\nolimits}
\newcommand{\End}{\mathop{{\rm End}}\nolimits}
\newcommand{\id}{\mathop{{\rm id}}\nolimits}
\newcommand{\supp}{\mathop{{\rm supp}}\nolimits}
\newcommand{\conv}{\mathop{{\rm conv}}\nolimits}
\newcommand{\Spann}{\mathop{{\rm span}}\nolimits}
\newcommand{\dS}{\mathop{{\rm dS}}\nolimits}
\newcommand{\Rarrow}{\Rightarrow}
\newcommand{\nin}{\noindent} 
\newcommand{\oline}{\overline}
\newcommand{\la}{\langle}
\newcommand{\ra}{\rangle}
\newcommand{\up}{\mathop{\uparrow}}
\newcommand{\res}{\vert}
\newcommand{\Spin}{{\rm Spin}}
\newcommand{\ssssarr}{\hbox to 15pt{\rightarrowfill}}
\newcommand{\sssarr}{\hbox to 20pt{\rightarrowfill}}
\newcommand{\ssarr}{\hbox to 30pt{\rightarrowfill}}
\newcommand{\sarr}{\hbox to 40pt{\rightarrowfill}}
\newcommand{\arr}{\hbox to 60pt{\rightarrowfill}}
\newcommand{\sssslarr}{\hbox to 15pt{\leftarrowfill}}
\newcommand{\ssslarr}{\hbox to 20pt{\leftarrowfill}}
\newcommand{\sslarr}{\hbox to 30pt{\leftarrowfill}}
\newcommand{\slarr}{\hbox to 40pt{\leftarrowfill}}
\newcommand{\larr}{\hbox to 60pt{\leftarrowfill}}
\newcommand{\Arr}{\hbox to 80pt{\rightarrowfill}}
\newcommand{\mapright}[1]{\smash{\mathop{\arr}\limits^{#1}}}
\def\theoremname{Theorem}
\def\propositionname{Proposition}
\def\corollaryname{Corollary}
\def\lemmaname{Lemma}
\def\remarkname{Remark}
\def\conjecturename{Conjecture} 
\def\definitionname{Definition}
\def\exercisename{Exercise}
\def\examplename{Example}
\def\examplesname{Examples}
\def\problemname{Problem}
\def\problemsname{Problems}
\def\bemerkname{Bemerkung}
\def\aufgname{Aufgabe}
\def\@thmcounter#1{\noexpand\arabic{#1}}
\def\@thmcountersep{}
\def\@begintheorem#1#2{\it \trivlist \item[\hskip 
\labelsep{\bf #1\ #2.\quad}]}
\def\@opargbegintheorem#1#2#3{\it \trivlist
      \item[\hskip \labelsep{\bf #1\ #2.\quad{\rm #3}}]}
\newtheorem{theor}{\theoremname}[section]
\newtheorem{propo}[theor]{\propositionname}
\newtheorem{coro}[theor]{\corollaryname}
\newtheorem{lemm}[theor]{\lemmaname}
\newenvironment{thm}{\begin{theor}\it}{\end{theor}}
\newenvironment{theorem}{\begin{theor}\it}{\end{theor}}
\newenvironment{prop}{\begin{propo}\it}{\end{propo}}
\newenvironment{cor}{\begin{coro}\it}{\end{coro}}
\newenvironment{corollary}{\begin{coro}\it}{\end{coro}}
\newenvironment{lem}{\begin{lemm}\it}{\end{lemm}}
\newenvironment{lemma}{\begin{lemm}\it}{\end{lemm}}
\newtheorem{rema}[theor]{\remarkname}
\newenvironment{remark}{\begin{rema}\rm}{\end{rema}}
\newenvironment{rem}{\begin{rema}\rm}{\end{rema}}
\newtheorem{stepnow}[theor]{}
\newtheorem{defin}[theor]{\definitionname} 
\newtheorem{exerc}{\exercisename}[section]
\newtheorem{exa}[theor]{\examplename}
\newenvironment{ex}{\begin{exa}\rm}{\end{exa}}
\newtheorem{exas}[theor]{\examplesname}
\newenvironment{exs}{\begin{exas}\rm}{\end{exas}}
\newtheorem{conj}[theor]{\conjecturename}
\newtheorem{pro}[theor]{\problemname}
\newtheorem{prs}[theor]{\problemsname}
\newenvironment{prf}{\begin{proof}}{\end{proof}}
\newcommand{\pmat}[1]{\begin{pmatrix} #1 \end{pmatrix}}
\numberwithin{equation}{section}
\newcommand{\cref}{\ref}
 \newcommand{\oQ}{\overline{Q}}
\newcommand\be{{\bf{e}}}
\newcommand\bw{{\bf{w}}}
\newcommand\cut{{\rm{cut}}}
\renewcommand\up{{\uparrow}}
\newcommand{\sH}{{\sf H}}
\newcommand{\sV}{{\tt V}}
\newcommand{\sE}{{\tt E}}
\renewcommand{\phi}{\varphi}
\newcommand{\hgf}{{}_2F_1}
\newcommand{\hgfabc}{{}_2F_1(\alpha,\beta;\gamma;z)}
\newcommand{\drho}{{\textstyle\frac{d-1}{2}}}
\newcommand{\WdS}{W_{\dS^d}^+}
\renewcommand\mlabel{\label} 
\begin{document}

\title{Realization of unitary representations\\  of the  Lorentz group
  on de Sitter space} 
\author{Jan Frahm, Karl-Hermann Neeb, Gestur \'Olafsson}

\maketitle

\begin{center}
 {\bf Dedicated to the memory of Gerrit van Dijk} \\[5mm]
\end{center}

\begin{abstract}
{This paper builds on our previous work in which we showed
      that, for all connected semisimple linear Lie groups $G$
      acting on a non-compactly causal symmetric space $M = G/H$,
      every irreducible unitary representation of $G$ can be realized
      by boundary value maps of holomorphic extensions
      in distributional sections  of a vector bundle over $M$. In the present paper
      we discuss this procedure for the connected Lorentz group
      $G = \SO_{1,d}(\R)_e$ acting on de Sitter space $M = \dS^d$. 
      We show in particular that the previously constructed
      nets of real subspaces satisfy the locality condition.
      Following ideas of Bros and Moschella from the 1990's, we
      show that the matrix-valued spherical function that corresponds to
      our extension process extends analytically to a large domain
      $G_\C^{\rm cut}$ in the complexified group $G_\C = \SO_{1,d}(\C)$,
      which for $d = 1$ specializes to the complex cut plane
      $\C \setminus (-\infty, 0]$. A number of special situations
      is discussed specifically: (a) The case $d = 1$, which closely
      corresponds to standard subspaces in Hilbert spaces,
      (b) the case of scalar-valued functions, which for $d > 2$ is
      the case of spherical representations, for which we also describe
      the jump singularities of the holomorphic extensions on the
      cut in de Sitter space, (c) the case $d = 3$, where
      we obtain rather explicit formulas for the matrix-valued
      spherical functions.
      }
\end{abstract}

\tableofcontents 

\section{Introduction} 
\mlabel{sec:1}

{\bf Unitary representations.}
To explain the context of this work, we first recall some basic
structures related to (continuous) unitary representations
$U \: G \to \U(\cH)$ of a Lie group $G$ with Lie algebra~$\g$
on a complex Hilbert space $\cH$, mostly denoted~$(U,\cH)$. We
write
\[ U^v \: G \to \cH, \quad g \mapsto U(g) v \]
for its continuous orbit maps, 
$U^{v,w}(g) := \la v, U(g)w\ra$ for its matrix coefficients,
and $\cH^\infty \subeq \cH$ for the subspace of vectors with smooth orbit
maps (the smooth vectors). The latter subspace is dense
and carries a natural Fr\'echet topology 
(\cite{Go69, Ne10}). 
The space $\cH^{-\infty}$ of continuous antilinear functionals
$\cH^\infty \to \C$ ({\it distribution vectors}) 
contains in particular the functions 
$\la \cdot, v \ra$, $v \in \cH$.
Here we follow the convention common in physics
that $\la \cdot, \cdot \ra$ is linear in the second argument. 
We thus obtain complex linear embeddings 
$\cH^\infty \into \cH \into \cH^{-\infty}$. The group $G$ acts on all three spaces 
by representations denoted $U^\infty, U$ and $U^{-\infty}$, respectively.
All these representations can be integrated to the
convolution algebra $C^\infty_c(G) := C^\infty_c(G,\C)$ of
test functions, for instance
\[ U^{-\infty}(\phi) := \int_G \phi(g)U^{-\infty}(g)\, dg,\] 
where $dg$ stands for a left Haar measure on $G$. 
We write $\partial U(x)$ for the infinitesimal generator of the one-parameter
group $U(\exp t x) = e^{t \partial U(x)}$, $t \in \R$. 

{\bf Analytic continuation of orbit maps and matrix coefficients.}  
Let us assume, for simplicity, that $G$ is contained in its universal
complexification $G \into G_\C$. If an orbit map $U^v$ is analytic,
it extends to a holomorphic map $\hat U^v$
on a $G$-left invariant domain in $G_\C$, and likewise 
the matrix coefficients $U^{v,w}$ extend to domains in $G_\C$,
but these are typically not $G$-invariant.
The situation is most transparent for $G = \R$, corresponding to
unitary one-parameter groups $(U_t)_{t \in \R}$.
Then there exists a unique Borel spectral
measure $P$ on $\R$ with $U_t = \int_\R e^{itp}\, dP(p)$ and 
each $v \in \cH$ specifies by
$P^{v} := \la v, P(\cdot) v \ra$ a finite positive measure on $\R$, for which
$U^{v,v}(t) = \int_\R e^{itp} \, dP^v(p)$ is the Fourier transform
of $P^{v}$. In terms of $P^v$, the maximal domain
for analytic continuation of the orbit map $U^v$ is the strip 
\[ \cD_v := \Big\{ z = x + i y\in \C \: \int_\R e^{-2 y p}\, dP^v(p) < \infty
  \Big\},\]
and the maximal domain for the extension of the matrix coefficient $U^{v,v}$
is 
\[ \cD_{v,v} := 2 \cD_v
  = \Big\{ z = x + i y\in \C \: \int_\R e^{- yp}\, dP^v(p) < \infty \Big\}
  \quad \mbox{ with } \quad
  \hat U^{v,v}(z) = \int_\R e^{izp}\, dP^v(p).\]
The situations for which the upper and lower boundary values of $\hat U^{v,v}$
exist as tempered distributions
are of particular interest (see Section~\ref{sec:d=1} for a detailed
discussion).

{\bf Specializing to the Lorentz group.}
In this paper we shall study the corresponding problems
for the connected Lorentz group
$G = \SO_{1,d}(\R)_e$, which for $d = 1$, specializes to $G  = \SO_{1,1}(\R)_e \cong
\R$. Our starting point are the following results from \cite{FNO23}:
Consider the involution 
\[ \tau_h(x_0,\cdots, x_d) = (-x_0, -x_1,x_2, \ldots, x_d) \]
in $\SO_{1,d}(\R)$ and let
$(U,\cH)$ be an antiunitary representation of the extended group
$G_{\tau_h} := \SO_{1,d}(\R) \cong  G \rtimes \{\1, \tau_h\}$,
which is a finite sum of irreducible ones. Fix a finite-dimensional 
subspace $\cE \subeq \cH$ that is invariant under the maximal
compact subgroup $K := \SO_d(\R)$ and the conjugation $J := U(\tau_h)$. 
The  Kr\"otz--Stanton Extension Theorem (\cite[Thm.~3.1]{KSt04})
implies that the orbit map $U^v$ extends to a domain 
$\Xi_{G_\C} \subeq G_\C$ independent of $v$, and one of the main results
in \cite{FNO23} is the observation that, for the generator 
\[ h.(x_0,x_1,\ldots, x_d) = (x_1, x_0, 0,\ldots, 0) \]
 of the Lorentz boost in the $(x_0,x_1)$-plane, the limits
 \[  \beta^\pm(v)
   := \lim_{t \to \mp\pi/2} \hat U^v(\exp(ith)) 
   = \lim_{t \to \mp\pi/2} e^{it \partial U(h)}v \] 
exist in the space $\cH^{-\infty}$ of distribution vectors 
for the weak-$*$-topology (cf.\ Subsection~\ref{subsec:2.6}). The real subspace 
\begin{equation}
  \label{eq:EHpm}
  \sE_H := \beta^+(\cE^J) \subeq \cH^{-\infty}
\end{equation}
is finite-dimensional and invariant under the action of the subgroup
$H := G_{\be_1} \cong \SO_{1,d-1}(\R)_e$, so that it defines a real vector bundle
\[ 
\bE_H := G \times_H \sE_H \to M := G/H \cong \dS^d,
\]
where $\dS^d = \{ (x_0,\bx) \in \R^{1,d} \: x_0^2 - \bx^2 = - 1\}$ is
$d$-dimensional de Sitter space, realized as a Lorentzian
submanifold of the Minkowski space~$\R^{1,d}$. If $(U,\cH)$ is irreducible,
this leads to a natural realization of the representation
in the space $C^{-\infty}(G/H, \bE_H)$ of distributional
sections of $\bE_H$ (see Subsection~\ref{subsec:real} for details).
It follows in particular that {\bf every}
irreducible antiunitary representation of $G$ has a
natural realization in a vector bundle over de Sitter space,
which is a Lorentzian symmetric space, hence a natural model of a
curved space-time in Quantum Field Theory (\cite{Wa94}, \cite{AA20},
\cite{BJM23},\cite{Br90, BM96, Mo95}).

The motivation for this work was to connected the general results
from \cite{FNO23} on analytic continuation of orbit maps and their
boundary values with Algebraic Quantum Field Theory (AQFT)
on de Sitter space and in particular with the work
\cite{BM96} of J.~Bros and U.~Moschella
and the causal kernels studied by J.~Faraut, J.~Hilgert 
and the third author in \cite{FHO95}.

{\bf The case $d = 1$} corresponding to antiunitary representations of 
  $\SO_{1,1}(\R) \cong \R^\times \cong \R \rtimes \{\pm 1 \}$ is discussed in
  some detail in Section~\ref{sec:d=1}, where we specify,
  without restrictions on $U$,  a subspace
  $\cH^J_{\rm temp}\subeq \cH^J$ of those vectors on which $\beta^+$ induces a
  bijection onto the space $\cH^{-\infty}_{\rm KMS}$
  of those distribution vectors $\eta$ satisfying
  the {\it KMS (Kubo--Martin--Schwinger) condition} that the orbit map
$U^\eta(t) := U^{-\infty}(t)\eta$ has a continuous extension
$\hat U^\eta \colon \oline{\cS_\pi} \to \cH^{-\infty}$
to the closed strip $\oline{\cS_\pi}$,
which is holomorphic on the interior and satisfies 
$\hat U^\eta(\pi i) = J \eta$. 
  On the geometric side, the case
  $d = 1$ is particularly interesting because
  $\dS^1_\C \cong \C^\times$, where $\dS^1 \cong \R_+$,
  $\bH^1 \cong i \R_+$ and its crown domain is $\Xi_+ = \C_+$.
  Therefore the matrix coefficient $U^{v,v}$ of $v \in \cH^J_{\rm temp}$
 is a holomorphic function on the cut plane $\C \setminus [0,\infty)$
  with two sided tempered boundary values on $\R_+$, defining a
  ``jump singularity''. Although somewhat degenerate, this case displays
  already several key features of the general situations, where we also
  study boundary values on de Sitter space from two domains $\Xi_\pm$.
  For a detailed discussion of antiunitary representations
  of $\R \times \{\pm 1\}$ from the perspective of KMS conditions, we refer to
  \cite{NO19}.

  {\bf Standard subspaces.} It is a central fact 
  that the antiunitary representations of $\SO_{1,1}(\R) \cong \R \times
  \{ \pm 1\}$ are in  one-to-one correspondence
  with {\it standard subspaces} $\sV \subeq \cH$.
  These are closed real subspaces for which $\sV + i \sV$ is dense in $\cH$
  ($\sV$ is {\it cyclic}) 
  and $\sV \cap i \sV = \{0\}$ ($\sV$ is {\it separating}). Then
  the complex conjugation $T_\sV$ on $\sV + i \sV$ has a polar decomposition
$T_\sV = J_\sV \Delta_\sV^{1/2}$, where
$J_\sV$ is a conjugation (an antilinear involutive isometry)
and $\Delta_\sV$ is a positive selfadjoint operator
satisfying $J_\sV \Delta_\sV J_\sV = \Delta_\sV^{-1}$.
The unitary one-parameter group $U_t := \Delta_\sV^{it}$ 
(the {\it modular group of $\sV$}) preserves the subspace $\sV$
and commutes with $J_\sV$, which establishes the correspondence with
antiunitary representations of $\R\times \{\pm 1\}$. 
We refer to \cite{Lo08, NO17} for more on standard subspaces.

{\bf Nets of real subspaces.} The notion of a standard subspace
brings us to nets of real subspaces, a fundamental
structure in AQFT.
In particular, they play a {central} role in the recent study of
entropy and energy inequalities (see  \cite{MTW22,Lo20, CLRR22, CLR20} and references therein),
new constructions in AQFT (\cite{MN22,LL15,LMPR19,MMTS21}),
and in a very large family of models (see references in \cite{DM20}). 
A family $\sH(\cO)$ of closed real subspaces of $\cH$, indexed by open subsets
$\cO \subeq M = \dS^d$ is called a {\it net of real subspaces}.
For such a net we consider the following properties
(with respect to a unitary representation $(U,\cH)$ of $G$): 
\begin{itemize}
\item[(Iso)] {\bf Isotony:} $\cO_1 \subeq \cO_2$ 
implies $\sH(\cO_1) \subeq \sH(\cO_2)$ 
\item[(Cov)] {\bf Covariance:} $U(g)\sH(\cO) = \sH(g\cO)$ for $g \in G$. 
\item[(RS)] {\bf Reeh--Schlieder property:} 
$\sH(\cO)$ is cyclic  if $\cO \not=\eset$. 
\item[(BW)] {\bf Bisognano--Wichmann property:}
  For the wedge region $\WdS  = \{ x \in \dS^d \: x_1 > |x_0|\}$, 
  the space $\sV := \sH(W_{\dS^d})$ is standard with
  $\Delta_\sV = e^{2\pi i \partial U(h)}$ and $J_\sV = U(\tau_h)$.
\item[(Loc)] {\bf Locality:} If $\cO_1$ and $\cO_2$ are spacelike
  separated, then $\sH(\cO_1) \subeq \sH(\cO_2)'$,
  where, for a real subspace $\sH\subeq\cH$, we write
  $\sH' = \{ w \in \cH \:  \Im \la \sH, w \ra =\{0\}\}$
  for its  symplectic orthogonal space.
\end{itemize}

The main result of \cite{FNO23} is the following:
For an open subset $\cO \subeq \dS^d$,
let $\sH^{dS^d}_{\sE_H}(\cO) \subeq \cH$ be the closed
real subspace generated by 
  \[ U^{-\infty}(\xi)\sE_H, \quad \xi \in C^\infty_c(q^{-1}(\cO),\R), \quad
    \mbox{ where } \quad q \:  G \to G/H \cong \dS^d, \quad g \mapsto gH,\]
  then we obtain (for every irreducible antiunitary representation of
  $\SO_{1,d}(\R)$) a net satisfying (Iso), (Cov), (RS) and (BW).
 One of the main results of this paper is that these nets also satisfy
  the locality condition~(Loc) (Section~\ref{sec:6}).
  This result is based on an inspection of the support properties
  of the distributions on $G$ obtained by the boundary values 
  of the matrix coefficients $U^{v,v}$, $v \in \cH^J_{\rm temp}$, on
  the submanifold
  \[ G \cong \exp\Big(\frac{\pi i}{2}h\Big) G \exp\Big(\frac{\pi i}{2}h\Big)
    \subeq G_\C.\]

  {\bf The generalized spherical function.}
  If $P_\cE \: \cH \to \cE$ is the orthogonal projection, then
  \[ \phi_\cE \: G \to B(\cE),\quad \phi_\cE(g) := P_\cE U(g) P_\cE \]
  is called the corresponding {\it generalized spherical function}. It satisfies
  \[ \phi_\cE(k_1 g k_2) = U_\cE(k_1) \phi_\cE(g) U_\cE(k_2) \quad \mbox{
      for } \quad g \in G, k_1, k_2 \in K = \SO_d(\R),\]
  where $U_\cE$ denotes the representation of $K$ on $\cE$.
  Inspired by \cite{BM96}, we describe a rather large  domain
  $G_\C^\cut \subeq G_\C$ to which $\phi_\cE$ extends as a holomorphic
  $B(\cE)$-valued function (as $\cE$ is finite-dimensional,
  this is a matrix-valued function). 
  The terminology $G_\C^\cut$ for the {\it cut domain} comes from the
  case $d = 1$, where $G_\C = \C^\times$ and $G_\C^\cut \cong \C \setminus (-\infty,0]$
  is a cut-plane. The existence of this holomorphic extension
  is our second main result  (Sections~\ref{sec:3} and \ref{sec:4}),
  and boundary values of $\phi_\cE$ 
  are studied in Subsection~\ref{sec:5}.

  {\bf Spherical functions.}
For the case of spherical representations, i.e., where
  $\cE^J = \R v$ for a $K$-fixed vector $v$, our bundle-valued distributions
  simplify to distributions on $\dS^d$. In this case we compare
  the two boundary values on $\dS^d$ from two sides and relate the
  difference of the boundary values to the spherical kernels
  studied in \cite{FHO95}. Here
  the estimates for the spherical functions 
that imply $v \in \cH^J_{\rm temp}$ for a $K$-fixed vector
also follow from the estimates of B.~Kr\"otz and E.~Opdam (\cite[Thm.~7.2]{KO08}). 

For $d = 3$, the Lorentz group acting on $4$-dimensional
  Minkowski space, we obtain explicit formulas for the
  generalized spherical function $\phi_\cE$ in terms of
  hypergeometric functions

The content of the paper is as follows.
In Section~\ref{sec:2} we introduce our 
notation and describe the interface with \cite{FNO23} in some detail.
Sections~\ref{sec:3} provides some geometric background that is used in
Section~\ref{sec:4} to obtain the analytic continuation of $\phi_\cE$ to
$G_\C^{\rm cut}$, and in Subsection~\ref{sec:5} we investigate the
boundary values of the corresponding matrix coefficients.
Locality of the net $\sH^{\dS^d}_{\sE_H}$ is proved in Section~\ref{sec:6}.
Section~\ref{sec:d=1} discusses the case $d = 1$ in some detail.
The case one-dimensional $K$-types is discussed
briefly in Section~\ref{sec:8}, and the important case of
spherical functions is treated in detail in Section~\ref{sec:sphfunc}.
In Section~\ref{sec:msphfunc} the matrix-valued spherical
functions are calculated explicitly for $d = 3$.
We conclude this paper with a short discussion of perspectives
in Section~\ref{sec:pers}.

{\bf Relation to the work of van Dijk.} A theme that makes an appearance in many of van Dijk's articles is analysis on non-Riemannian symmetric spaces of rank one. Together with some of his younger colleagues he was able to make the Plancherel formula for essentially all such spaces explicit. The Plancherel decomposition typically consists of a direct integral of (degenerate) unitary principal series representations and, possibly, an infinite direct sum of discrete series representations. One of the key players in their work are explicit Poisson transforms, providing realizations of the relevant unitary representations in spaces of functions/distributions on the symmetric spaces.\\
In contrast to the work of van Dijk, we focus on one particular family of non-Riemannian symmetric spaces of rank one, the de Sitter spaces. On the other hand, our construction of realizations of unitary representations on spaces of distributions on de Sitter spaces is not restricted to the scalar case, but also extends to distributional sections of vector bundles, thus making it possible to explicitly realize all irreducible unitary representations of the Lorentz group in this way. One of the main motivations for this work was to make this construction as explicit as possible, much in the spirit of van Dijk's explicit Plancherel formulas.

\section{Preliminaries and notations}
\mlabel{sec:2}

In this section we introduce our 
notation and describe the interface with \cite{FNO23} in some detail.

\subsection{Notation} 
\mlabel{subsec:2.1} 
On $V := \R^{1,d}$, $d \geq 1$, we consider the Lorentzian form 
\[ \beta(x,y) := x_0 y_0 - \bx \by = x_0y_0 - x_1y_1 - \cdots -x_dy_d,\]
and call $(V,\beta)$ Minkowski space.
The {\it future light cone} is 
\begin{equation}
  \label{eq:v+}
  V_+ := \{ x = (x_0, \bx) \in V \: x_0 > 0, x_0^2 - \bx^2 > 0\},
\end{equation}
and the two associated complex tube domains  are 
\begin{equation}
  \label{eq:tpm}
  \cT_\pm  = V \pm  i V_+ \subeq V_\C \cong \C^{1+d}.
\end{equation}

We write $G := \SO_{1,d}(\R)^\up = \SO_{1,d}(\R)_e$ for the connected Lorentz group 
acting on $(V,\beta)$ by linear isometries preserving the cone $V_+$,
representing the {\it time-orientation}. Accordingly,
$G$ is called the {\it orthochronous proper Lorentz group}.
The stabilizer $K = G_{\be_0} \cong \SO_d(\R)$ is a maximal compact subgroup
and $H := G_{\be_1} \cong \SO_{1,d-1}(\R)$.
We consider the 
{\it de Sitter space} 
\[ M := \dS^d := \{ x \in V \: \beta(x,x)= -1\} = G.\be_1
  \cong G/H\]
and endow it with the causal structure
specified by the $G$-invariant field of
open cones $V_+(m) := T_m(M) \cap V_+$. Note that
$V_+(g.\be_1) = g.V_+(\be_1)$ for $g \in G$. 
We also note that the Minkowski metric~$\beta$ 
restricts to a Lorentzian metric on $\dS^d$.
We realize the {\it hyperbolic space $\bH^d$} (the Riemannian symmetric space of
$G$) as
\[ \bH^d := \{ x \in i V \: \beta(x,x)= -1\} = G.i\be_0
  \cong G/K.\]
Sometimes we shall also write $\bH^d_\pm := G.(\pm i \be_0)$ to distinguish
the two copies of $\bH^d$ in $iV$.

Elements of $\C^{1+d}$ are denoted $z = (z_0,\bz)$ with $z_0 \in \C$ and
$\bz \in \C^d$. The form $\beta$ extends in a complex bilinear way
to $V_\C = \C^{1+d}$ by
\[ \beta(z,w) = z_0 w_0 - \bz \bw = z_0w_0 - z_1w_1 - \cdots - z_dw_d.\]
The complex linear isometry group of $(V_\C,\beta)$ is
\[ G_\C = \SO_{1,d}(\R)_\C \cong \SO_{1+d}(\C).\]
If $\oline z = (\oline{z_0}, \ldots, \oline{z_d})$, then
$\beta(z,\oline w)$ 
defines a sesquilinear form on $V_\C \times V_\C$, which is invariant under the
action of $G_\C$ by $g.(z,w) := (gz, \oline{g}w)$:
\begin{equation}\label{eq:beta-c-inv}
  \beta(gz, \oline{\oline{g}w}) = \beta(z,\oline w). 
\end{equation}

De Sitter space and hyperbolic space have a common ``complexification'' 
\[ \dS^d_\C := \{ z \in \C^{1+d} \: z_0^2 - \bz^2 = - 1\}
  = G_\C.i\be_0 = G_\C.\be_1\]
with
\[ \dS^d_\C \cap \R^{1,d} = \dS^d \quad \mbox{ and } \quad 
 \dS^d_\C \cap i\R^{1,d} = \bH^d_+\, \dot\cup\, \bH^d_-.\] 
It contains the two {\it crown domains}
\begin{equation}
  \label{eq:xipm}
  \Xi_\pm := \dS^d_\C \cap (V \pm i V_+) = \dS^d_\C \cap \cT_\pm,
\end{equation}
which can be identified with the crown domains of the 
Riemannian symmetric spaces $\bH^d_\pm$
(\cite[Thm.~6.5, App.~D]{NO23} and
\eqref{eq:crown} below). 
In \cite{BM96}, the domains $\Xi_\pm$ are called {\it forward and backward tube}.
Both domains contain de Sitter space $\dS^d$ in their boundary. 

\subsection{The modular flow}

The standard boost vector field $X_h(v) = hv$ is defined by
$h \in \so_{1,d}(\R)$, given by 
\begin{equation}
  \label{eq:h}
hx = (x_1, x_0, 0,\ldots, 0).
\end{equation}
It generates the flow 
\[ \alpha^h_t(x) = e^{th} x 
= (\cosh t \cdot x_0 + \sinh t \cdot x_1, 
\cosh t \cdot x_1 + \sinh t \cdot x_0, x_2, \ldots, x_d)\]
that extends to a holomorphic $2\pi i$-periodic flow on $\dS^d_\C$.

Note that $\R h \cong \so_{1,1}(\R)$, acting in the first two coordinates, 
and that $\exp(\C h) \cong \SO_{1,1}(\C)$,
where 
\begin{equation}
  \label{eq:so1c}
\SO_{1,1}(\C) = \Big\{\pmat{ a & b \\ b & a} \: a^2 - b^2 = 1, a,b
  \in \C \Big\} \to (\C^\times,\cdot), \quad \pmat{ a & b \\ b & a} \to a + b
\end{equation}
is a biholomorphic homomorphism mapping $\exp(zh)$ to $e^z$.
As the orbit map $\SO_{1,1}(\C) \to \dS^1_\C, g \mapsto g.\be_1$
is biholomorphic, we obtain a biholomorphic
map
\begin{equation}
  \label{eq:c-coord}
  \dS^1_\C \to \C^\times, \quad (z_0,z_1) \mapsto z_0 + z_1
  \quad \mbox{ with } \quad  \R_\pm \to \dS^1_\pm
  \quad \mbox{ and } \quad \C_\pm \to \Xi_\pm.
\end{equation}
So we consider $\dS^1_\C$ as obtained from gluing
the two domains $\Xi_\pm$ along their boundaries.

The  involution 
\begin{equation}
  \label{eq:tauh}
  \tau_h(x) := e^{\pi i h} x= (-x_0, -x_1, x_2, \ldots, x_d),
\end{equation}
extends to an antilinear involution
$\oline\tau_h$ on $V_\C$.
For $d = 1$, the holomorphic modular flow yields for the base point
$\be_1$ the complex exponential map
\begin{equation}
  \label{eq:sinh-cosh}
 z \mapsto \alpha^h_z(\be_1)
  = \sinh(z) \be_0 + \cosh(z) \be_1, \quad \C \to \dS^1_\C \cong
  \C^\times
\end{equation}
(cf.\ Section~\ref{sec:d=1}). 

We also consider the ``Wick rotation'' 
\begin{equation}
  \label{eq:kappah}
 \kappa_h(x) = e^{-\frac{\pi i}{2} h} x
  = (-i x_1, -i x_0, x_2, \ldots, x_d) \quad \mbox{ satisfying } \quad
  \kappa_h^2 = \tau_h \quad \mbox{ and } \quad \kappa_h(i\be_0) = \be_1.
\end{equation}

As $\dS^d$ is $\alpha^h$-invariant, 
the vector field $X_h$ is tangential to $\dS^d$, 
and \cite[Prop.~D.9]{NO23} implies that 
its {\it positivity domain} is given by the {\it wedge region} 
\begin{equation}
  \label{eq:wdef}
  \WdS  
:= \{ x \in \dS^d \:  X_h(x) \in V_+(m) \} 
= \{ x \in \dS^d \: x_1 > |x_0|\}.
\end{equation}
The centralizer of $h$ in $G$ is the connected subgroup 
\begin{equation} \label{eq:gh-lorentz}
G^h = \exp(\R h) \SO_{d-1}(\R) \cong 
\SO_{1,1}(\R)^\up \times \SO_{d-1}(\R) 
\end{equation}
(cf.\ \cite[Lemma~4.12]{NO17}). This subgroup leaves $\WdS$ invariant,
whereas  $\tau_h(\WdS ) = -\WdS$  is the positivity domain
of the vector field $-X_h = X_{-h}$. 

\subsection{Lie group data} 

We now collect some of the typical Lie group notation
that is needed to specialize the general results
from \cite{FNO23} to the case of the Lorentz group. 

On the group $G = \SO_{1,d}(\R)^\up$ with Lie algebra $\g = \so_{1,d}(\R)$,
we consider the Cartan involution, given
by $\theta^G(g) = (g^\top)^{-1}$, which induces on the Lie algebra
$\g$ the involution $\theta(x) = - x^\top$.
The corresponding subgroup of $\theta$-fixed points is
$K = G^\theta \cong \SO_d(\R)$.

As $\tau_h$ from \eqref{eq:tauh} is contained in $\SO_{1,d}(\R)$, we have an
isomorphism of Lie groups 
\begin{equation}
  \label{eq:gtauh}
 \SO_{1,d}(\R) \cong \SO_{1,d}(\R)^\up\rtimes \{\1,\tau_h\}
 \cong G_{\tau_h} = G \rtimes \{ \id,\tau_h^G\}.
\end{equation}

The boost generator $h$ from \eqref{eq:h}
spans a maximal abelian subspace
\[ \fa := \R h \subeq \fp := \{ x \in \g \: \theta(x) = - x\} \]
and the corresponding subgroups are 
\[ A = \exp(\fa) \cong \SO_{1,1}(\R)^\up \cong (\R_+,\cdot) 
  \subeq A_\C = \exp(\fa_\C) \cong \SO_{1,1}(\C) \cong \C^\times.\]
The element $h$ is an {\it Euler element}, i.e.,
\[ \g =  \g_1(h) \oplus \g_0(h) \oplus \g_{-1}(h)
  \quad \mbox{ for } \quad
  \g_\lambda(h) = \ker(\ad h - \lambda \1).\]
Note that $\g_{\pm 1}(h)$ are abelian Lie subalgebras of $\g$.

\subsection{Crown domains and tubes}

Here we list some notation connecting the general Lie theoretic
approach to the concrete situation for de Sitter space.
In the following we write:
\begin{itemize}
\item $\Omega_\fa = (-\pi/2,\pi/2)h$ and $\Omega_\fp = \Ad(K)\Omega_\fa
  = \{ x \in \fp \: \rho(\ad x) < \pi/2\}$ 
\item $\cS_{\pm \beta}= \{z \in \C \: |\Im z| < \beta\}$ and 
  $\cS_{\beta}= \{z \in \C \:  0 < \Im z < \beta\}$.   
\item $\Xi_{A_\C} = A \exp(i\Omega_\fa) = \exp(\fa + i \Omega_\fa)
  = \exp(\cS_{\pm \pi/2} h)$ 
\item  $\fa_\C^\cut := \fa + 2 i \Omega_\fa = \cS_{\pm \pi} h$,
  and   $A_\C^\cut := \exp(\fa_\C^\cut)$.
\end{itemize}

Identifying $A_\C \cong \SO_{1,1}(\C)$ with $\C^\times$
as in \eqref{eq:so1c}, 
the domain $A_\C^\cut$ corresponds to the cut plane
$\C \setminus (-\infty, 0] \subeq \C^\times$; hence the name.
We have
\begin{equation} \label{eq:crown}
  \Xi_\pm = G \exp(i\Omega_\fp).(\pm i\be_0 ) = \pm G\exp (i\Omega_\fp).i\be_0
\end{equation} 
(\cite[Thm.~6.5, App.~D]{NO23}), so that $\Xi_\pm$ can be identified
with the {\it crown domain} of the Riemannian
symmetric spaces $\bH^d_\pm \cong G/K$.

For the  orbit map $q_{M_\C} \: G_\C \to M_\C, g \mapsto g.i\be_0$, the domain
\begin{equation}
  \label{eq:xigc}
  \Xi_{G_\C} = G \exp(i\Omega_\fp)K_\C \subeq G_\C
\end{equation}
is the inverse image of $\Xi_+$ under~$q_{M_\C}$.
From $\Xi_+ = \dS^d_\C \cap \cT_{V_+}$ (cf.\ \eqref{eq:xipm}), we obtain 
\[ \Xi_{G_\C} = \{ g \in G_\C \: \Re(g.\be_0) \in V_+ \}.\]

\subsection{SL$_2$-notation}

In this subsection we collect some formulas concerning the 
$3$-dimensional Lie algebra $\fsl_2(\R)$ that we shall use below.  
For $\g  = \fsl_2(\R)$ ($\cong \so_{1,2}(\R)$), 
we fix the Cartan involution $\theta(x) = - x^\top$, so that 
\[ \fk = \so_2(\R) \quad \mbox{ and }\quad 
  \fp = \{ x \in \fsl_2(\R) \: x^\top = x \}.\]

The basis elements 
\begin{equation}
  \label{eq:sl2ra}
h^0 := \frac{1}{2} \pmat{1 & 0 \\ 0 & -1},\quad 
e^0 =  \pmat{0 & 1 \\ 0 & 0}, \quad
f^0 =  \pmat{0 & 0 \\ 1 & 0}
\end{equation}
and 
\begin{equation}
  \label{eq:sl2rb}
h^1 = \frac{1}{2} \pmat{0 & 1 \\ 1 & 0} = \frac{1}{2}(e^0 + f^0),\quad 
e^1 =  \frac{1}{2}\pmat{-1 & 1 \\ -1 & 1}, \quad 
f^1= \frac{1}{2} \pmat{-1 & -1 \\ 1 & 1}
\end{equation}
satisfy
\[  [h^j, e^j] = e^j, \quad [h^j, f^j] = -f^j, \quad 
[e^j, f^j] = 2 h^j \quad \mbox{ and }\quad \theta(e^j) = -f^j
\quad \mbox{ for } \quad j=1,2.\]
For
\[ k^0 := \frac{1}{2} \pmat{ 0 & -1 \\ 1 & 0} = \frac{1}{2}(f^1 - e^1),\]
we have $[h^1, k^0] = \frac{1}{2}(-f^1 - e^1) = h^0$, so that
\[ [k^0, h^0] = h^1, \quad [k^0, h^1] = -h^0 , \quad \mbox{ and } \quad
[h^1, h^0] = k^0.\]

We identify $\ad(\fsl_2(\R))$ with $\so_{1,2}(\R)$ via the adjoint
action on $\fsl_2(\R) \cong \R^{1,2}$, where the Cartan--Killing form 
of $\fsl_2(\R)$ corresponds to the Lorentzian form $\beta$ on $\R^{1,2}$ via 
\[  \be_0 = - k^0, \quad \be_1 = - h^1, \quad \be_2 = - h^0.\] 
 Then $h^0, h^1, k^0$ correspond to the following matrices in $\so_{1,2}(\R)$:
\begin{equation}\label{eq:handk2}
  h_0 := h = \pmat{ 0 & 1 & 0 \\
    1 & 0 & 0 \\
    0 & 0 & 0}, \quad 
  h_1 := \pmat{ 0 & 0 & 1 \\
    0 & 0  & 0 \\
    1 & 0 & 0}, \quad \mbox{ and }\quad
  k_0:= \pmat{ 0 & 0 & 0 \\
    0 & 0 & -1  \\    0 &  1 & 0}. 
 \end{equation} 

 \subsection{Distribution vectors by analytic continuation} 
 \mlabel{subsec:2.6}

 Let $(U,\cH)$ be a unitary representation of $G = \SO_{1,d}(\R)^\up$
that extends to an antiunitary representation of 
$G_{\tau_h} \cong \SO_{1,d}(\R)$
(cf.\ \eqref{eq:gtauh}). 
This means that $J = U(\tau_h)$
is a conjugation (an antiunitary
involution) with $J U(g) J =  U(\tau^G_h(g))$ for $g \in G$.

We write $\sV \subeq \cH$ for the standard subspace
$\sV = \Fix(J_\sV \Delta_\sV^{1/2})$, specified by 
\begin{equation} \label{eq:mod-obj} 
  \Delta_\sV = e^{2\pi i \cdot \partial U(h)} \quad \mbox{ and } \quad J_\sV = J,
\end{equation}
where $\partial U(h)$ is the skew-adjoint infinitesimal generator
of the one-parameter group $U(\exp th)$. 
Then $\xi\in \cH$ is contained in $\sV$ if and only if it satisfies
the {\it KMS like condition} that the orbit map
$U_h^\xi(t) := U(\exp th)\xi$ has a continuous extension
$U_h^\xi \colon \oline{\cS_\pi} \to \cH$
to the closed strip $\oline{\cS_\pi}$,
which is holomorphic on the interior and satisfies 
\begin{equation}\label{eq:UhJ}
 U_h^\xi(\pi i) = J \xi\quad \mbox{ resp.} \quad
  U_h^\xi(t + \pi i) = J U_h^\xi(t)\quad \mbox{ for }\quad
t \in \R\end{equation}
(cf.\ \cite[Prop.~2.1]{NOO21}). 

We write 
\begin{equation}
  \label{eq:kms-spaces-intro}
 \cH^{-\infty}_{\rm KMS} \subeq \cH^{-\infty} 
\end{equation}
for the space of distribution vectors for which 
the orbit map extends continuously, with respect to the
weak-$*$-topology, to the the closed strip $\overline{\cS}_\pi$, weak-$*$
holomorphic in the interior, such that
\begin{equation}\label{eq:UhJ2}
 U_h^\xi(\pi i) = J \xi\quad \mbox{ resp.} \quad
  U_h^\xi(t + \pi i) = J U_h^\xi(t)\quad \mbox{ for }\quad
  t \in \R \end{equation}
(see \cite{BN23} and \cite[\S 2.3]{FNO23} for more on this space).

We consider a finite-dimensional $K$-invariant subspace 
$\cE \subeq \cH$ which is also $J$-invariant and set
\[ \cE^J = \{ v \in \cE \colon Jv = v\}.\]
Note that $\tau_h(K) = K$ implies that $J$ leaves the subspace
$\cH^{[K]}$ of $K$-finite vectors invariant. Therefore $J$-invariant
finite-dimensional $K$-invariant subspaces exist in abundance.
For the case where $U\res_G$ is a {\bf finite sum of irreducible representations},
it follows from \cite[\S 3]{FNO23} that
\[  \cE
  \subeq 
  \bigcap_{|t| < \pi/2} \cD(e^{i t \partial U(h)}).\]
and that the limits
\[ \beta^+(v) := \lim_{t \to -\pi/2} e^{it \partial U(h)} v
  \quad \mbox{ for } \quad v \in \cH^{[K]} \cap \cH^J \]
exist in the space $\cH^{-\infty}_{\rm KMS}$ of distribution vectors
satisfying the KMS condition \eqref{eq:UhJ2} 
(\cite[Prop.~9]{FNO23}). Then 
\begin{equation}
  \label{eq:eh}
  \sE_H := \beta^+(\cE^J) \subeq \cH^{-\infty}_{\rm KMS}
\end{equation}
is a finite-dimensional $H$-invariant subspace
(\cite[Prop.~7]{FNO23}).  

\subsection{Realization in vector bundles}
\mlabel{subsec:real}

For a  finite-dimensional $H$-invariant real subspace
$\sE_H \subeq \cH^{-\infty}$,
let 
\begin{equation}
  \label{eq:EM}
  \bE_H := G \times_H \sE_H \to G/H = \dS^d
\end{equation}
be the corresponding finite-dimensional real vector bundle.
Then we obtain with
\cite{FNO23} for every irreducible 
antiunitary representation $(U,\cH)$ of $G_{\tau_h}$ a
$G$-equivariant embedding 
\[ \cH \into \cH^{-\infty} \to C^{-\infty}(G/H, \bE_H) :=
  \cL(C^\infty_c(G/H,\bE_H),\C).\]
Here $\cL(C^\infty_c(G/H,\bE_H),\C)$ denotes the space of
continuous real-linear maps from the real locally convex LF space 
$C^\infty_c(G/H,\bE_H)$ of compactly supported smooth sections of the real
vector bundle $\bE_H$ to $\C$. This realization map is the adjoint of the map
\[ C^\infty_c(G/H,\bE_H) \to \cH^\infty \]
that is obtained by factorization from the map
\[ C^\infty_c(G,\sE_H)  \cong  C^\infty_c(G,\R) \otimes_\R \sE_H
  \to \cH^\infty, \quad
  \xi \otimes \eta \mapsto U^{-\infty}(\xi)\eta\]
through the isomorphism
\[ C^\infty_c(G,\R) \otimes_H \sE_H \cong  C^\infty_c(G/H,\bE_H)\] 
(see \cite{NO24} for details). 
Accordingly, we also have a map
\begin{equation}
  \label{eq:Phi}
 \Phi \: \cH^{-\infty} \to C^{-\infty}(G,\sE_H)
  := \cL(C^\infty_c(G,\sE_H),\C), \quad
  \Phi(\alpha)(\xi \otimes \eta) = \alpha(U^{-\infty}(\xi)\eta)
\end{equation}
to $\sE_H$-valued distributions on $G$. Actually its values
are contained in $C^{-\infty}(G/H, \bE_H)$, identified canonically
with a subspace of $C^{-\infty}(G, \sE_H)$. 
So we obtain a realization of the unitary representation 
$(U,\cH)$ of $G$ on a Hilbert subspace 
of $C^{-\infty}(G/H,\bE_H)\subeq C^{-\infty}(G,\sE_H)$,
a space of vector-valued distributions on $G$, on which $G$ acts unitarily 
by left translations. For many purposes, it
is more convenient to work with distributions on $G$,
instead of bundle-valued distributions on $G/H$.
By construction, the distributions that we consider
are boundary values of holomorphic functions. 

For $\alpha,\gamma \in \sE_H$, the distribution
\[ D_{\alpha,\gamma}^+(\xi) := \gamma(U^{-\infty}(\xi)\alpha) \]
defines an element of $C^{-\infty}(G/H,\bE_H)$.
For
\[ \sE_H^+ := \sE_H \quad \mbox{ and } \quad
  \sE_H^- := J \sE_H = \beta^-(\cE^J), \]
the representations
\[ \rho^\pm\: H \to \GL(\sE_H^\pm), \quad
  \rho^\pm(g)\alpha := U^{-\infty}(g) \alpha = \alpha \circ U^\infty(g^{-1}),
  \quad g \in H, \]
satisfy
\[ \rho^-(g) \circ J = J \circ \rho^+(\tau_h(g)) \quad \mbox{ for } \quad
  g \in H.\]
Here we use that $J \beta^\pm = \beta^\mp$ on $\cE^J$ and that
\begin{equation}
  \label{eq:JU-inter}
  J U(g) = U(\tau_h(g)) J \quad \mbox{ for } \quad g \in G.
\end{equation}
As $H_K = H \cap K = H^\theta = H^{\tau_h}$ is fixed pointwise by $\tau_h$,
the polar decomposition $H = H_K \exp (\fh_\fp)$ implies that
$\tau_h$ acts on $H$ like a Cartan involution.

  \begin{rem}
We define real scalar products on $\sE_H^\pm$ in such a way that
the maps $\beta^\pm \: \cE^J \to \sE_H^\pm$ are isometric.
Then $H_K$ acts isometrically on $\sE_H^\pm$, and
the representation $\dd U \: \g_\C \to \End(\cH^\infty)$ and the
automorphism   $\zeta := e^{-\frac{\pi i}{2} \ad h} \in \Aut(\g_\C)$ satisfy the relations  
\[ \beta^+ \circ \dd U(x) =  \dd U^{-\infty}(\zeta(x)) \circ \beta^+
  \quad \mbox{ and }\quad
  \zeta(i\fq \cap \fk) = \fh \cap \fp \]
(cf.\ \cite[Prop.~7]{FNO23}). This implies that the scalar products 
on $\sE_H^\pm$ are compatible with the Cartan involution~$\tau_h$ on $H$.
\end{rem}

\begin{rem} \mlabel{rem:selfdual}
    (a) The $G$-equivariant vectors bundles
    $\bE_H^\pm = G \times_H \sE_H^\pm$ are equivalent if and only
    if the $H$-representations $\rho^\pm$ on $\sE_H$ are equivalent.

    As $J \: \sE_H^+ \to \sE_H^-$ satisfies
    $J \circ \rho^+ \circ J = \rho^- \circ \tau_h$ by \eqref{eq:JU-inter}, and
    $\tau_h$ is a Cartan involution on $H$, we have
    \[ \rho^- \cong (\rho^+)^* \]
    as real representations of $H$.
   Hence $\rho^- \cong \rho^+$ if and only if
    $\rho^+$ is selfdual.

    \nin (b) Since $\SO_{1,d-1}(\R) \cong \SO_{1,d-1}(\R)_e \rtimes \{\1,\tau_h\}$,
    the selfdual representations of $H = \SO_{1,d-1}(\R)_e$ are precisely
    those extending to $\SO_{1,d-1}(\R)$.

    \nin (c) Particular examples are the trivial representation,
    the identical representation and all $\SO_{1,d-1}(\R)$-invariant
    subspaces of tensor powers of $\R^{1,d-1}$.     
 
 \nin
 (d) All the irreducible finite-dimensional
 representations of $\SO_{1,2}(\R )$ are uniquely determined 
by their dimension and hence self-dual.
\end{rem}

\section{The cut domain in de Sitter space} 
\mlabel{sec:3}

As in Subsection~\ref{subsec:2.6},
we assume that $(U,\cH)$ is an antiunitary irreducible representation
of $\SO_{1,d}(\R)$ and
fix a finite-dimensional subspace $\cE \subeq \cH$ invariant under $K$ and $J$.
Then analytic continuation defines on the domain
$\Xi_{G_\C} = G \exp(i\Omega_\fp)K_\C \subeq G_\C$ a holomorphic map 
  \[ L \: \Xi_{G_\C} \to B(\cE,\cH), \quad L(g \exp(ix) k) v
    := U(g) e^{i\partial U(x)} U_\cE(k)v,\]
  where $U_\cE \: K_\C \to B(\cE)$ is the holomorphic extension
  of the $K$-representation on $\cE$ to $K_\C$. 
Then   
\begin{equation}
  \label{eq:Q}
 Q(g_1, g_2) :=  L(g_2)^* L(g_1) \in B(\cE)\quad \mbox{ for }
 \quad g_1, g_2 \in \Xi_{G_\C}
\end{equation}
defines a sesquiholomorphic kernel on $\Xi_{G_\C} \times \Xi_{G_\C}$
(holomorphic in the first and antiholomorphic in the second argument).
For $g \in G$, we have
\begin{equation}
  \label{eq:Qcovar}
 Q(gg_1, gg_2)   = L(g_2)^* U(g)^{-1} U(g) L(g_1) 
  = L(g_2)^* L(g_1)  = Q(g_1, g_2).
\end{equation}
For the action of $G_\C$ on $G_\C^2$ by
\begin{equation}
  \label{eq:GonGC}
  g.(g_1, g_2) := (gg_1, \oline g g_2),
\end{equation}
it is therefore natural to try to extend $Q$ to a
$G_\C$-invariant kernel on
the domain
\[ G_\C^{2,\cut} := G_\C.(\Xi_{G_\C} \times \Xi_{G_\C})\subeq G_\C^2 \]
by the relation $Q(gg_1, \oline g g_2)  = Q(g_1, g_2)$,
but it is not obvious that this is well-defined.

We therefore   consider the open subset
\begin{equation}
  \label{eq:gccut}
 G_\C^\cut :=
  \{ \oline{g_2}^{-1} g_1 \: g_1, g_2 \in \Xi_{G_\C} \}
  =   K_\C \Xi_{A_\C} G \Xi_{A_\C} K_\C
  \subeq G_\C \quad \mbox{ for } \quad \Xi_{A_\C} = A \exp(i\Omega_\fa),
\end{equation}
for which
\[ G_\C^{2,\cut}
  = G_\C.(G_\C^\cut \times \{ e\}).\]
Accordingly, the domains 
\[ M_\C^\cut   = G_\C^\cut.i\be_0 \subeq M_\C 
  \quad \mbox{ and } \quad M_\C^{2,\cut}
  := G_\C.(\Xi_+ \times \Xi_+)  \subeq M_\C^2\]
satisfy
\[ M_\C^\cut = \{ m \in M_\C \:  (m,i\be_0) \in M_\C^{2,\cut} \}
  \quad \mbox{  and } \quad
  M_\C^{2,\cut} = G_\C.(M_\C^\cut \times \{ i \be_0\}).\]

The Extension Theorem (Theorem~\ref{thm:ext})
provides a holomorphic function
\[ \phi_\cE \: G_\C^\cut \to B(\cE), \]
from which we obtain a $G_\C$-invariant
kernel $\hat Q$ on $G_\C^{2,\cut}$ by
$\hat Q(g_1, g_2) := \phi_\cE(\oline{g_2}^{-1} g_1).$ 
The function $\phi_\cE$ is uniquely determined by 
\[ \phi_\cE(g) = Q(g, e) \quad \mbox{ for } \quad
  g \in \Xi_{G_\C},\]
so that $\hat Q$ extends the kernel $Q$ in a $G_\C$-invariant way to the
$G_\C$-invariant domain $G_\C^{2,\cut}$.

Analytic continuation implies that 
\begin{equation}
  \label{eq:sphefunc}
 \phi_\cE(k_1 g k_2) = U_\cE(k_1) \phi_\cE(g) U_\cE(k_2)
  \quad \mbox{ for } \quad
  k_1, k_2 \in K_\C, g \in G_\C^{\rm cut}.
\end{equation}
In particular, $\phi_\cE\res_G$ is a
{\it generalized spherical function} with respect to the
representation of $K \times K$ on $B(\cE)$ by
$(k_1, k_2).A = U_\cE(k_1) A U_\cE(k_2)^{-1}$
(cf.~\cite[Ch.~6]{Wa72}). 
As  the domain $G_\C^{\rm cut}$ is right $K_\C$-invariant,
its geometry is well-represented by its image
in the quotient space $M_\C \cong G_\C/K_\C$:
\begin{equation}
\label{eq:mccut}
\dS_\C^{d,\cut} =  M_\C^\cut := K_\C \Xi_{A_\C} G \Xi_{A_\C}.i\be_0 = K_\C \Xi_{A_\C}.\Xi_+.
\end{equation}

For $d = 1$ we have $K = \{e\}$, $G = A$, and
the orbit map 
\begin{equation}
  \label{eq:ds1c}
 \sigma \: \SO_{1,1}(\C)  \to \dS^1_\C, \quad
 \pmat{ a & b \\ b & a} \mapsto g.\be_1 = \pmat{b \\ a}
\end{equation}
is bijective (\cite[Ex.~3.4]{NO20}).  Moreover,
\begin{equation}
  \label{eq:expzetah}
 \exp(\zeta h) = \pmat{ \cosh\zeta & \sinh\zeta \\
   \sinh\zeta & \cosh\zeta}, \quad
 \exp(\zeta h).i \be_0 = \pmat{ i \cosh\zeta \\ i \sinh\zeta}
\ \  \mbox{ and  } \ \ 
 \exp(\zeta h).\be_1 = \pmat{ \sinh\zeta \\ \cosh\zeta}.   
\end{equation}
The discussion in Remark~\ref{rem:cosh} shows that
\[ \cosh(\cS_{\pm \pi}) = \cosh(\R) \cup \cosh(\cS_\pi)
  = \C \setminus (-\infty,-1], \quad
  \cosh(\pm i\pi + \R) = (-\infty,-1].\]
Since $\sigma$ is bijective, it follows that:

\begin{lem}
  \mlabel{lem:accut} 
$A_\C^\cut.i\be_0 = \exp(\cS_{\pm \pi}h).i\be_0 
=     \{ z \in A_\C.i\be_0 \: -i z_0 \not\in (-\infty, -1]\}.$ 
\end{lem}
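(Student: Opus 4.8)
The plan is to work entirely on the level of the $1$-dimensional de Sitter space $\dS^1_\C$, using the explicit coordinates established in \eqref{eq:expzetah}. First I would recall that, by \eqref{eq:so1c} and \eqref{eq:ds1c}, the orbit map $\sigma \colon A_\C = \SO_{1,1}(\C) \to \dS^1_\C$ is a bijection, so the identity $A_\C^\cut.i\be_0 = \exp(\cS_{\pm\pi}h).i\be_0$ is immediate from the definition $A_\C^\cut = \exp(\fa_\C^\cut) = \exp(\cS_{\pm\pi}h)$ (and $\fa_\C^\cut = \cS_{\pm\pi}h$ as listed in Subsection~\ref{subsec:2.6}). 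So the real content is the second equality, identifying this set with $\{z \in A_\C.i\be_0 \colon -iz_0 \notin (-\infty,-1]\}$.

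For that, I would use \eqref{eq:expzetah}: the point $\exp(\zeta h).i\be_0$ has $z_0$-coordinate $i\cosh\zeta$, so $-iz_0 = \cosh\zeta$. Hence, as $\zeta$ ranges over the strip $\cS_{\pm\pi} = \{|\Im \zeta| < \pi\}$, the set of possible values of $-iz_0$ is exactly $\cosh(\cS_{\pm\pi})$. By the computation recalled just before the lemma (Remark~\ref{rem:cosh}), $\cosh(\cS_{\pm\pi}) = \C \setminus (-\infty,-1]$, while the boundary lines $\pm i\pi + \R$ map onto $(-\infty,-1]$. Since $A_\C.i\be_0$ is all of $\dS^1_\C \cap (A_\C.i\be_0)$ parametrized by $\zeta \in \C$ via the same formula, and $\cosh(\C) = \C$ (indeed already $\cosh(\R) \cup \cosh(\cS_\pi) \cup \cosh(i\pi+\R) = \C$), the condition $z \in A_\C.i\be_0$ with $-iz_0 \notin (-\infty,-1]$ says precisely that $z = \exp(\zeta h).i\be_0$ for some $\zeta$ with $\cosh\zeta \notin (-\infty,-1]$, i.e.\ for some $\zeta \in \cS_{\pm\pi}$.

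The one point requiring a little care — and the step I expect to be the main obstacle — is the injectivity/well-definedness bookkeeping: the map $\zeta \mapsto \exp(\zeta h)$ from $\C$ to $A_\C \cong \C^\times$ is $2\pi i$-periodic and \emph{not} injective, so I must argue that $-iz_0$ determines the $A_\C$-orbit point $z$ only up to this ambiguity, and then check that the preimage $\cosh^{-1}(\C\setminus(-\infty,-1])$ inside $\C$ is exactly the union of $\Z$-translates by $2\pi i$ of the strip $\cS_{\pm\pi}$ — so that no spurious points of $A_\C.i\be_0$ slip in or out. Concretely: $\cosh\zeta = \cosh\zeta'$ iff $\zeta' \in \{\pm\zeta + 2\pi i\Z\}$, and $\exp(\zeta h) = \exp(\zeta' h)$ iff $\zeta' \in \zeta + 2\pi i\Z$; combined with the fact that $\exp(\zeta h).i\be_0 = \exp(-\zeta h).i\be_0$ would force $\sinh\zeta = -\sinh\zeta$ hence is false unless $\sinh\zeta = 0$, one sees the parametrization of $A_\C.i\be_0$ by $z_0 = i\cosh\zeta$ together with the sign of $\sinh\zeta$ (or just by $z = (z_0,z_1)$ with $z_0^2 - z_1^2 = -1$) is consistent, and the claimed description of $A_\C^\cut.i\be_0$ follows. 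Since $\sigma$ is a bijection this is really just transporting $\cosh(\cS_{\pm\pi}) = \C\setminus(-\infty,-1]$ through $\sigma$, and the proof is essentially a one-line application of Remark~\ref{rem:cosh}.
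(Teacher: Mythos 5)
Your proof is correct and follows essentially the same route as the paper's: the paper deduces the lemma directly from the computation $\exp(\zeta h).i\be_0 = (i\cosh\zeta,\, i\sinh\zeta)$, the identity $\cosh(\cS_{\pm\pi}) = \C\setminus(-\infty,-1]$ from Remark~\ref{rem:cosh}, and the bijectivity of the orbit map $\sigma$, offering no further argument. You simply make explicit the $2\pi i$-periodicity bookkeeping (that $\cosh^{-1}\big(\C\setminus(-\infty,-1]\big) = \cS_{\pm\pi} + 2\pi i\Z$, which $\exp(\cdot\,h)$ maps exactly onto $A_\C^\cut$) that the paper compresses into ``Since $\sigma$ is bijective, it follows that.'' Your side remark about the sign of $\sinh\zeta$ is harmless but tangential, since the defining condition involves only $z_0$ and both the strip $\cS_{\pm\pi}$ and the set in question are invariant under $\zeta\mapsto-\zeta$.
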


This lemma implies in particular that 
\begin{equation}
  \label{eq:ds1cut}
 \dS^{1,\cut}_\C = A_\C^\cut.i\be_0
 = \{ (z_0, z_1) \in \dS^1_\C \: -i z_0 \not\in (-\infty, -1]\}
\end{equation}
is biholomorphic to a ``cut plane''; hence the name.

We now collect some geometric facts on complexified
de Sitter space $\dS^d_\C$ that allow us to identify the cut
domain~$\dS_\C^{d,\cut}$ explicitly (Proposition~\ref{prop:cplx-desit-cut}). 

\begin{lem} \mlabel{lem:witt} {\rm($K_\C$-orbits in $\dS^d_\C$)}
  In $\dS^d_\C$ there are two types of $K_\C$-orbits:
  \begin{itemize}
  \item[\rm(a)] Orbits through points $z = (z_0, \bz)$ with
    $\beta(\bz,\bz) \not=0$. These $K_\C$-orbits are represented
    by points $(z_0, z_1,0,\ldots, 0)$ with $z_1 \not=0$. 
  \item[\rm(b)] Orbits through points $z = (z_0, \bz)$ with
    $\beta(\bz,\bz) =0$. These
    $K_\C$-orbits are represented by the fixed points $\pm i \be_0$,
    and, for $d \geq 3$, also by the two vectors
    $\pm i \be_0 +  i \be_1 + \be_2$,
   and for $d \geq 2$,  by the four vectors 
    $\pm i \be_0 \pm  i \be_1 + \be_2$. 
  \end{itemize}
\end{lem}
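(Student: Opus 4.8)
The plan is to classify $K_\C$-orbits in $\dS^d_\C$ by exploiting the fact that $K_\C = \SO_d(\C)$ acts only on the spatial coordinates $\bz \in \C^d$, leaving $z_0$ fixed, and that the $\SO_d(\C)$-orbit structure on $\C^d$ is governed by the single invariant $\beta(\bz,\bz) = -\sum_{j=1}^d z_j^2$ together with the distinction between zero and nonzero vectors. First I would record the constraint defining $\dS^d_\C$ in the form $\beta(\bz,\bz) = z_0^2 + 1$, so that the value of the spatial quadratic form is determined by $z_0$. For case (a), when $\beta(\bz,\bz) \neq 0$ (equivalently $z_0^2 \neq -1$, i.e.\ $z \neq \pm i\be_0$ up to the spatial action but more precisely $z_0 \neq \pm i$): a classical fact is that $\SO_d(\C)$ acts transitively on each non-degenerate "sphere" $\{\bz : \beta(\bz,\bz) = c\}$ for $c \neq 0$ (this is because the isotropy group of a non-isotropic vector is $\SO_{d-1}(\C)$ and a dimension count, or an explicit Gram–Schmidt-type argument over $\C$, gives transitivity). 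Hence one may rotate $\bz$ to $(z_1, 0, \ldots, 0)$ with $z_1^2 = -\beta(\bz,\bz) = -(z_0^2+1) \neq 0$, so $z_1 \neq 0$; this is the claimed normal form.

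For case (b), $\beta(\bz,\bz) = 0$, there are two sub-cases. If $\bz = 0$ then $z_0^2 = -1$, so $z = \pm i\be_0$, which are $K_\C$-fixed points. If $\bz \neq 0$ is isotropic, then $z_0^2 = -1$ again forces $z_0 = \pm i$; the remaining task is to understand the $\SO_d(\C)$-orbits of nonzero isotropic vectors in $\C^d$. Here $\SO_d(\C)$ does \emph{not} act transitively on all nonzero isotropic vectors when $d \geq 2$ — the scaling $\bz \mapsto \lambda \bz$ generally leaves the isotropic cone invariant but is not realized inside $\SO_d(\C)$ for $d = 2$, while for $d \geq 3$ one can rescale within an orbit. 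I would treat $d \geq 3$ and $d = 2$ separately. For $d \geq 3$: pick a nonzero isotropic $\bz$; using a hyperbolic-plane decomposition of $(\C^d, \beta)$ one can normalize $\bz$ to a standard isotropic vector, and writing it in terms of $\be_1, \be_2$ one gets the representative $i\be_1 + \be_2$ (the $\pm i\be_0$ accounting for the two choices of $z_0$), giving the four/two vectors listed. For $d = 2$: $\SO_2(\C) \cong \C^\times$ acts on $\C^2$ with two isotropic lines, each a single orbit minus the origin is actually \emph{two} orbits (the two null lines), but the null lines themselves are permuted appropriately; normalizing gives the representatives $\pm i\be_0 \pm i\be_1 + \be_2$ — wait, for $d=2$ there is no $\be_2$, so I would instead phrase the $d=2$ case directly in the ambient $\dS^2_\C \subseteq \C^3$, where the isotropic spatial vectors $(z_1, z_2)$ with $z_1^2 + z_2^2 = 0$ form two null lines $z_2 = \pm i z_1$, each normalized (using the $\C^\times$-action and the constraint $z_1^2+z_2^2 = z_0^2+1 = 0$) to representatives of the stated form.

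The main obstacle I expect is the careful bookkeeping in case (b): one must be precise about what $\SO_d(\C)$ (as opposed to $\OO_d(\C)$) can and cannot do to isotropic vectors, since for small $d$ the orbit count is sensitive to whether reflections are available and to the scaling behavior on null lines. The cleanest route is probably to fix a Witt decomposition $\C^d = H_1 \perp \cdots \perp H_{\lfloor d/2\rfloor} (\perp \C)$ into hyperbolic planes (plus a line if $d$ is odd), express an isotropic vector in adapted coordinates, and then quote the transitivity of $\SO_d(\C)$ on primitive isotropic vectors for $d \geq 3$ (Witt's theorem plus a connectedness/determinant check), handling $d = 2$ by the explicit $\C^\times$-action. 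The case-(a) argument, by contrast, is routine once transitivity on non-degenerate level sets is invoked. I would also double-check the consistency condition $z_0 = \pm i$ in case (b) against the listed representatives — each has $z_0$-component $\pm i$, as required by $z_0^2 = -1$.
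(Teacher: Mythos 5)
Your proposal is correct and follows essentially the same route as the paper: Witt's theorem for transitivity of $\OO_d(\C)$ on level sets of the spatial quadratic form, a determinant argument (existence of a determinant $-1$ element in the stabilizer) to descend to $\SO_d(\C)$ when $d\geq 3$, and an explicit $\SO_2(\C)\cong\C^\times$ computation to see that the two null lines give two distinct orbits when $d=2$.

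One small wobble in your exposition worth flagging: you attribute the failure of transitivity for $d=2$ to scaling not being ``realized inside $\SO_2(\C)$''. This is not the actual obstruction. The $\SO_2(\C)\cong\C^\times$-orbit of $i\be_1+\be_2$ \emph{is} the full punctured null line $\{\lambda(i\be_1+\be_2):\lambda\in\C^\times\}$, so scaling on each null line is entirely available; what $\SO_2(\C)$ fails to do is interchange the two null lines $z_2=\pm i z_1$, and that is why one gets two orbits rather than one. You do reach the correct conclusion in the next sentence, but the stated reason is off, and since the $d\geq 3$ versus $d=2$ dichotomy is exactly the delicate point of part (b), the reasoning there should be tightened to the determinant/Witt statement you invoke at the end (an $\OO_d(\C)$ element of determinant $-1$ fixing the representative exists iff $d\geq 3$), rather than any claim about rescaling. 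Also a minor sign slip: with $\beta(\bz,\bz)=-\bz^2$ and the constraint $z_0^2-\bz^2=-1$ one gets $\beta(\bz,\bz)=-(z_0^2+1)$, not $z_0^2+1$; this does not affect the classification since only the vanishing of $\beta(\bz,\bz)$ matters.
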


\begin{prf} By Witt's Theorem, the group
  $\OO_d(\C)$ acts transitively on the level sets of the function
  $\bz \mapsto \beta(\bz,\bz)$. For non-zero values the stabilizer of $\bz$
  contains a reflection, so that also the subgroup $K = \SO_d(\C)$
  acts transitively. This implies (a). For (b) we observe that
  $0 = \beta(\bz,\bz) = 1 + z_0^2$ implies $z_0 = \pm i$.
  For $\bz = 0$ we thus obtain the $K_\C$-fixed points $\pm i \be_0$,
  and for $d \geq 2$, we also have elements of the form
  $(\pm i, \bz)$ with $\beta(\bz,\bz) = 0$.

  It remains to show that, each isotropic vector $\bz \in \C^d$
  lies in the $\SO_d(\C)$-orbit of $\bw := i \be_1 + \be_2 \in \C^d$
  for $d \geq 3$, and possibly in the orbit of $\oline \bw$ for $d = 2$.
  This holds by Witt's Theorem for the group $\OO_d(\C)$.
  If $d \geq 3$, there exists an element $\OO_d(\C)$ of determinant
  $-1$ fixing $\bw$, so that our claim follows. We may therefore
  assume that $d = 2$. Then
  \[ \SO_2(\C) = \Big\{\pmat{ a & b \\ -b & a} \: a^2 + b^2 = 1,
    a,b\in \C\Big\}.\]
  Now
  \[ \pmat{ a & b \\ -b & a} \bw
    = \pmat{ a & b \\ -b & a} \pmat{ i \\ 1} 
    = \pmat{ ai + b \\ a - i b}
    = \pmat{ i(a - i b) \\ a - i b}
    = (a-ib)(i\be_0 + \be_1)\] 
  shows that the $\SO_2(\C)$-orbit of $\bw$ consists of all
  non-zero   vectors of the form $(iz,z)$.
  If $(z_1, z_2)\not=0$ is isotropic, then $z_1^2 = -z_2^2 \not=0$
  implies that $z_1 = \pm i z_2$, so that $z_1 = - i z_2$ is also possible,
  but these vectors are not obtained from $\bw$ with $\SO_2(\C)$.
\end{prf}

The following proposition provides an explicit description
of the cut domain $\dS_\C^{d,\cut}$ in complex de Sitter space. 
  
\begin{prop} \mlabel{prop:cplx-desit-cut}
  In complexified de Sitter space $\dS^d_\C$, the cut domain is given
  by
  \[ \dS_\C^{d, \cut}
    = \{ z \in \dS^d_\C \: -i z_0 \not\in (-\infty, -1]  \}.\]
\end{prop}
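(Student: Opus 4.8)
The plan is to prove the two inclusions separately. Write $\cC := \{z \in \dS^d_\C : -iz_0 \notin (-\infty,-1]\}$ and recall the two descriptions $\dS_\C^{d,\cut} = \{\overline{g_2}^{-1}g_1.i\be_0 : g_1,g_2 \in \Xi_{G_\C}\}$ (from $\dS_\C^{d,\cut} = G_\C^\cut.i\be_0$ and \eqref{eq:gccut}) and $\dS_\C^{d,\cut} = K_\C\Xi_{A_\C}.\Xi_+$ (from \eqref{eq:mccut}). The basic observation used throughout is that $z \mapsto z_0 = \beta(z,\be_0)$ is invariant under $K_\C = \SO_d(\C)$, so both $\cC$ and $\dS_\C^{d,\cut}$ are $K_\C$-invariant.

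For $\dS_\C^{d,\cut} \subseteq \cC$, take $z = \overline{g_2}^{-1}g_1.i\be_0$ and set $w_j := g_j.i\be_0 \in \Xi_+$. Using $G_\C$-invariance of $\beta$ (so that $\beta(gv,w) = \beta(v,g^{-1}w)$) and that $\be_0$ is real, one computes $-iz_0 = \beta(w_1, \overline{w_2})$. Since $\beta(w_1,w_1) = -1$ and $\beta(\overline{w_2},\overline{w_2}) = \overline{\beta(w_2,w_2)} = -1$, expanding $\beta(w_1 - \overline{w_2},\, w_1 - \overline{w_2})$ yields the identity
\[
1 + (-iz_0) \;=\; 1 + \beta(w_1,\overline{w_2}) \;=\; -\tfrac12\,\beta\big(w_1-\overline{w_2},\, w_1-\overline{w_2}\big).
\]
Now $\Im(w_1 - \overline{w_2}) = \Im w_1 + \Im w_2 \in V_+$, because $\Xi_+ = \dS^d_\C \cap (V + iV_+)$ and $V_+$ is a convex cone, so $\zeta := w_1 - \overline{w_2}$ lies in the tube $V + iV_+$. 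A short Minkowski-space lemma then finishes the argument: if $\Im\zeta \in V_+$, then $\beta(\zeta,\zeta) \notin [0,\infty)$, because $\beta(\zeta,\zeta) \in [0,\infty)$ would force $\Re\zeta$ to be timelike and $\beta$-orthogonal to the timelike vector $\Im\zeta$, which is impossible. Hence $-\tfrac12\beta(\zeta,\zeta) \notin (-\infty,0]$, i.e.\ $-iz_0 \notin (-\infty,-1]$.

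For the reverse inclusion $\cC \subseteq \dS_\C^{d,\cut}$, I would run through the $K_\C$-orbit classification of Lemma~\ref{lem:witt}; by the $K_\C$-invariance just noted it suffices to treat one representative per orbit. If $z_0 \neq \pm i$ (type~(a)), then $z$ is $K_\C$-conjugate to $(z_0,z_1,0,\ldots,0)$ with $z_1 \neq 0$, which lies in the slice $\dS^1_\C \subseteq \dS^d_\C$; as $-iz_0 \notin (-\infty,-1]$, Lemma~\ref{lem:accut} together with \eqref{eq:ds1cut} places it in $A_\C^\cut.i\be_0$, and $A_\C^\cut = \exp(\cS_{\pm\pi}h) = \Xi_{A_\C}\Xi_{A_\C} \subseteq K_\C\Xi_{A_\C}G\Xi_{A_\C}K_\C = G_\C^\cut$ (using $\cS_{\pm\pi/2}+\cS_{\pm\pi/2} = \cS_{\pm\pi}$), so $z \in \dS_\C^{d,\cut}$. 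If $z_0 = \pm i$ (type~(b)), then $-iz_0 = \mp 1$, so the hypothesis forces $z_0 = i$: the fixed point $z = i\be_0 = e.i\be_0$ lies trivially in $\dS_\C^{d,\cut}$, and for each of the finitely many remaining orbits (those with isotropic nonzero spatial part) one normalizes via $K_\C$ to $\be_1 + \sigma i\be_2$ with $\sigma \in \{\pm1\}$ ($\sigma = +1$ for $d \geq 3$, where $K_\C$ acts transitively on nonzero isotropic vectors) and checks by a direct computation that $\exp(\tfrac{i\pi}{4}h) \in \Xi_{A_\C}$ carries the normalized vector into $\Xi_+$, so again $z \in K_\C\Xi_{A_\C}.\Xi_+ = \dS_\C^{d,\cut}$. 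I expect this last family of degenerate orbits to be the only genuine obstacle: they sit on the boundary of the $G$-translates of $\Xi_+$, so no naive manipulation brings them inside, and the argument must exploit the interplay of $K_\C$ (which can move the isotropic spatial vector into a convenient normal form, even though it cannot change $\beta(\bz,\bz)$) with a carefully chosen element of $\Xi_{A_\C}$; everything else is either the tube-positivity identity above or the clean reduction of type~(a) to the already-settled case $d = 1$.
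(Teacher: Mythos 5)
Your proof is correct, and its global structure (reduce by $K_\C$-invariance of $z_0$, then use Lemma~\ref{lem:witt} to classify orbits and Lemma~\ref{lem:accut} for the generic type-(a) orbits) is the same as the paper's. You deviate in two spots, both legitimately. For ``$\subseteq$'': after arriving at the identity $1+(-iz_0) = -\tfrac12\beta(w_1-\overline{w_2},\,w_1-\overline{w_2})$ with $w_1-\overline{w_2}$ in the tube $V+iV_+$, you prove the needed fact $\beta(\zeta,\zeta)\notin[0,\infty)$ for $\Im\zeta\in V_+$ from scratch (a real part of $\beta(\zeta,\zeta)\geq 0$ would make $\Re\zeta$ timelike and $\beta$-orthogonal to the timelike $\Im\zeta$), whereas the paper instead reduces to $\Xi_{A_\C}.\Xi_+$ and cites $\beta(\Xi_+\times\Xi_-)=\C\setminus(-\infty,-1]$ from \cite[Lemma~3.5]{NO20}. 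This is a pleasant self-contained substitute that in fact amounts to proving the needed half of that lemma. For the degenerate type-(b) orbits with $z_0=i$ and nonzero isotropic spatial part: the paper uses the $\SO_2(\C)$-rescaling \eqref{eq:so2c} to shrink the representative to $i\be_0\pm\tfrac{i}{2}\be_1+\tfrac12\be_2$, whose imaginary part $\be_0\pm\tfrac12\be_1$ lies in $V_+$, so the representative is already in $\Xi_+$ without invoking $\Xi_{A_\C}$; you instead keep a unit-size representative and apply the boost $\exp(\tfrac{i\pi}{4}h)\in\Xi_{A_\C}$ to push it into $\Xi_+$ (in fact into $\bH^d_+$). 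Both finish the argument; the paper's trick is marginally shorter, yours uses the full $K_\C\Xi_{A_\C}.\Xi_+$ description a bit more symmetrically.
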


\begin{prf} For $d = 1$, the assertion is \eqref{eq:ds1cut}.
  We may therefore assume that $d \geq 2$. 
  Let $\Omega$ denote the domain on the right hand side.
  As $K_\C = G_{\C,i\be_0}$ leaves the domains on both sides invariant,
  they are unions of $K_\C$-orbits, so that we may use the 
  description of these orbits in Lemma~\ref{lem:witt}.

With \eqref{eq:ds1c} we obtain a biholomorphic map
$A_\C \to \dS^1_\C = \dS^d_\C \cap (\C \be_0 + \C \be_1), a \mapsto a.i\be_0.$ 
  Lemma~\ref{lem:witt} thus implies that
  \begin{equation}
    \label{eq:e2a}
    \dS^d_\C = K_\C A_\C.i\be_0 \dot\cup K_\C.\{\pm i \be_0 \pm i\be_1 + \be_2\}.
  \end{equation}
In view of \eqref{eq:mccut}, we have 
  \begin{equation}
    \label{eq:dscut}
\dS_\C^{d,\rm cut} = K_\C \Xi_{A_\C}.\Xi_+ \quad \mbox{ with }\quad
\Xi_+ = \dS^d_\C \cap (\R^{1,d} + i V_+).
  \end{equation}

  \nin``$\subeq$'':   In view of \eqref{eq:dscut} and $K_\C$-invariance,
  we have to show that, for $w \in \Xi_+$
  and $|t| < \pi/2$, the element $z := \exp(ith).w$
  satisfies $-i z_0 \not\in (-\infty,-1]$.
  We have
  \[ -i z_0 = \beta(z, -i\be_0)
    = \beta(\exp(ith).w, -i \be_0)
    = \beta(w, -i \exp(-ith).\be_0),\]
  and since $-i\exp(-ith).\be_0 \in \Xi_-$
  (the complex conjugate of   $\Xi_+$),
  \[ \{ \beta(z, \oline w) \: z,w \in \Xi_+ \}
    = \beta(\Xi_+ \times  \Xi_-)= \C \setminus  (-\infty, -1] \]
(\cite[Lemma~3.5]{NO20}) 
  implies that $-iw_0 \not\in (-\infty, -1]$.

  \nin``$\supeq$'': Now suppose that
  $z = (z_0, \bz)$ is such that $-i z_0 \not\in (-\infty, -1]$.
  If $\beta(\bz,\bz)\not=0$, Lemma~\ref{lem:witt} implies that
  $z \in K_\C A_\C.i\be_0$ and by Lemma~\ref{lem:accut} it further follows that
  $z \in K_\C A_\C^\cut.i\be_0$.
  This shows that
  \[ z \in K_\C A_\C^\cut.i\be_0 =   K_\C \Xi_{A_\C} \Xi_{A_\C}.i\be_0
    \subeq K_\C \Xi_{A_\C}.\Xi_+ = \dS_\C^{d,\cut}.\]

  If $\beta(\bz,\bz) = 0$ and $\bz\not=0$, then the $K_\C$-orbit of
  $z$ contains   either $i\be_0 \pm  i \be_1 + \be_2$ or
  $-i\be_0 \pm i \be_1 + \be_2$ (Lemma~\ref{lem:witt}),
  but the second case is excluded by
  the condition on $z_0$. Therefore
  \[ z \in K_\C.(i\be_0 \pm i \be_1 + \be_2)
    = K_\C.\Big(i\be_0 \pm \frac{i}{2}\be_1
    + \frac{1}{2}\be_2\Big)     \in K_\C.\Xi_+ \subeq \dS_\C^{d, \cut} \]
  follows from $\be_0 \pm \frac{\be_1}{2} \in V_+$.
  Here we use that the subgroup of $K_\C$ acting
  on $\Spann\{\be_1, \be_2\}$ is isomorphic to
  $\SO_2(\C) \cong \C^\times$, and acts by 
  \begin{equation}
    \label{eq:so2c}
 \lambda.(i\be_0 \pm i\be_1 + \be_2) = i\be_0 + \lambda^{\pm 1}(\pm i\be_1 + \be_2),
 \quad \mbox{ for }\quad \lambda \in \C^\times.\qedhere
 \end{equation}
\end{prf}

The following observation also appears in \cite[Prop.~2.2]{BM96},
  where it is actually used to define the cut domain as the regularity
  domain for the scalar-valued kernels that we shall discuss
  in Section~\ref{sec:sphfunc} below.

\begin{cor} \label{eq:dsccut} We have 
\[  (\dS_\C^d \times \dS_\C^d)^\cut
  = \{ (z,w) \in \dS^d_\C \times \dS^d_\C  \:   \beta(z, \oline w)
  \not\in (-\infty, -1]\}. \] 
\end{cor}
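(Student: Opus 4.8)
The plan is to deduce the corollary directly from Proposition~\ref{prop:cplx-desit-cut} by using the $G_\C$-invariance of the relevant domain together with the computation of the invariant $\beta(z,\oline w)$. First I would recall that, by construction,
\[
  (\dS_\C^d \times \dS_\C^d)^\cut = M_\C^{2,\cut} = G_\C.(M_\C^\cut \times \{i\be_0\}),
\]
and that the sesquilinear pairing $(z,w) \mapsto \beta(z,\oline w)$ on $\dS_\C^d \times \dS_\C^d$ is invariant under the twisted diagonal action $g.(z,w) = (gz,\oline g\, w)$, by \eqref{eq:beta-c-inv}. Hence the function $(z,w) \mapsto \beta(z,\oline w)$ is constant on $G_\C$-orbits, and so the condition ``$\beta(z,\oline w) \notin (-\infty,-1]$'' defines a $G_\C$-invariant subset $\Omega'$ of $\dS_\C^d \times \dS_\C^d$.

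Next I would observe that a $G_\C$-invariant subset of $M_\C^2$ that contains a point of the slice $M_\C \times \{i\be_0\}$ is determined by its intersection with that slice, since $G_\C$ acts transitively on $M_\C = G_\C.i\be_0$, so every orbit meets $\{(m,i\be_0) : m \in M_\C\}$. Therefore it suffices to check the identity on the slice $w = i\be_0$. There $\beta(z, \overline{i\be_0}) = \beta(z,-i\be_0) = -i z_0$, so the right-hand side intersected with $M_\C \times \{i\be_0\}$ is exactly $\{(z,i\be_0) : -i z_0 \notin (-\infty,-1]\} = \dS_\C^{d,\cut} \times \{i\be_0\}$ by Proposition~\ref{prop:cplx-desit-cut}. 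On the other hand, $M_\C^{2,\cut} \cap (M_\C \times \{i\be_0\}) = M_\C^\cut \times \{i\be_0\}$ by the displayed characterization $M_\C^\cut = \{m \in M_\C : (m,i\be_0) \in M_\C^{2,\cut}\}$ in Section~\ref{sec:3}. Since both $(\dS_\C^d \times \dS_\C^d)^\cut$ and $\Omega'$ are $G_\C$-invariant and have the same intersection with the slice, and since $G_\C.(\text{slice}) = M_\C^2$, the two sets coincide.

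The only point requiring a little care is the transitivity/orbit argument: one must make sure that the $G_\C$-invariant set in question is genuinely a union of orbits of the twisted action, which is immediate, and that every orbit of this action meets the slice $M_\C \times \{i\be_0\}$ — this follows because the twisted action projects onto the ordinary transitive $G_\C$-action on the second factor (given $(z,w)$, pick $g$ with $\overline g\,w = i\be_0$, equivalently $g \in G_\C$ with $g.\overline w = \overline{i\be_0} = -i\be_0$, which exists since $G_\C$ acts transitively on $M_\C$ and $-i\be_0 \in M_\C$; here we use that $M_\C = G_\C.i\be_0 = G_\C.(-i\be_0)$). I expect this bookkeeping with the twisted action and the base points $\pm i\be_0$ to be the only mild obstacle; everything else is a direct translation of Proposition~\ref{prop:cplx-desit-cut}, and indeed the statement is essentially a reformulation of it, as the remark preceding the corollary indicates.
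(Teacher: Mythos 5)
Your proof is correct and takes essentially the same route as the paper's: the paper's one-line proof cites exactly the three ingredients you use — Proposition~\ref{prop:cplx-desit-cut}, the $G_\C$-invariance of the kernel $\beta(z,\oline w)$ under the twisted action, and the identity $(\dS_\C^d\times\dS_\C^d)^\cut = G_\C.(\dS_\C^{d,\cut}\times\{i\be_0\})$. You merely spell out the orbit/slice bookkeeping that the paper leaves implicit, and that bookkeeping is carried out correctly.
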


\begin{prf} This follows from Proposition~\ref{prop:cplx-desit-cut},
the $G_\C$-invariance of the kernel $\beta(z, \oline w)$ 
on $\dS^d_\C \times \dS^d_\C$ and
\[  (\dS_\C^d \times \dS_\C^d)^\cut
  = G_\C.(\dS_\C^{d,\cut} \times \{i\be_0\}).\qedhere\]   
\end{prf}

\begin{rem} \mlabel{rem:5.6}
For the subset $K_\C A_\C^\cut. i \be_0 \subeq \dS_\C^{d,\rm cut},$ 
we obtain from Lemma~\ref{lem:accut} that 
  \begin{equation}
    \label{eq:dense1}
    K_\C A_\C^\cut. i \be_0
    = \{i \be_0\} \cup \{ (z_0, \bz) \in \dS_\C^{d, \rm cut} \:
    \beta(\bz,\bz) \not=0\}.
  \end{equation}
  For $d \geq 2$, the point $i \be_0$ is {\bf not} inner in this subset
  because $i\be_0 + \frac{1}{n}(i\be_1 + \be_2) \to i\be_0$ 
  (cf.~\eqref{eq:so2c}).
  Further, the description in Proposition~\ref{prop:cplx-desit-cut}
  shows that 
  \[  \dS_\C^{d,\rm cut} \subeq \oline{K_\C A_\C^\cut. i \be_0},
    \quad \mbox{   hence also that } \quad
    G_\C^{d,\rm cut} \subeq \oline{K_\C A_\C^\cut K_\C}.\]
\end{rem}

The holomorphically extended kernel $\hat Q$ will be defined
on the $G_\C$-invariant domain $(\dS_\C^d \times \dS_\C^d)^\cut$ which contains
$\Xi_+ \times \Xi_+$. It also intersects
$\dS^d \times \dS^d$ in an open $G$-invariant subset.
By $G$-invariance, this subset is $G.\dS^{d,\cut}$, where 
\[ \dS^{d,\cut} := \{ x \in \dS^d \: (x,\be_1) \in
  (\dS_\C^d \times \dS_\C^d)^\cut\}.\]
To determine this domain explicitly, we use
Corollary~\ref{eq:dsccut}.
For $w = \be_1$, we thus obtain that
\[ (z,\be_1) \in  (\dS_\C^d \times \dS_\C^d)^\cut
  \quad \mbox{ if and only if } \quad
z_1 =  -\beta(z,\be_1) \in \C \setminus [1,\infty).\]

This discussion leads to the following description of the
cut domain in $\dS^d \times \dS^d$:

\begin{lem} \mlabel{lem:dsd-spacelike} We have the equalities 
  \begin{align} \label{dsd1}
 (\dS^d \times \dS^d)^\cut
&  = \{ (x,y) \in \dS^d \times \dS^d  \:   \beta(x,y) > -1\} \\
    \label{dsd2}
    &  = \{ (x,y) \in \dS^d \times \dS^d  \:   \beta(x-y,x-y) < 0\},
  \end{align}
which is the set of spacelike pairs in $\dS^d$,
and, in particular, 
\[ \dS^{d,\rm cut}
  = \{ x \in \dS^d \: \beta(x,\be_1) > -1\}
  = \{ x \in \dS^d \: x_1 < 1\}.\]
\end{lem}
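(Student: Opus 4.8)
The plan is to unwind the definitions established just before the statement. By construction, $(\dS^d \times \dS^d)^\cut = G_\C.(\dS_\C^{d,\cut} \times \{i\be_0\})$, and Corollary~\ref{eq:dsccut} already identifies this $G_\C$-invariant subset of $\dS^d_\C \times \dS^d_\C$ as $\{(z,w) : \beta(z,\oline w) \notin (-\infty,-1]\}$. So the first move is simply to intersect with $\dS^d \times \dS^d$: for real points $x,y \in \dS^d$ we have $\oline y = y$, hence $\beta(x,\oline y) = \beta(x,y) \in \R$, and the condition $\beta(x,y) \notin (-\infty,-1]$ becomes $\beta(x,y) > -1$. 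This gives \eqref{dsd1} immediately.

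Next I would establish the equivalence of \eqref{dsd1} and \eqref{dsd2} by a direct computation: for $x,y \in \dS^d$ we have $\beta(x,x) = \beta(y,y) = -1$, so
\[
\beta(x-y,x-y) = \beta(x,x) - 2\beta(x,y) + \beta(y,y) = -2 - 2\beta(x,y) = -2(1 + \beta(x,y)).
\]
Thus $\beta(x-y,x-y) < 0$ if and only if $1 + \beta(x,y) > 0$, i.e. $\beta(x,y) > -1$, which is exactly \eqref{dsd1}. The remark that this is the set of spacelike pairs is then just the definition of spacelike separation in the Lorentzian manifold $\dS^d$ (the difference vector $x - y$, viewed in the ambient Minkowski space, is spacelike precisely when $\beta(x-y,x-y) < 0$ with the mostly-minus convention used here).

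For the last assertion I would specialize $y = \be_1$. Since $\be_1 = (0,1,0,\ldots,0)$, we have $\beta(x,\be_1) = -x_1$ (note $\beta(x,y) = x_0 y_0 - \bx\by$), so the condition $\beta(x,\be_1) > -1$ reads $-x_1 > -1$, i.e. $x_1 < 1$. Alternatively this also follows from the computation in the paragraph preceding the lemma, where it is observed that $(z,\be_1) \in (\dS^d_\C \times \dS^d_\C)^\cut$ iff $z_1 = -\beta(z,\be_1) \in \C \setminus [1,\infty)$, which for real $x$ restricts to $x_1 < 1$; and $\dS^{d,\cut}$ is by definition $\{x \in \dS^d : (x,\be_1) \in (\dS^d_\C \times \dS^d_\C)^\cut\}$.

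There is essentially no obstacle here: the substantive geometric work was already carried out in Proposition~\ref{prop:cplx-desit-cut} and Corollary~\ref{eq:dsccut}, and this lemma is a bookkeeping consequence. The only point requiring minimal care is keeping track of sign conventions for $\beta$ (the mostly-minus signature) so that the inequality $\beta(x,\be_1) = -x_1 > -1$ comes out as $x_1 < 1$ rather than its negation, and confirming that "spacelike separated" in $\dS^d$ indeed corresponds to $\beta(x-y,x-y) < 0$ in this signature, which is the convention fixed in Subsection~\ref{subsec:2.1}.
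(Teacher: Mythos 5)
Your proof is correct and takes essentially the same route as the paper: it deduces \eqref{dsd1} from Corollary~\ref{eq:dsccut} by passing to real points, establishes \eqref{dsd2} via the identity $\beta(x-y,x-y) = -2(1+\beta(x,y))$ on $\dS^d$, and specializes $y=\be_1$; the paper's proof is just a terser version of the same argument. One small notational slip: your opening line writes $(\dS^d\times\dS^d)^\cut = G_\C.(\dS_\C^{d,\cut}\times\{i\be_0\})$, which is really the complexified set $(\dS^d_\C\times\dS^d_\C)^\cut$, but you immediately correct for this by intersecting with $\dS^d\times\dS^d$, so the substance is unaffected.
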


\begin{prf} From \eqref{eq:dsccut} we get \eqref{dsd1}. 
So the lemma follows immediately from
\[ \beta(x-y,x-y) = \beta(x,x) + \beta(y,y) - 2 \beta(x,y)
  = - 2 (1 + \beta(x,y))\]
for $x, y \in \dS^d$.
\end{prf}

Equation~\eqref{dsd2} implies in particular that 
\begin{equation}
  \label{eq:desit-cut}
  \dS^{d,\cut} = \dS^d \setminus \oline{I(\be_1)}
  = \{ x \in \dS^d \: x_1 < 1\},
\end{equation}
where
\begin{equation}
  \label{eq:I(x)}
  I(x) = \{x+ y\in \dS^d \: \beta (y,y) >0\},\quad\text{and}\quad I^\pm (x) =\{x+y\in I(x) \: \pm y_0> 0\}.
\end{equation}
So $\oline{I^\pm(x)}$ are the {\it causal future/past} of $x$.

\section{Analytic continuation of the generalized spherical
  function}
\mlabel{sec:4}

In this section we use the holomorphic kernel
$Q \: \Xi_{G_\C} \times \Xi_{G_\C} \to B(\cE)$
from \eqref{eq:Q}  
to obtain a holomorphic function on $G_\C^\cut = K_\C \Xi_{A_\C} \Xi_{G_\C}$
extending $Q(\cdot,e)$. The main point is
to argue that, for $g_1, g_2 \in \Xi_{G_\C}$, the value
of $Q(g_1, g_2)$ only depends on $\oline{g_2}^{-1} g_1$.
We than also inspect the boundary values of this holomorphic function
on the submanifold
\[ \exp\Big(-\frac{\pi i}{2}h\Big)
  G \exp\Big(-\frac{\pi i}{2}h\Big) =  \exp\Big(\frac{\pi i}{2}h\Big)
  G \exp\Big(\frac{\pi i}{2}h\Big)\]
from two ``sides'', along the curves
$\exp(ti h) g \exp(tih)$ for
$t \to \pm \pi/2$.

\subsection{Existence of the analytic continuation}

To extend the holomorphic function
\[ \phi_\cE \: \Xi_{G_\C} \to B(\cE), \quad
  \phi_\cE(z) = Q(z,e) \]
to the domain $G_{\C}^\cut$, we first construct a function
$F \: K_\C (A_\C^\cut\setminus \{e\}) K_\C \to B(\cE)$ satisfying
\[ F(k_1 g k_2) = U_\cE(k_1) F(g) U_\cE(k_2) \quad \mbox{ for } \quad
  k_1, k_2 \in K_\C.\]
Then we show that $F$ and $\phi_\cE$ coincide on the
connected intersection of their mutual domains, and then
we use Hartogs' Theorem asserting that
local boundedness implies extendability to obtain an extension
of $\phi_\cE$ to $G_\C^\cut$.

\begin{lem} \mlabel{lem:ext3}
  For $a_1, a_2 \in \Xi_{A_\C} = \exp(\cS_{\pm \pi/2}h)$, the value
  of $Q(a_1, a_2)$ only depends on $\oline{a_2}^{-1}a_1$.
\end{lem}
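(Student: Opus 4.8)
The plan is to reduce everything to matrix coefficients and apply the one-variable identity theorem on the strip $\cS_{\pm\pi/2}$. Write $a_j = \exp(z_j h)$ with $z_j \in \cS_{\pm\pi/2}$, $j = 1, 2$, so that, by the definition of $L$ on $\Xi_{A_\C} = \exp(\cS_{\pm\pi/2} h)$, for each $v \in \cE$ the map $z \mapsto L(\exp(zh))v$ is the holomorphic extension to $\cS_{\pm\pi/2}$ of the orbit map $t \mapsto U(\exp(th))v$. For $v, w \in \cE$ I would consider
\[ \Psi_{w,v}(z_1, z_2) := \langle w, Q(a_1, a_2) v\rangle = \langle L(\exp(z_2 h))w,\ L(\exp(z_1 h))v\rangle, \]
a function on $\cS_{\pm\pi/2} \times \cS_{\pm\pi/2}$ that is holomorphic in $z_1$ and antiholomorphic in $z_2$. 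Since $\cE$ is finite-dimensional, $Q(a_1, a_2) \in B(\cE)$ is determined by these numbers, so it is enough to show that $\Psi_{w,v}(z_1, z_2)$ depends only on $z_1 - \overline{z_2}$.

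The crucial point is that on real arguments $z_j = t_j \in \R$ one has, since $U(\exp(t_2 h))$ is unitary,
\[ \Psi_{w,v}(t_1, t_2) = \langle U(\exp(t_2 h))w, U(\exp(t_1 h))v\rangle = \langle w, U(\exp((t_1 - t_2)h))v\rangle, \]
which visibly depends only on $t_1 - t_2$. To propagate this to the whole strip I would reflect the second variable: since $\cS_{\pm\pi/2}$ is stable under conjugation, $\tilde\Psi_{w,v}(z_1, \zeta) := \Psi_{w,v}(z_1, \overline{\zeta})$ is holomorphic in both variables on $\cS_{\pm\pi/2}^2$, and on $\R^2$ it equals $c_{w,v}(z_1 - \zeta)$ with $c_{w,v}(r) := \langle w, U(\exp(rh))v\rangle$. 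Hence the holomorphic function $(\partial_{z_1} + \partial_\zeta)\tilde\Psi_{w,v}$ vanishes on $\R^2$; iterating the one-variable identity theorem over the connected strip $\cS_{\pm\pi/2}$ (first in $\zeta$ for $z_1 \in \R$ fixed, then in $z_1$) shows that it vanishes identically on $\cS_{\pm\pi/2}^2$. As the slice $\{ (z_1, \zeta) : z_1 - \zeta = c \} \cap \cS_{\pm\pi/2}^2$ is convex, hence connected, it follows that $\tilde\Psi_{w,v}(z_1, \zeta)$ depends only on $z_1 - \zeta$, that is, $\Psi_{w,v}(z_1, z_2)$ depends only on $z_1 - \overline{z_2}$.

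It remains to translate this back to group elements. Since $h$ has real entries, complex conjugation in $G_\C$ gives $\overline{a_2} = \exp(\overline{z_2}\, h)$, and because $\exp(\C h)$ is abelian we obtain $\overline{a_2}^{-1} a_1 = \exp\big( (z_1 - \overline{z_2}) h \big)$; moreover $\exp(\,\cdot\, h)$ maps $\cS_{\pm\pi}$ bijectively onto $A_\C^\cut$, so the datum $z_1 - \overline{z_2} \in \cS_{\pm\pi}$ is equivalent to the datum $\overline{a_2}^{-1} a_1 \in A_\C^\cut$, which proves the lemma. The one feature that must be kept in mind — and the reason the lemma is needed at all — is that $z_1 - \overline{z_2}$ generally leaves $\cS_{\pm\pi/2}$ (it only lies in $\cS_{\pm\pi}$), so $\overline{a_2}^{-1} a_1$ need not lie in $\Xi_{A_\C}$; this is harmless, since $\Psi_{w,v}(z_1, z_2)$ is built solely from the individually well-defined vectors $L(a_1)v$ and $L(a_2)w$, and at no point do we evaluate $L$ outside $\Xi_{A_\C}$. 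A more computational alternative would be to use the strip-doubling phenomenon $\cD_{v,v} = 2\cD_v$ from the introduction to argue that $\langle w, U(\exp(zh))v\rangle$ extends holomorphically to $\cS_{\pm\pi}$ and then identify $\Psi_{w,v}(z_1, z_2)$ with its value at $z_1 - \overline{z_2}$ by analytic continuation; the identity-theorem argument above sidesteps the need to establish that extension.
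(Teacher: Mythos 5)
Your proof is correct and follows essentially the same approach as the paper: both exploit the sesquiholomorphy of $(z_1,z_2)\mapsto Q(\exp z_1,\exp z_2)$ on $\fa_\C^2$, the $A$-invariance (equivalently, that the matrix coefficient on real arguments depends only on $t_1-t_2$), and the convexity/connectedness of the level sets of $z_1-\oline{z_2}$ inside $(\fa+i\Omega_\fa)^2$. The paper works directly with the $B(\cE)$-valued kernel and parametrizes each real-affine fiber $F_z$ by $z_1$, observing the restriction is holomorphic and constant along $A$-orbits, hence constant; you instead pass to scalar matrix coefficients, conjugate the second variable to obtain a genuinely holomorphic function, and make the identity-theorem step explicit by showing $(\partial_{z_1}+\partial_\zeta)\tilde\Psi$ vanishes — a harmless reformulation of the same idea.
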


\begin{prf} The exponential map $\exp \:  \fa_\C = \C h \to A_\C \cong \C^\times$
  is equivalent to the exponential map of the complex plane,
  hence maps $\fa + 2 i \Omega_\fa = \cS_{\pm \pi} h$ biholomorphically
  to $A_\C^\cut$. Hence 
 the relation $\exp(-\oline{z_2})\exp(z_1) = \exp z$
 for $z_1, z_2 \in \fa + i \Omega_\fa$ and $z \in \fa + 2i\Omega_\fa$
 is equivalent to $z_1 - \oline{z_2} = z$. The
 set 
 \[ F_z := \{ (z_1,z_2) \in \fa_\C^2 \: z_1 - \oline{z_2} = z \} \]
 is a real affine subspace with translation space $\fa_\C$,
 hence in particular convex. Therefore it intersects the convex product set
 $(\fa + i \Omega_\fa)^2$ in a convex set, which corresponds to an open
 subset of $\fa_\C$ that is invariant under $\fa$-translations.
 As the function
 \[ (\fa + i \Omega_\fa)^2 \to B(\cE), \quad
   (z_1, z_2) \mapsto Q(\exp z_1, \exp z_2) \]
 is sesquiholomorphic and defines an $A$-invariant kernel, 
 its restriction to the fibers $F_z$ is holomorphic in the $z_1$-coordinate
 and  constant along the $A$-orbits, hence constant by connectedness
 of the fibers.
  \end{prf}

With Lemma~\ref{lem:ext3}, we first
find a holomorphic function
\[ F_A \: A_\C^\cut = \Xi_{A_\C} \Xi_{A_\C} \to B(\cE),
  \quad \mbox{ satisfying } \quad
  F_A(\oline b^{-1} a) = Q(a,b) \quad \mbox{ for } \quad
  a,b \in \Xi_{A_\C}.\]
Then we consider on
the product space $K_\C \times A_\C^\cut \times K_\C$
the holomorphic function
\[ \hat F(k_1, a, k_2) := U_\cE(k_1) F_A(a) U_\cE(k_2).\]

\begin{lem} \mlabel{lem:4.26} There exists a function
$F \: K_\C A_\C^\cut K_\C \to B(\cE)$   satisfying
  \[   F(k_1 a k_2) := U_\cE(k_1) F_A(a) U_\cE(k_2) 
    \quad \mbox{ for }\quad k_1, k_2 \in K_\C, a \in A_\C^\cut.\]
\end{lem}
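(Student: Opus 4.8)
The statement to prove is that the assignment $F(k_1 a k_2) := U_\cE(k_1) F_A(a) U_\cE(k_2)$ is well-defined on $K_\C A_\C^\cut K_\C$, i.e., independent of the decomposition $g = k_1 a k_2$.

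The plan is to show that $\hat F \colon K_\C \times A_\C^\cut \times K_\C \to B(\cE)$ factors through the multiplication map $\mu \colon K_\C \times A_\C^\cut \times K_\C \to K_\C A_\C^\cut K_\C$. The key point is the redundancy in the $KAK$-type decomposition coming from the centralizer of $\fa$ in $K$. Concretely, if $k_1 a k_2 = k_1' a' k_2'$, then... this is a standard $KAK$-uniqueness statement but one must be careful since we're in the complexified setting.

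First I would reduce to the case $a = a'$. Since $A_\C^\cut.i\be_0$ meets each $K_\C$-orbit in $\dS_\C^{d,\cut}$ that contains an $A_\C$-orbit — more precisely, by Lemma~\ref{lem:accut} and Lemma~\ref{lem:witt}, the $K_\C$-orbit of $a.i\be_0$ for $a \in A_\C^\cut$ is determined by the value $-i(a.i\be_0)_0 = -i\,\beta(a.i\be_0,-i\be_0) \in \C\setminus(-\infty,-1]$, and the map $A_\C^\cut \to \C\setminus(-\infty,-1]$ obtained this way is injective (it is essentially $\zeta \mapsto \cosh\zeta$ on $\cS_{\pm\pi}$, see Remark~\ref{rem:cosh}). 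Hence $k_1 a k_2.i\be_0$ and $k_1' a' k_2'.i\be_0$ lying in the same $K_\C$-orbit, together with $k_2, k_2' \in K_\C = G_{\C,i\be_0}$, forces $a.i\be_0$ and $a'.i\be_0$ to be in the same $K_\C$-orbit, whence $a = a'$.

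Next, with $a = a'$ fixed, I would show: if $k_1 a k_2 = k_1' a k_2'$ then $U_\cE(k_1)F_A(a)U_\cE(k_2) = U_\cE(k_1')F_A(a)U_\cE(k_2')$. Writing $u = (k_1')^{-1}k_1$ and $w = k_2'k_2^{-1}$, the relation becomes $u a = a w$, i.e., $w = a^{-1} u a \in K_\C$, and the claim is $U_\cE(u)F_A(a) = F_A(a)U_\cE(w)$, equivalently $U_\cE(u)F_A(a)U_\cE(a^{-1}ua)^{-1}$... wait, more carefully: from $k_1 a k_2 = k_1' a k_2'$ we get $u a k_2 = a k_2'$, so $u a = a k_2' k_2^{-1} = a w$ with $w := k_2'k_2^{-1}$, giving $w = a^{-1} u a$. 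We must show $U_\cE(u) F_A(a) = F_A(a) U_\cE(w) = F_A(a) U_\cE(a^{-1} u a)$.

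The hard part is this last identity, and here is where I would use the covariance of the original kernel. Recall $F_A(\oline b^{-1} c) = Q(c,b) = L(b)^* L(c)$ for $c, b \in \Xi_{A_\C}$, and write $a = \oline b^{-1} c$. For $u \in K_\C$ with $a^{-1} u a \in K_\C$, I would like to say $u c \in \Xi_{G_\C}$ and $\oline u b \in \Xi_{G_\C}$ (at least after shrinking, using that $\Xi_{G_\C}$ is $K_\C$-invariant on the right but we need left multiplication — actually $\Xi_{G_\C} = G\exp(i\Omega_\fp)K_\C$ is not obviously left $K_\C$-invariant). Instead, the cleaner route: use that $L$ satisfies $L(g k) = L(g) U_\cE(k)$ for $k \in K_\C$ wherever defined, and extend the defining relation for $F_A$ to allow $K_\C$ factors. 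The genuinely delicate issue — the main obstacle — is handling left $K_\C$-translates, since $\Xi_{G_\C}$ and $A_\C^\cut$ are not left $K_\C$-invariant; one must exploit that the specific element $u$ occurring satisfies $a^{-1}ua \in K_\C$, which constrains $u$ to (a conjugate of) the centralizer-type subgroup, for which the needed commutation is built into the $K\times K$-covariance already recorded in \eqref{eq:sphefunc} on the overlap $\Xi_{G_\C}$. So the proof reduces to: (i) the injectivity statement above to pin down $a$; (ii) analytic continuation from $\Xi_{G_\C}$ — where $\phi_\cE$ already satisfies the $K_\C \times K_\C$-covariance \eqref{eq:sphefunc} and agrees with $F$ by Lemma~\ref{lem:ext3} and construction — to propagate the required identity $U_\cE(u)F_A(a) = F_A(a)U_\cE(a^{-1}ua)$ along connected sets. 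I would carry these out in that order, expecting step (ii), the matching of the two covariance relations across the boundary between $\Xi_{G_\C}$ and the full cut domain, to require the most care.
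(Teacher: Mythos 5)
Your step (i) contains a genuine gap. You claim the map $A_\C^\cut \to \C\setminus(-\infty,-1]$, $a\mapsto -i(a.i\be_0)_0$, is injective, citing Remark~\ref{rem:cosh}, and conclude $a=a'$. But for $a=\exp(\zeta h)$ this map is $\zeta\mapsto\cosh\zeta$, which is \emph{even}, so $a$ and $a^{-1}$ always map to the same value. Remark~\ref{rem:cosh} gives biholomorphy of $\cosh$ only on the one-sided strip $\cS_\pi=\{0<\Im z<\pi\}$; on the two-sided strip $\cS_{\pm\pi}$, which is what $\fa_\C^\cut$ corresponds to, $\cosh$ is generically $2$-to-$1$. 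So the correct conclusion from your $K_\C$-orbit argument is $a'\in\{a,a^{-1}\}$, and the case $a'=a^{-1}$ with $a\neq e$ is a genuine case your proposal never addresses. This case is not a formality: there $k_1.\be_1=-\be_1$, so $k_1\in N_{K_\C}(\fa_\C)$ with $\Ad(k_1)h=-h$, and one must write $k_1=k_0m$ with $k_0\in N_K(\fa)$ real and $m\in Z_K(A)_\C$, use that $F_A(A_\C^\cut)$ commutes with $U_\cE(Z_K(A)_\C)$ to drop the $m$, and then use the real $K\times K$-equivariance of $\phi_\cE$ plus analytic continuation to conclude $U_\cE(k_0)F_A(a)U_\cE(k_0)^{-1}=F_A(a^{-1})$.

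Within the case you do treat ($a=a'$), your analysis is essentially on target: the constraint $a^{-1}ua\in K_\C$ forces $u\in Z_{K_\C}(A)$ when $a\neq e$ (since $u$ then stabilizes both $i\be_0$ and $a.i\be_0$, hence $\be_0$ and $\be_1$), so the required identity collapses to $F_A(a)$ commuting with $U_\cE(u)$, which is exactly the commutant fact the paper invokes by analytic continuation from $A$. But the missing $a'=a^{-1}$ branch does not reduce to this commutation and requires the additional Weyl-element argument just described.
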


\begin{prf} We have to show that,
  $k_1 a k_2 = k_1' a' k_2'$ in $G_\C$ implies that
  $\hat F(k_1, a, k_2) = \hat F(k_1', a', k_2')$, so that we may define
  $F$ accordingly on $k_1 a k_2$.

Replacing
$k_1$ by $(k_1')^{-1}k_1$ and 
$k_2$ by $k_2(k_2')^{-1}$, we may w.l.o.g.\ assume that
$k_1' = k_2' = e$. So let $k_1, k_2 \in K_\C = \SO_d(\C)$ and
$a = a_z = \exp(zh)$, $a' = a_{z'} = \exp(z'h)$ with $|\Im z| < \pi$
and $|\Im z'| < \pi$ such that
\[ a_{z'} = k_1 a_z k_2.\]
Applying both sides to $\be_0$, we obtain
\[ \cosh(z') \be_0 + \sinh(z') \be_1
  = \cosh(z) \be_0 + \sinh(z) k_1.\be_1.\]

\nin {\bf Case 1:} $z = 0$. Then $a_z = e$ and
$\be_0 = \cosh(z') \be_0 + \sinh(z') \be_1$ implies
$e^{z'} = 1$, so that $z' = 0$ follows from $|\Im z'| < \pi$.
Thus $a' = e$ and $k_2 = k_1^{-1}$. We therefore
have
\[ \hat F(k_1, a, k_2)
  = U_\cE(k_1) F_A(e) U_\cE(k_2)
= U_\cE(k_1)  U_\cE(k_1)^{-1} = \1 = \hat F(e,a',e).\] 

\nin {\bf Case 2:} $z \not= 0$.
Then $\sinh(z) \not=0$ implies $k_1.\be_1 = \pm\be_1$
and $\sinh(z') = \pm \sinh(z)$. We thus obtain $z' = \pm z$,
i.e., $a' = a^{\pm 1}$. In the first case $k_1 \in Z_K(A)_\C = \SO_{d-1}(\C)$,
and in the second case $k_1.\be_1 = - \be_1$ implies that
$k_1 \in N_{K_\C}(\fa_\C)$ and $\Ad(k_1)h = - h$.
Hence
\[ a^{\pm 1} = a' = k_1 a k_2 = a^{\pm 1} k_1 k_2 \]
implies $k_2 = k_1^{-1}$. 
Using the polar decomposition
$N_{K_\C}(\fa_\C) = N_K(\fa) Z_K(A)_\C,$ 
we write $k_1 = k_0 m$ with $m \in Z_K(A)_\C$ and $k_0  \in N_K(\fa)$.
By analytic continuation, 
\[ F_A(\cA_\C^\cut) \subeq  U_\cE(Z_K(A)_\C)',\]
so that we obtain
\begin{align*}
 \hat F(k_1, a, k_2)
&  = U_\cE(k_1) F_A(a) U_\cE(k_1)^{-1} 
  = U_\cE(k_0) F_A(a) U_\cE(k_0)^{-1}\\
  &  = \phi_\cE(k_0 a k_0^{-1}) = \phi_\cE(a^{\pm 1}) =  \hat F(e,a',e).
\end{align*}
This completes the proof of the lemma.   
\end{prf}

Next we observe that the domain
\begin{equation}
  \label{eq:cd1}
  \cD_1 := K_\C (A_\C^\cut\setminus \{e\}) K_\C \subeq G_\C
\end{equation}
is open. In fact,
\begin{equation}
  \label{eq:cd2}
 \cD_1.i\be_0
  = \{ (z_0, \bz) \in \dS_\C^d \: \beta(\bz,\bz) \not=0,
  -i z_0 \not\in (-\infty, -1] \}
\end{equation}
(cf.~Remark~\ref{rem:5.6}) is an open subset of $\dS_\C^d$, so that
its inverse image $\cD_1 \subeq G_\C$ is open as well.

The set $\fa_\C^\cut \setminus \{0\}$
is open and consists of elements of the form 
$z h$ with $\Fix(e^{\ad (zh)}) = \fz_\g(\fa_\C)$.
The map 
\[ \Psi \: (K_\C/Z_{K_\C}(A)) \times A_\C \to G_\C^{-\theta}, \quad 
  (k,a) \mapsto k a k^{-1} \] 
is well-defined, holomorphic, and its differential
is regular in all points $(k,\exp x)$, for which the function 
$\exp \: \fp_\C \to G_\C^{-\theta}$ is regular in $x$ and
$Z_{K_\C}(\exp x) = Z_{K_\C}(A)$.
By \cite[Lemma~C.3]{NO23}, the fist condition
is satisfied for $x \in \fa_\C^\cut$, and if, in addition,
$x \not=0$, then also the second condition is satisfied.
Hence $\Psi$ is locally biholomorphic around $(e,a)$ for
$a \in A_\C^\cut\setminus \{e\}$. This easily implies that
$F\res_{\cD_1}$ is holomorphic. 

\begin{lem}
  \mlabel{lem:gamma} For the holomorphic function 
\[ \gamma \: \dS^d_\C \to \C,\quad
  z = (z_0, \bz) \mapsto \beta(\bz,\bz) = - \bz^2, \]
the value $0$ is regular. Likewise, $0$ is a regular value of
the function
\[ \gamma_G \:  G_\C \to \C, \quad
  \gamma_G(g) := \gamma(g.i\be_0),  \]
so that $\gamma_G^{-1}(0)$ is a holomorphic hypersurface of $G_\C$.
\end{lem}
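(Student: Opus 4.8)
The plan is to reduce the statement to one short differential computation, using that on de Sitter space $\gamma$ is, up to a constant, the square of the first coordinate. Since $z_0^2-\bz^2=-1$ for $z\in\dS^d_\C$, one has $\gamma(z)=-\bz^2=-1-z_0^2$ there; writing $\delta(z):=z_0=\beta(z,\be_0)$ for the restriction to $\dS^d_\C$ of the linear functional $z\mapsto z_0$, this reads $\gamma=-1-\delta^2$, so $d\gamma_z=-2\,\delta(z)\,d\delta_z$ on $T_z\dS^d_\C$. First I would note that a point $z$ of the fibre $\gamma^{-1}(0)$ satisfies $\delta(z)^2=-1$, hence $\delta(z)=\pm i\neq 0$, so that surjectivity of $d\gamma_z$ is equivalent to $d\delta_z$ being non-zero on $T_z\dS^d_\C=z^{\perp_\beta}$. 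The next step is the linear-algebra point: $d\delta_z=\beta(\,\cdot\,,\be_0)$, and by non-degeneracy of $\beta$ this vanishes on $z^{\perp_\beta}$ exactly when $\be_0\in(z^{\perp_\beta})^{\perp_\beta}=\C z$, i.e.\ when $z\in\C\be_0$; together with $\beta(z,z)=-1$ this forces $z=\pm i\be_0$. So $d\gamma_z\neq 0$ at every $z\in\gamma^{-1}(0)$ with $z\neq\pm i\be_0$. For $\gamma_G=\gamma\circ q_{M_\C}$, with $q_{M_\C}\colon G_\C\to\dS^d_\C,\ g\mapsto g.i\be_0$ a holomorphic submersion, the chain rule $d(\gamma_G)_g=d\gamma_{q_{M_\C}(g)}\circ dq_{M_\C,g}$ transports regularity: $\gamma_G$ is submersive at $g$ whenever $\gamma$ is submersive at $q_{M_\C}(g)$, hence on $G_\C\setminus q_{M_\C}^{-1}(\{\pm i\be_0\})$.

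The step I expect to be the main obstacle is the pair of $K_\C$-fixed points $\pm i\be_0$, i.e.\ the points of $\gamma^{-1}(0)$ with $\bz=0$: there $d\gamma$ vanishes identically (and $d\gamma_G$ vanishes on the fibres $q_{M_\C}^{-1}(\pm i\be_0)$, which contain $e$), so these are genuine critical points of $\gamma$ lying in the zero fibre, and nothing can be rescued for the submersion property at them. What I would do instead --- to secure the conclusion the lemma is used for, namely that $\gamma^{-1}(0)$ and $\gamma_G^{-1}(0)$ are holomorphic hypersurfaces --- is to observe that this needs only non-constancy: $\gamma$ is a non-constant holomorphic function on the connected complex manifold $\dS^d_\C$ (for instance $\gamma(i\be_0)=0\neq -1=\gamma(\be_1)$), so $\gamma^{-1}(0)$ is an analytic subset of pure codimension one, and then $\gamma_G=\gamma\circ q_{M_\C}$ is a non-constant holomorphic function on the connected group $G_\C$ with $\gamma_G^{-1}(0)=q_{M_\C}^{-1}(\gamma^{-1}(0))$ again of pure codimension one.

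Assembling the two parts, what the argument establishes is: $0$ is a regular value of $\gamma$ on $\dS^d_\C\setminus\{\pm i\be_0\}$ and of $\gamma_G$ on $G_\C\setminus q_{M_\C}^{-1}(\{\pm i\be_0\})$, and $\gamma^{-1}(0)$, $\gamma_G^{-1}(0)$ are holomorphic hypersurfaces, smooth away from $\pm i\be_0$ (respectively $q_{M_\C}^{-1}(\{\pm i\be_0\})$), the singularity at $\pm i\be_0$ for $d\geq 2$ being the complex isotropic-cone singularity that is read off from Lemma~\ref{lem:witt}. For the later use --- holomorphic extension of $\phi_\cE$ across $\gamma_G^{-1}(0)$ --- the excised analytic set $q_{M_\C}^{-1}(\{\pm i\be_0\})$ has connected dense complement in $G_\C$, so the removable-singularity arguments proceed exactly as in the genuinely regular case.
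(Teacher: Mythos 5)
Your computation is correct, and it exposes an oversight in the paper's own proof. The paper's argument writes $d\gamma_z(w)=2\beta(\bz,w)$ on $T_z\dS^d_\C=z^{\bot_\beta}$, deduces from $d\gamma_z|_{T_z}=0$ that $(0,\bz)\in\C z$, and asserts ``which only happens for $z_0=0$''. That assertion drops the case $\bz=0$ (scalar $\lambda=0$ in $(0,\bz)=\lambda z$), which together with $\beta(z,z)=-1$ gives $z=\pm i\be_0$; your calculation via $\gamma=-1-z_0^2$ locates exactly these two critical points, and they lie in the zero fibre. So the first sentence of the lemma --- that $0$ is a regular value of $\gamma$ --- is false as stated, and consequently $0$ is not a regular value of $\gamma_G$ either, $\gamma_G$ being critical along $q_{M_\C}^{-1}(\{\pm i\be_0\})$, in particular at $g=e$. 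Your route (reducing to the linear functional $\delta(z)=z_0=\beta(z,\be_0)$ rather than to $\beta(\bz,\cdot)$) is not just a restyling: it makes the exceptional locus visible precisely where the paper's argument loses it.

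Your rescue is also the right one and matches what the paper actually needs downstream. Lemma~\ref{lem:connected} and Theorem~\ref{thm:ext} invoke only that $\gamma_G^{-1}(0)$ is a complex hypersurface, i.e.\ a non-empty proper analytic subset of $G_\C$ cut out locally by one holomorphic equation; this follows from $\gamma_G$ being holomorphic and non-constant on the connected group $G_\C$ (for instance $\gamma_G(e)=\gamma(i\be_0)=0$ while $\gamma_G\bigl(\exp(-\tfrac{\pi i}{2}h)\bigr)=\gamma(\be_1)=-1$), with no regularity needed. The facts cited from [GR65] --- connectedness of the complement of a thin analytic set, and Riemann's removable singularity theorem in several variables --- apply to arbitrary thin analytic subsets, not just smooth hypersurfaces, so all later uses go through unchanged once the lemma's regular-value claim is replaced, as you propose, by the non-constancy argument.
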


\begin{prf}
If $\dd\gamma(z)$ vanishes on $T_z(M_\C)$, then
\[ \{0\} = \beta(\bz, T_z(M_\C)) = \beta(\bz, z^{\bot_\beta}) \quad
\mbox{ for } \quad z^{\bot_\beta} = \{ w \in \C^{1,d} \: \beta(z,w) = 0\}, \]
implies that
$\bz \in \C z$, which only happens for $z_0 = 0$.
Then $-1 = \beta(z,z) = - \beta(\bz,\bz)$,
which never happens in the $0$-fiber of~$\gamma$.
Therefore $\gamma^{-1}(0)$ is a complex hypersurface in $\dS^d_\C$.
Accordingly, $\gamma_G^{-1}(0)$ is a holomorphic hypersurface of $G_\C$.
\end{prf}

\begin{lem} \mlabel{lem:connected}
  The open subset $\Xi_{G_\C} \cap \cD_1 \subeq G_\C$ is connected. 
\end{lem}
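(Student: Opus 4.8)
The goal is to show that $\Xi_{G_\C}\cap \cD_1$ is connected. The plan is to pass to the symmetric space $M_\C \cong G_\C/K_\C$, where both sets are unions of $K_\C$-orbits, and to use that connectedness of the $K_\C$-saturated open set in $G_\C$ follows from connectedness of its image in $M_\C$ together with connectedness of the fiber $K_\C$. So it suffices to prove that
\[ \Omega := (\Xi_{G_\C}\cap \cD_1).i\be_0 = \Xi_+ \cap (\cD_1.i\be_0) \]
is connected. Using \eqref{eq:cd2}, this is the set of $z=(z_0,\bz) \in \dS^d_\C$ lying in the forward tube $\Xi_+ = \dS^d_\C \cap (V + iV_+)$ with $\beta(\bz,\bz) \neq 0$; equivalently (since $\Xi_+$ is already contained in the cut domain by Proposition~\ref{prop:cplx-desit-cut}) it is $\Xi_+ \setminus \gamma^{-1}(0)$, the complement in the crown domain of the hypersurface cut out by $\gamma(z) = \beta(\bz,\bz)$ from Lemma~\ref{lem:gamma}.

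The key step is then a general position/codimension argument: $\Xi_+$ is a connected complex manifold (it is biholomorphic to the crown domain of $\bH^d$, hence connected and even a domain of holomorphy), and $\gamma\res_{\Xi_+}$ is a holomorphic function that is not identically zero (e.g. $i\be_0 \in \Xi_+$ has $\gamma(i\be_0) = 0$, while nearby points with $\bz \neq 0$ isotropic-free do not, so the zero set is a proper analytic subset). Therefore $\gamma^{-1}(0) \cap \Xi_+$ is a proper complex-analytic hypersurface of a connected complex manifold, and the complement of such a set is connected — this is the standard fact that removing a (real-)codimension-two analytic subset from a connected complex manifold leaves it connected. I would invoke this directly, citing Lemma~\ref{lem:gamma} for the fact that $0$ is a regular value so that the hypersurface is smooth (though smoothness is not even needed for the connectedness conclusion, only that it is a proper analytic subset).

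To make the reduction from $M_\C$ back to $G_\C$ precise: the orbit map $q_{M_\C}\colon G_\C \to M_\C$, $g\mapsto g.i\be_0$, is a holomorphic $K_\C$-principal bundle, and both $\Xi_{G_\C}$ (by \eqref{eq:xigc}) and $\cD_1$ (by \eqref{eq:cd1}) are right $K_\C$-invariant, hence so is their intersection, which is therefore $q_{M_\C}^{-1}(\Omega)$. Since $K_\C = \SO_d(\C)$ is connected, a fiber bundle with connected fibers over a connected base has connected total space, so $\Xi_{G_\C}\cap\cD_1$ is connected once $\Omega$ is.

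**Main obstacle.** The only genuine point requiring care is confirming that $\Xi_+$ is connected and that $\gamma$ does not vanish identically on it — the former is supplied by the identification of $\Xi_+$ with the (connected) crown domain via \eqref{eq:crown}, and the latter is immediate from Lemma~\ref{lem:witt}, since $\Xi_+$ contains points $(z_0,z_1,0,\ldots,0)$ with $z_1 \neq 0$, for which $\beta(\bz,\bz) = -z_1^2 \neq 0$. Everything else is a routine application of the principal-bundle reduction and the elementary fact that complements of proper analytic hypersurfaces in connected complex manifolds are connected.
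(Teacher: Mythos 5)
Your proof is correct and uses the same key idea as the paper: the complement of a proper complex-analytic hypersurface in a connected complex manifold is connected. The paper applies this directly in $G_\C$, observing that $\cD_1 \cap \Xi_{G_\C} = \gamma_G^{-1}(\C^\times) \cap \Xi_{G_\C}$ is the complement of the hypersurface $\gamma_G^{-1}(0)$ in the connected set $\Xi_{G_\C}$ and citing \cite[Cor~I.C.4]{GR65}; your version instead works first in $M_\C$ with $\Xi_+ \setminus \gamma^{-1}(0)$ and then lifts back via the $K_\C$-principal bundle $q_{M_\C}$, which adds a (correct but unnecessary) reduction step. Everything else, including the check that $\gamma$ does not vanish identically, matches the paper's argument or is implicit in it.
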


\begin{prf}
  From $\Im z_0 > 0$ for $z \in \Xi_+$ and  \eqref{eq:cd2} we derive that
\begin{equation}
  \label{eq:d1a}
 \cD_1.i\be_0 \cap \Xi_+ 
 = \{ (z_0, \bz) \in \Xi_+  \: \beta(\bz,\bz) \not=0 \}, 
\end{equation}
so that $\cD_1 \cap \Xi_{G_\C} = \gamma_G^{-1}(\C^\times) \cap \Xi_{G_\C}$ 
is the complement of a complex hypersurface in the connected
  open subset $\Xi_{G_\C}$, hence connected by
\cite[Cor~I.C.4]{GR65}. 
\end{prf}

\begin{thm} {\rm(The Extension Theorem)} \mlabel{thm:ext}
  The holomorphic function
  $\phi_\cE := Q(\cdot,e) \: \Xi_{G_\C} \to B(\cE)$ extends to a holomorphic
  function on $G_\C^\cut$.   
\end{thm}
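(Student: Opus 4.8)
The plan is to assemble the pieces already prepared in this section. We have on $\Xi_{G_\C}$ the holomorphic function $\phi_\cE=Q(\cdot,e)$, and on $K_\C A_\C^\cut K_\C$ the function $F$ of Lemma~\ref{lem:4.26}, which by the discussion following \eqref{eq:cd2} restricts to a holomorphic function on the open set $\cD_1=K_\C(A_\C^\cut\setminus\{e\})K_\C$. The first step is to show that $\phi_\cE$ and $F$ agree on the overlap $\Xi_{G_\C}\cap\cD_1$. For $a_1,a_2\in\Xi_{A_\C}$ with $\oline{a_2}^{-1}a_1\in A_\C^\cut\setminus\{e\}$ we have by construction $F_A(\oline{a_2}^{-1}a_1)=Q(a_1,a_2)$, and since both $\phi_\cE$ and $F$ transform by $U_\cE(k_1)(\cdot)U_\cE(k_2)$ under left and right multiplication by $K_\C$ (for $\phi_\cE$ this is \eqref{eq:sphefunc} restricted to $\Xi_{G_\C}$, valid by analytic continuation of the $K$-covariance of $\phi_\cE\res_G$; for $F$ it is built in), the two functions agree on $K_\C\Xi_{A_\C}\Xi_{A_\C}K_\C\cap\cD_1\cap\Xi_{G_\C}$, a nonempty open set. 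By Lemma~\ref{lem:connected} the intersection $\Xi_{G_\C}\cap\cD_1$ is connected, so the identity theorem gives $\phi_\cE=F$ on all of $\Xi_{G_\C}\cap\cD_1$.

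Consequently $\phi_\cE$ and $F$ glue to a single holomorphic function, call it $\Phi$, on the open set $\Xi_{G_\C}\cup\cD_1$. Now observe
\[
G_\C^\cut=K_\C\Xi_{A_\C}G\Xi_{A_\C}K_\C = \big(K_\C A_\C^\cut K_\C\big)\cup\Xi_{G_\C},
\]
or more to the point $G_\C^\cut\setminus(\Xi_{G_\C}\cup\cD_1)$ is contained in $K_\C(\{e\})K_\C\cdot(\text{something})$, i.e.\ in the locus where the middle $A_\C^\cut$-coordinate degenerates; by Lemma~\ref{lem:gamma} this exceptional locus sits inside the holomorphic hypersurface $\gamma_G^{-1}(0)$ (its image in $\dS^d_\C$ being $\{\beta(\bz,\bz)=0\}$, as recorded in \eqref{eq:cd2} and Remark~\ref{rem:5.6}), which has complex codimension one in $G_\C$. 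The second step is therefore to check that $G_\C^\cut$ is connected — which follows since it is a union of the $K_\C$-bi-invariant sweep of the connected set $A_\C^\cut$ together with the connected $\Xi_{G_\C}$, and these overlap — and that $G_\C^\cut\setminus\gamma_G^{-1}(0)\subseteq\Xi_{G_\C}\cup\cD_1$, so that $\Phi$ is a holomorphic function defined on the complement of a hypersurface in $G_\C^\cut$.

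The final step is the removal of the hypersurface singularity. The function $\Phi$ is locally bounded near every point of $\gamma_G^{-1}(0)\cap G_\C^\cut$: indeed near such a point $g=k_1 a k_2$ with $a\in A_\C$, $|\Im\log a|<\pi$, we can use the local biholomorphism $\Psi$ together with the explicit covariance $\Phi(k_1 a k_2)=U_\cE(k_1)F_A(a)U_\cE(k_2)$ and the continuity of $F_A$ up to $a=e$ — more precisely, $F_A$ is holomorphic on all of $A_\C^\cut$ including $e$ (it is the function of Lemma~\ref{lem:4.26} before restricting to $\cD_1$), so $\hat F$ is holomorphic on $K_\C\times A_\C^\cut\times K_\C$ and descends at least to a locally bounded function near the degenerate locus. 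Hence Riemann's removable-singularity theorem across the smooth hypersurface $\gamma_G^{-1}(0)$ (equivalently Hartogs' extension, as announced in the section introduction) produces a holomorphic extension of $\Phi$ to all of $G_\C^\cut$, which restricts to $\phi_\cE$ on $\Xi_{G_\C}$. This is the asserted extension. The main obstacle is the first step — verifying that $\phi_\cE$ and $F$ genuinely coincide on the overlap rather than merely being two candidate extensions; it hinges on the $K_\C$-covariance of $\phi_\cE$ on $\Xi_{G_\C}$ (which itself requires analytic continuation from $G$, where it is \eqref{eq:sphefunc}) and on the connectedness supplied by Lemma~\ref{lem:connected}. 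Once agreement on the overlap is in hand, the hypersurface removal is routine given Lemma~\ref{lem:gamma}.
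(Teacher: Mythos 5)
Your overall strategy — glue $\phi_\cE$ with the $K_\C$-bi-invariant function $F$ over the connected overlap $\Xi_{G_\C}\cap\cD_1$, then remove the singularity along the hypersurface $\gamma_G^{-1}(0)$ — matches the paper's, but there are two genuine gaps in the way you execute it.

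First, for the agreement of $\phi_\cE$ and $F$ on the overlap, you invoke full $K_\C\times K_\C$-covariance of $\phi_\cE$ on $\Xi_{G_\C}$, ``valid by analytic continuation.'' This is not available at this point: $\Xi_{G_\C}=G\exp(i\Omega_\fp)K_\C$ is right-$K_\C$-invariant by construction, but it is not left-$K_\C$-invariant, so there is no domain on which to analytically continue the left $K$-covariance to a left $K_\C$-covariance. Equation~\eqref{eq:sphefunc} is recorded in the paper as a consequence of the Extension Theorem, not an ingredient. The paper sidesteps all of this by observing that $F$ coincides with $\phi_\cE$ on the real group $G$, hence in particular on $K(A\setminus\{e\})K = G\setminus K$, which is a totally real submanifold of full dimension inside $\Xi_{G_\C}\cap\cD_1$; by Lemma~\ref{lem:connected} the identity theorem then gives $F=\phi_\cE$ on all of $\Xi_{G_\C}\cap\cD_1$. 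You should use this more elementary route.

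Second, and more seriously, your local boundedness argument near $\gamma_G^{-1}(0)$ assumes every point $g\in\gamma_G^{-1}(0)\cap G_\C^\cut$ factors as $g=k_1ak_2$ with $k_1,k_2\in K_\C$, $a\in A_\C^\cut$. For $d\geq 2$ this is false: by Remark~\ref{rem:5.6} and Lemma~\ref{lem:witt}, the subset $K_\C A_\C^\cut.i\be_0$ of $\dS_\C^{d,\cut}$ misses precisely the $K_\C$-orbits of the points $i\be_0\pm i\be_1+\be_2$, which lie in $\gamma^{-1}(0)\cap\dS_\C^{d,\cut}$ but not in $K_\C A_\C^\cut.i\be_0$ (they are only in its closure). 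So for preimages of such points your factorization does not exist, $\hat F$ cannot be evaluated, and your argument says nothing about the boundedness of $\Phi$ there. This is exactly the set where the removable-singularity theorem still has to do work. The paper's fix is different: it observes that the exceptional $K_\C$-orbit contains the point $i\be_0\pm\tfrac{i}{2}\be_1+\tfrac12\be_2\in\Xi_+$, so for any such $g$ the $K_\C$-double coset meets $\Xi_{G_\C}$, and one then transports local boundedness from $\Xi_{G_\C}$ (where $F_1=\phi_\cE$ is holomorphic) using the $K_\C$-bi-equivariance of $F_1$ on the dense open set $\cD_1$. Without that step your argument does not cover the full degenerate locus, and the appeal to Riemann's theorem is unjustified.
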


\begin{prf}
The definition of $F$ in Lemma~\ref{lem:4.26} implies that
$F$ coincides with $\phi_\cE$ on the real group 
$G = K AK$, hence in particular on
$K (A\setminus \{e\}) K = G \setminus K$,
which is a totally real submanifold of 
$\Xi_{G_\C} \cap \cD_1$. We conclude that
$F = \phi_\cE$ holds on the connected complex manifold
$\Xi_{G_\C} \cap \cD_1$ (Lemma~\ref{lem:connected}).
As a consequence, we can use $F$ to
extend the holomorphic function $\phi_\cE$ from $\Xi_{G_\C}$ to a
holomorphic function
\begin{equation}
  \label{eq:f1}
  F_1 \:  \Xi_{G_\C} \cup \cD_1 \to B(\cE).
\end{equation}

We recall from \eqref{eq:cd2} that $\cD_1$ maps to
\[  \cD_1.i\be_0
  = \{ (z_0, \bz) \in \dS_\C^d \: \beta(\bz,\bz) \not=0,
  -i z_0 \not\in (-\infty, -1] \}
\subeq \dS^{d,\cut}_\C = \{ z \: -i z_0 \not\in (-\infty, -1]\}.
\]

We now argue that $F_1$ is locally
bounded on neighborhoods of any point $g \in \gamma_G^{-1}(0)$. 
For $g \in K_\C$,  this follows from $g \in \Xi_{G_\C}$
and the holomorphy of $F_1$ on $\Xi_{G_\C}$.
Any other point in $\gamma^{-1}(0) \cap \dS_\C^{d,\cut}$
lies in the $K_\C$-orbit of
$i \be_0 \pm   \frac{i}{2} \be_1 + \frac{1}{2}\be_2
\in \Xi_+$ (Lemma~\ref{lem:witt}),
so that, for  any $g \in \gamma_G^{-1}(0) \cap G_\C^\cut$,
the $K_\C$-double coset intersects $\Xi_{G_\C}$.
This proves local boundedness around~$g$.
Now Riemann's Removable Singularity Theorem in several variables
(\cite[Thm.~I.C.3]{GR65}) 
implies that $F_1$ extends to a holomorphic function
on $G_\C^\cut$.
\end{prf}

\subsection{Boundary behavior of  the analytic continuation}
\mlabel{sec:5}

In this section we take a closer look at the boundary behavior
of the holomorphic function
  \[ \phi_\cE \: G_\C^\cut \to B(\cE) \quad \mbox{ satisfying } \quad
    Q(g_1, g_2) = \phi_\cE(\oline{g_2}^{-1} g_1) \quad\mbox{ for } \quad 
    g_1, g_2 \in \Xi_{G_\C}\]
  (cf.\ \eqref{eq:Q} and Theorem~\ref{thm:ext}). 
  We now take a closer look at the function $\phi_\cE$.
  
\begin{lem}    \mlabel{lem:phi-props}
   The function $\phi_\cE$ has the following properties:
\begin{itemize}
\item[\rm(a)] $\phi_\cE(g)^* = \phi_\cE(\oline g^{-1})$ for $g\in G_\C^\cut$.
\item[\rm(b)] For $\phi^{v,w}_\cE(g) := \la v, \phi_\cE(g) w \ra$
  for $v,w \in \cE^J$, we have 
  $ \oline{\phi^{v,w}_\cE(g)}  = \phi^{w,v}_\cE(\oline g^{-1}).$
\item[\rm(c)]  $J \phi_\cE(g) J = \phi_\cE(\tau_h(\oline g))
  = \phi_\cE(\exp(\pi i h) \oline g \exp(-\pi i h))$ for $g\in G_\C^\cut$.
\item[\rm(d)] $\oline{\phi^{v,w}_\cE(g)}
  = \phi^{v,w}_\cE(\tau_h(\oline g))$ for $g \in G_\C^{\rm cut}$ and
  $v,w \in \cE^J$. 
\end{itemize}
\end{lem}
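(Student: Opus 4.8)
The strategy is to read (a) off the symmetry of the kernel $Q$, to prove a single ``master relation'' $JL(g)J=L(\tau_h(\oline g))$ on $\Xi_{G_\C}$ for part (c), and then to obtain (b) and (d) as one-line consequences. No new analytic continuation is needed beyond the Extension Theorem: the identity $Q(g_1,g_2)=\phi_\cE(\oline{g_2}^{-1}g_1)$ for $g_1,g_2\in\Xi_{G_\C}$, together with $G_\C^\cut=\{\oline{g_2}^{-1}g_1\:g_1,g_2\in\Xi_{G_\C}\}$ (see \eqref{eq:gccut}), already propagates any $\Xi_{G_\C}$-level identity to all of $G_\C^\cut$.

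\emph{Parts (a) and (b).} Given $g\in G_\C^\cut$, choose $g_1,g_2\in\Xi_{G_\C}$ with $g=\oline{g_2}^{-1}g_1$. By \eqref{eq:Q},
\[ \phi_\cE(g)=Q(g_1,g_2)=L(g_2)^*L(g_1),\qquad \phi_\cE(g)^*=L(g_1)^*L(g_2)=Q(g_2,g_1)=\phi_\cE(\oline{g_1}^{-1}g_2). \]
Since $\oline g=g_2^{-1}\oline{g_1}$, we have $\oline g^{-1}=\oline{g_1}^{-1}g_2\in G_\C^\cut$, which is (a). Part (b) is then pure Hilbert space algebra (no antilinearity involved):
\[ \oline{\phi^{v,w}_\cE(g)}=\oline{\la v,\phi_\cE(g)w\ra}=\la w,\phi_\cE(g)^*v\ra=\la w,\phi_\cE(\oline g^{-1})v\ra=\phi^{w,v}_\cE(\oline g^{-1}). \]

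\emph{Parts (c) and (d).} I would first show that for $g\in\Xi_{G_\C}$,
\[ JL(g)J=L(\tau_h(\oline g)),\qquad \tau_h(\oline g):=\tau_h\,\oline g\,\tau_h^{-1}=\exp(\pi i h)\,\oline g\,\exp(-\pi i h), \]
noting that $\tau_h(\oline g)\in\Xi_{G_\C}$ again: entrywise conjugation $g\mapsto\oline g$ and conjugation by $\tau_h=\exp(\pi i h)$ each interchange the tube domains $\cT_\pm$, hence the crowns $\Xi_\pm$, so their composition preserves $\Xi_+$ and therefore $\Xi_{G_\C}$. To prove the relation, write $g=g_0\exp(ix)k$ with $g_0\in G$, $x\in\Omega_\fp$, $k\in K_\C$, so that $L(g)=U(g_0)\,e^{i\partial U(x)}\,U_\cE(k)$ on $\cE$. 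Conjugating by $J$ and using $JU(g_0)J=U(\tau_h^G(g_0))$, the differentiated form $J\,\partial U(y)\,J=\partial U(\tau_h y\tau_h^{-1})$ for $y\in\g$, the corresponding identity $J\,\dd U_\cE(z)\,J=\dd U_\cE(\tau_h\oline z\tau_h^{-1})$ on $\fk_\C$, and the sign flip $JiJ=-i$ forced by antilinearity, one obtains
\[ JL(g)J=U(\tau_h^G(g_0))\,e^{-i\partial U(\tau_h x\tau_h^{-1})}\,U_\cE(\tau_h(\oline k)); \]
since $\tau_h$ is a symmetric involutive matrix, $\Ad(\tau_h)$ commutes with $\theta$ and hence preserves $\fk$, $\fp$ and $\Omega_\fp=-\Omega_\fp$, so the right-hand side is exactly $L$ of $\tau_h(g_0)\exp(-i\,\tau_h x\tau_h^{-1})\tau_h(\oline k)=\tau_h(\oline g)$ (using $\oline g=g_0\exp(-ix)\oline k$). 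Now for $g=\oline{g_2}^{-1}g_1\in G_\C^\cut$, the relation $(JAJ)^*=JA^*J$ (valid for antiunitary $J$) gives
\[ J\phi_\cE(g)J=JL(g_2)^*L(g_1)J=L(\tau_h(\oline{g_2}))^*L(\tau_h(\oline{g_1}))=Q(\tau_h(\oline{g_1}),\tau_h(\oline{g_2}))=\phi_\cE\bigl(\oline{\tau_h(\oline{g_2})}^{-1}\,\tau_h(\oline{g_1})\bigr), \]
and $\oline{\tau_h(\oline{g_2})}=\tau_h(g_2)$ makes the argument $\tau_h(g_2)^{-1}\tau_h(\oline{g_1})=\tau_h(g_2^{-1}\oline{g_1})=\tau_h(\oline g)$, which proves (c). Finally (d) follows from (c) and $v,w\in\cE^J$: then $Jw=w$ and $\la v,J\eta\ra=\oline{\la v,\eta\ra}$, hence $\phi^{v,w}_\cE(\tau_h(\oline g))=\la v,J\phi_\cE(g)w\ra=\oline{\la v,\phi_\cE(g)w\ra}=\oline{\phi^{v,w}_\cE(g)}$.

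The step I expect to be the main obstacle is the master relation $JL(g)J=L(\tau_h(\oline g))$: one must juggle the two distinct conjugations in play — entrywise complex conjugation $g\mapsto\oline g$ on $G_\C$ versus the group automorphism $g\mapsto\tau_h g\tau_h^{-1}$ — keep track of the signs forced by the antilinearity of $J$ (in particular for $e^{i\partial U(x)}$, which requires a small analytic-continuation argument in the strip $|\Im z|<\pi/2$), and verify that $\Ad(\tau_h)$ respects all of the data $(\fk,\fp,\Omega_\fp)$ used to define $L$ while $\tau_h\circ\oline{\cdot}$ preserves $\Xi_{G_\C}$. Alternatively, (c) can be checked directly on the totally real submanifold $G\subseteq\Xi_{G_\C}$, where $\phi_\cE(g)=P_\cE U(g)P_\cE$ and $JU(g)J=U(\tau_h^G(g))$ yield it at once, and then extended by analytic continuation over the connected domain $G_\C^\cut=K_\C\Xi_{A_\C}G\Xi_{A_\C}K_\C$; the computation above merely avoids invoking uniqueness of continuation.
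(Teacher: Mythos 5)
Your proof is correct, and it takes a genuinely different route from the paper's for parts (a), (c), and (d). The paper proves (a) and (c) by stating the identities on the real group $G$ (where $\phi_\cE(g)=P_\cE U(g)P_\cE$ makes them trivial) and invoking analytic continuation over the connected domain $G_\C^\cut$; you instead read (a) directly from the kernel identity $Q(g_1,g_2)=\phi_\cE(\oline{g_2}^{-1}g_1)$ on $\Xi_{G_\C}\times\Xi_{G_\C}$, which avoids any continuation argument and makes the symmetry manifest. For (c) you prove the operator-level ``master relation'' $JL(g)J=L(\tau_h(\oline g))$ on $\Xi_{G_\C}$ and push it through the kernel; this is more work (you correctly flag the antilinearity bookkeeping for $Je^{i\partial U(x)}J=e^{-i\partial U(\Ad(\tau_h)x)}$ as the delicate point), but it is more constructive, and you also note the paper's shorter route as an alternative. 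For (d) the paper chains (a), (b), (c) together; your derivation is shorter, using only (c) plus $Jw=w$ and the identity $\la v,J\eta\ra=\oline{\la v,\eta\ra}$ for $v\in\cE^J$. Both arguments are valid; the trade-off is that the paper's continuation-based proof is shorter and leans on the global structure of $G_\C^\cut$ established in the Extension Theorem, whereas yours is more elementary and exposes precisely how the antilinear conjugation $J$ interacts with the Kr\"otz--Stanton extension map $L$.
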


\begin{prf} (a) follows by analytic continuation from
$\phi_\cE(g^{-1}) = \phi_\cE(g)^*$ for $g \in G$. 

\nin (b) follows from (a) by 
\[  \oline{\phi^{v,w}_\cE(g)}
  = \oline{\la v, \phi_\cE(g) w \ra}
  = \la \phi_\cE(g) w,v \ra
  = \la w, \phi_\cE(g)^* v \ra
  = \la w, \phi_\cE(\oline g^{-1}) v \ra 
  = \phi^{w,v}_\cE(\oline g^{-1}).\]
    
\nin (c) follows from the relation
$J \phi_\cE(g) J = \phi_\cE(\tau_h(g))$ for $g \in G$ 
by analytic continuation
  because both equations hold for $g \in G$, and all terms are
  antiholomorphic in~$g$.

\nin (d) With (c) we obtain 
  \begin{align*}
 \phi^{v,w}_\cE(g)
&    =     \la v, \phi_\cE(g) w \ra 
    =     \la Jv, \phi_\cE(g) J w \ra 
 \ {\buildrel {\rm(c)} \over =}\
  \la Jv, J \phi_\cE(\tau_h(\oline g)) w \ra \\
&    =     \la \phi_\cE(\tau_h(\oline g)) w, v \ra 
    \ {\buildrel {\rm(a)} \over  =}\
    \la  w, \phi_\cE(\tau_h(g)^{-1}) v \ra 
    = \phi^{w,v}_\cE(\tau_h(g)^{-1}) 
 \ {\buildrel {\rm(b)}\over =}\
  \oline{\phi^{v,w}_\cE(\tau_h(\oline g))}.  
\qedhere  \end{align*}
\end{prf}

\begin{rem} {\rm(The connection with \cite{BM96})}
  In \cite{BM96} Bros and Moschella consider kernels corresponding
  to spherical functions. This corresponds to a trivial
  one-dimensional $K$-representation~$\cE$, so that our generalized
  spherical function $\phi_\cE$ on $G_\C^\cut$
  actually becomes a spherical function
  on which is $K_\C$-biinvariant, hence factors through a
  function on the corresponding domain $\dS_\C^{d,\cut}$.

  Bros and Moschella consider   so-called {\it perikernels}
  $W$ on open domains in $\dS^d_\C \times \dS^d_\C$ which are
  holomorphic in both arguments and invariant under the diagonal
  action of $G_\C$. Concretely,
      \[ W(g_1.i\be_0, g_2.i\be_0)
      =  \phi_\cE(g_2^{-1} g_1) =  Q(g_1, \oline{g_2}) \]
    connects their approach with our notation. Recall that
    \[ \exp\Big(-\frac{\pi i}{2}h\Big).i\be_0 = e^{-\frac{\pi i}{2}h}.i\be_0 = \be_1 \]
    holds for the base points in $\bH^d_+$ and $\dS^d$. 

    The corresponding ``function'' $w = W(\cdot, \be_1)$ 
    of one argument on $\dS^d_\C$ which corresponds to the $G$-invariant
    kernel $W$ is 
    \begin{align*}
      w(g.\be_1) &
:= W(g.\be_1, \be_1)
= W\Big(g\exp\big(-\frac{\pi i}{2}h\big).i\be_0,
        \exp\big(-\frac{\pi i}{2}h\big).i\be_0\Big)\\
 &      = \phi_\cE\Big(\exp\Big(\frac{\pi i}{2}h\Big)
       g\exp\Big(-\frac{\pi i}{2}h\Big)\Big),
    \end{align*}
    hence defined on $\exp(-\frac{\pi i}{2}h) G_\C^\cut\exp(\frac{\pi i}{2}h)$,
    which contains in particular
$A_\C^\cut      = \exp(\cS_{\pm \pi} h)$. 
With the description of $\dS^{d,\cut}_\C$ from
Proposition~\ref{prop:cplx-desit-cut}  we see that,
    as a holomorphic function on $\dS^d_\C$, $w$ is defined on 
    \begin{align*}
 \exp\Big(-\frac{\pi i}{2}h\Big)\dS_\C^{d, \cut}
&     = \{ z \in \dS_\C^d \:
      \beta\Big(\exp\Big(\frac{\pi i}{2}h\Big)z,i\be_0\Big) \not\in (-\infty,-1]\}\\
&   = \{ z \in \dS_\C^d \:
                                                                                       \beta(z,\be_1)\not\in (-\infty,-1]\}
                                                                                       = \{ z \in \dS_\C^d \: z_1 \not\in [1,\infty)\}.
    \end{align*}
  \end{rem}

  For $v,w \in \cE^J$ and $g_1, g_2 \in G$, we have
  \[ \la v, \phi_{\cE}(g_2^{-1} g_1)w \ra
    = \la v, Q(g_1, g_2) w \ra = \la v, U(g_2^{-1} g_1) w \ra.\]
  For $t, s \in \R$ we thus obtain
  \begin{align*}
 \la e^{-t \partial U(h)}v, U(g_2^{-1} g_1) e^{s \partial U(h)} w \ra
&    = \la v, U\big(\exp(th) g_2^{-1} g_1 \exp(sh)\big) w \ra \\
    &    =  \la v, \phi_{\cE}\big(\exp(th) g_2^{-1} g_1 \exp(sh)\big)w \ra,
  \end{align*}
     so that analytic continuation leads for $|t|,  |s| < \pi/2$ to   
\begin{align}\label{eq:phihrel}
 \la e^{it \partial U(h)}v, U(g_2^{-1} g_1) e^{is \partial U(h)}w \ra
  &=  \la v, \phi_{\cE}(\exp(it h) g_2^{-1} g_1 \exp(ish))w \ra.
\end{align}

For $v,w \in \cE^J$, we consider the distribution vectors
\[ \alpha := \beta^+(v) \quad \mbox{ and } \quad
  \gamma := \beta^+(w) \quad \mbox{ in }  \quad \sE_H^+.\]
Then the distribution
\[ D^+_{\alpha,\gamma}(\xi) := \gamma(U^{-\infty}(\xi)\alpha)\]
on $G$ (cf.\ \cite{NO24}) satisfies
\begin{align*}
 \la U^{-\infty}(\xi)\alpha, U^{-\infty}(\psi)\gamma \ra 
  & = (U^{-\infty}(\psi)\gamma)(U^{-\infty}(\xi)\alpha) 
  = \gamma(U^{\infty}(\psi^*) U^{-\infty}(\xi)\alpha)\\
  & = \gamma(U^{-\infty}(\psi^* * \xi)\alpha) = D^+_{\alpha,\gamma}(\psi^* * \xi).
\end{align*}
To relate this distribution to $\phi_\cE$, we write it as
\begin{align*}
  D^+_{\alpha,\gamma}(\xi)
  &= \gamma(U^{-\infty}(\xi)\alpha)
=  \la U^{-\infty}(\xi)\alpha, \gamma \ra 
= \lim_{t \to \pi/2-} \la U(\xi) e^{-it \partial U(h)} v, e^{-it \partial U(h)} w \ra\\
&    = \lim_{t \to \pi/2-} \int_G \oline{\xi(g)}
    \la U(g)e^{-it \partial U(h)} v, e^{-it \partial U(h)} w \ra\, dg.
\end{align*}
In this sense, we have in the space $C^{-\infty}(G,\C)$ of distributions
on $G$ the relation
\begin{equation}
  \label{eq:dalphabeta}
 D^+_{\alpha,\gamma}(g) 
  =\lim_{t \to \pi/2-}
  \la U(g)e^{-it \partial U(h)} v, e^{-it \partial U(h)} w \ra 
\ {\buildrel \eqref{eq:phihrel} \over  =}\ \lim_{t \to \pi/2-}
\la v, \phi_\cE(\exp(-it h) g^{-1} \exp(-it h)) w \ra. 
\end{equation}

\begin{lem} \mlabel{lem:dab}
  The distributions $D^+_{\alpha,\gamma}$ are boundary values of
  holomorphic functions on the domain 
\[ G_\C^{\rm hol} :=
  \exp\Big(\frac{\pi i}{2}h\Big) G_\C^{\rm cut} 
  \exp\Big(\frac{\pi i}{2}h\Big) \subeq G_\C.\]
This domain is invariant under
complex conjugation with respect to $G$, 
inversion, and conjugation with $\exp(\pi i h)$.
We also have 
\[ G_\C^{\rm hol} = H_\C \exp(\cS_\pi h) G \exp(\cS_\pi h) H_\C 
= H_\C \exp(-\cS_\pi h) G \exp(-\cS_\pi h) H_\C.\]
\end{lem}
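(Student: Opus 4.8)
The three claims are: (1) $D^+_{\alpha,\gamma}$ is a boundary value of a holomorphic function on $G_\C^{\rm hol}$; (2) $G_\C^{\rm hol}$ is invariant under complex conjugation relative to $G$, under inversion, and under conjugation by $\exp(\pi i h)$; (3) the two $H_\C$-double-coset descriptions. I would organize the proof around a handful of structural properties of $G_\C^{\rm cut} = K_\C\Xi_{A_\C}G\Xi_{A_\C}K_\C$ (see \eqref{eq:gccut}). Since $\Xi_{A_\C}=\exp(\cS_{\pm\pi/2}h)$ and $\cS_{\pm\pi/2}$ is stable under $z\mapsto -z$ and $z\mapsto\overline z$, while $K_\C$ and $G$ are stable under inversion and under conjugation relative to $G$, one gets at once $(G_\C^{\rm cut})^{-1}=G_\C^{\rm cut}$ and $\overline{G_\C^{\rm cut}}=G_\C^{\rm cut}$. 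Further, conjugation by $\exp(\pi i h)$ fixes $\Xi_{A_\C}$ (as $[h,h]=0$) and restricts on $G$ to the automorphism $\tau_h$, hence preserves $G$ and, because $\tau_h(K)=K$, also $K_\C$; thus $\exp(\pi i h)G_\C^{\rm cut}\exp(-\pi i h)=G_\C^{\rm cut}$. As $\exp(2\pi i h)=e$ in $G_\C$ ($2\pi i$-periodicity of the modular flow), this $\tau_h$-invariance yields the symmetric presentation $G_\C^{\rm hol}=\exp(-\tfrac{\pi i}{2}h)G_\C^{\rm cut}\exp(-\tfrac{\pi i}{2}h)$ besides the defining one. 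Assertion (2) then follows immediately: complex conjugation interchanges the two presentations; inversion sends $\exp(\tfrac{\pi i}{2}h)G_\C^{\rm cut}\exp(\tfrac{\pi i}{2}h)$ to $\exp(-\tfrac{\pi i}{2}h)(G_\C^{\rm cut})^{-1}\exp(-\tfrac{\pi i}{2}h)=G_\C^{\rm hol}$; and conjugation by $\exp(\pi i h)$ fixes $\exp(\pm\tfrac{\pi i}{2}h)$ and preserves $G_\C^{\rm cut}$.

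For assertion (1) I would use that $\phi_\cE$ is holomorphic on $G_\C^{\rm cut}$ (Theorem~\ref{thm:ext}). By the very definition of $G_\C^{\rm hol}$, the map $x\mapsto\exp(-\tfrac{\pi i}{2}h)\,x\,\exp(-\tfrac{\pi i}{2}h)$ is a biholomorphism of $G_\C^{\rm hol}$ onto $G_\C^{\rm cut}$, so $\Psi(x):=\phi_\cE\bigl(\exp(-\tfrac{\pi i}{2}h)\,x\,\exp(-\tfrac{\pi i}{2}h)\bigr)$ is holomorphic and $B(\cE)$-valued on $G_\C^{\rm hol}$. For $g\in G$ and $0<t<\pi/2$ set $x_t:=\exp\bigl(i(\tfrac{\pi}{2}-t)h\bigr)\,g^{-1}\,\exp\bigl(i(\tfrac{\pi}{2}-t)h\bigr)$; then $\exp(-\tfrac{\pi i}{2}h)\,x_t\,\exp(-\tfrac{\pi i}{2}h)=\exp(-ith)\,g^{-1}\,\exp(-ith)\in\Xi_{A_\C}G\Xi_{A_\C}\subeq G_\C^{\rm cut}$, so $x_t\in G_\C^{\rm hol}$ and $x_t\to g^{-1}$ as $t\to\pi/2-$. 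Now \eqref{eq:dalphabeta} reads $D^+_{\alpha,\gamma}(g)=\lim_{t\to\pi/2-}\la v,\Psi(x_t)w\ra$ in $C^{-\infty}(G,\C)$, i.e.\ $D^+_{\alpha,\gamma}$ is the distributional boundary value of the holomorphic function $g\mapsto\la v,\Psi(g^{-1})w\ra$ on $G_\C^{\rm hol}$ (the passage from $g^{-1}$ to $g$ is allowed by the inversion-invariance in (2)), the weak-$*$ nature of the convergence being exactly what the construction of $\beta^+$ in $\cH^{-\infty}_{\rm KMS}$ provides.

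For assertion (3) I would write $G_\C^{\rm hol}=\exp(\tfrac{\pi i}{2}h)K_\C\,\Xi_{A_\C}\,G\,\Xi_{A_\C}\,K_\C\exp(\tfrac{\pi i}{2}h)$ and insert $e=\exp(-\tfrac{\pi i}{2}h)\exp(\tfrac{\pi i}{2}h)$ around the two inner factors, rewriting it as
\[
\bigl(\exp(\tfrac{\pi i}{2}h)K_\C\exp(-\tfrac{\pi i}{2}h)\bigr)\,\bigl(\exp(\tfrac{\pi i}{2}h)\Xi_{A_\C}\bigr)\,G\,\bigl(\Xi_{A_\C}\exp(\tfrac{\pi i}{2}h)\bigr)\,\bigl(\exp(-\tfrac{\pi i}{2}h)K_\C\exp(\tfrac{\pi i}{2}h)\bigr).
\]
Here $\exp(\pm\tfrac{\pi i}{2}h)\Xi_{A_\C}=\Xi_{A_\C}\exp(\pm\tfrac{\pi i}{2}h)=\exp\bigl((\pm\tfrac{\pi i}{2}+\cS_{\pm\pi/2})h\bigr)=\exp(\pm\cS_\pi h)$, and both conjugates of $K_\C$ equal $H_\C$: indeed $\exp(-\tfrac{\pi i}{2}h)K_\C\exp(\tfrac{\pi i}{2}h)=\kappa_hK_\C\kappa_h^{-1}$ is the $G_\C$-stabilizer of $\kappa_h(i\be_0)=\be_1$ (see \eqref{eq:kappah}), hence $H_\C$, and $\exp(\tfrac{\pi i}{2}h)K_\C\exp(-\tfrac{\pi i}{2}h)$ is the stabilizer of $\exp(\tfrac{\pi i}{2}h)(i\be_0)=-\be_1$, which has the same stabilizer $H_\C$. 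This gives $G_\C^{\rm hol}=H_\C\exp(\cS_\pi h)G\exp(\cS_\pi h)H_\C$; running the identical computation from the presentation $G_\C^{\rm hol}=\exp(-\tfrac{\pi i}{2}h)G_\C^{\rm cut}\exp(-\tfrac{\pi i}{2}h)$ (legitimate by $\tau_h$-invariance of $G_\C^{\rm cut}$), or equivalently applying complex conjugation to the first identity and invoking (2), replaces $\cS_\pi$ by $-\cS_\pi$ and yields the second identity. The only genuinely non-formal inputs in the whole argument are $\exp(2\pi i h)=e$, the normalization $\tau_h(K)=K$, and $\kappa_h(i\be_0)=\be_1$; everything else is bookkeeping with the strips $\cS_{\pm\pi/2}$, $\cS_\pi$, $\cS_{\pm\pi}$ and with choosing the right one of the two presentations of $G_\C^{\rm hol}$ at each step, which is where I expect most of the care to be needed.
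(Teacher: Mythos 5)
Your proof is correct and follows essentially the same route as the paper's: the invariance of $G_\C^{\rm cut}=\overline{\Xi_{G_\C}}\Xi_{G_\C}$ under inversion, complex conjugation, and $\tau_h$, the relation $\exp(2\pi i h)=e$, the identity $\exp(\tfrac{\pi i}{2}h)K_\C\exp(-\tfrac{\pi i}{2}h)=H_\C$, and the boundary-value formula~\eqref{eq:dalphabeta}. You spell out a few steps the paper's proof leaves implicit — in particular the inversion- and $\tau_h$-conjugation-invariance of $G_\C^{\rm hol}$ itself, the explicit holomorphic function $\Psi$ together with the approaching curve $x_t$, and the geometric verification that both $\exp(\pm\tfrac{\pi i}{2}h)$-conjugates of $K_\C$ are the stabilizer of $\pm\be_1$ and hence $H_\C$ (where the paper simply cites \cite[Prop.~7]{FNO23}) — but the underlying strategy is identical.
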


\begin{prf} First we note that 
\[ G_\C^{\rm cut} = \oline{\Xi_{G_\C}} \Xi_{G_\C}
  = K_\C \exp(i\Omega_\fa) G \exp(i\Omega_\fa) K_\C  \]
implies that $G_\C^{\rm cut}$ is invariant under inversion and complex conjugation.
In view of  
\[ \exp(it h) g^{-1} \exp(-it h)
  \in \Xi_{A_\C} G \Xi_{A_\C} \subeq G_\C^\cut, \]
the first assertion thus follows from \eqref{eq:dalphabeta}.

With 
\[   \exp\Big(\frac{\pi i}{2}h\Big) K_\C   \exp\Big(-\frac{\pi i}{2}h\Big)
  = H_\C \]
(see \cite[Prop.~7]{FNO23}) we get 
\begin{align*}
G_\C^{\rm hol} &=   \exp\Big(\frac{\pi i}{2}h\Big) G_\C^{\rm cut}
                 \exp\Big(\frac{\pi i}{2}h\Big)
                 = H_\C \exp\Big(\frac{\pi i}{2}h+ \Omega_\fa\Big) G
  \exp\Big(\frac{\pi i}{2}h + \Omega_\fa\Big) H_\C\\
  &  = H_\C \exp(\cS_\pi h) G \exp(\cS_\pi h) H_\C.  
\end{align*}
Then $\exp(-\pi ih) = \exp(\pi i h)$ implies that complex conjugation
with respect to $G$ yields the domain 
\begin{align*}
  \oline{G_\C^{\rm hol}}
  &=   \exp\Big(-\frac{\pi i}{2}h\Big) G_\C^{\rm cut}
                         \exp\Big(-\frac{\pi i}{2}h\Big)
  = \exp(-\pi i h) G_\C^{\rm hol} \exp(-\pi i h) 
  = \tau_h(G_\C^{\rm hol}) = G_\C^{\rm hol}.
\end{align*}
From this we further derive that
$G_\C^{\rm hol} = H_\C \exp(-\cS_\pi h) G \exp(-\cS_\pi h) H_\C.$
\end{prf}

The subset $G_\C^{\rm hol}\subeq G_\C$ is open and contains $G$ in its boundary,
but if $z,w \in \cS_\pi$ both approach
$\pi i$, then 
$\exp(zh) \exp(wh) \to \exp(2\pi i h) = e$ (in $G_\C = \SO_{1,d}(\C)$).
This is how we can also approach $\dS^d$ by passing through the
domain $\Xi_-$, which contains $\exp(zh).\be_1$ for
$ \pi < \Im z < 2\pi$.

Note that
\[ \exp(\pi i h) g \exp(\pi i h) = \tau_h(g), \]
so that ``coming from the other side'' involves a $\tau_h$-conjugation
with respect to $G$.

\begin{rem} In the sense of distribution kernels, we may write
  the ill-defined expression $\gamma(U^{-\infty}(g)\alpha)$
  as $D^+_{\alpha, \gamma}(g)$, we get from \eqref{eq:dalphabeta} 
\begin{align} \label{eq:dab1}
 D^+_{\alpha,\gamma}(g)
&  = \phi^{v,w}_\cE\Big(\exp\Big(-\frac{\pi i}{2}h\Big)
                       g^{-1} \exp\Big(-\frac{\pi i}{2}h\Big)\Big).
\end{align}
With Lemma~\ref{lem:phi-props}(b), this leads to
\begin{equation}
  \label{eq:dab2}
 \oline{D^+_{\alpha,\gamma}(g)} 
  = \phi^{w,v}_\cE\Big(\exp\Big(-\frac{\pi i}{2}h\Big)
  \oline g\exp\Big(-\frac{\pi i}{2}h\Big)\Big),
\end{equation}
                     and with Lemma~\ref{lem:phi-props}(d) to
 \begin{equation} \label{eq:dab3}
 \oline{D^+_{\alpha,\gamma}(g)} 
  = \phi^{v,w}_\cE\Big(\exp\Big(\frac{\pi i}{2}h\Big)
  \tau_h(\oline g)^{-1}\exp\Big(\frac{\pi i}{2}h\Big)\Big).
  \end{equation}

Further, 
\[ D^+_{\alpha,\gamma}(h_2^{-1} gh_1) = D^+_{h_1.\alpha,h_2.\gamma}(g). \]
Hence, for $\gamma = \alpha$, we obtain
\[ D^+_\alpha := D^+_{\alpha,\alpha} = \Phi(\alpha)
  \in C^{-\infty}(G/H,\bE_H), \qquad
  \bE_H = G \times_H \sE_H\]
(cf.\ Subsection~\ref{subsec:real}). 
\end{rem}

By \eqref{eq:dab1}, 
the analytic continuation of $D_{\alpha,\gamma}^+$
to elements of the form $\exp(-is h)g\exp(-ish)$,
for $0 < s < \pi$, is given by
\begin{align}
  \label{eq:dalphabetas}
&  \phi^{v,w}_\cE\Big(\exp\Big(\big(s- \frac{\pi}{2}\big) ih\Big) g^{-1}
  \exp\Big(\big(s- \frac{\pi}{2}\big) ih\Big)\Big)  
\end{align}
and for $s \to 0+$, we recover $D_{\alpha,\gamma}$.
For $s \to \pi-$, we get with \eqref{eq:dab3} the limit distribution 
\[  D_{\alpha,\gamma}^-(g)
  = \phi_\cE^{v,w}\Big(\exp\big(\frac{\pi i}{2}h\big) g^{-1}
  \exp\big(\frac{\pi i}{2}h\big)\Big)
  =   \oline{D_{\alpha,\gamma}^+(\tau_h(\oline g))}.\]
We thus obtain:

\begin{prop} The distributions $D_{\alpha,\gamma}^\pm$ on $G$ satisfy 
$D_{\alpha,\gamma}^- =   \oline{D_{\alpha,\gamma}^+ \circ \tau_h}.$ 

\end{prop}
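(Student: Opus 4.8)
The plan is to extract the asserted identity from the explicit boundary-value formulas already in hand, so that no essentially new computation is needed. First I would recall \eqref{eq:dab1}, which expresses $D^+_{\alpha,\gamma}$ as a distribution on $G$ via
\[ D^+_{\alpha,\gamma}(g) = \phi^{v,w}_\cE\Big(\exp\Big(-\tfrac{\pi i}{2}h\Big)\,g^{-1}\,\exp\Big(-\tfrac{\pi i}{2}h\Big)\Big), \]
this being the boundary value of $\phi^{v,w}_\cE$ taken along the curves $\exp(-ith)g^{-1}\exp(-ith)$, $t\uparrow\pi/2$, which for $t<\pi/2$ lie in $\Xi_{A_\C}G\Xi_{A_\C}\subeq G_\C^\cut$. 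Then I would use the deformation of \eqref{eq:dalphabetas}, $s\mapsto\phi^{v,w}_\cE\big(\exp((s-\tfrac{\pi}{2})ih)\,g^{-1}\,\exp((s-\tfrac{\pi}{2})ih)\big)$ for $s\in(0,\pi)$, which stays inside $G_\C^\cut$ because $|s-\tfrac{\pi}{2}|<\tfrac{\pi}{2}$: as $s\to 0+$ it reproduces $D^+_{\alpha,\gamma}$, and as $s\to\pi-$ it is, by definition, the distribution $D^-_{\alpha,\gamma}$, so that $D^-_{\alpha,\gamma}(g)=\phi^{v,w}_\cE\big(\exp(\tfrac{\pi i}{2}h)\,g^{-1}\,\exp(\tfrac{\pi i}{2}h)\big)$ as a boundary value along this family.

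Next I would invoke \eqref{eq:dab3}, $\oline{D^+_{\alpha,\gamma}(g)}=\phi^{v,w}_\cE\big(\exp(\tfrac{\pi i}{2}h)\,\tau_h(\oline g)^{-1}\,\exp(\tfrac{\pi i}{2}h)\big)$, which itself came from \eqref{eq:dab1} together with the reflection identity Lemma~\ref{lem:phi-props}(d). Pulling this identity of distributions back along the diffeomorphism $\tau_h\colon G\to G$ — i.e.\ replacing $g$ by $\tau_h(g)$ and using that $\oline g=g$ and $\tau_h^2=\id$ on $G$, so $\tau_h(\oline{\tau_h(g)})=g$ — turns the right-hand side into $\phi^{v,w}_\cE\big(\exp(\tfrac{\pi i}{2}h)\,g^{-1}\,\exp(\tfrac{\pi i}{2}h)\big)$. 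Combining with the previous step gives $\oline{D^+_{\alpha,\gamma}(\tau_h(g))}=D^-_{\alpha,\gamma}(g)$, which is exactly the asserted equality $D^-_{\alpha,\gamma}=\oline{D^+_{\alpha,\gamma}\circ\tau_h}$ of distributions on $G$; note this simultaneously establishes that $D^-_{\alpha,\gamma}$ exists as a distribution, being a pullback-and-conjugate of $D^+_{\alpha,\gamma}$.

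The step I expect to require the most care is matching the sides from which the various boundary values are taken. The distribution $\oline{D^+_{\alpha,\gamma}\circ\tau_h}$ is the boundary value of $\phi^{v,w}_\cE$ at $\exp(\tfrac{\pi i}{2}h)g^{-1}\exp(\tfrac{\pi i}{2}h)$ reached ``from the other side'' — through $\Xi_-$, which as noted after Lemma~\ref{lem:dab} involves a $\tau_h$-conjugation with respect to $G$ — whereas $D^-_{\alpha,\gamma}$ is reached via the deformation $s\uparrow\pi$. To see these agree I would use the $2\pi i$-periodicity of the modular flow $\exp(zh)$ on $G_\C=\SO_{1,d}(\C)$: on the one hand $\exp(-\tfrac{\pi i}{2}h)\tau_h(g)^{-1}\exp(-\tfrac{\pi i}{2}h)=\exp(\tfrac{\pi i}{2}h)g^{-1}\exp(\tfrac{\pi i}{2}h)$ in $G_\C$ because $\exp(-\tfrac{3\pi i}{2}h)=\exp(\tfrac{\pi i}{2}h)$, and on the other the parameter $t\uparrow\pi/2$ defining $D^+_{\alpha,\gamma}(\tau_h(g))$ is converted, by the same periodicity, into an exponent approaching $\tfrac{\pi}{2}i$ from the side opposite to the one used for $D^-_{\alpha,\gamma}$; the reflection identity \eqref{eq:dab3} (equivalently Lemma~\ref{lem:phi-props}(d)) is precisely what bridges the two one-sided limits. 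Once this bookkeeping is recorded, the proof is complete.
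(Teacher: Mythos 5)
Your proof is correct and follows exactly the paper's own route: deform along the family \eqref{eq:dalphabetas} to identify $D^-_{\alpha,\gamma}(g)$ as the $s\to\pi-$ boundary value $\phi^{v,w}_\cE\big(\exp(\tfrac{\pi i}{2}h)g^{-1}\exp(\tfrac{\pi i}{2}h)\big)$, then rewrite this via \eqref{eq:dab3} (itself a consequence of Lemma~\ref{lem:phi-props}(d)) and use that $\oline g=g$ and $\tau_h^2=\id$ on $G$. Your extra paragraph on the $2\pi i$-periodicity and matching sides is correct bookkeeping but essentially unpacks the same argument; the paper treats it more tersely.
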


If the $H$-representation $(\rho^+, \sE_H^+)$ is selfdual
(cf.~Remark~\ref{rem:selfdual}), then the bundles
$\bE_H^\pm$ are equivalent. Therefore the subspaces
$\sE_H^\pm \subeq \cH^{-\infty}$ lead to two different embeddings
\[ \Phi^\pm \: \cH^{-\infty} \to
  C^{-\infty}(G/H,\bE_H^+) \cong  C^{-\infty}(G/H,\bE_H^-).\]

\section{Locality of the net}
\mlabel{sec:6}

In this section we show that the nets
$\sH^{\dS^d}_{\sH_E}$ on de Sitter space obtained from the construction in
\cite{FNO23} also satisfy the locality condition.
Our argument is based on a convenient description of the set
of spacelike pairs in $\dS^d \times \dS^d$ as
$G.(\WdS\times (-\WdS))$ where $W_R= \{ x\in\dS^d\: x_1>|x_0|\}$ is
the Rindler wedge in Minkowski space $\R^{1,d}$ and $\WdS $ is
the canonical wedge
\begin{equation}\label{eq:CanWedge}
\WdS  = W_R\cap \dS^d=\{x\in \dS^d\: x_1> |x_0|\}.
\end{equation}

\subsection{Spacelike pairs in de Sitter space} 
We call $W \subeq \dS^d$ a {\it wedge region}
if there exists a $g \in \SO_{1,d}(\R)$ such that
$W = g.\WdS$ and write $\cW$ for the set of wedge regions in
$\dS^d$. We now describe the set of spacelike separated pairs
in $\dS^d$ in terms of wedge regions. 

\begin{prop} \mlabel{prop:b.7}
  The set $(\dS^d \times \dS^d)^{{\rm spacelike}}$ of spacelike
  pairs in $\dS^d$ coincides with
  \[ \SO_{1,d}(\R).(\WdS \times (-\WdS))
    = \bigcup_{W \in \cW} W \times (-W).\] 
  For $d > 1$, it even coincides with 
$\SO_{1,d}(\R)^\up.(\WdS \times (-\WdS)).$ 
\end{prop}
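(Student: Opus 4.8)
The plan is to prove both equalities by reducing everything to the base point $\be_1 \in \dS^d$ and exploiting the transitivity of $G$ (resp.\ $\SO_{1,d}(\R)$) on $\dS^d$. First I would recall from Lemma~\ref{lem:dsd-spacelike} that a pair $(x,y)$ is spacelike exactly when $\beta(x-y,x-y) < 0$, equivalently $\beta(x,y) > -1$, and that this condition is $\SO_{1,d}(\R)$-invariant. So it suffices to show that the fibre over $\be_1$, namely $\{ y \in \dS^d \colon \beta(\be_1,y) > -1\} = \{ y \in \dS^d \colon y_1 < 1\} = \dS^{d,\cut}$, is contained in $\bigcup_{W\in\cW} \{x : (x,y) \text{ spacelike}, \be_1 \in W\}$ in the appropriate way; more precisely, that for each such $y$ there is a wedge $W$ with $\be_1 \in W$ and $y \in -W$, and conversely that $\WdS \times (-\WdS)$ consists of spacelike pairs. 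The reverse inclusion ($\WdS\times(-\WdS)$ spacelike) is the easy direction: for $x \in \WdS$ and $y \in -\WdS$ we have $x_1 > |x_0|$ and $-y_1 > |y_0|$, hence $\beta(x,y) = x_0y_0 - x_1y_1 - \sum_{i\ge 2} x_iy_i$; since $x_1y_1 < 0$ one gets $-x_1y_1 = |x_1||y_1| > |x_0||y_0| \ge x_0y_0$, so a short estimate using $\beta(x,x)=\beta(y,y)=-1$ (which controls $\sum_{i\ge2}x_i^2 = x_1^2 - x_0^2 - 1$, etc.) gives $\beta(x,y) > -1$. The union over $\cW$ is then spacelike by $\SO_{1,d}(\R)$-invariance.

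For the forward inclusion I would argue as follows. Take a spacelike pair $(x,y)$; by transitivity move $x$ to $\be_1$, so we may assume $x = \be_1$ and $y_1 < 1$. The stabilizer of $\be_1$ is $H \cong \SO_{1,d-1}(\R)$ (resp.\ its identity component), acting on the orthogonal complement $\be_1^{\perp_\beta} = \R^{1,d-1}$ spanned by $\be_0,\be_2,\dots,\be_d$. Writing $y = y_1\be_1 + y'$ with $y' \in \be_1^{\perp}$ and $\beta(y',y') = -1 - y_1^2 + \text{(coefficient bookkeeping)}$ — one computes $\beta(y,y) = -1$ gives $\beta(y',y') = -1 + y_1^2 \cdot 0 \ldots$; the point is that $y'$ is a vector in $\R^{1,d-1}$ whose $\beta$-norm is determined by $y_1 < 1$ and is $> -1$, hence $y'$ is either spacelike, lightlike, or mildly timelike, and in all cases the group $\SO_{1,d-1}(\R)$ acting on it has enough room to bring it to the plane $\Spann\{\be_0,\be_2\}$ (using that $\SO_{1,d-1}(\R)$ acts transitively on each $\beta$-level set in $\R^{1,d-1}$ away from the degenerate cases, by Witt's theorem as used in Lemma~\ref{lem:witt}). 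So after an $H$-move we may assume $y \in \Spann\{\be_0,\be_1,\be_2\}$, i.e.\ $d=2$ effectively, and then explicitly exhibit a wedge $W = g.\WdS$ with $\be_1 \in W$ and $y \in -W$ by writing down $g$ — essentially a boost in the $(x_0,x_1)$-plane composed with a rotation in the $(x_1,x_2)$-plane, chosen so that $g^{-1}.\be_1$ and $-g^{-1}.y$ both land in $\WdS$. The condition $\beta(\be_1,y) = -y_1 > -1$ is exactly what makes such a $g$ exist.

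The main obstacle, and the reason the statement splits into ``$d=1$ vs.\ $d>1$'', is the connectedness refinement: for $d>1$ one must show the element $g$ above can be taken in the identity component $\SO_{1,d}(\R)^\up = G$. For $d=1$ the group $\SO_{1,1}(\R)$ is not connected and the flip $\tau_h$ is genuinely needed (e.g.\ $\be_1$ and $-\be_1 = \tau_h$-translate land in opposite wedges but the only wedges are $\WdS = \R_+\be_1\oplus\ldots$ and $-\WdS$, and passing between them requires the disconnected piece). For $d>1$ the extra rotational coordinates $\be_2,\dots,\be_d$ give the room to connect $g$ to the identity: concretely, $N_{K}(\fa)/Z_K(\fa)$ contains an element implementing $\Ad(k)h = -h$ that lies in $\SO_d(\R) \subseteq G$ (a rotation by $\pi$ in the $(x_1,x_2)$-plane), which substitutes for the disconnected flip $\tau_h$. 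So the plan is: (i) handle the reverse inclusion by direct estimate; (ii) reduce the forward inclusion to $y$ in a $3$-dimensional subspace via $H$-transitivity and Witt's theorem; (iii) write down the conjugating element explicitly in the $3$-dimensional model; (iv) for $d>1$, upgrade it to $G$ using the $\SO_d(\R)$-rotation trick, and note this fails precisely at $d=1$.
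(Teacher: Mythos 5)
Your high-level strategy --- reduce to $p=\be_1$ by transitivity, use the stabilizer $H=G_{\be_1}$ to simplify $q$, then produce a separating wedge --- is the same as the paper's, but you leave the decisive step undone and an arithmetic slip hides the case structure you would actually need. From $\beta(y,y)=-y_1^2+\beta(y',y')=-1$ one gets $\beta(y',y')=y_1^2-1$, so $y'\in\be_1^{\bot}\cong\R^{1,d-1}$ is spacelike for $|y_1|<1$, lightlike for $y_1=-1$, and timelike for $y_1<-1$; your ``coefficient bookkeeping'' is garbled and this trichotomy, which governs everything that follows, never surfaces. The paper's proof of the forward inclusion is precisely a case split on $q_1$: in the lightlike and timelike cases an $H$-move already puts $q$ in $-W_R$, so the canonical wedge $\WdS$ separates with $g=e$; in the spacelike case an $H$-move first achieves $q_0=0$, placing $q$ on the equatorial sphere $\bS^{d-1}=\dS^d\cap\be_0^\bot$, and the problem becomes separating the two distinct points $\be_1,q\in\bS^{d-1}$ by a pair of opposite open half-spheres, which are exactly the traces on $\bS^{d-1}$ of the wedges $k.\WdS$ and $-k.\WdS$ for $k\in\SO_d(\R)$. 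That sphere-separation observation is the content you replace with ``write down $g$ --- essentially a boost composed with a rotation, chosen so that both land in $\WdS$''; without it (or the explicit angle estimate it substitutes for) the forward inclusion is a restatement of the claim.

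Your account of the $d>1$ refinement points in the right direction, but the clean mechanism in the paper's proof does not invoke $N_K(\fa)$: every wedge produced above is either $\WdS$ itself or $k.\WdS$ with $k\in\SO_d(\R)\subseteq G$, so the whole construction happens inside $G$ provided the initial reduction to $p=\be_1$ can be made inside $G$, which holds iff $G$ is transitive on $\dS^d$, i.e.\ iff $d>1$. For $d=1$ the reduction fails already at the first step, since $\dS^1$ has two components $\pm\WdS$, each preserved by $G$; this is also why the Example after the proposition exhibits $(\dS^1_-\times\dS^1_+)$ as a spacelike set missed by $G.(\WdS\times(-\WdS))$.
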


\begin{prf} For each wedge region
  $W = g.\WdS = \dS^d \cap g.W_R  \subeq \dS^d$,
  the  cone $W - W \subeq g.(W_R - W_R) = g.W_R$ consists
  of spacelike pairs. Therefore it remains to  show that,
  if two points $p,q \in \dS^d$ are spacelike,
  then there exists a wedge region $W \subeq M$ with
  $p \in W$ and $q \in - W$. 
As $G$ acts transitively on $\dS^d$, we
  may w.l.o.g.\ assume that $p = \be_1$. 
That $q = (q_0, \bq) \in \dS^d$ is spacelike with respect to
$\be_1$ is equivalent to $q_1 = - \beta(q,\be_1) < 1$
(Lemma~\ref{lem:dsd-spacelike}).

\nin {\bf Case 1:} $q_1 \in (-1,1)$.
Then
\[  q_0^2 - q_2^2 -\cdots - q_d^2 = -1 + q_1^2 < 0,\]
and acting with the stabilizer of $\be_1$, it follows that
we may also assume that $q_0 = 0$. Hence it suffices to
consider points $q \in \bS^{d-1} = \dS^d \cap \be_0^\bot$.

For two points $p,q \in \bS^{d-1}$, being spacelike is equivalent to
$p \not=q$.  On the other hand, the Rindler wedge 
$W_R$ intersects $\bS^{d-1}$ in the open half sphere containing
$\be_1$ and $-W_R = W_R'$ intersects $\bS^{d-1}$ in the opposite
open half sphere. Acting with $\SO_d(\R) \subeq \SO_{1,d}(\R)$, we
therefore obtain all open half spheres 
as intersections $W \cap \bS^{d-1}$ for a wedge region $W = g \WdS
\subeq \dS^d$.
Hence the assertion follows from the fact that any pair of different points
$p,q \in \bS^{d-1}$ can be separated by two complementary open
half spheres.

\nin {\bf Case 2:} $q_1 = -1$. 
Acting with the stabilizer of $\be_1$,
this case can be reduced to $q = q_0 \be_0 - \be_1 + q_2 \be_2$
with $q_0^2 = q_2^2$ and $|q_0| < 1 = -q_1$. Then
$q \in - W_R$ and $p \in W_R$.

\nin {\bf Case 3:} $q_1 < -1$. 
Acting with the stabilizer of $\be_1$,
this case can be reduced to $q = \sqrt{q_1^2 -1} \be_0 + q_1 \be_1$.
Then $p \in W_R$ and $q \in - W_R$.
\end{prf}

\begin{ex} For $d = 1$, the two connected components
  $\dS^1_\pm = \pm W_{\dS^1}$ are mutually spacelike 
  and, conversely, if $x$ and $y$ are spacelike,
  then they are contained in different connected components:
  \[ (\dS^1 \times \dS^1)^{\rm spacelike} =
    (\dS^1_+ \times \dS^1_-) \cup    (\dS^1_- \times \dS^1_+).\] 
\end{ex}

\subsection{Spacelike complements in de Sitter space}

We know from Lemma~\ref{lem:dsd-spacelike}
that $x, y \in \dS^d$ are spacelike if and only if 
$\beta(x,y) > -1.$ 
We conclude that, for an open subset $\cO \subeq \dS^d$,
the {\it (open) spacelike complement} is given by
\[ \cO' = \{ y \in \dS^d \:  (\forall x \in \cO)\ \beta(x,y) > -1\}^\circ,\]
and two open subsets $\cO_1$ and $\cO_2$ are spacelike separated if and only if
\begin{equation}
  \label{eq:betao1o2}
  \beta(\cO_1 \times \cO_2) \subeq (-1,\infty).
\end{equation}

The following lemma shows that spacelike complements are always
obtained as intersections of open convex subsets of $\R^{1,d}$
with $\dS^d$. 

\begin{lem} \mlabel{lem:5.4}
  Let $\cO \subeq \dS^d$ be an open subset. Then
  the convex subset 
\[ C := \{ y \in \R^{1,d} \:  (\forall x \in \cO)\ \beta(x,y) > -1\} \]
satisfies 
\begin{equation}
  \label{eq:co'}
  \cO' = C^\circ \cap \dS^d, 
\end{equation}
where $C^\circ$ is the interior of $C$ in $\R^{1,d}$. 
\end{lem}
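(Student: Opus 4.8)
The plan is to establish the set equality \eqref{eq:co'} by proving the two inclusions separately, the nontrivial one reducing to a statement about supporting hyperplanes of the convex set $C$.

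The inclusion ``$\supeq$'' is immediate and I would dispose of it first: $C^\circ$ is open in $\R^{1,d}$, so $C^\circ \cap \dS^d$ is open in $\dS^d$; it is contained in $C \cap \dS^d = \{ y \in \dS^d \: \beta(x,y) > -1 \text{ for all } x \in \cO \}$, and the interior of the latter in $\dS^d$ is by definition $\cO'$. The real content is ``$\subeq$'': I fix $y \in \cO'$ and must show $y \in C^\circ$. Since $y \in \cO'$ there is a neighbourhood $N$ of $y$, open in $\dS^d$, with $N \subeq C$. Now $C$ is convex --- it is the intersection of the open half-spaces $\{ z \: \beta(x,z) > -1\}$, $x \in \cO$, and equals $\R^{1,d}$ if $\cO = \eset$ --- so if $y$ were \emph{not} an interior point of $C$, there would be a supporting functional: some $\ell \in (\R^{1,d})^* \setminus \{0\}$ with $\ell(c) \geq \ell(y)$ for all $c \in C$, in particular for all $c \in N$. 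Then $\ell$ restricts to a function on $\dS^d$ with a local minimum at $y$, so its differential vanishes there, i.e.\ $\ell$ annihilates $T_y(\dS^d) = y^{\bot_\beta}$; as $\beta$ is non-degenerate this pins $\ell$ down up to scale, $\ell = \lambda\,\beta(y,\cdot)$ with $\lambda \in \R \setminus \{0\}$.

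The remaining step, which I expect to be the crux, is to contradict this using the Lorentzian geometry of $\dS^d$. I would use the local parametrisation $v \mapsto x(v) := \sqrt{1+\beta(v,v)}\, y + v$, $v \in y^{\bot_\beta}$ small (it lands in $\dS^d$ because $\beta(y,y) = -1$ and $\beta(y,v) = 0$), so that $x(v) \in N \subeq C$ for small $v$ and $\ell(x(v)) = \lambda\,\beta(y, x(v)) = -\lambda\sqrt{1+\beta(v,v)}$; the supporting inequality $\ell(x(v)) \geq \ell(y) = -\lambda$ thus says $\lambda\bigl(\sqrt{1+\beta(v,v)}-1\bigr) \le 0$ for all small $v$. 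For $d \geq 2$ the form $\beta$ restricted to $y^{\bot_\beta}$ is non-degenerate of signature $(1,d-1)$, hence $\beta(v,v)$ takes both signs arbitrarily near $0$, and the inequality forces $\lambda = 0$, a contradiction; therefore $y \in C^\circ$. The case $d = 1$, where $\dS^1$ is locally convex so that its tangent line $\{z \: \beta(y,z) = -1\}$ genuinely supports $C$ and the above breaks down, I would treat separately and directly, using that $\dS^1 = \dS^1_+\, \dot\cup\, \dS^1_-$ with $\beta(x,y) \leq -1$ for $x,y$ in the same component and $\beta(x,y) \geq 1$ for $x,y$ in opposite ones (cf.\ the Example above): if $\cO$ meets both components both sides of \eqref{eq:co'} are empty, and if $\eset \neq \cO \subeq \dS^1_\pm$ then both sides equal $\dS^1_\mp$, the inclusion $\dS^1_\mp \subeq C^\circ$ being an easy estimate on $\beta(x,\cdot)$ near $\dS^1_\mp$. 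The one genuinely non-formal point is the identification of the supporting functional as proportional to $\beta(y,\cdot)$; once that is in hand, the ``saddle'' shape of the one-sheeted hyperboloid $\dS^d$ (for $d\ge 2$) finishes the job.
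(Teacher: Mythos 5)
Your proof is correct and is built on the same geometric observation that the paper's proof relies on: since $C$ is convex, the only obstruction to $y\in C^\circ$ is a supporting hyperplane for $C$ at $y$, and for $d\geq 2$ the indefinite second fundamental form of $\dS^d$ as a quadric in $\R^{1,d}$ rules such a hyperplane out. The paper compresses this to the single assertion that $x$ lies in the interior of $\conv(U)$; you instead produce the candidate supporting functional $\ell$, show it must be proportional to $\beta(y,\cdot)$ by differentiating along $T_y(\dS^d)$, and then let the indefinite sign of $\beta$ on $T_y(\dS^d)=y^{\bot_\beta}$ (signature $(1,d-1)$) force $\lambda=0$. These are two phrasings of the same argument, but yours is more explicit and, importantly, you notice that it genuinely fails for $d=1$: there $y^{\bot_\beta}$ is timelike, $\beta\res_{y^{\bot_\beta}}$ is definite, and $\dS^1$ is locally convex so the tangent line does support $\conv(U)$. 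You then settle $d=1$ directly from the two-component structure of $\dS^1$, which is clean and checks out. This is a real improvement on the paper's version: the step ``$x$ is contained in the interior of the convex hull $\conv(U)$'' is actually false for $d=1$ (for a small arc $U\ni x$ on one branch of $\dS^1$, the point $x$ lies on the boundary of $\conv(U)$), although the lemma itself remains true there. In short, your proof follows essentially the same route but is more careful, and it repairs a small gap in the paper's argument for $d=1$.
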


\begin{prf} We have $\cO' = (C \cap \dS^d)^\circ$,
  so that $C^\circ \cap \dS^d \subeq \cO'$ holds trivially.
  For the converse inclusion, 
let $x \in \cO'$. Then there exists an open neighborhood $U$ of $x$ in
$\dS^d$ with $x \in U \subeq C$. Then $x$ is contained in the interior
of the convex hull $\conv(U)$  in Minkowski space, hence also in $C^\circ$, and
this proves the other inclusion.
\end{prf}

\begin{lem} For wedge domains $W \subeq \dS^d$, we have
  $W' = -  W$.
\end{lem}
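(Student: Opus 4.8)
The plan is to reduce at once to the canonical wedge $\WdS = \{x\in\dS^d : x_1 > |x_0|\}$ and then establish the two inclusions $-\WdS\subeq\WdS'$ and $\WdS'\subeq-\WdS$ separately. Since $\beta$ is $\SO_{1,d}(\R)$-invariant and $\SO_{1,d}(\R)$ acts linearly on $\R^{1,d}$, the spacelike complement transforms equivariantly, $(g.\cO)' = g.(\cO')$ for open $\cO\subeq\dS^d$, while $-(g.\WdS) = g.(-\WdS)$; as every wedge region is $g.\WdS$ with $g\in\SO_{1,d}(\R)$, it suffices to prove $\WdS' = -\WdS$. I will write $S := \{y\in\dS^d : \beta(x,y)>-1 \text{ for all } x\in\WdS\}$, so $\WdS' = S^\circ$ (interior in $\dS^d$), and use that $\WdS = W_R\cap\dS^d$ for the open convex cone $W_R = \{v\in\R^{1,d} : v_1>|v_0|\}$, every element of which is spacelike.

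For $-\WdS\subeq\WdS'$ I would argue as in Proposition~\ref{prop:b.7}: for $y\in-\WdS$ and $x\in\WdS$, the vector $x+(-y)$ lies in the convex cone $W_R$, so $\beta(x-y,x-y)<0$, and with $\beta(x-y,x-y) = -2(1+\beta(x,y))$ (Lemma~\ref{lem:dsd-spacelike}) this gives $\beta(x,y)>-1$; hence $-\WdS\subeq S$, and since $-\WdS$ is open in $\dS^d$, in fact $-\WdS\subeq S^\circ = \WdS'$.

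For the reverse inclusion I would first show $S\subeq\{y\in\dS^d : y_1\le-|y_0|\}$ by testing against the curve $x^{(t)} := (t,\sqrt{1+t^2},0,\dots,0)$, which lies in $\WdS$ for all $t\in\R$: from $\beta(x^{(t)},y) = ty_0 - \sqrt{1+t^2}\,y_1 > -1$, dividing by $|t|$ and letting $t\to\pm\infty$ gives $y_0-y_1\ge0$ and $-y_0-y_1\ge0$, i.e.\ $y_1\le-|y_0|$. Thus $\WdS'\subeq S\subeq\{y\in\dS^d : y_1\le-|y_0|\}$, and the remaining task — the one requiring genuine care — is to upgrade this non-strict inequality to $y_1<-|y_0|$ by exploiting that $\WdS' = S^\circ$ is open in $\dS^d$. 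Concretely I must show that $S^\circ$ contains no $y\in\dS^d$ with $y_1 = -|y_0|$. Given such $y$, choose $\eps\in\{+1,-1\}$ with $y_1 = -\eps y_0$ and set $f(v) := v_1 + \eps v_0$, so $f(y) = 0$; the linear form $f$ does not vanish on $T_y\dS^d = \ker\beta(y,\cdot)$, since otherwise the coefficient vectors $(\eps,1,0,\dots,0)$ and $(y_0,-y_1,\dots,-y_d)$ of $f$ and $\beta(y,\cdot)$ would be proportional, forcing $y_2=\dots=y_d=0$ and $y_0^2 = y_1^2$, which contradicts $\beta(y,y) = -1$. Hence $f|_{\dS^d}$ is a submersion at $y$, so every neighbourhood of $y$ in $\dS^d$ contains a point $y'$ with $f(y') > 0$, i.e.\ $y_1' > -\eps y_0'\ge-|y_0'|$; such a $y'$ lies outside $\{v : v_1\le-|v_0|\}\supeq S$, so $y\notin S^\circ = \WdS'$. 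Together with the first inclusion this gives $\WdS' = -\WdS$, hence $W' = -W$ for every wedge region. I expect the only real obstacle to be this last transversality/openness step; everything else is routine bookkeeping.
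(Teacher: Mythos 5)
Your proposal is correct, and it follows the paper's overall strategy: reduce by equivariance to the canonical wedge, obtain $-\WdS \subeq \WdS'$ from the spacelikeness of the convex cone $W_R$, and then use the boost curve in $\WdS$ to show that any $y$ satisfying $\beta(x,y)>-1$ for all $x \in \WdS$ must have $y_1 \le -|y_0|$. The difference lies in the final step, where the non-strict inequality is upgraded to strict. The paper invokes Lemma~\ref{lem:5.4}, which identifies $\WdS'$ with $C^\circ \cap \dS^d$ where $C^\circ$ is the interior taken \emph{in Minkowski space}; since $C$ is convex and contained in $-\oline{W_R}$, convexity immediately gives $C^\circ \subeq -W_R$, and the conclusion follows. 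You instead work entirely with the interior $S^\circ$ relative to $\dS^d$ and rule out lightlike boundary points $y$ (with $y_1 = -|y_0|$) by a transversality argument: the linear form $f(v) = v_1 + \eps v_0$ is nonzero on $T_y\dS^d = \ker\beta(y,\cdot)$, as proportionality of $(\eps,1,0,\dots,0)$ and $(y_0,-y_1,\dots,-y_d)$ would force $\beta(y,y)=0$; hence $f|_{\dS^d}$ is a submersion at $y$ and any neighbourhood of $y$ in $\dS^d$ contains points with $f>0$, which lie outside $S$. Both arguments are sound. Yours has the virtue of being self-contained (no appeal to Lemma~\ref{lem:5.4} or to convexity of $C$), at the cost of a somewhat longer local computation; the paper's argument is shorter given that the convexity machinery is already in place.
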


\begin{prf} As $(g.W)' = g.W'$ for $g \in G$, it
  suffices to prove the assertion for the canonical
  wedge $\WdS$ (see \eqref{eq:CanWedge}). 
From $W_R' = - W_R$ in Minkowski space, we derive 
$- \WdS \subeq (\WdS)'.$ 
  For the converse, we assume that $y \in (\WdS)'$.
  Then
  \[ -1 < \beta(y, \cosh(t) \be_1 + \sinh(t) \be_0)
    = \sinh(t) y_0 - \cosh(t) y_1 \quad \mbox{ for }\quad  t \in \R.\]
  Dividing  by $\cosh(t)$, we obtain for $t \to \pm \infty$
  the inequalities
  $\pm y_0 \geq y_1$, i.e., $y_1 \leq - |y_0|$, which means that
  $y \in - \oline{W_R}$. 
  For the set $C$ in Lemma~\ref{lem:5.4}, this means that
  $C \subeq - \oline{W_R}$, so that $C^\circ \subeq - W_R$,
  and therefore $(\WdS)' \subeq \dS^d \cap - W_R = - \WdS$.
\end{prf}

Clearly, any pair of open subsets $\cO_1, \cO_2\subeq \dS^d$
for which there exists an affine wedge region
$x + g.W_R$ in Minkowski space with
\[ \cO_1 \subeq x + g.W_R \quad \mbox{ and }\quad 
\cO_2 \subeq (x + g.W_R)' = x - g.W_R \] 
are spacelike separated. In Minkowski space the converse is true:
If $\cO_1, \cO_2\subeq \R^{1,d}$ are convex and spacelike separated,
then there exists an affine wedge $W = x + g.W_R$, $g \in G$,
such that $\cO_1 \subeq W$ and $\cO_2 \subeq W' = x - g.W_R$
(\cite[Prop.~3.1]{TW97}). Here convexity is crucial,
as the pairs $(\cO,\cO')$, where $\cO$ is a double cone show. 

Note that not all intersections
$(x + g.W_R) \cap \dS^d$, $x \in \dS^d$, are wedge regions in $\dS^d$. 
The following discussion shows that, in de Sitter space,
there also exist open subsets $\cO_1$ and $\cO_2$ with
$\cO_1 \subeq \cO_2'$, for which there exists no  wedge 
  region $W \subeq \dS^d$ with $\cO_1 \subeq W$ and $\cO_2 \subeq W'$.

\begin{ex} In $\dS^2$, we consider the double cone
$\cO$ generated by
\[ p_0 := \be_1 \quad \mbox{ and } \quad p_1 := (\sinh t, \cosh t, 0),
  \quad \mbox{ for some } \quad t > 0,\] 
i.e.,
\[ \cO = \dS^2 \cap (p_0  + V_+) \cap (p_1 - V_+).\]
We claim that there exists no wedge region $W \subeq \dS^2$ containing
\[ \cO'
  = \{ x \in \dS^2 \: x_1 < 1, \cosh(t) x_1 - \sinh(t) x_0 < 1\}.\]
For $\eps > 0$, we consider the curves
\[ \beta^\pm(s) := (s,1-\eps, \pm \sqrt{1 - (1-\eps)^2 + s^2}) \in \dS^2.\] 
As
\[ \cosh(t) (1-\eps) - \sinh(t) s < 1 \]
for $s$ sufficiently large, we have $\beta^\pm(s) \in \cO'$ for $s \> 0$.
The asymptotic directions of $\beta^\pm(s)$ for $s \to \infty$ are
\[ \lim_{s \to \infty} \frac{\beta(s)}{s} = \be_0 \pm \be_2.\]
If $\cO'$ is contained in some wedge region, then
$W = W_M \cap \dS^2$ for a linear wedge region $W_M$ in Minkowski space,
i.e., $W_M$ is an open convex spacelike cone. We then obtain
$\be_0 \pm \be_2 \in \oline{W_M}$, and hence
$\be_0 \in W_M$, contradicting that $W_M$ is spacelike.
\end{ex}

In the preceding example, the problem is that $\cO'$ is not contained
in any wedge domain, i.e., the convex cone $\conv(\R_+ \cO')$ is
not spacelike. This is clearly necessary for a wedge $W$ to exist
such that $\cO_1\subeq W$ and $\cO_2 \subeq W'$. 

\begin{lem} {\rm(Separation by wedge regions)}
  \mlabel{lem:sepwedges}
  For two open subsets $\cO_1, \cO_2 \subeq \dS^d$,
  the following are equivalent:
  \begin{itemize}
  \item[\rm(a)] The existence of a wedge region $W\subeq \dS^d$ with
  $\cO_1 \subeq W$ and $\cO_2 \subeq W'$. 
\item[\rm(b)] The open convex cones $C(\cO_j):= \conv(\R_+ \cO_j)$,
    $j = 1,2$, are spacelike separated in Minkowski space~$\R^{1,d}$. 
  \end{itemize}
\end{lem}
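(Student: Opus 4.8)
Everything reduces to a separation statement for convex \emph{cones} in Minkowski space. Recall that a wedge region of $\dS^d$ is $W=g.\WdS=(g.W_R)\cap\dS^d$ for some $g\in\SO_{1,d}(\R)$, and that $W'=-W$ for wedge domains (as shown just above). We may assume $\cO_1,\cO_2\neq\eset$, since otherwise both (a) and (b) hold trivially.

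\textbf{(a) $\Rightarrow$ (b).} Let $W$ be a wedge region with $\cO_1\subseteq W$ and $\cO_2\subseteq W'$, and put $W_M:=g.W_R$, an open convex cone in $\R^{1,d}$ all of whose elements are spacelike (because $W_R=\{x_1>|x_0|\}$ has this property). Then $\cO_1\subseteq W_M$ and $\cO_2\subseteq -W_M$, and since $W_M$ is a convex cone we get $C(\cO_1)=\conv(\R_+\cO_1)\subseteq W_M$ and $C(\cO_2)\subseteq -W_M$. For $x\in C(\cO_1)$, $y\in C(\cO_2)$ we then have $x-y\in W_M+W_M=W_M$, which is spacelike; hence $C(\cO_1)$ and $C(\cO_2)$ are spacelike separated, which is (b).

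\textbf{(b) $\Rightarrow$ (a).} The sets $C(\cO_j)$ are open (convex hulls of open sets) convex cones, and by (b) they are spacelike separated in $\R^{1,d}$; in particular $C(\cO_1)\cap C(\cO_2)=\eset$. By \cite[Prop.~3.1]{TW97} there is an affine wedge $W=x_0+g.W_R$, with $g\in\SO_{1,d}(\R)$ and $x_0\in\R^{1,d}$, such that $C(\cO_1)\subseteq W$ and $C(\cO_2)\subseteq W'=x_0-g.W_R$. The crucial observation is that $x_0$ can be discarded: since $C(\cO_j)$ is a nonempty cone, $0\in\overline{C(\cO_j)}$, so $0\in x_0+g.\overline{W_R}$ and $0\in x_0-g.\overline{W_R}$, giving
\[
x_0\in g.\bigl(\overline{W_R}\cap(-\overline{W_R})\bigr)=g.\Spann\{\be_2,\ldots,\be_d\},
\]
because $\overline{W_R}=\{x_1\ge|x_0|\}$ and $-\overline{W_R}=\{x_1\le-|x_0|\}$ meet only along the subspace $\{x_0=x_1=0\}$. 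But adding an element of $\Spann\{\be_2,\ldots,\be_d\}$ changes only the last $d-1$ coordinates, while $W_R$ constrains only the first two; hence $\Spann\{\be_2,\ldots,\be_d\}+W_R=W_R$, and therefore $x_0+g.W_R=g.W_R$ and $x_0-g.W_R=-g.W_R$. Thus, with the \emph{linear} wedge $W_M:=g.W_R$ we have $\cO_j\subseteq C(\cO_j)\subseteq\pm W_M$, so $W:=W_M\cap\dS^d=g.\WdS$ is a wedge region with $\cO_1\subseteq W$ and $\cO_2\subseteq(-W_M)\cap\dS^d=-W=W'$. This proves (a).

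\textbf{Where the work is.} The two easy directions above are essentially formal; the substance is imported from \cite[Prop.~3.1]{TW97}, applied to the cones $C(\cO_j)$, together with the (routine) ``conification'' of the resulting affine wedge. If one wishes to avoid any boundedness assumptions hidden in that reference, the self-contained route is to prove directly that two spacelike-separated open convex cones $C_1,C_2\subseteq\R^{1,d}$ can be separated by a linear wedge: here (b) forces $D:=C_1-C_2$ to be an open, salient, convex cone consisting of spacelike vectors, and it suffices to produce two null vectors $m_1,m_2$ in the $\beta$-dual cone $E:=\{v:\beta(z,v)\ge0\ \forall z\in D\}$ with $\beta(m_1,m_2)<0$ — for then $W_M=\{x:\beta(x,m_1)>0,\ \beta(x,m_2)>0\}$ is a linear wedge with $C_1\subseteq W_M$ and $C_2\subseteq-W_M$, since any $m\in E\setminus\{0\}$ automatically has $\beta(C_1,m)>0>\beta(C_2,m)$ ($C_1,C_2$ being open). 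Equivalently, one must show that the closed convex spacelike cone $\overline D$ is contained in the closure of a Lorentz wedge $g.W_R$; this is the one genuinely geometric point, and I would prove it by an argument on the unit sphere, exploiting that a salient convex cone of spacelike vectors has small angular extent away from the $\be_0$–axis.
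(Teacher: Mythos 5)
Your proof is correct and follows the same route as the paper: the direction (a)$\Rightarrow$(b) is formal, and (b)$\Rightarrow$(a) invokes \cite[Prop.~3.1]{TW97} applied to the cones $C(\cO_j)$ and then discards the translation. The only deviation is in how the translation $x_0$ is eliminated. You pin $x_0$ down explicitly by noting $0\in\overline{C(\cO_j)}$ and computing $\overline{W_R}\cap(-\overline{W_R})=\Spann\{\be_2,\ldots,\be_d\}$, then using that this degenerate subspace translates $W_R$ to itself. The paper instead uses the shorter abstract observation that an open convex cone $C$ contained in a translated open convex cone $x_0+W_0$ is automatically contained in $W_0$: for $c\in C$, scale to get $\lambda c\in x_0+W_0$, so $c-x_0/\lambda\in W_0$, let $\lambda\to\infty$ to obtain $c\in\overline{W_0}$, and conclude $C\subseteq(\overline{W_0})^\circ=W_0$ from openness. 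Both are valid; yours is more computational but makes the geometric structure of the translation explicit, while the paper's absorbs the translation purely by the cone property. Your closing sketch of a self-contained alternative avoiding \cite{TW97} is a reasonable fallback, though the paper does not pursue it.
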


If two open convex cones $C(\cO_j) \subeq \R^{1,d}$ are spacelike separated,
then the cone $C(\cO_1) - C(\cO_2)$ is spacelike, and this implies that
$C(\cO_j)$ are both spacelike. 

Note that the condition that $\conv(\R \cO)$ is spacelike in $\R^{1,d}$
is a non-trivial requirement that restricts the ``size'' of the open subset
$\cO \subeq \dS^d$. It is equivalent to $\cO$ being contained in a
wedge region, because the wedge regions in $\R^{1,d}$ that are convex cones
are precisely the maximal spacelike cones. 

\begin{prf} (a) $\Rarrow$ (b):
   The existence of a wedge region $W$ with
   $\cO_1 \subeq W$ and $\cO_2 \subeq W'$
   implies $C(\cO_1) \subeq W$ and $C(\cO_2) \subeq -W = W'$,
   so that $C(\cO_j)$ are spacelike separated in $\R^{1,d}$. 

\nin (b) $\Rarrow$ (a):    
  If, conversely, $C(\cO_j)$ are spacelike separated in $\R^{1,d}$,
  then there exists a wedge region $W$ in $\R^{1,d}$ with
$C(\cO_1) \subeq W$ and $C(\cO_2)\subeq W'$ 
  (\cite[Prop.~3.1]{TW97}). Here $W$ is a translate $x + W_0$
  of a convex cone $W_0$,
  so that the fact that $C(\cO_j)$ are convex cones implies
  \[ \cO_1 \subeq C(\cO_1)\subeq W_0 \quad \mbox{ and } \quad \cO_2 \subeq
    C(\cO_2)\subeq  - W_0,\]
  so that $\cO_1$ and $\cO_2$ are separated by the wedge regions
  $\pm W_0 \cap \dS^d$ in de Sitter space. 
\end{prf}

\subsection{The Locality Theorem}

We now prove the locality property for the net
$\sH^{\dS^d}_{\sE_H}(\cO)$, $\cO \subeq \dS^d$ open, associated to the 
$H$-invariant  real subspace $\sE_H^+ \subeq \cH^{-\infty}$ as 
defined in  {\eqref{eq:EHpm}.  

\begin{thm} {\rm(Locality Theorem)} The net $\sH^{\dS^d}_{\sE_H}$ is local
  in the sense that, for open subsets $\cO_1, \cO_2 \subeq \dS^d$ with
  $\cO_1 \subeq \cO_2'$, we have
  \begin{equation}
    \label{eq:loc-net}
    \sH^{\dS^d}_{\sE_H}(\cO_1) \subeq \sH^{\dS^d}_{\sE_H}(\cO_2)'.
  \end{equation}
\end{thm}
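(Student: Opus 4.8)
The plan is to reduce the locality inclusion to the statement that one family of matrix-coefficient distributions on $G$ is real-valued on the open set of spacelike configurations. Since $\Im\la\cdot,\cdot\ra$ is continuous and $\R$-bilinear and $\sH^{\dS^d}_{\sE_H}(\cO_j)$ is the closed real span of the vectors $U^{-\infty}(\xi_j)\eta_j$ with $\xi_j\in C^\infty_c(q^{-1}(\cO_j),\R)$ and $\eta_j\in\sE_H=\beta^+(\cE^J)$, the inclusion \eqref{eq:loc-net} is equivalent to
\[ \la U^{-\infty}(\xi_1)\eta_1,\, U^{-\infty}(\xi_2)\eta_2\ra\in\R
   \qquad\text{for all such }\xi_1,\xi_2\text{ and }\eta_j=\beta^+(v_j),\ v_j\in\cE^J. \]
By the computation preceding \eqref{eq:dalphabeta} this equals $D^+_{\eta_1,\eta_2}(\xi_2^* * \xi_1)$, where $\xi_2^* * \xi_1$ is a real test function with
\[ \supp(\xi_2^* * \xi_1)\subeq (q^{-1}(\cO_2))^{-1}\,q^{-1}(\cO_1)
   =\{\,g_2^{-1}g_1 \: g_1.\be_1\in\cO_1,\ g_2.\be_1\in\cO_2\,\}. \]
Since $\cO_1\subeq\cO_2'$ forces $g_1.\be_1$ and $g_2.\be_1$ to be spacelike (Lemma~\ref{lem:dsd-spacelike}), acting by the isometry $g_2^{-1}$ shows that every such $g_2^{-1}g_1$ lies in
\[ \cU:=\{\,g\in G \: \be_1\text{ and }g.\be_1\text{ are spacelike}\,\}
   =\{\,g\in G \: \beta(\be_1,g.\be_1)>-1\,\}. \]
Hence it suffices to prove that $D^+_{\eta_1,\eta_2}$ is a real-valued distribution on the (open) set $\cU$.

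To pin down $D^+_{\eta_1,\eta_2}$ on $\cU$, I would use the Wick rotation $\kappa_h=\exp(-\tfrac{\pi i}{2}h)$, which satisfies $\kappa_h.i\be_0=\be_1$ and $\kappa_h(x)=(-ix_1,-ix_0,x_2,\dots,x_d)$. For $g\in G$ one then has $(\kappa_h g^{-1}\kappa_h).i\be_0=\kappa_h.(g^{-1}.\be_1)$, whose $0$-component multiplied by $-i$ equals $-(g^{-1}.\be_1)_1=\beta(\be_1,g.\be_1)$. Since $G_\C^\cut$ is exactly the preimage of $\dS_\C^{d,\cut}$ under $g\mapsto g.i\be_0$ (immediate from \eqref{eq:gccut}, \eqref{eq:mccut} and the fact that $K_\C$ is the $G_\C$-stabilizer of $i\be_0$), Proposition~\ref{prop:cplx-desit-cut} gives
\[ \cU=\{\,g\in G \: \kappa_h g^{-1}\kappa_h\in G_\C^\cut\,\}. \]
Moreover $G\subeq G_\C^\cut$ (put $g$ in the middle slot of $K_\C\Xi_{A_\C}G\Xi_{A_\C}K_\C$), and $\phi_\cE$ is holomorphic on $G_\C^\cut$; so for $g$ in a compact subset of $\cU$ and $t$ close to $\pi/2$ the elements $\exp(-ith)g^{-1}\exp(-ith)$ stay in $G_\C^\cut$, and the limit in \eqref{eq:dalphabeta},
\[ D^+_{\eta_1,\eta_2}(g)=\lim_{t\to\pi/2^-}\phi_\cE^{v_1,v_2}\big(\exp(-ith)g^{-1}\exp(-ith)\big), \]
converges locally uniformly (and in every $g$-derivative, by Cauchy estimates) to the real-analytic function $g\mapsto\phi_\cE^{v_1,v_2}(\kappa_h g^{-1}\kappa_h)$. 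Thus $D^+_{\eta_1,\eta_2}$ restricts on $\cU$ to this function.

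It remains to check that $\phi_\cE^{v_1,v_2}(\kappa_h g^{-1}\kappa_h)$ is real for $g\in G$. Here I invoke Lemma~\ref{lem:phi-props}(d): since $v_1,v_2\in\cE^J$, we have $\oline{\phi_\cE^{v_1,v_2}(z)}=\phi_\cE^{v_1,v_2}(\tau_h(\oline z))$ for $z\in G_\C^\cut$. Using $2\pi i$-periodicity of the modular flow, i.e.\ $\exp(2\pi ih)=e$ in $\SO_{1,d}(\C)$, one gets $\exp(\tfrac{3\pi i}{2}h)=\exp(-\tfrac{\pi i}{2}h)$; together with $\oline{\kappa_h}=\exp(\tfrac{\pi i}{2}h)$ and $\tau_h(x)=\exp(\pi ih)x\exp(-\pi ih)$, this gives for $z=\kappa_h g^{-1}\kappa_h$, $g\in G$,
\[ \tau_h(\oline z)=\exp(\pi ih)\exp(\tfrac{\pi i}{2}h)g^{-1}\exp(\tfrac{\pi i}{2}h)\exp(-\pi ih)
   =\exp(\tfrac{3\pi i}{2}h)g^{-1}\exp(-\tfrac{\pi i}{2}h)=\kappa_h g^{-1}\kappa_h=z. \]
Hence $\oline{\phi_\cE^{v_1,v_2}(z)}=\phi_\cE^{v_1,v_2}(z)\in\R$. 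Combining everything, $D^+_{\eta_1,\eta_2}$ is a real-valued (real-analytic) function on $\cU$, so it pairs to a real number with every real test function supported in $\cU$, in particular with $\xi_2^* * \xi_1$; this proves \eqref{eq:loc-net}.

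The step I expect to need the most care is the identification of $D^+_{\eta_1,\eta_2}$ on $\cU$ with the honest smooth function $g\mapsto\phi_\cE^{v_1,v_2}(\kappa_h g^{-1}\kappa_h)$ — as opposed to a merely formal coincidence of ``distribution kernels'' as in \eqref{eq:dab1} — which rests on the combination of $G\subeq G_\C^\cut$, the openness of $G_\C^\cut$, the explicit description of $\cU$ via Proposition~\ref{prop:cplx-desit-cut}, and the precise form of the boundary-value limit in \eqref{eq:dalphabeta}; at a lower level one must also fix the conventions for $D^+_{\eta_1,\eta_2}$, for $\xi_2^* * \xi_1$, and for the support estimate. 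Once this is in place, the geometric input (that $\kappa_h g^{-1}\kappa_h\in G_\C^\cut$ precisely on the spacelike locus) and the reality statement (Lemma~\ref{lem:phi-props}(d) together with the periodicity identity) are essentially immediate.
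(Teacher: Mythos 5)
Your proof is correct, but it takes a genuinely different route from the paper's. The paper reduces to the standard wedge pair: using the element $g_0 = \diag(1,-1,-1,1,\ldots,1)\in K$ with $\Ad(g_0)h=-h$ and $g_0\tau_h=\tau_h g_0$, covariance gives $U(g_0)\sH(W)=\sH(W')$, while the modular theory of the standard subspace $\sV$ gives $U(g_0)\sV=\sV'$; since the Bisognano--Wichmann property from \cite{FNO23} identifies $\sH(W)$ with $\sV$, the left-invariant kernel $D^{(2)}_{\alpha,\gamma}(\phi\otimes\psi)=\la U^{-\infty}(\phi)\alpha,U^{-\infty}(\psi)\gamma\ra$ takes values in $\la\sV,\sV'\ra\subeq\R$ on $W_G\times W'_G$, and Proposition~\ref{prop:b.7} (spacelike pairs $=G.(W\times W')$) propagates the reality to all spacelike configurations. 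You instead bypass (BW) entirely and work directly with the holomorphic extension: you identify the spacelike locus $\cU=\{g\in G:\kappa_h g^{-1}\kappa_h\in G_\C^\cut\}$ via Proposition~\ref{prop:cplx-desit-cut}, observe that for $g\in\cU$ the boundary-value limit \eqref{eq:dalphabeta} is attained from within $G_\C^\cut$ and therefore converges locally uniformly to the real-analytic function $g\mapsto\phi_\cE^{v_1,v_2}(\kappa_h g^{-1}\kappa_h)$, and then use the symmetry in Lemma~\ref{lem:phi-props}(d) together with $\tau_h(\oline{\kappa_h g^{-1}\kappa_h})=\kappa_h g^{-1}\kappa_h$ for $g\in G$ to show this function is real. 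Both arguments ultimately rest on the same geometric description of spacelike pairs ($\beta>-1$, equivalently Proposition~\ref{prop:b.7}), but the paper's argument is shorter and needs only (Cov) and (BW) as inputs, whereas yours exploits Theorem~\ref{thm:ext} and Lemma~\ref{lem:phi-props}, does not invoke (BW), and yields the stronger conclusion that $D^+_{\eta_1,\eta_2}$ restricts to a real-analytic (not merely real-valued) density on the spacelike region. One small caveat: the curve $t\mapsto\exp(-ith)g^{-1}\exp(-ith)$ lies in $G_\C^\cut$ for all $t\in[0,\pi/2)$ because it lies in $\Xi_{A_\C}G\Xi_{A_\C}$, not because $G\subeq G_\C^\cut$; the latter fact is irrelevant here, and it is the endpoint $t=\pi/2$ that requires $g\in\cU$.
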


\begin{prf}
We consider the element
\[ g_0 := \diag(1,-1,-1,1,\ldots, 1) \in \SO_d(\R). \] 
It commutes with $\tau_h$ and satisfies $\Ad(g_0) h = - h$. Accordingly,
\[ g_0.W_{\dS^d}^+(h)
  = W_{\dS^d}^+(\Ad(g_0)h)
  = W_{\dS^d}^+(-h)
= - W_{\dS^d}^+(h).\] 
So covariance of the net $\sH^{\dS^d}_{\sE_H}$ implies with $W := W_{\dS^d}^+(h)
= W_R \cap \dS^d$ that 
\[ U(g_0) \sV = U(g_0) \sH^{\dS^d}_{\sE_H}(W) 
=  \sH^{\dS^d}_{\sE_H}(g_0W) =  \sH^{\dS^d}_{\sE_H}(W').\] 
On the other hand, $U(g_0)$ commutes with $J_\sV$ and 
$U(g_0) \Delta_\sV U(g_0)^{-1} = \Delta_{\sV}^{-1}$, so that 
$U(g_0)\sV = \sV'.$ This implies that 
\[ \sH^{\dS^d}_{\sE_H}(W') = \sH^{\dS^d}_{\sE_H}(W)'\]  
for $W = W_{\dS^d}^+(h)$. 

Let $W_G := q_{\dS^d}^{-1}(W)$ and $W'_G := q_{\dS^d}^{-1}(W') = g_0 W_G$. 
For $\alpha, \gamma \in \sE_H$, we consider the distribution on $G$, 
defined by 
\[ D_{\alpha,\gamma}(\phi) := \gamma(U^{-\infty}(\phi)\alpha).\] 
The corresponding left invariant distribution kernel 
on $G \times G$ is given  by 
\[ D_{\alpha,\gamma}^{(2)}(\phi \otimes \psi) 
  := D_{\alpha,\gamma}(\psi^* * \phi) \quad \mbox{ for } \quad
  \phi, \psi \in C^\infty_c(G,\R).\] 
Note that $\psi^* = \psi^\vee$ for real-valued functions.

For $\xi \in C^\infty_c(G,\C)$, we put 
\[ \supp_{\dS^d}(\xi) := q_{\dS^d}(\supp(\xi)) \quad \mbox{ for } \quad 
q_{\dS^d} \:G \to G/H, \quad g \mapsto gH.\] 
For $\supp_{\dS^d}(\phi) \subeq W$ and $\supp_{\dS^d}(\psi) \subeq W'$, we then have
\begin{align*}
 D_{\alpha,\gamma}^{(2)}(\phi \otimes \psi)
&= D_{\alpha,\gamma}(\psi^* * \phi)
  = \gamma(U^{-\infty}(\psi^* * \phi)\alpha)
  = \gamma(U^{\infty}(\psi^*) U^{-\infty}(\phi)\alpha)\\
&  =(U^{-\infty}(\psi)\gamma)(U^{-\infty}(\phi)\alpha)
  = \la U^{-\infty}(\phi)\alpha, U^{-\infty}(\psi)\gamma \ra 
  \in \la \sV, \sV' \ra \subeq \R.
\end{align*}
This means that the left invariant
distribution kernel $D_{\alpha,\gamma}^{(2)}$ on $G \times G$
is real-valued on $W_G \times W_G'$.

Proposition~\ref{prop:b.7} implies that the set of spacelike pairs
in $\dS^d \times \dS^d$ coincides with $G.(W \times W')$.
Hence its inverse image in $G \times G$ coincides with the set
$G.(W_G \times W_G')$, where $G$ acts on $G \times G$ by
$g.(g_1, g_2) = (gg_1, gg_2)$. The left invariance of
$D_{\alpha,\gamma}$ thus implies that this distribution kernel is real
on $G.(W_G \times W_G')$.

Now let $\cO_1, \cO_2 \subeq \dS^d$ be spacelike open subsets,
i.e., $\cO_1 \subeq \cO_2'$, which by Proposition~\ref{prop:b.7}
is equivalent to $\cO_1 \times \cO_2 \subeq G.(W \times W').$ 
For $\supp_{\dS^d}(\phi) \subeq \cO_1$ and 
$\supp_{\dS^d}(\psi) \subeq \cO_2$, we then obtain
\[ \la U^{-\infty}(\phi)\alpha, U^{-\infty}(\psi)\gamma \ra
  =  D_{\alpha,\gamma}(\psi^* * \phi)
  =  D_{\alpha,\gamma}^{(2)}(\phi \otimes \psi) \in \R,\]
and hence that $\sH^{\dS^d}_{\sE_H}(\cO_1) \subeq \sH^{\dS^d}_{\sE_H}(\cO_2)'.$
\end{prf}

\section{The case $d = 1$}
\mlabel{sec:d=1}

In this section we explain how the general theory developed
for de Sitter space in the preceding sections specifies for $d = 1$
to a ``one-parameter'' picture that links for a unitary
one-parameter group $(U_t)_{t \in \R}$ and a conjugation $J$, commuting
with $U$, elements in the space
$\cH^{-\infty}_{\rm KMS}$ of distribution vectors, satisfying the
KMS condition \eqref{eq:kms-spaces-intro} 
with elements of a subspace $\cH^J_{\rm temp}$, specified in
terms of the spectral measure $P$ of $U$.

Let $P$ be the uniquely determined spectral measure
on $\R$ for which
\[ U_t = \int_\R e^{itx}\, dP(x), \quad \mbox{ resp.} \quad
  U_t = e^{itA}, \ t \in \R,  \quad \mbox{ with } \quad
  A = \int_\R p\, dP(p).\] 
For $v \in \cH$, we thus obtain finite measures $P^v := \la v, P(\cdot) v\ra$,
and we define
\begin{equation}
  \label{eq:hjtemp}
 \cH^J_{\rm temp} := \{ v \in \cH^J \: e^{ \pi p}\, dP^v(p)\ 
  \mbox{ tempered}\}
  = \{ v \in \cH^J \: e^{-\pi p}\, dP^v(p)\
  \mbox{ tempered}\}.
\end{equation}
The equality of both spaces on the right follows from the symmetry
of the measures $P^v$, which is a consequence of $Jv = v$.
For the positive selfadjoint operator $\Delta := e^{-2\pi A}$, we have
$J\Delta J = \Delta^{-1}$, so that
$J\cD(\Delta^{1/4}) = \cD(\Delta^{-1/4})$ implies that
\[ \cD(\Delta^{1/4}) \cap \cH^J = \cD(\Delta^{-1/4}) \cap \cH^J 
= \Big\{ v \in \cH^J \: \int_\R e^{\pm \pi p}\, dP^v(p) < \infty\Big\}
  \subeq \cH^J_{\rm temp}.\]

\begin{thm} For $v \in \cH^J \cap \bigcap_{|t| < \pi/2} \cD(e^{tA})$, 
the following are equivalent:
\begin{itemize}
\item[\rm(a)] $v \in \cH^J_{\rm temp}$. 
\item[\rm(b)] The limits $\beta^{\pm}(v) := \lim_{t \to \pm\pi/2} e^{-tA} v$
  exist in $\cH^{-\infty}(U)$.
\item[\rm(c)] There exist $C,N > 0$ such that 
  $\|e^{\pm tA}v\|^2 \leq C \big(\frac{\pi}{2}-|t|\big)^{-N}$
  for $|t| < \pi/2$.   
\end{itemize}
\end{thm}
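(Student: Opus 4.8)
The plan is to treat the spectral measure $P^v := \la v, P(\cdot)v\ra$ as the sole bookkeeping device and to prove the cycle (a) $\Leftrightarrow$ (c), (a) $\Rightarrow$ (b), (b) $\Rightarrow$ (a), with (a) as the hub. The one structural fact used everywhere is that $Jv = v$ forces $P^v$ to be symmetric, so that $\|e^{tA}v\|^2 = \int_\R e^{2tp}\,dP^v(p) = \|e^{-tA}v\|^2$; in particular the $\pm$ in (c) is redundant, the two descriptions of $\cH^J_{\rm temp}$ in \eqref{eq:hjtemp} agree, and it suffices to control $e^{-tA}v$ as $t\uparrow\pi/2$.

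For (a) $\Leftrightarrow$ (c) I would write $\|e^{tA}v\|^2 = \int_{p>0}e^{2tp}\,dP^v(p) + \int_{p\le 0}e^{2tp}\,dP^v(p)$, note the second term is bounded by $P^v(\R)<\infty$, and substitute $t = \tfrac\pi2 - s$ so that the behaviour is governed by $\int_{p>0}e^{-2sp}\bigl(e^{\pi p}\,dP^v(p)\bigr)$ as $s\downarrow 0$. If $e^{\pi p}\,dP^v$ is tempered, say $\int_{p>0}(1+p)^{-N}e^{\pi p}\,dP^v(p)<\infty$, then pulling $\sup_{p>0}e^{-2sp}(1+p)^N = O(s^{-N})$ out of the integral gives (c). Conversely, from (c) with exponent $N$ one gets, for each $R>0$, $\int_{0<p<R}e^{\pi p}\,dP^v(p)\le e^{2sR}\|e^{(\pi/2-s)A}v\|^2 \lesssim e^{2sR}s^{-N}$, and optimising in $s$ (take $s = N/2R$) yields $\int_{0<p<R}e^{\pi p}\,dP^v(p)\lesssim R^N$; a dyadic summation over $R = 2^k$ then shows $e^{\pi p}\,dP^v$ is tempered, i.e.\ $v\in\cH^J_{\rm temp}$.

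For (a) $\Rightarrow$ (b) I would fix $M$ so large that $\int_\R (1+p^2)^{-2M}\bigl(e^{\pi p}+e^{-\pi p}\bigr)\,dP^v(p)<\infty$, possible by (a) and symmetry. For $0<y<\pi/2$ one has $\|(1+A^2)^{-M}e^{-yA}v\|^2 = \int_\R (1+p^2)^{-2M}e^{-2yp}\,dP^v(p)$, with integrand dominated by $(1+p^2)^{-2M}e^{\pi|p|}$ on $\{p\le 0\}$ and by $(1+p^2)^{-2M}$ on $\{p\ge 0\}$, both $dP^v$-integrable; hence dominated convergence gives that $(1+A^2)^{-M}e^{-yA}v$ converges in $\cH$ as $y\to\pi/2$. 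Using the standard description $\cH^{-\infty} = \bigcup_{M}\cH^{-2M}$ with $(1+A^2)^{M}\colon\cH\to\cH^{-2M}$ a topological isomorphism, this shows $e^{-yA}v = (1+A^2)^M\bigl((1+A^2)^{-M}e^{-yA}v\bigr)$ converges in $\cH^{-2M}\subeq\cH^{-\infty}$, a fortiori for the weak-$*$ topology; the same argument with $y\to-\pi/2$ produces $\beta^-(v)$. For (b) $\Rightarrow$ (a): if $\beta^+(v) = \lim_{t\to\pi/2}e^{-tA}v$ exists weak-$*$, then $\{e^{-tA}v : 0<t<\pi/2\}$ is weak-$*$ bounded, so Banach–Steinhaus makes it equicontinuous on the Fréchet space $\cH^\infty = \bigcap_n\cD(A^n)$; thus $|\la e^{-tA}v, w\ra|\le C\|(1+A^2)^M w\|$ for all $w\in\cH^\infty$ and all $t<\pi/2$. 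Applying this to $w = (1+A^2)^{-M}u$ with $u$ running over the dense subspace $\cH^\infty$ and using that $(1+A^2)^{-M}$ is self-adjoint and commutes with $e^{-tA}$ yields $\|(1+A^2)^{-M}e^{-tA}v\|\le C$, i.e.\ $\int_\R(1+p^2)^{-2M}e^{-2tp}\,dP^v(p)\le C^2$ uniformly in $t$; letting $t\uparrow\pi/2$ and applying Fatou's lemma gives $\int_\R(1+p^2)^{-2M}e^{-\pi p}\,dP^v(p)<\infty$, which is the temperedness of $e^{-\pi p}\,dP^v$, hence (a).

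The only genuinely delicate point is the functional-analytic bookkeeping in (b) $\Rightarrow$ (a): one must invoke that for a one-parameter unitary group every continuous seminorm on $\cH^\infty$ is dominated by a Sobolev norm $\|(1+A^2)^M\cdot\|$, and that $\cH^{-\infty}$ is the union of the dual Sobolev spaces $\cH^{-2M}$ with the maps $(1+A^2)^M$ as isomorphisms. Granting that standard description, the Banach–Steinhaus step and the passage to the spectral picture are routine, and everything else reduces to the elementary $\sup$-estimates and dominated/monotone convergence arguments built on the symmetry of $P^v$.
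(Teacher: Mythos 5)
Your proof is correct and takes a genuinely different, more self-contained route than the paper's. The paper derives (a)~$\Leftrightarrow$~(b) by citing external results: \cite[Prop.~4]{FNO23} converts temperedness of $e^{\pi p}\,dP^v$ into a growth estimate for $\int_\R e^{(\pi-t)p}\,dP^v(p)$, and \cite[Lemma~10.7]{NO15} identifies that estimate with the statement that $e^{\pi p/2}$ is a distribution vector for the multiplication representation on $L^2(\R,P^v)$, which is then matched to the cyclic subrepresentation of $(U,\cH)$ generated by $v$; the remaining implications (b)~$\Rightarrow$~(c) and (c)~$\Rightarrow$~(a) are obtained by re-running the same references. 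You instead prove everything directly from the spectral picture: (a)~$\Leftrightarrow$~(c) via the split $\int_{p>0}+\int_{p\le 0}$ and the elementary optimisation $\sup_{p>0}e^{-2sp}(1+p)^N = O(s^{-N})$ for the forward direction, with a converse by choosing $s = N/(2R)$ and summing dyadically; (a)~$\Rightarrow$~(b) by dominated convergence once a Sobolev weight $(1+A^2)^{-M}$ is inserted, using the scale $\cH^{-\infty} = \bigcup_M (1+A^2)^M\cH$; and (b)~$\Rightarrow$~(a) by Banach--Steinhaus on the Fr\'echet space $\cH^\infty$ followed by Fatou. The paper's route is shorter and stays consistent with the machinery of \cite{FNO23, NO15}; yours is more transparent, makes every use of the symmetry of $P^v$ explicit, exhibits where the exponent $N$ and Sobolev index $M$ come from, and would make the $d=1$ section self-contained. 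The only point to state carefully (and you do flag it) is the standard fact that for a unitary one-parameter group every continuous seminorm on $\cH^\infty = \bigcap_n \cD(A^n)$ is dominated by some $\|(1+A^2)^M\cdot\|$, so that $\cH^{-\infty}$ really is the union of the dual Sobolev spaces.
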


\begin{prf} (a) $\Leftrightarrow$ (b): 
From \cite[Prop.~4]{FNO23}, we recall that the temperedness of the
measure $\nu_v$, given by  $d\nu_v(p) := e^{\pi p}\, dP^v(p)$
is equivalent to the existence of $C, N > 0$ with 
\[ \int_\R e^{(\pi - t)p}\, dP^v(p) \leq C t^{-N}
  \quad \mbox{ for }  \quad 0 \leq t < \pi.\] 
Further, \cite[Lemma~10.7]{NO15} shows that this condition is equivalent
to the function $e^{\pi p/2}$ to define a distribution vector
for the canonical multiplication representation on $L^2(\R, P^v)$.
This representation is equivalent to the subrepresentation of $(U,\cH)$, 
generated by~$v$, where the constant function $1$ corresponds to~$v$.

\nin (b) $\Rarrow$ (c): If $\lim_{t \to \pi/2} e^{tA} v$ exist in $\cH^{-\infty}(U)$, 
then   \cite[Lemma~10.7]{NO15}, applied to the cyclic subrepresentation 
generated by $v$, implies that the measure $\nu_v$ 
is tempered. Then the argument from above implies the existence
of $C, N > 0$ with 
\begin{equation}
  \label{eq:exp-esti}
 \|e^{tA}v\|^2 = \int_\R e^{2tx}\, dP^v(x)  
  \leq C \Big(\frac{\pi}{2}-t\Big)^{-N}
  \quad \mbox{ for }  \quad
  |t| < \pi/2.
\end{equation}
If $\lim_{t \to -\pi/2} e^{tA} v$ also exists in $\cH^{-\infty}(U)$,
then the same argument  applies again and we obtain (c). 

\nin (c) $\Rarrow$ (a): With the leftmost equality in
\eqref{eq:exp-esti}, we see that (c) implies that
the measures $d\nu_v(x) := e^{\pm \pi x}\, dP^v(x)$ are tempered
(\cite[Prop.~4]{FNO23}). 
  Here we use that the measure $P^v$ is symmetric because $Jv = v$.
\end{prf}

\begin{prop} The map $\beta^+$ defines a bijection 
$\beta^+ \: \cH^J_{\rm temp} \to  \cH^{-\infty}_{\rm KMS}.$ 
\end{prop}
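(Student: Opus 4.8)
The plan is to prove injectivity of $\beta^+$ by a soft ``inversion by analytic continuation'' argument, to check that the image lands in $\cH^{-\infty}_{\rm KMS}$ using the theorem just proved, and to obtain surjectivity by passing to a multiplication model for the cyclic representation generated by a given KMS distribution vector; the last step is where the work lies. Injectivity is immediate: every $v$ in the domain $\cD:=\cH^J\cap\bigcap_{|t|<\pi/2}\cD(e^{tA})$ (which contains $\cH^J_{\rm temp}$) is recovered from $\beta^+(v)$, because the orbit map $z\mapsto e^{izA}\beta^+(v)$ continues holomorphically into the strip $\cS_\pi$ and its value at the interior point $z=i\pi/2$ is $e^{-(\pi/2)A}\beta^+(v)=v$. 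Thus $\beta^+$ is injective on $\cD$, a fortiori on $\cH^J_{\rm temp}$.

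For the image, let $v\in\cH^J_{\rm temp}$. First, $v\in\cD$ automatically: since $Jv=v$ the measure $P^v$ is symmetric, so temperedness of $e^{\pi p}\,dP^v$ gives, for $|t|<\pi/2$ and a suitable $N$, a bound $e^{2tp}\le C\cosh(\pi p)(1+p^2)^{-N}$ on $\R$, hence $\int_\R e^{2tp}\,dP^v(p)<\infty$, i.e.\ $v\in\cD(e^{tA})$. By the preceding theorem the limits $\beta^\pm(v)$ exist in $\cH^{-\infty}(U)$. On the open strip the orbit map of $\beta^+(v)$ equals $z\mapsto e^{(iz+\pi/2)A}v$, which is $\cH$-valued and holomorphic there; it extends weak-$*$-continuously to $\overline{\cS_\pi}$ with boundary values $\beta^+(v)$ on $\R$ and $\beta^-(v)$ on $\R+\pi i$, and $JAJ=-A$ together with $Jv=v$ yields $\beta^-(v)=J\beta^+(v)$, which is exactly the defining KMS condition. (This is the one-parameter instance of the discussion in \cite[\S2.3]{FNO23} and \cite{BN23}, which I would invoke or else spell out via the routine continuity estimates.)

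The main step is surjectivity. Given $\eta\in\cH^{-\infty}_{\rm KMS}$, I would pass to the closed $U$-invariant subspace $\cK$ generated by $\{U^{-\infty}(\phi)\eta:\phi\in C^\infty_c(\R)\}$; it is also $J$-invariant, since $J$ commutes with $U$ and $J\eta=U^\eta(\pi i)$ again lies in $\cK^{-\infty}$. As $\eta$ is a distribution vector, the sesquilinear form $(\phi,\psi)\mapsto\langle U^{-\infty}(\phi)\eta,U^{-\infty}(\psi)\eta\rangle$ is integration against a positive tempered Radon measure $\mu$ on $\R$, and the resulting unitary isomorphism $\cK\cong L^2(\R,\mu)$ takes $U_t$ to multiplication by $e^{itp}$ and $\eta$ to the constant function $\mathbf 1$. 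In this model $J$ is antiunitary with $JM_{e^{itp}}J=M_{e^{-itp}}$, and the KMS value $J\eta=U^\eta(\pi i)$, which corresponds to $e^{-\pi p}$, determines $J$ on the dense span of the functions $e^{itp}$ as $(Jf)(p)=e^{-\pi p}\,\overline{f(-p)}$; imposing $\|Jf\|=\|f\|$ then forces the reflection relation $d\mu(-p)=e^{-2\pi p}\,d\mu(p)$.

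Finally I would set $v\leftrightarrow e^{-\pi p/2}$, equivalently $v=e^{-(\pi/2)A}\eta=U^\eta(i\pi/2)$ — a priori only a distribution vector. It is in fact an honest vector of $\cH$: $\|v\|^2=\int_\R e^{-\pi p}\,d\mu(p)$, finite because $e^{-\pi p}(1+p^2)^N$ is bounded on $[0,\infty)$ and $\mu$ is tempered, while on $(-\infty,0)$ the substitution $q=-p$ together with $d\mu(-q)=e^{-2\pi q}\,d\mu(q)$ turns $\int_{-\infty}^0 e^{-\pi p}\,d\mu$ into $\int_0^\infty e^{-\pi q}\,d\mu(q)$, already shown finite. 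The formula for $J$ gives $Jv=v$, so $v\in\cH^J$; and $dP^v(p)=|v(p)|^2\,d\mu(p)=e^{-\pi p}\,d\mu(p)$, so $e^{\pi p}\,dP^v=\mu$ is tempered, i.e.\ $v\in\cH^J_{\rm temp}$. Lastly $\beta^+(v)=\lim_{t\to-\pi/2}e^{-tA}v=\lim_{t\to-\pi/2}U^\eta\big(i(t+\pi/2)\big)=U^\eta(0)=\eta$ by weak-$*$ continuity of $U^\eta$ on $\overline{\cS_\pi}$. This proves surjectivity. The obstacle I expect to be the crux is constructing the multiplication model and extracting the two identities for $J$ and $\mu$; once these are in hand, promoting $v$ from a distribution vector to an element of $\cH$ falls out of temperedness of $\mu$ alone, and the remaining verifications are bookkeeping.
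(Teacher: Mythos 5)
Your proof is correct and for surjectivity follows essentially the same route as the paper (pass to the cyclic subspace $\cK$, realize it as $L^2(\R,\mu)$ with $\eta\leftrightarrow\mathbf 1$, take $v\leftrightarrow e^{-\pi p/2}$, and compute $\beta^+(v)=\eta$), but the other two parts proceed differently, and the comparison is worth recording. For injectivity you recover $v$ as the interior value $\hat U^{\beta^+(v)}(i\pi/2)$ of the analytically continued orbit map, whereas the paper argues that $\beta^+(v)=0$ forces $\Delta^{-1/4}U(\phi)v=\beta^+(U(\phi)v)=0$ for every real test function $\phi$, hence $U(\phi)v=0$, and an approximate identity gives $v=0$; your argument is shorter and needs no smearing, at the price of invoking uniqueness of weak-$*$-holomorphic continuation into $\cS_\pi$. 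To show $\beta^+(v)\in\cH^{-\infty}_{\rm KMS}$ you verify the strip extension $\hat U^{\beta^+(v)}(z)=e^{(iz+\pi/2)A}v$ and the identity $\hat U^{\beta^+(v)}(\pi i)=J\beta^+(v)$ directly from $JAJ=-A$ and $Jv=v$; the paper instead applies the characterization from \cite[Prop.~9]{FNO23} of $\cH^{-\infty}_{\rm KMS}$ as those $\alpha$ with $U^{-\infty}(\phi)\alpha\in\sV$ for all real $\phi$, and checks $U^{-\infty}(\phi)\beta^+(v)=\beta^+(U(\phi)v)\in\sV$ since $U(\phi)v\in\cD(\Delta^{1/4})\cap\cH^J$. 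Finally, in surjectivity you derive the reflection identity $d\mu(-p)=e^{-2\pi p}\,d\mu(p)$ from antiunitarity of $J$ and use it to bound $\|v\|^2$, while the paper avoids it by observing that every measure $e^{-yp}\,d\nu$, $0\le y\le\pi$, is tempered (each $\hat U^\gamma(iy)$ being a distribution vector) and hence finite for $0<y<\pi$, giving $v\in L^2$ directly; your extra care about $J$-invariance of $\cK$ is a point the paper leaves implicit. You also implicitly use the sign convention $\beta^+(v)=\lim_{t\to-\pi/2}e^{-tA}v=\Delta^{-1/4}v$, consistent with Subsection~\ref{subsec:2.6} and with the paper's surjectivity computation; the theorem immediately preceding this proposition appears to carry a sign misprint in its statement, and your reading is the consistent one.
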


\begin{prf} (a) First we show that  $\beta^+(v) \in \cH^{-\infty}_{\rm KMS}$.
To this end, note that, for a real-valued test function
  $\phi \in C^\infty_c(\R,\R)$, we have $J U(\phi) = U(\phi)J$.
  For $v \in \cH^J_{\rm temp}$ we therefore have
  $w := U(\phi)v \in \cH^J$. Moreover,
  \[ dP^w(x) = |\hat\phi(x)|^2 dP^v(x) \quad \mbox{ with } \quad
    \hat\phi(x) = \int_\R e^{itx}\phi(t)\, dt,\]
  where $\hat\phi$ is a Schwartz function,
  which even implies that the measure
  \[ e^{\pi x}\, dP^w(x) = e^{\pi x} |\hat \phi(x)|^2\, dP^v(x) \]
  is finite, and thus $w \in \cD(\Delta^{1/4}) \cap \cH^J$. 
This implies that 
$ U^{-\infty}(\phi) \beta^+(v) = \beta^+(U(\phi)v)    \in \sV.$ 

 From \cite[Prop.~9]{FNO23}, we derive for $G = \R$ that
 \[ \cH^{-\infty}_{\rm KMS} = \{ \alpha \in \cH^{-\infty} \:
   (\forall \phi \in C^\infty_c(\R,\R))\,   U^{-\infty}(\phi) \alpha \in \sV \}.\]
 Hence the above argument implies that
 $\beta^+(v) \in \cH^{-\infty}_{\rm KMS}$.
 
 \nin (b) To see that $\beta^+$ is injective, we assume that $\beta^+(v) = 0$.
 Then the above argument implies that 
 $U(\phi)v \in \cH^J \cap \cD(\Delta^{1/4})$ vanishes for every
 $\phi \in C^\infty_c(\R,\R)$. Using an approximate identity in this space,
 $v = 0$ follows. 
 
 \nin (c) To see that $\beta^+$ is surjective, let
 $\gamma \in \cH^{-\infty}_{\rm KMS}$. Replacing $\cH$ by the
 cyclic subrepresentation generated by $\gamma$, resp., the subspace
 $U^{-\infty}(C^\infty_c(\R,\C))\gamma \subeq \cH$, we may w.l.o.g.\
 assume that $\cH = L^2(\R,\nu)$ for a positive Borel measure,
 where the constant function $1$ corresponds to $\gamma$.
 Hence the measure $\nu$ on $\R$ is tempered (\cite[Lemma~10.7]{NO15}).
 Then, for $z = x + iy \in \cS_\pi$, the analytic continuation of the orbit
 map of $\gamma = 1$ takes the form 
 \[ U^\gamma \: \oline{\cS_\pi} \to L^2(\R,\nu)^{-\infty},
   \quad U^\gamma(z)(p) = e^{izp} = e^{ixp} e^{-yp}.\]
 Therefore all measures $e^{-yp}\, d\nu(p), 0 \leq y \leq \pi$, are tempered.
 It follows in particular  that they are actually finite for $0 < y < p$. 
 Hence $v(p) := e^{- \pi p/2}$ is an $L^2$-function, and
 $v = U^\gamma(\pi i/2)$ implies that $Jv = v$. As a consequence, the measure
$dP^v(p) = e^{- \pi p}\, d\nu(p)$ 
 is finite and $e^{\pi p}\, dP^v(p) =\, d\nu(p)$ is tempered, so that
 $v \in \cH^J_{\rm temp}$. Therefore $\beta^+(v) = 1$ shows that
 $\beta^+$ is surjective.
\end{prf}

For $v, w\in \cH^J_{\rm temp}$, we consider the complex-valued measure
\[ P^{v,w}(E) := \la v, P(E) w \ra, \quad E \subeq \R.\]
Then
\begin{equation}
  \label{eq:pwv}
 \oline{P^{v,w}(E)} = \oline{\la v, P(E) w \ra}
 = \la w, P(E) v \ra = P^{w,v}(E)
\end{equation}
and  the relation 
  $J dP(E) J = dP(-E)$ implies that
  \begin{equation}
    \label{eq:pvw-symm}
 P^{v,w}(E) = \la Jv, P(E) Jw \ra 
    = \la Jv, J P(-E) w \ra 
    = \la P(-E) w, v \ra = P^{w,v}(-E) = \oline{P^{v,w}(-E)}.
  \end{equation}
In particular, the measures $P^{v,v}$ are symmetric and positive.

We obtain on the strip $\cS_{\pm \pi}$ the
holomorphic function
\[ \phi^{v,w}(z) :=\hat{P^{v,w}}(z) = \int_\R e^{izp}\, dP^{v,w}(p),\]
and the temperedness of the measures $e^{\pm \pi p}\, dP^{v,w}(p)$
implies that this function has boundary values that are
tempered distributions on $\pm\pi i + \R$. For $t \in \R$, we have
$\phi^{v,w}(t) =\la v, U_t w \ra.$ 
Hence
\[ \phi^{w,v}(-t)  = \oline{\phi^{v,w}(t)}
  = \la U_t w, v \ra
  =  \la U_t J w, J v \ra 
  =  \la J U_t w, J v \ra 
  =  \la v, U_t w \ra = \phi^{v,w}(t),\]
and therefore
\begin{equation}
  \label{eq:phivwoline}
 \oline{\phi^{v,w}(z)} = \phi^{w,v}(-\oline z) = \phi^{v,w}(\oline z)
 \quad \mbox{ for }\quad   z \in \cS_{\pm \pi}.
\end{equation}

For $\alpha := \beta^+(v)$ and $\gamma:= \beta^+(w)$
the distribution
\[ D_{\alpha,\gamma}(\xi) := \gamma(U^{-\infty}(\xi)\alpha) \]
can be represented by the boundary values of a holomorphic function
\begin{align*}
 D_{\alpha,\gamma}(x)
&  = \lim_{t \to \pi/2}  \la U_x e^{tA}v,  e^{tA}w \ra
                        = \lim_{t \to \pi/2}  \int_\R e^{(2t-ix)p}\, dP^{v,w}(p)
  = \phi^{v,w}(-\pi i -x)  = \phi^{w,v}(\pi i +x).                                   
\end{align*}

\begin{rem}
For $G = \SO_{1,1}(\R)_e \cong \R_+$ we have 
$\dS^1_+ = G.\be_1 \subeq \R^{1,1}$.
Identifying $\dS^1_+$ with $\R \cong \R_+ \subeq \C^\times$,
the other connected component $\dS^1_-$ of $\dS^1$ corresponds to
$\R + \pi i \subeq \C$, resp., to $\R_-\subeq \C^\times$, 
and $\dS^1_\C \cong \C^\times$ with $\Xi_\pm = \C_\pm$.

As a distribution on $\R$, the distribution 
\[ D_{v,w}(x)
  := \int_\R e^{ip(x - \pi i)}\, dP^{v,w}(p)
  = \int_\R e^{ipx} e^{\pi p}\, dP^{v,w}(p)  = \hat{\nu_{v,w}}(x) \]
is a boundary value of a holomorphic function on
$\C \setminus [0,\infty) \cong \cS_{2\pi}$
and the Fourier transform of the tempered measure 
\[ d\nu_{v,w}(p) = e^{\pi p}\, dP^{v,w}(p), \]
which extends by 
\[ D_{v,w}(x + iy)  = \int_\R e^{i(x+iy)p}\, d\nu_{v,w}(p)
  \quad \mbox{ for } \quad 0 < y < 2\pi\]
to the strip $\cS_{2\pi}$. 
\end{rem}

To identify the jump singularity of the corresponding
holomorphic function on the slit plane, we
consider the upper boundary values of the function $\hat{\nu_{v,w}}$ on
$\cS_{2\pi}$:
\[ D_{v,w}^-(x) = \lim_{y \to 0+} D_{v,w}(x + (2\pi -y)i)
  = \lim_{y \to 0+} \int_\R e^{ixp} e^{-(2\pi - y)p}\, d\nu_{v,w}(p)
  =  \int_\R e^{ixp} e^{-2\pi p}\, d\nu_{v,w}(p).\]  
From $d\nu_{v,w}(p) = e^{\pi p} \, dP^{v,w}(p)$ we derive 
\[ d\nu_{w,v}(-p) = e^{-\pi p} \, dP^{w,v}(-p) = e^{-\pi p} \, dP^{v,w}(p)
  = e^{-2\pi p}\, d\nu_{v,w}(p).\] 
Hence
\[ D_{v,w}^-(x)
  =  \int_\R e^{ixp} \, d\nu_{w,v}(-p) 
  =  \int_\R e^{-ixp} \, d\nu_{w,v}(p) 
  =  \int_\R e^{-ixp} \, d\oline{\nu_{v,w}(p)} 
  = \oline{\hat{\nu_{v,w}}(x)}. \]
Therefore the jump on $\R_+ = e^{\R} = e^{\R + 2\pi i} \cong \dS^1_+$,
is given by
\begin{equation}
  \label{eq:jump-d-1}
 D_{v,w}^+ - D_{v,w}^-
 = 2 i \Im \hat{\nu_{v,w}},
\end{equation}
and this is a purely imaginary distribution on $\R$.

To make this formula more concrete, we write
\begin{align*}
2i  \Im \hat{\nu_{v,w}}(x) 
  &\ {\buildrel \eqref{eq:pwv} \over = }\
    \int_\R e^{\pi p} e^{ipx} \, dP^{v,w}(p)
    - \int_\R e^{\pi p} e^{-ipx} \, dP^{v,w}(-p)
= \int_\R e^{\pi p} e^{ipx} - e^{-\pi p} e^{ipx} \, dP^{v,w}(p)\\
&= \int_\R (e^{\pi p} - e^{-\pi p}) e^{ipx} \, dP^{v,w}(p)
=  2\int_\R  \sinh(\pi p) e^{ipx} \, dP^{v,w}(p).    
\end{align*}
For $v = w$, we get in particular 
\begin{align} \label{eq:jump2} 
 \Im \hat{\nu_{v,v}}(x) 
&=  -i \int_\R  \sinh(\pi p) e^{ipx} \, dP^{v,v}(p)     
=   \int_\R  \sinh(\pi p) \sin(px) \, dP^{v,v}(p),    
\end{align}
where we have used that this  expression is real to eliminate
$\cos(ps)$. Hence $\Im \hat{\nu_{v,v}}$ is odd and a superposition of
functions $\sin(px)$, $p \geq 0$. These are solutions of the initial
value problem 
\begin{equation}
  \label{eq:ivp}
  f'' + p^2 f = 0, \quad f(0) = 0, \quad f'(0) = p.
\end{equation}

\begin{rem} \mlabel{rem:6.5}
  Let $(U,\cH)$ be an antiunitary representation
of $G_{\tau_h} = \SO_{1,1}(\R)$, where $\tau_h = - \1$. 
We consider the standard subspace $\sV$, specified by
$J := U(-\1)$ and $\Delta := e^{2\pi i \partial U(h)}$.

The irreducible non-trivial antiunitary representations are realized on 
$\cH = \C^2$ for $m > 0$ by 
\[ U(\exp th) = e^{t \partial U(h)}
  =\pmat{
    \cos(m t) & -\sin(m t) \\ 
    \sin(m t) & \cos(m t)} \quad \mbox{ and } \quad
  J z = \oline z, z \in \C^2.\] 
Clearly, this function extends to all of $\C$, hence in particular
to the strip $\cS_{\pm \pi/2}$ with
\[e^{\mp\frac{\pi i}{2}\partial U(h)} 
  =\pmat{
    \cosh\big(m \frac{\pi}{2}\big) & i \sinh\big(m \frac{\pi}{2}\big) \\ 
-i \sinh\big(m \frac{\pi}{2}\big)& \cosh\big(m \frac{\pi}{2}\big)}.
\]
For a  unit vector  $v\in \cH^J = \R^2$ we obtain the matrix coefficient
\[ U^{v,v}(\exp z h) = \cos(mz)
  \quad \mbox{ for } \quad z \in \C.\] 
We thus obtain as boundary values of the analytic continuation to the
strip $\cS_{\pm \pi}$: 
\begin{equation}
  \label{eq:e.1b}
 U^{v,v}(\exp (x \pm \pi i)h)
  =  \cos(m(x \pm \pi i)) 
  =  \big(\cos(mx) \cosh(m\pi) \mp i \sin(mx) \sinh(m\pi)\big),
\end{equation}
which leads to the jump 
\[ U^{v,v}(\exp (x - \pi i)h) - U^{v,v}(\exp (x + \pi i)h) 
  = - 2 i  \sin(mx) \sinh(m\pi).\] 
\end{rem}

\section{One-dimensional $K$-types}
\mlabel{sec:8}

Our situation simplifies if there exists a
$K$-eigenvector $v$ in $\cH$. Then either $v$ is $K$-fixed
(a spherical vector),
or $K\cong \SO_d(\R)$ has a non-trivial character.
The latter can only happen if $K\not=\{e\}$ is 
abelian, i.e., if $d =2$.

Let $v_\mu \in \cH$ be a normalized $J$-fixed $K$-eigenvector with
$U(k)v_\mu = \chi_\mu(k) v_\mu$, where 
\[\chi_\mu(\exp t k_0)= e^{i \mu t},\quad \mu \in \R, t \in \R \]
with $k_0$ as in \eqref{eq:handk2}.
We consider the scalar-valued kernel
\[ Q(g_1, g_2) := \la U(g_2) v_\mu, U(g_1) v_\mu \ra
  = \phi_\mu(g_2^{-1} g_1) \quad \mbox{ for } \quad
  \phi_\mu(g) = \la v_\mu, U(g) v_\mu\ra.\]
As $\phi_\mu$ satisfies
\[ \phi_\mu(k_1 g k_2) = \chi_\mu(k_1) \phi_\mu(g) \chi_\mu(k_2) \quad \mbox{ for } \quad
  g \in G, k_1, k_2 \in K,\]
it is a {\it $\chi_\mu$-spherical function}.
For $\cE^J = \R v_\mu$, it corresponds to the function
$\phi_\cE$ introduced above in the general context. 
As the basis $k_0, h,h_1 \in \so_{1,2}(\R)$ satisfies 
$[k_0,h] = h_1$ and $[k_0,h_1] = -h$
(cf.\ \eqref{eq:handk2}), we have
$h_1 = e^{\frac{\pi}{2}\ad k_0} h.$ Therefore
\[ \phi_\mu(\exp t h_1) = \phi_\mu(\exp t h) \quad \mbox{ for } \quad t \in \R.\]
According to \cite[\S 5.1.3]{FNO23} and \cite[p.~384]{Sh94}, there exists for
$\mu \not=0$ a $\lambda \in \C$ for which
$\phi_\mu$ takes the form
\[\varphi_{\mu,\lambda} (\exp th)
  = \Big(\frac{1 + \cosh(t)}{2}\Big)^{-\mu/2} \cdot
  {}_2F_1 \Big( \frac{1-\mu}{2} +\lambda, \frac{1-\mu}{2} -\lambda;
  1; \frac{1- \cosh(t)}{2}\Big).\]
Note that $\phi_{\mu,\lambda} = \phi_{\mu,-\lambda},$ 
so that, by \cite[\S 5.1.2]{FNO23}, the formula also holds for $\mu = 0$.
This function of $t$ extends analytically to the strip
$\cS_{\pm \pi}$ which is mapped by
$\cosh$ to $\C \setminus (-\infty, -1]$.
As
\[ \beta(e^{zh}.i\be_0, \oline{i\be_0})  = \cosh(z), \]
and the kernel $Q(z,\oline w)$ on $\dS^2_\C$ is $G$-invariant,
we obtain a corresponding invariant kernel on $\Xi_+$ by 
\begin{equation}
  \label{eq:Q1}
 Q'(z,w) := \Big(\frac{1 + \beta(z, \oline w)}{2}\Big)^{-\mu/2} \cdot
  {}_2F_1 \Big( \frac{1-\mu}{2} +\lambda, \frac{1-\mu}{2}  -\lambda;
  1; \frac{1- \beta(z,\oline w)}{2}\Big).
\end{equation}
From its explicit form, it follows immediately, that this kernel  extends
to a sesquiholomorphic kernel on $(\dS^2_\C \times \dS^2_\C)^{\rm cut}$,
where $\beta(z,\oline w)$ avoids the set $(-\infty, -1]$
(cf.\ Remark~\ref{rem:cosh}).

We now turn to the special case where
$\mu = 0$, i.e., $v_\mu = v_0$ is fixed by $K$.
These situations also occur for general $d \geq 2$
and have been studied in detail in \cite{BM96} and \cite{NO20}.
Then the orbit map $U^{v_0}$ extends to a holomorphic map 
$\Xi_+ \to \cH$. This leads on $\Xi_+ \times \Xi_+$
to the sesquiholomorphic kernel
\[ Q(z, w) := \la U^{v_0}(w), U^{v_0}(z)  \ra\]
that is uniquely determined by the corresponding spherical
function $\phi_\lambda := \phi_{0,\lambda}$ on $G$.
Recall from \cite[Lemma~3.5]{NO20} that 
$\beta(\Xi_+ \times \Xi_-) = \C \setminus   (-\infty, -1].$ 
Accordingly, we have on $\Xi_+ \times \Xi_+$ for $m \geq 0$
and $d \geq 2$ the sesquiholomorphic $G$-invariant kernels 
\[ Q_\lambda(z,w) =
  \Psi_m(z,w)
  = \hgf \Big(\frac{d-1}{2}+\lambda,\frac{d-1}{2}-\lambda;\frac{d}{2};
  \frac{1-\beta(z,\oline w)}{2}\Big),\]
where 
 \[\lambda = \lambda_m := 
\begin{cases}
\sqrt{\left(\frac{d-1}{2}\right)^2-m^2} & \text{ for}\  0\leq m\le (d-1)/2  \\
i \sqrt{m^2- \left(\frac{d-1}{2}\right)^2} 
& \text{ for }  m\ge (d-1)/2
\end{cases}\]
and the kernel is normalized by
$\Psi_m(i\be_0, i\be_0) = 1$.
For $d = 2$, we get in particular the special case
of \eqref{eq:Q1} for $\mu = 0$. 

\begin{rem} (a) The $K$-biinvariant function
$\phi_{\lambda}(g) = Q_\lambda(g.i\be_0, i \be_0)$ 
  on $G = \SO_{1,d}(\R)_e$ is positive definite if and only if
  $\lambda \in i \R$ (spherical principal series) or
  $\lambda \in [-(d-1)/2,(d-1)/2]$ (spherical complementary series)
  (see Kostant's Theorem in \cite[Thm.~11]{FNO23}).

\nin (b) On $a_t = \exp(th)$ it is given by 
  \begin{equation}
    \label{eq:philambda-def}
 \phi_\lambda(a_t)
    =Q_\lambda(a_t.i\be_0, i \be_0)
  = \hgf \Big(\frac{d-1}{2}+\lambda,\frac{d-1}{2}-\lambda;\frac{d}{2};
  \frac{1-\cosh(t)}{2}\Big).
  \end{equation}
\end{rem}

\begin{exs} (a) For $d = 1$ and $\lambda = im\in i \R$ we have
\[  \hgf(\lambda, - \lambda; \shalf; \shalf(1 - \cos t)) = \cosh(i\lambda t)
  = \cos(\lambda t)\]
(cf.\ \cite[Rem.~4.14]{NO20}),
so that 
\[ \phi_\lambda(\exp th)
  =  \hgf\big(im, - im; \shalf; \shalf(1 - \cosh(t))\big) = \cos(m t).\]
Analytic continuation thus yields 
\[ \phi_\lambda(\exp (t \pm \pi i)th)
  = \cos(m(t \pm \pi i))
  = \cos(mt) \cosh(\pi m) \mp \sin(mt) i \sinh(m\pi),\]
so that the jump distribution is the odd function
\[ \phi_\lambda(\exp (t -\pi i)th) - \phi_\lambda(\exp (t +\pi i)th)
  = 2i \sin(mt) \sinh(m\pi),\]
which has no singularity (cf.~Remark~\ref{rem:6.5}). 

\nin (b) For $d = 2$ the function 
\[  \phi_\lambda(\exp th)
  = \hgf\Big(\frac{1}{2} + \lambda,\frac{1}{2} -\lambda; 1;
  \frac{1-\cosh t}{2}\Big) \]
describes the spherical function on $\bH^2$, 
and the corresponding kernel on $\Xi_+$ is 
\[  Q_\lambda(z,w) = \hgf\Big(\frac{1}{2} + \lambda,\frac{1}{2} -\lambda; 1;
  \frac{1-\beta(z,\oline w)}{2}\Big), \]
so that
\[  Q(z,\be_1) = \hgf\Big(\frac{1}{2} + \lambda,\frac{1}{2} -\lambda; 1;
  \frac{1+ z_1}{2}\Big), \]
which is singular for $z_1 \geq 1$.
Note that, for $m = 0$, resp., $\lambda = \shalf$,
  this is the constant function~$1$.

\nin (c) For $d = 3$ we use  the relation
\[ \hgf\Big(k+\lambda, k - \lambda; k + \frac{1}{2}; \frac{1-\cos t}{2}\Big)
  = \frac{\big(\shalf\big)_k}{(\lambda)_k (-\lambda)_k}
  \Big(\frac{2}{\sin t} \frac{d}{dt}\Big)^k
  \hgf\Big(\lambda, - \lambda;  \frac{1}{2}; \frac{1-\cos t}{2}\Big)  \]
(cf.\ \cite[Rem.~4.14]{NO20}) to obtain 
\begin{align*}
 \hgf\Big(1+\lambda, 1 - \lambda; 1 + \frac{1}{2}; \frac{1-\cos t}{2}\Big)
& = \frac{1}{-\lambda^2}  \Big(\frac{1}{\sin t} \frac{d}{dt}\Big)
  \hgf\Big(\lambda, - \lambda;  \frac{1}{2}; \frac{1-\cos t}{2}\Big) \\ 
& = \frac{1}{-\lambda^2}  \Big(\frac{1}{\sin t} \frac{d}{dt}\Big) \cos(\lambda t
    = \frac{1}{\lambda}  \frac{\sin(\lambda t)}{\sin t}    
\end{align*}
(see also \cite[15.4]{DLMF}). 
For the spherical function, this means that
\begin{align*}
  \phi_\lambda(\exp th)
=     \hgf\Big(1+\lambda, 1 - \lambda; \frac{3}{2}; \frac{1-\cosh t}{2}\Big)
  = \frac{1}{\lambda}  \frac{\sinh(\lambda t)}{\sinh t}.    
\end{align*}
For $m \geq 1$ this turns into 
\begin{align*}
  \phi_\lambda(\exp th)
  = \frac{1}{(m^2-1)^{1/2}}  \frac{\sin((m^2-1)^{1/2} t)}{\sinh t}
  \quad \mbox{ with } \quad
  \phi_0(\exp th)
  = \frac{t}{\sinh t},   \quad
  \end{align*}
and for $0 < m < 1$, we get
\begin{align*}
  \phi_\lambda(\exp th)
  = \frac{1}{(1-m^2)^{1/2}}  \frac{\sinh((1-m^2)^{1/2} t)}{\sinh t}.    
\end{align*}
\end{exs}

\section{Spherical functions and kernels} 
\mlabel{sec:sphfunc}

In the case of the trivial representation of $K\cong \SO_d(\R)$ 
the kernel $Q_\lambda$ is given by 
\begin{equation}
  \label{eq:qlambda}
  Q_\lambda (z,w) = {}_2F_1\Big(\frac{d-1}{2}+ \lambda , \frac{d-1}{2}-\lambda ; \frac{d}{2};
  \frac{1-\beta (z,\oline w)}{2}\Big),
\end{equation}
see \cite[Thm. 4.12]{NO20} or \cite[Thm. 7.4.11]{NO18}
and the preceding section.

To study the singularities of $Q_\lambda$,
we need information on the set of all pairs $(x,y)$ in $\dS^d$
with $\beta (x,y) \leq -1$, which is the complement of the cut domain
(cf.\ Lemma~\ref{lem:dsd-spacelike}). 
For that we recall for $x\in \dS^d$ the future and past sets for $x$
from \eqref{eq:I(x)}.

\begin{lemma}  The crown domains
    $\Xi_\pm $ from \eqref{eq:crown} satisfy:
\begin{itemize}
\item[\rm (1)] $\{\beta (z,\oline w)\: (z,w)\in \Xi_+\times \Xi_+ \}
  =\beta([\Xi_+,\Xi_-]) = \C\setminus (-\infty ,-1]$.
\item[\rm (2)]
  $ \beta (\dS^d \times \Xi_-)\cap \R = \beta (\dS^d \times \Xi_+)\cap \R = (-1,1)$.
\item[\rm (3)]  For $x,y \in \dS^d$, the inequality
    $\beta(x,y)\le -1 $ is equivalent to 
$y\in \overline{I(x)}$.
\item[\rm(4)]  $\dS^d = \partial \Xi_+ \cap \partial\Xi_-$.
\end{itemize}
\end{lemma}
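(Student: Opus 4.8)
The plan is to prove the four assertions in turn; each reduces to an elementary fact about the Lorentzian form $\beta$ on $V=\R^{1,d}$ together with a short point-set argument, the only external ingredient being the identity $\beta(\Xi_+\times\Xi_-)=\C\setminus(-\infty,-1]$ from \cite[Lemma~3.5]{NO20} (already used in the proof of Proposition~\ref{prop:cplx-desit-cut}). For (1): since $\dS^d_\C$ is cut out by a real equation it is stable under complex conjugation, and $\cT_\pm=V\pm iV_+$ are conjugate, so $\Xi_-$ is the complex conjugate of $\Xi_+$; hence $\{\beta(z,\oline w)\:z,w\in\Xi_+\}=\beta(\Xi_+\times\Xi_-)=\C\setminus(-\infty,-1]$, which is (1). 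For (2), with no reduction: take $x\in\dS^d$ and $w=u-iv\in\Xi_-$ with $u\in V$, $v\in V_+$; comparing real and imaginary parts in $\beta(w,w)=-1$ gives $\beta(u,u)-\beta(v,v)=-1$ and $\beta(u,v)=0$, and $\beta(x,w)=\beta(x,u)-i\beta(x,v)$, so $\beta(x,w)\in\R$ forces $\beta(x,v)=0$ and $\beta(x,w)=\beta(x,u)$. As $v\in V_+$ is $\beta$-timelike and nonzero, the hyperplane $v^{\perp_\beta}$ is $\beta$-negative definite and contains $x$ and $u$; on it $-\beta$ is an inner product with $\|x\|^2=-\beta(x,x)=1$ and $\|u\|^2=-\beta(u,u)=1-\beta(v,v)$, where $0<\beta(v,v)\le1$ (the upper bound because $\beta(u,u)\le0$), so $\|u\|^2\in[0,1)$ and Cauchy--Schwarz gives $|\beta(x,w)|=|\beta(x,u)|\le\|x\|\,\|u\|<1$. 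For the reverse inclusion, given $r\in(-1,1)$ set $s:=\sqrt{1-r^2}>0$ and $w_r:=r\be_1-is\be_0$; then $s\be_0\in V_+$, $\beta(w_r,w_r)=-r^2-s^2=-1$, so $w_r\in\Xi_-$ and $\beta(\be_1,w_r)=-r$. Thus $\beta(\dS^d\times\Xi_-)\cap\R=(-1,1)$, and the same holds for $\Xi_+$ since the two image sets are complex conjugates and conjugation fixes $\R$.

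For (3): from $\beta(y-x,y-x)=\beta(y,y)-2\beta(x,y)+\beta(x,x)=-2(1+\beta(x,y))$ one gets $\beta(x,y)\le-1\iff\beta(y-x,y-x)\ge0$ (cf.\ Lemma~\ref{lem:dsd-spacelike}), so it suffices to identify $\oline{I(x)}$ with $\{y\in\dS^d\:\beta(x,y)\le-1\}$. The inclusion ``$\subseteq$'' is clear, the right-hand side being closed in $\dS^d$ and containing $I(x)$. For ``$\supseteq$'', with $\ell(z):=\beta(x,z)$, let $y\in\dS^d$ with $\ell(y)=-1$; if $y\ne x$ (the case $x=-y$ is excluded, as $\ell(-x)=1$) then $d(\ell|_{\dS^d})_y\ne0$, since it would vanish only when $x\in(y^{\perp_\beta})^{\perp_\beta}=\R y$, i.e.\ $y=\pm x$; hence $\ell|_{\dS^d}$ is a submersion near $y$ and $y\in\oline{\{\ell|_{\dS^d}<-1\}}=\oline{I(x)}$. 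If $y=x$, choose a $\beta$-timelike $v\in T_x\dS^d=x^{\perp_\beta}$ (possible, as $x^{\perp_\beta}$ is Lorentzian) and a curve $\gamma$ in $\dS^d$ with $\gamma(0)=x$, $\gamma'(0)=v$; twice differentiating $\beta(\gamma,\gamma)\equiv-1$ yields $\beta(x,\gamma''(0))=-\beta(v,v)<0$, so $\ell(\gamma(t))=-1-\tfrac12\beta(v,v)t^2+o(t^2)<-1$ for small $t\ne0$, giving $x\in\oline{I(x)}$. This proves (3).

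For (4): using $\Xi_\pm\subseteq\cT_\pm$ and $\oline{V_+}\cap(-\oline{V_+})=\{0\}$ we get $\oline{\Xi_+}\cap\oline{\Xi_-}\subseteq\oline{\cT_+}\cap\oline{\cT_-}=(V+i\oline{V_+})\cap(V-i\oline{V_+})=V$, hence $\oline{\Xi_+}\cap\oline{\Xi_-}\subseteq V\cap\dS^d_\C=\dS^d$. Conversely $\dS^d\subseteq\partial\Xi_\pm$: this is recorded right after \eqref{eq:xipm}, and follows from \eqref{eq:crown} since $\exp(ith).i\be_0\in\Xi_+$ for $|t|<\pi/2$ tends to $\be_1$ as $t\to-\pi/2$, so $\be_1\in\oline{\Xi_+}$ and $\dS^d=G.\be_1\subseteq\oline{\Xi_+}$, and likewise (by conjugation) $\dS^d\subseteq\oline{\Xi_-}$. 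Since moreover $\dS^d\cap\Xi_\pm=\eset$ (a real point lies in $V\pm iV_+$ only if $0\in V_+$), the boundaries in $\dS^d_\C$ satisfy $\partial\Xi_+\cap\partial\Xi_-=(\oline{\Xi_+}\cap\oline{\Xi_-})\setminus(\Xi_+\cup\Xi_-)=\dS^d$.

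The one step needing genuine care is the ``$\supseteq$'' inclusion of (3) at the diagonal $y=x$, where the linear functional $\beta(x,\cdot)$ is critical on $\dS^d$: there one must pass to second order and use that $T_x\dS^d$ is Lorentzian, hence carries a timelike direction along which $\beta(x,\cdot)$ drops below $-1$. The remaining parts are routine --- (1) repackages a cited identity, and (2) and (4) are short computations with the Lorentzian form and with closures of light cones.
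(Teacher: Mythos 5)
Your proof is correct, and while it proves the same four statements, it does so by a noticeably more self-contained route than the paper, which is worth recording. For (1) you do what the paper does: observe $\Xi_-=\oline{\Xi_+}$ and invoke $\beta(\Xi_+\times\Xi_-)=\C\setminus(-\infty,-1]$ from [NO20, Lemma 3.5]. For (2) the paper first uses transitivity of $G$ on $\dS^d$ and then the stabilizer $G_{\be_1}$ to normalize the imaginary part to $y_0\be_0$, reading off $(-1,1)$; you instead argue directly from a single $w=u-iv\in\Xi_-$, extracting $\beta(u,v)=0$ and $\beta(u,u)-\beta(v,v)=-1$ from $\beta(w,w)=-1$ and the reality of $\beta(x,w)$, and then apply Cauchy--Schwarz for $-\beta$ on the negative definite hyperplane $v^{\perp_\beta}$ -- a cleaner computation that avoids both reductions. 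For (3) the paper simply cites Lemma~\ref{lem:dsd-spacelike}, which in turn leans on the closure identification in \eqref{eq:desit-cut} without spelling it out; you supply the missing details, correctly isolating the only delicate point (the vertex $y=x$, where $\beta(x,\cdot)|_{\dS^d}$ is critical and a second-order argument with a timelike tangent vector is required), and handle the non-vertex boundary points by a regular-value/submersion argument. For (4) the paper quotes the boundary formula from [NO20, Lemma 3.7]; you replace this by the elementary tube-cone observation $\oline{\cT_+}\cap\oline{\cT_-}=V$ (since $\oline{V_+}\cap(-\oline{V_+})=\{0\}$), combined with the easy facts $\dS^d\subseteq\oline{\Xi_\pm}$ and $\dS^d\cap\Xi_\pm=\eset$. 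Net effect: your argument trades two external citations for short self-contained computations and makes explicit a closure step that the paper leaves implicit; the paper's version is shorter because it can point to prior results.
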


\begin{proof} (1) is \cite[Lemma~3.5]{NO20}.

  \nin  (2)  As $G$ acts transitively on $\dS^d$ and
  both  $\Xi_+$ and $\Xi_-$ are
  $G$-invariant, it suffices to consider $\beta(\be_1, z)$ for $z \in \Xi_+$.
  Then $z = x + i y$ with $y \in V_+$ and $\beta(x,x) - \beta(y,y) = -1$,
  $\beta(x,y) = 0$. Acting on $y$ with $G_{\be_1}$, we may w.l.o.g.\ assume that
  $y = y_0 \be_0$, so that we obtain $x_0 = 0$ and the elements 
  $(0,\bx) + i y_0 \be_0 \in \Xi_+$ correspond to those $\bx \in\R^d$ with
  $\bx^2 + y_0^2 = 1$. As $y_0 > 0$, the possible values of $x_1$ exhaust the
  open interval $(-1,1)$. 

\nin (3) follows from Lemma~\ref{lem:dsd-spacelike}. 

\nin (4) From \cite[Lem 3.7]{NO20} we obtain 
  \[\partial \Xi_\pm = \{x+iy\in \C^{d+1}\:  \beta (x,x) = -1,
    \beta(y,y)= 0 , \pm y_0\ge 0, \beta(x,y)=0\}.\]
This implies that 
$\partial \Xi_+ \cap \partial\Xi_-=\{x\in\R^{d+1}\: \beta (x,x) =-1\}=\dS^d .$
\end{proof}
 
\begin{prop} $Q_\lambda (z,w)$ extends to a sequiholomorphic kernel
  on $(\dS^d_\C \times \dS^d_\C)^\cut$, which contains in particular the subsets
  \[ (\Xi_+\cup \dS^d)\times \Xi_+, \quad
    \Xi_+ \times (\Xi_+\cup \dS^d) \quad \mbox{ and } \quad
    (\dS^d \times \dS^d)^\cut.\]
Furthermore, for $y\in\dS^d$ the singularities of the  function $x\mapsto Q_\lambda (x,y)$ are contained in $\overline{I(y)}$.
\end{prop}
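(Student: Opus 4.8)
The plan is to observe that the claimed extension is given by the very formula \eqref{eq:qlambda}, once ${}_2F_1$ is read as its maximal single-valued holomorphic continuation, and then to match up the two domains. First I would recall that the Gauss hypergeometric function ${}_2F_1(a,b;c;\zeta)$ with $c=\tfrac d2\notin\{0,-1,-2,\ldots\}$ (which holds since $d\ge 1$) extends from the open unit disc to a single-valued holomorphic function on the cut plane $\C\setminus[1,\infty)$. On $\dS^d_\C\times\dS^d_\C$ the function $(z,w)\mapsto\beta(z,\overline w)$ is holomorphic in $z$ and antiholomorphic in $w$, hence so is $(z,w)\mapsto\tfrac12(1-\beta(z,\overline w))$, and this quantity lies in $[1,\infty)$ exactly when $\beta(z,\overline w)\in(-\infty,-1]$. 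By Corollary~\ref{eq:dsccut} the latter fails precisely on $(\dS^d_\C\times\dS^d_\C)^\cut$, so substituting into the holomorphically continued hypergeometric function yields a sesquiholomorphic kernel
\[ (z,w)\ \longmapsto\ {}_2F_1\!\Bigl(\tfrac{d-1}{2}+\lambda,\ \tfrac{d-1}{2}-\lambda;\ \tfrac d2;\ \tfrac12\bigl(1-\beta(z,\overline w)\bigr)\Bigr) \]
on all of $(\dS^d_\C\times\dS^d_\C)^\cut$, agreeing with $Q_\lambda$ on $\Xi_+\times\Xi_+$ by \eqref{eq:qlambda}.

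Next I would check that the three listed subsets lie in $(\dS^d_\C\times\dS^d_\C)^\cut$, i.e.\ that $\beta(z,\overline w)\notin(-\infty,-1]$ there. Since $(-\infty,-1]\subeq\R$, it suffices to exclude real values $\le -1$. For $z,w\in\Xi_+$ one has $\overline w\in\Xi_-$, and part~(1) of the lemma preceding the proposition gives $\beta(z,\overline w)\in\C\setminus(-\infty,-1]$ outright. For $z\in\dS^d$, $w\in\Xi_+$ (so $\overline w\in\Xi_-$), part~(2) shows that a real value of $\beta(z,\overline w)$ must lie in $(-1,1)$; the case $z\in\Xi_+$, $w\in\dS^d$ is the same after using $\overline w=w$ and the symmetry of $\beta$. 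For $(z,w)\in(\dS^d\times\dS^d)^\cut$, Lemma~\ref{lem:dsd-spacelike} gives $\beta(z,w)>-1$. Combining the first two cases yields $(\Xi_+\cup\dS^d)\times\Xi_+$ and $\Xi_+\times(\Xi_+\cup\dS^d)$, which completes the first assertion.

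For the singularity statement I would fix $y\in\dS^d$, so that $\overline y=y$ and $x\mapsto Q_\lambda(x,y)$ is holomorphic on $\{x\in\dS^d_\C:\beta(x,y)\notin(-\infty,-1]\}$. Restricting to $x\in\dS^d$, where $\beta(x,y)$ is real, the singularities inside $\dS^d$ can only occur where $\beta(x,y)\le -1$; by part~(3) of the preceding lemma together with the symmetry of $\beta$, this set equals $\{x\in\dS^d: x\in\overline{I(y)}\}=\overline{I(y)}$.

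I do not anticipate a real obstacle here: the proof is an assembly of facts already established in the excerpt. The one point requiring a little care is the branch-cut bookkeeping — confirming that the principal ${}_2F_1$ is holomorphic precisely on $\C\setminus[1,\infty)$ and that under $\zeta=\tfrac12(1-\beta)$ this cut corresponds exactly to $\beta\in(-\infty,-1]$ — so that the regularity domain of the kernel matches the set $(\dS^d_\C\times\dS^d_\C)^\cut$ identified in Corollary~\ref{eq:dsccut}.
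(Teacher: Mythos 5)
Your proposal is correct and, in fact, supplies a genuine proof where the paper's printed one does not: the text that appears after the proposition is a verbatim repeat of the proof of part~(4) of the preceding lemma (the computation of $\partial\Xi_\pm$ from \cite[Lem~3.7]{NO20}), which has no bearing on the extension of $Q_\lambda$ and is almost certainly a copy-paste slip. Your argument is the natural one that the authors presumably intended. You correctly reduce everything to the fact that the principal branch of ${}_2F_1(\cdot,\cdot;\tfrac d2;\zeta)$ is holomorphic on $\C\setminus[1,\infty)$, observe that the substitution $\zeta=\tfrac12(1-\beta(z,\oline w))$ transports the cut $[1,\infty)$ to the condition $\beta(z,\oline w)\in(-\infty,-1]$, and invoke Corollary~\ref{eq:dsccut} to identify the resulting regularity domain with $(\dS^d_\C\times\dS^d_\C)^\cut$. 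The three inclusion checks are also handled cleanly: $\Xi_+\times\Xi_+$ via part~(1) of the lemma, $\dS^d\times\Xi_+$ and $\Xi_+\times\dS^d$ via part~(2) together with the symmetry of $\beta$, and $(\dS^d\times\dS^d)^\cut$ via Lemma~\ref{lem:dsd-spacelike}. For the final claim, your appeal to part~(3) plus the symmetry $\beta(x,y)=\beta(y,x)$ (equivalently, $x\in\oline{I(y)}\iff y\in\oline{I(x)}$, since $I(x)$ is defined by a quadratic condition on the difference) correctly localizes the real singular set to $\oline{I(y)}$. No gaps; this is a complete and well-organized argument.
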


\begin{proof} From \cite[Lem 3.7]{NO20} we obtain 
  \[\partial \Xi_\pm = \{x+iy\in \C^{d+1}\:  \beta (x,x) = -1,
    \beta(y,y)= 0 , \pm y_0\ge 0, \beta(x,y)=0\}.\]
This implies that 
$\partial \Xi_+ \cap \partial\Xi_-=\{x\in\R^{d+1}\: \beta (x,x) =-1\}=\dS^d .$
\end{proof}

The next theorem describes the analytic  continuation and boundary values of analytic functions as  distributions. For
manifolds we use local coordinates. In particular for our situation we can use
the exponential function $\Exp_y : T_y(\dS^d)\to \dS^d$ which is
a local analytic diffeomorphism.

\begin{theorem}\label{thm:bd}
Let $k\in \N$, let $X \subset \R^k$ be an open set and $\Gamma\subset \R^k$ an open convex cone. For $\gamma > 0$ let
\[ Z = \{ z \in \C^k : \Re z \in X, \Im z \in \Gamma, \|\Im z\| < \gamma\}.\]
If $\Theta$ is an analytic function in $Z$ such that 
\[ |\Theta(z)| \leq C \|\Im z\|^{-N}\]
for some $N$ and some constant $C >0$, then
\[\lim_{\Gamma\ni y\to 0} \Theta(\cdot +iy)
  = \Theta_0\quad \mbox{\rm exists in }\quad C^{-\infty}(X)\]
and is of order $N$.  
\end{theorem}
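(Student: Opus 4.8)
The plan is to prove this as a standard Vladimirov-type boundary value theorem, reducing the multi-dimensional case to an iterated one-dimensional argument via the convex cone structure, and in the one-dimensional case using a Fourier--Laplace representation together with the polynomial growth bound to control derivatives.

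First I would reduce to a model situation. Because $\Gamma$ is an open convex cone, pick a fixed $v \in \Gamma$ and approach $0$ along the ray $y = tv$, $t \to 0^+$; the general statement that the limit exists for all $\Gamma \ni y \to 0$ then follows once one knows the limit exists and is independent of direction, which is a consequence of $\Theta$ being holomorphic and the uniform estimate $|\Theta(z)| \le C\|\Im z\|^{-N}$ (Cauchy estimates on polydiscs of radius comparable to $\mathrm{dist}(\Im z, \partial\Gamma)$ bound all partial derivatives by $C' \|\Im z\|^{-N-1}$, so $\Theta(\cdot + iy) - \Theta(\cdot + iy')$ is controlled by $\|y - y'\|\,\|\Im\|^{-N-1}$ after integrating, and a telescoping/Hardy-type argument shows the family is Cauchy in $C^{-\infty}$). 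So the core is: for a single approach direction, $\Theta(\cdot + itv)$ converges in $\cD'(X)$ and the limit has order $\le N$.

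The key step is the standard trick of writing $\Theta$ as a high-order derivative of a continuous (even Hölder) function. Fix $v \in \Gamma$. The estimate $|\Theta(x+iy)| \le C \|y\|^{-N}$ combined with the Cauchy estimates gives $|\partial^\alpha \Theta(x+iy)| \le C_\alpha \|y\|^{-N-|\alpha|}$ for $y$ in a subcone with $\|y\|$ small. Now integrate in the $v$-direction: set
\[
\Psi(z) := \int_0^{1} \cdots \int_0^{1} \Theta\bigl(z + i(s_1+\cdots+s_M)v\bigr)\, ds_1\cdots ds_M
\]
for $M := N+1$ (or use the one-variable antiderivative along $v$ iterated $M$ times), which is holomorphic on a slightly smaller $Z$, satisfies $D_v^M \Psi = $ (a finite difference of $\Theta$), and by the telescoping estimate extends continuously up to $\Im z = 0$. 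Differentiating back, $\Theta(\cdot + iy)$ is, up to lower-order boundary-regular terms, $D_v^M$ applied to a function continuous up to the boundary; since $D_v^M$ is a constant-coefficient differential operator of order $M = N+1$ and continuous functions are distributions of order $0$, the boundary limit $\Theta_0 := \lim_{t\to 0^+}\Theta(\cdot + itv)$ exists in $\cD'(X)$ with $\Theta_0 = D_v^M(\text{continuous function}) + \ldots$, hence of order $\le N+1$; a more careful bookkeeping (integrating $N$ times, not $N+1$, and using that the estimate is $\|y\|^{-N}$ so the $N$-th antiderivative is already bounded, only the $N$-th derivative is needed) yields order exactly $\le N$ as claimed. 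Convergence in $C^{-\infty}(X)$, i.e. testing against $\phi \in C_c^\infty$, follows by integration by parts: $\langle \Theta(\cdot+iy),\phi\rangle = (-1)^N \langle \Psi_N(\cdot+iy), D_v^N\phi\rangle$ with $\Psi_N$ continuous up to the boundary, so one passes to the limit by dominated convergence.

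The main obstacle I anticipate is not any single estimate but handling the interaction between the cone $\Gamma$ and the ball $\|\Im z\| < \gamma$ carefully: one must integrate along a direction $v$ interior to $\Gamma$ while staying inside $Z$, and near $\partial X$ the test function being compactly supported saves the day, but one has to be slightly careful that the iterated integrals defining the antiderivative $\Psi_N$ stay in the domain of holomorphy and that the Cauchy estimates are applied with radius bounded below by a constant multiple of $\mathrm{dist}(\Im z,\partial\Gamma) \gtrsim \|\Im z\|$ (valid since $v$ is interior). Everything else — uniqueness of the limit across directions, that the order is $\le N$, convergence in the distribution topology — is then routine bookkeeping with Cauchy estimates and integration by parts, and I would cite Vladimirov or H\"ormander for the one-variable prototype rather than reproving it in full.
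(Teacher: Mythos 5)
The paper does not actually prove this theorem: its ``proof'' is the single line ``See \cite[Thms.~3.1.15, 8.4.8]{H90},'' i.e.\ a citation to H\"ormander's boundary-value theorem for analytic functions of polynomial growth and the accompanying wave-front-set refinement. What you have written is, in effect, a sketch of the proof \emph{behind} that citation: Cauchy estimates, iterated antiderivatives along an interior direction $v\in\Gamma$, continuity up to the boundary, and integration by parts against test functions. That is the right argument and it is the argument H\"ormander gives, so there is no conceptual gap between your route and the paper's.

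One piece of your bookkeeping is off, though. You assert that because $|\Theta(z)|\le C\|\Im z\|^{-N}$, ``the $N$-th antiderivative is already bounded,'' so that integrating only $N$ times (rather than $N+1$) gives a continuous function and hence order $\le N$. This is not quite right: with the estimate $\|y\|^{-N}$, the $k$-th antiderivative along $v$ is $O(\|y\|^{-N+k})$ for $k<N$ and $O(|\log\|y\||)$ for $k=N$; it is only the $(N+1)$-st antiderivative that is bounded and extends continuously to $\Im z=0$. The conclusion ``order $\le N$'' is still salvageable, but for a different reason: the $N$-th antiderivative $\Psi_N(\cdot+iy)$, while not bounded, converges in $L^1_{\mathrm{loc}}(X)$ as $y\to 0$ (a logarithmic singularity is locally integrable), and an $N$-th distributional derivative of an $L^1_{\mathrm{loc}}$ function has order $\le N$. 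If instead you insist on landing on a genuinely continuous primitive, you must take $N+1$ antiderivatives and you only get the weaker ``order $\le N+1$,'' which is in fact what the textbook version of \cite[Thm.~3.1.15]{H90} states. So either fix the order claim to $\le N+1$, or replace ``already bounded'' by the $L^1_{\mathrm{loc}}$ argument; the rest of the outline (reduction to a fixed direction $v$, Cauchy estimates on polydiscs of radius $\gtrsim\mathrm{dist}(\Im z,\partial\Gamma)\gtrsim\|\Im z\|$, independence of the limiting direction by a telescoping estimate, and integration by parts plus dominated convergence to pass to the limit in $\cD'(X)$) is sound and exactly parallels the cited source.
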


\begin{proof}  See \cite[Thms.~3.1.15, 8.4.8]{H90}.
\end{proof}

We now apply this to our special situation.
 We use   $\alpha = \frac{d-1}{2} + \lambda$, $\beta = \frac{d-1}{2} -\lambda$ and $\gamma =d/2 $ to get 
 \[\gamma - \alpha = \frac{1}{2} -\lambda, \quad \gamma - \beta = \frac{1}{2} +\lambda \quad
 \text{and}\quad\gamma - \alpha -\beta = \frac{2-d}{2}.\]
Then 
 \[\gamma - (\gamma -\alpha) - (\gamma - \beta)= -\gamma +\alpha +\beta
   = \frac{d-2}{2}\ge 0 \quad \mbox{ for } \quad d\ge 2,\]
 and we get with Theorem~\ref{thm:gamma-limit}(i)
   for every $0<\epsilon <\pi$ and $|\arg (1+\beta(z,\oline w ))|
   <\pi - \epsilon$ the limit relation 
 \begin{align} 
\left( \frac{1+\beta(z,\oline w)}{2}\right)^{\frac{d-2}{2}}Q_\lambda (z,w ) &=  {}_2F_1\left(\frac{1}{2}-\lambda, \frac{1}{2}+\lambda ; \frac{d}{2};
 \frac{1-\beta(z,\oline w)}{2}\right)\nonumber\\
 &
 \mapright{\beta(z,\oline w)\to -1} \ 
   \frac{\Gamma(\frac{d}{2}) \Gamma(\frac{d-2}{2})}
   {\Gamma (\frac{d-1}{2}+ \lambda )\Gamma (\frac{d-1}{2}-\lambda)} 
\quad \mbox{ for } \quad d > 2. \label{eq:ConstOne}
 \end{align}
 
For $d=2$ we have $\gamma = \alpha + \beta = d -1  =\frac{d}{2} = 1$,
and in this case Theorem~\ref{thm:gamma-limit}(iii)
with the restriction $|\arg (1+\beta(z,\oline w ))|
   <\pi - \epsilon$ as above leads to 
 \begin{equation}\label{eq:ConstTwo}
 \frac{Q_\lambda (z,w)}{-\log \Big(\frac{1+\beta(z,\oline w)}{2}\Big)} 
   \ \
\  \mapright{\beta(z,\oline w)\to -1} \ 
   \frac{1}{\Gamma (\frac{1}{2}+\lambda)
 \Gamma (\frac{1}{2} - \lambda )}.
 \end{equation}
 Hence Theorem~\ref{thm:gamma-limit}(ii) for $d>2$ and
 Theorem~\ref{thm:gamma-limit}(iv) for $d=2$ together with
 Theorem \ref{thm:bd} imply the following theorem. 
  
\begin{theorem} For $y\in \dS^d$ the function 
  $Q_\lambda(\cdot,y)$ extends to a distribution on $\dS^d$ of order
$\lceil\frac{d}{2}\rceil-1$ where $\lceil v\rceil$
   denotes the smallest integer greater or equal to $v$.
 \end{theorem}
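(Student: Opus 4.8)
The plan is to realize $Q_\lambda(\cdot,y)$ everywhere on $\dS^d$ as the boundary value (from the forward tube $\Xi_+$) of a holomorphic function obeying a polynomial bound, and to read off its order from the branch‑point behaviour of $\hgf$ via Theorem~\ref{thm:bd}. Recall from the preceding Proposition that $z\mapsto Q_\lambda(z,y)$ is holomorphic on $\Xi_+$ (since $\Xi_+\times(\Xi_+\cup\dS^d)\subseteq(\dS^d_\C\times\dS^d_\C)^{\cut}$), and that on the spacelike part $\dS^d\setminus\overline{I(y)}=\{x\in\dS^d\:\beta(x,y)>-1\}$ (Lemma~\ref{lem:dsd-spacelike}) the function $x\mapsto Q_\lambda(x,y)$ is already real‑analytic. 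Since $\dS^d$ is paracompact it therefore suffices, via a partition of unity, to produce the distributional extension locally near each point $x_0$ of the light cone $C_y:=\{x\in\dS^d\:\beta(x,y)=-1\}$ and to bound its order there by $N:=\lceil d/2\rceil-1$.

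Fix $x_0\in C_y$ and the analytic chart $\Exp_{x_0}\colon T_{x_0}(\dS^d)\to\dS^d$. Complexifying it and using that $\Xi_+=\dS^d_\C\cap(V+iV_+)$ contains $\Exp_{x_0}(u+iv)$ whenever $u$ is real and small and $v$ lies in a small truncation of the open convex cone $\Gamma:=V_+(x_0)=T_{x_0}(\dS^d)\cap V_+$, the function $\Theta(z):=Q_\lambda(\Exp_{x_0}(z),y)$ is holomorphic on a domain $Z=\{z\:\Re z\in X,\ \Im z\in\Gamma,\ \|\Im z\|<\gamma\}$ of the type required by Theorem~\ref{thm:bd}, and its boundary value as $\Gamma\ni v\to0$ coincides, on the spacelike side, with the classical function $Q_\lambda(\cdot,y)$. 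Hence this boundary value is the sought extension near $x_0$, and only the order estimate remains.

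For the estimate $|\Theta(z)|\le C\|\Im z\|^{-N}$ I would invoke the connection formula for $\hgf$ at $\zeta=1$ — precisely Theorem~\ref{thm:gamma-limit}(ii) for $d\ge3$ and (iv) for $d=2$ — with $\zeta=\tfrac12(1-\beta(z,y))$, $1-\zeta=\tfrac12(1+\beta(z,y))$ and $\gamma-\alpha-\beta=\tfrac{2-d}{2}$. This yields, on a neighbourhood of $C_y$, a representation $Q_\lambda(z,y)=A(z)+\bigl(1+\beta(z,y)\bigr)^{(2-d)/2}B(z)$ for $d$ odd, and $Q_\lambda(z,y)=A(z)+\bigl(1+\beta(z,y)\bigr)^{-(d-2)/2}B(z)+\log\bigl(1+\beta(z,y)\bigr)C(z)$ for $d\ge4$ even, with $A,B,C$ holomorphic and bounded there; the power and logarithm are single‑valued because $z\in\Xi_+$ forces $1+\beta(z,y)\notin(-\infty,0]$. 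Away from the cone vertex $x_0=y$ the map $x\mapsto\beta(x,y)$ is a submersion along $C_y$, and since $z\in\Xi_+$ keeps $1+\beta(z,y)$ at positive distance from the cut $(-\infty,0]$, one obtains the transversality bound $|1+\beta(\Exp_{x_0}(z),y)|\ge c\|\Im z\|$ on $Z$ (after shrinking $\Gamma$ if necessary). As $N$ is the least integer $\ge(d-2)/2$, the principal singular term satisfies $|(1+\beta(z,y))^{(2-d)/2}|\le C\|\Im z\|^{-N}$, while the logarithmic term (present only for even $d\ge4$, where $N\ge1$) contributes at most an extra factor $\|\Im z\|^{-\epsilon}$ and is absorbed; Theorem~\ref{thm:bd} then produces near $x_0$ a boundary‑value distribution of order $\le N$.

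It remains to treat two cases by hand. For $d\le2$ (where $N=0$): $Q_\lambda(\cdot,y)$ is continuous when $d=1$, while for $d=2$ its only singularity is $\log|1+\beta(x,y)|$, which is locally integrable on $\dS^2$; in both cases $Q_\lambda(\cdot,y)\in L^1_{\mathrm{loc}}(\dS^d)$ is of order $0$. Near the vertex $x_0=y$ (for any $d$): here $\beta(y,y)=-1$ and $d_x\beta(x,y)|_{x=y}=0$, so $1+\beta(x,y)$ vanishes to second order, hence $|1+\beta(x,y)|^{(2-d)/2}\lesssim\dist(x,y)^{2-d}$, which is integrable near $y$ in the $d$‑dimensional manifold $\dS^d$; thus $Q_\lambda(\cdot,y)$ is $L^1_{\mathrm{loc}}$, of order $0$, near $y$. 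Gluing the real‑analytic piece on $\dS^d\setminus\overline{I(y)}$, the order‑$\le N$ pieces near the non‑vertex points of $C_y$, and the $L^1_{\mathrm{loc}}$ piece near $y$ with a partition of unity gives the extension of $Q_\lambda(\cdot,y)$ to a distribution of order $\lceil d/2\rceil-1$ on $\dS^d$. The main obstacle I anticipate is exactly this vertex of the light cone, where Theorem~\ref{thm:bd} is not directly applicable because $1+\beta(\cdot,y)$ vanishes quadratically rather than transversally ($|1+\beta|\gtrsim\|\Im z\|^2$ only), so one cannot avoid the separate integrability argument there; a lesser technical point is securing the uniform transversality bound $|1+\beta(z,y)|\gtrsim\|\Im z\|$ on the tube at generic light‑cone points.
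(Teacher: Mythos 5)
Your overall strategy — realize $Q_\lambda(\cdot,y)$ as the boundary value from $\Xi_+$, apply the H\"ormander boundary-value theorem locally near the light cone $C_y$, and read off the order from the connection formula for ${}_2F_1$ at $\zeta=1$ — is a natural reconstruction (the paper itself only cites \cite[\S 5]{FNO23} and \cite[Thm.~4.5]{OS23} without giving an argument). The analysis at non-vertex points of $C_y$ is essentially correct, modulo the care you flag about shrinking $\Gamma$ to a subcone bounded away from the null direction $x_0-y$, so that the transversality bound $|1+\beta(z,y)|\gtrsim\|\Im z\|$ actually holds uniformly.

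The gap is at the vertex $x_0=y$, and your proposed fix does not work. You write $|1+\beta(x,y)|^{(2-d)/2}\lesssim\dist(x,y)^{2-d}$, but since $|1+\beta(x,y)|=\tfrac{1}{2}|\beta(x-y,x-y)|\lesssim\dist(x,y)^2$ and the exponent $(2-d)/2$ is negative for $d>2$, the inequality goes the other way: $|1+\beta(x,y)|^{(2-d)/2}\gtrsim\dist(x,y)^{2-d}$. More to the point, the function $|1+\beta(\cdot,y)|^{(2-d)/2}$ is \emph{not} $L^1_{\rm loc}$ near $y$ once $d\geq4$: in tangent coordinates $(t,\rho)$ near $y$, with $R^2=t^2+\rho^2$ and $t=R\cos\phi$, $\rho=R\sin\phi$, one finds $\int|t^2-\rho^2|^{(2-d)/2}\rho^{d-2}\,dt\,d\rho\,d\Omega \propto \int R\,dR\int |\cos 2\phi|^{(2-d)/2}\sin^{d-2}\phi\,d\phi$, and the $\phi$-integral diverges near $\phi=\pi/4$ precisely when $(2-d)/2\leq -1$, i.e.\ $d\geq4$. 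So for $d\geq4$ the boundary value near $y$ cannot be identified with a locally integrable function, and the H\"ormander bound there is the worse $\|\Im z\|^{-(d-2)}$, leaving the claimed order $\lceil d/2\rceil-1$ unproven on a neighbourhood of the vertex. Some genuinely different input is needed there — in \cite{OS23} this is done via analytic wave front set analysis of the boundary values, which controls the microlocal singular structure (and hence the order) of these $H$-invariant distributions including at the fixed point $y$ of $H$; your purely local $L^1_{\rm loc}$ argument cannot substitute for that in dimensions $d\geq4$.
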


Note that 
$\lceil\frac{d}{2}\rceil-1 = 0$ for $d= 1,2$.

 \begin{proof}  
   See Section 5 in \cite{FNO23} and \cite[Thm. 4.5]{OS23}.
 \end{proof}

More detailed information is given by \cite[Thm. 4.5]{OS23}.
 On  $\Xi_-=\{\oline w \: w\in\Xi_+\}$ we consider the kernel 
\[ \oQ_\lambda (z,w)=Q_\lambda (\oline w,\oline z)
  = \oline{Q_\lambda(\oline z, \oline w)} 
  ={Q_{\oline\lambda}(z, w)}, \]
 where we have used the explicit formula \eqref{eq:qlambda} 
  for $Q_\lambda$ and the relations
\[\oline{{}_2 F_1(\alpha,\beta;\gamma; z)}
  = {}_2 F_1(\oline\alpha, \oline\beta; \oline \gamma; \oline z)
  \quad \mbox{ and } \quad 
{}_2F_1(\alpha, \beta ; \gamma ;z) = {}_2F_1(\beta, \alpha ; \gamma ;z).\] 
We also write
\[z_t = i\cos (t)\be_0 + \sin (t)\be_1 = \exp (-ith)(i\be_0).\]
We refer to \cite{OS23}, in particular Theorem 4.3, Corollary 6.4, Theorem 1.1 for the proof of the following, where 
\[F(x \pm i0) = \lim_{ t\to 0\pm} F(x+it)\]
means the limit in the space of distributions.
For the following theorem we recall the sets
  $I^\pm(x)$ from \eqref{eq:I(x)} and note that
  $I^\pm (\be_1) =H\exp (\pm \R_+ h) .\be_1$.

\begin{theorem}\label{thm:psi}
  Assume that $d \geq 2$.
  Let $m = \sqrt{\big(\frac{d-1}{2}\big)^2-\lambda^2}$. Then, for
  $g \in \SO_{1,d}(\R)_e$, the limits
\[y\mapsto D^+_\lambda (g.\be_1, y) = \lim_{t\to \frac{\pi}{2}-} Q_\lambda (gz_t.y)
 \quad\text{and} \quad y\mapsto D^-_\lambda (g.\be_1, y) = \lim_{t\to \frac{\pi}{2}-} \oQ_\lambda (gz_t.y) \]
exist in the space of distributions on $\dS^d$. Let
\[ D^\pm_{\lambda,x} (y) = D^\pm_\lambda(x ,y) \quad \mbox{ and }  \quad
  D^\pm _\lambda = D^\pm_{\lambda,\be_1}.\] Then the following holds: 
\begin{itemize}
\item[\rm (1)] 
  The distributions $D^\pm_{\lambda,x}$ are given by analytic functions
on the domains $\dS^d \setminus \oline{I(x)}$ and
    $I^\pm(x)$ as follows  
 \begin{align*}
        D^+_{\lambda,x } (y) = \begin{cases}
       {}_2F_1\big( \drho+ \lambda, \drho-\lambda; \frac{d}{2}; \frac{1- \beta(x,y)}{2}\big) &\text{if } y \notin \overline{I(x)},\\
       {}_2F_1\big( \drho+ \lambda, \drho-\lambda; \frac{d}{2}; \frac{1-\beta(x,y)}{2} - i0\big)  &\text{if  } y  \in I^+(x), \\
    {}_2F_1\big( \drho+ \lambda, \drho-\lambda; \frac{d}{2}; \frac{1-\beta(x,y)}{2} + i0\big)  &\text{if  }  y  \in I^-(x);
          \end{cases}\\
       D^-_{\lambda,x}(y ) = \begin{cases}
       {}_2F_1\big ( \drho+ \lambda, \drho-\lambda; \frac{d}{2}; \frac{1- \beta(x,y)}{2}\big ) &\text{if } y \notin \overline{I(x)},\\
       {}_2F_1\big(\drho+ \lambda, \drho-\lambda; \frac{d}{2}; \frac{1 - \beta(x,y)}{2} + i0\big)  &\text{if } y  \in I^+(x), \\
    {}_2F_1\big( \drho+ \lambda, \drho-\lambda; \frac{d}{2}; \frac{1- \beta(x,y)}{2} - i0\big)  &\text{if } y  \in I^-(x). 
          \end{cases}  
    \end{align*} 
 \item[\rm (2)] $(\Delta - m^2) D^+_{\lambda, x}  = (\Delta -m^2)D^-_{\lambda,x} =0.$ 
\item[\rm (3)] The distributions $D^+_{\lambda ,x}$ and $ D^-_{\lambda ,x}$ are linearly independent and
$\{D^+_\lambda,D^-_\lambda\}$ is a basis for the space of $H$-invariant distributions satisfying the
differential equation in {\rm(2)}.
\item[\rm(4)] The distributions $D^\pm_{\lambda,x}$ 
  do not vanish on any non-empty open subset of $\dS^d$. 
 \end{itemize}
\end{theorem}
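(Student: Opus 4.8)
The plan is to prove Theorem~\ref{thm:psi} by reducing everything to the one-variable hypergeometric function on the slit plane, using the explicit formula \eqref{eq:qlambda} for $Q_\lambda$ together with the description of the cut domain from Lemma~\ref{lem:dsd-spacelike}. First I would set $u(z,w) := \frac{1-\beta(z,\oline w)}{2}$ and note that on $\Xi_+\times\Xi_+$, part~(1) of the preceding Lemma gives $\beta(z,\oline w)\in\C\setminus(-\infty,-1]$, i.e. $u$ takes values in $\C\setminus[1,\infty)$, which is exactly the domain where ${}_2F_1(\drho+\lambda,\drho-\lambda;\tfrac d2;\cdot)$ is single-valued and holomorphic. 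Approaching $\dS^d$ means letting $t\to\pi/2^-$ in $z_t = \exp(-ith)(i\be_0)$, so $g z_t.y$ tends to $g.\be_1$ evaluated against the point $y$, and $\beta(gz_t.y, \overline{g.\be_1}{}) \to \beta(g.\be_1,y) \in \R$. The sign of the imaginary part of $u$ as $t\uparrow\pi/2$ is what selects the $\mp i0$ prescription, and this is the computation one must do carefully: writing $\beta(gz_t.y,\overline{\,\cdot\,})$ and differentiating in $t$ at $t=\pi/2$ shows that $\Im\beta$ has a definite sign governed by whether $y$ lies in $I^+(g.\be_1)$ or $I^-(g.\be_1)$ (this is where the causal future/past enters), while for $y\notin\overline{I(g.\be_1)}$ the limit is already in $\C\setminus[1,\infty)$ and no regularization is needed. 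For $\overline{Q}_\lambda$ one has $\overline{Q}_\lambda(z,w) = Q_{\overline\lambda}(z,w)$, and since in this theorem $\lambda$ is real or purely imaginary with $m^2 = (\tfrac{d-1}{2})^2-\lambda^2$ real, complex conjugation flips the sign of $\Im u$, which is why $D^-$ has the opposite $i0$-prescription from $D^+$. The existence of the limits in $C^{-\infty}(\dS^d)$ and the order bound follow from Theorem~\ref{thm:bd} applied in exponential coordinates $\Exp_y$, combined with the boundary-value estimates for ${}_2F_1$ near its branch point (the asymptotics \eqref{eq:ConstOne}, \eqref{eq:ConstTwo}).

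For part~(2), I would argue that on the open dense set where $D^\pm_{\lambda,x}$ is given by the analytic function $Q_\lambda(\cdot,x)$, the eigenvalue equation $(\Delta-m^2)Q_\lambda(\cdot,x)=0$ holds because $Q_\lambda(\cdot,i\be_0)$ is (the holomorphic extension of) the spherical function $\phi_\lambda$, which is an eigenfunction of the Laplace--Beltrami operator with the stated eigenvalue — this is standard Harish-Chandra theory on $\bH^d$, and $m^2$ is precisely the Casimir eigenvalue. Since $\Delta-m^2$ is a differential operator it commutes with taking boundary values, so the equation persists distributionally across $\overline{I(x)}$; concretely, one checks that the $\pm i0$ boundary values of a solution of an ODE-type singular equation remain solutions, which is immediate from Theorem~\ref{thm:bd} since the operator acts continuously on $C^{-\infty}$.

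For part~(3), linear independence is seen by comparing the jump across $\partial I(x)$: from the explicit formulas in (1), the difference $D^+_{\lambda,x}-D^-_{\lambda,x}$ is supported on $\overline{I(x)}$ and is governed by the monodromy of ${}_2F_1$ around its branch point $u=1$, which is nonzero precisely because $\gamma-\alpha-\beta = \tfrac{d-2}{2}$ is not such that the connection coefficient degenerates — so the two distributions are genuinely different, hence (being solutions of a second-order invariant equation with a one-dimensional family of $H$-invariant germs at each regular point) they span the two-dimensional solution space. That the space of $H$-invariant distribution solutions of $(\Delta-m^2)D=0$ is exactly two-dimensional is a rank-one reduction: restricting to the orbit structure of $H=\SO_{1,d-1}(\R)_e$ on $\dS^d$, the equation becomes a hypergeometric ODE in the single variable $\beta(x,\cdot)$ with a two-dimensional solution space, and one invokes \cite[Thm.~4.3]{OS23} (or the cited results of \cite{FNO23}, \cite{OS23}) for the distributional count. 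Part~(4) follows from real-analyticity of $D^\pm_{\lambda,x}$ on the open set $\dS^d\setminus\overline{I(x)}$ together with the fact that ${}_2F_1(\drho+\lambda,\drho-\lambda;\tfrac d2;u)$ is not identically zero (its value at $u=0$ is $1$): a real-analytic function vanishing on a nonempty open subset of the connected manifold $\dS^d\setminus\overline{I(x)}$ vanishes identically, contradicting the normalization, and on $I^\pm(x)$ the $\mp i0$ boundary values likewise cannot vanish on an open set since they are boundary values of a nonzero holomorphic function.

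The main obstacle I anticipate is step~(1)'s sign bookkeeping: one must track how $\Im\beta(gz_t.y,\overline{g.\be_1})$ behaves as $t\uparrow\pi/2$ uniformly for $g$ in compact sets and $y$ ranging over $I^+(x)$ versus $I^-(x)$, and reconcile this with the orientation conventions for $\Exp_y$ so that the $\pm i0$ prescriptions come out consistently with \cite[Cor.~6.4]{OS23}. Everything else is either a direct appeal to Theorem~\ref{thm:bd}, to the hypergeometric asymptotics already recorded in \eqref{eq:ConstOne}--\eqref{eq:ConstTwo}, or to the cited theorems of \cite{FNO23} and \cite{OS23}.
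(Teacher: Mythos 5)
Your proposal follows essentially the paper's route for parts (1)--(3) and supplies a genuinely different, more elementary argument for part (4); both are worth noting.

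For (1), the paper simply reduces to $g=e$ and writes out $\tfrac{1-\beta(z_t,y)}{2}=\tfrac{1+y_1\sin t}{2}-i\tfrac{y_0\cos t}{2}$, so the sign of $y_0$ (i.e.\ $I^+(x)$ vs.\ $I^-(x)$) directly dictates the $\mp i0$ prescription, with no need to differentiate in $t$; your sign-bookkeeping via the $t$-derivative gives the same answer but is a detour. For (2), the paper invokes \cite[Lem.~2.10]{NO20} for the eigenfunction property, which you correctly recast as continuity of $\Delta-m^2$ on $C^{-\infty}$ plus the classical eigenfunction fact. For (3), the paper's dimension count reduces to the hypergeometric ODE following \cite[p.~484]{He84} with $K$ replaced by $H$, essentially as you describe. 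One caution: the linear-independence step should not rest on ``the monodromy of ${}_2F_1$ around $u=1$ is nonzero'' -- in the Huygens cases of Theorem~\ref{Thm:HP} the \emph{function-level} jump on $I^\pm(x)$ actually vanishes (the connection coefficient $c_\lambda$ degenerates) and $D^+-D^-$ survives only as a distribution supported on $\partial\overline{I(x)}$, so the monodromy slogan would wrongly suggest $D^+=D^-$ there; the safe argument is simply that $D^+\ne D^-$ because the explicit formulas in (1), combined with the jump computations (Theorems~\ref{thm:Main1}--\ref{thm:Main2}), exhibit a nonzero difference as distributions.

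For (4), the paper defers entirely to \cite[Cor.~6.4]{OS23}, which uses the wave front set of the distributions. Your alternative -- real-analyticity on the connected open set $\dS^d\setminus\overline{I(x)}$ plus normalization ${}_2F_1(\cdot;0)=1$ at $y=-x$, and the Painlev\'e/edge-of-the-wedge argument to rule out vanishing on open subsets of $I^\pm(x)$ -- is a legitimate and more elementary route, though you should record that $\dS^d\setminus\overline{I(x)}$ is connected for $d\ge 2$ (a short direct check), that $y\mapsto u(y)=\tfrac{1-\beta(x,y)}{2}$ is submersive off $\{\pm x\}$ so vanishing of the composed function on an open set forces vanishing of ${}_2F_1(\cdot;u\mp i0)$ on an open interval of $(1,\infty)$, and that the latter is a real-analytic function whose vanishing on an interval would, by Schwarz reflection across $(1,\infty)$, force ${}_2F_1\equiv 0$, contradicting ${}_2F_1(\cdot;0)=1$. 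The wave-front-set proof of \cite{OS23} buys uniformity and is closer in spirit to the microlocal analysis used elsewhere in that reference, but your argument is self-contained.
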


\begin{proof} (1) We can assume that $g=e$. Then
\[ \frac{1- \beta(z_t , y)}{2} = \frac{1 + y_1\sin (t)}{2} - i\frac{y_0\cos (t)}{2}\]
and then (1) follows from the definition of $F(\cdot \pm i0)$.

\nin (2) is  \cite[Lem 2.10]{NO20}. 

\nin (3) The first part of (3) follows from (1). For $H$-invariant distributions on $\dS^d$ the differential
equation in (2) reduces to the hypergeometrical differential equation on 
$\R\simeq A$ in the same way as \cite[p.~484]{He84}
by replacing $K$ by $H$. The dimension of the solution
space is $2$ and  hence the second part of (3) follows.

\nin (4) is \cite[Cor. 6.4]{OS23} where it is proved using the 
wave front set of the distributions.
\end{proof}

\begin{corollary} \label{cor:Diff} Writing
  $F (z) = {}_2 F_1 (\drho + \lambda ,\drho- \lambda ; \frac{d}{2}; z)$, we have:
\[        D^+_{\lambda,x } (y)  - D^-_{\lambda ,x} (y ) = \begin{cases} 0 & \text{if } y\not\in \overline{I(x)},\\ 
       F\Big(  \frac{1-\beta(x,y)}{2} - i0\Big)    - F\Big(\frac{1- \beta(x,y)}{2} +i0\Big)& \text{if } y\in I^+ (x) , \\
    -\Big(F\Big(  \frac{1-\beta(x,y)}{2} - i0\Big)   - F\Big(\frac{1- \beta(x,y)}{2} +i0\Big) \Big)& \text{if } y\in I^- (x)
          \end{cases}\]
\end{corollary}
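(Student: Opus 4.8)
The plan is to obtain the corollary by a direct case-by-case subtraction of the two families of formulas provided by Theorem~\ref{thm:psi}(1); essentially all the analytic work — existence of the boundary-value distributions $F(\cdot\pm i0)$ on each region and the identification of $D^\pm_{\lambda,x}$ with the relevant boundary values — has already been carried out there, so what remains is bookkeeping. Write $F(z) = \hgf(\drho+\lambda,\drho-\lambda;\tfrac d2;z)$, which is holomorphic on $\C\setminus[1,\infty)$, and recall from Lemma~\ref{lem:dsd-spacelike} that for $x,y\in\dS^d$ one has $\tfrac{1-\beta(x,y)}{2}\in[1,\infty)$ exactly when $\beta(x,y)\le -1$, i.e.\ when $y\in\overline{I(x)}$. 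Thus the three sets $\dS^d\setminus\overline{I(x)}$, $I^+(x)$ and $I^-(x)$ are pairwise disjoint open subsets of $\dS^d$ whose union has complement of measure zero, and it suffices to identify the distribution $D^+_{\lambda,x}-D^-_{\lambda,x}$ on each of them separately.

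First I would treat the region $y\notin\overline{I(x)}$: there the first line of each of the two displays in Theorem~\ref{thm:psi}(1) gives $D^+_{\lambda,x}(y) = F\!\big(\tfrac{1-\beta(x,y)}{2}\big) = D^-_{\lambda,x}(y)$, so the difference is the zero function. Next, on $I^+(x)$ the second lines give $D^+_{\lambda,x}(y) = F\!\big(\tfrac{1-\beta(x,y)}{2}-i0\big)$ and $D^-_{\lambda,x}(y) = F\!\big(\tfrac{1-\beta(x,y)}{2}+i0\big)$, whence $D^+_{\lambda,x}(y)-D^-_{\lambda,x}(y) = F\!\big(\tfrac{1-\beta(x,y)}{2}-i0\big)-F\!\big(\tfrac{1-\beta(x,y)}{2}+i0\big)$. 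Finally, on $I^-(x)$ the third lines interchange the two boundary values, so the same difference acquires an overall factor $-1$; this yields precisely the three cases in the statement.

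I do not expect a genuine obstacle here: the only point that deserves an explicit word is the legitimacy of writing the result region-wise — i.e.\ that the equalities of Theorem~\ref{thm:psi}(1) are equalities of distributions on the respective open sets and that reassembling them determines a single distribution on the dense open subset $\dS^d\setminus(\partial I^+(x)\cup\partial I^-(x))$ — and this is already part of the content of Theorem~\ref{thm:psi}(1) together with Theorem~\ref{thm:bd}. If desired, one can also note in passing that the quoted reference \cite[Thm.~4.3, Cor.~6.4]{OS23} gives, through the hypergeometric connection formula, an explicit evaluation of $F(s-i0)-F(s+i0)$ for $s>1$, turning the corollary into a closed formula; but for the statement as given no such computation is needed.
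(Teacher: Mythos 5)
Your proposal is correct and takes the same route the paper intends: the corollary is an immediate consequence of Theorem~\ref{thm:psi}(1) by case-by-case subtraction, which is why the paper states it without proof. Your remark about the region-wise decomposition being legitimate, and the pointer to \cite{OS23} for the explicit evaluation of the jump, are both accurate but not needed for the statement as given.
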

         
\subsection{The jump singularities} 

In this section we give more detailed expressions for the
right hand side of Corollary \ref{cor:Diff}. For that we need
to discuss the case where $d$ is even or odd separately. We also refer
to \cite[Thm.~A.1]{OS23}:

\begin{theorem}\label{thm:Main1} If {\bf $d \geq 3$ is odd} and we put
\[ c_\lambda
 =  2i  (-1)^{\frac{d-1}{2}}\frac{\Gamma(d /2)\Gamma((d-2)/2)}{\Gamma(\drho + \lambda)\Gamma(\drho-\lambda)},
 \quad a=\frac{1}{2}-\lambda, \quad\text{and} \quad  b= \frac{1}{2}+\lambda, \]
then 
 \[ \big( D^+_{\lambda , x}    - D^-_{\lambda ,x} \big)(y)  =
         c_\lambda \begin{cases}
            0 &\text{if } y \notin \overline{I(x)}\\
           \big(-\frac{1+\beta(x,y)}{2}\big)^{\frac{2-d}{2}}
            {}_2F_1\big(a,b ; \frac{4-d}{2} ;\frac{1+\beta(x,y)}{2}\big) 
          &\text{if } y \in I^+(x) \\
          -\big(-\frac{1+\beta(x,y)}{2}\big)^{\frac{2-d}{2}}
          {}_2F_1\big(a,b ;\frac{4-d}{2}  ;\frac{1+\beta(x,y)}{2}\big)
          &\text{if } y \in I^-(x).
        \end{cases}\]
\end{theorem}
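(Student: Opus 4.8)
The plan is to reduce the statement to Corollary~\ref{cor:Diff} and the classical connection formula for ${}_2F_1$ at its singular point $z=1$. Write $\alpha=\drho+\lambda$, $\beta=\drho-\lambda$, $\gamma=d/2$, so that $F(z)=\hgf(\alpha,\beta;\gamma;z)$ in the notation of Corollary~\ref{cor:Diff}. Since $d$ is odd, $\gamma-\alpha-\beta=\frac{2-d}{2}$ is a negative half-integer, hence not an integer, so the non-logarithmic connection formula
\[
F(z)=\frac{\Gamma(\gamma)\Gamma(\gamma-\alpha-\beta)}{\Gamma(\gamma-\alpha)\Gamma(\gamma-\beta)}\,\hgf(\alpha,\beta;\alpha+\beta-\gamma+1;1-z)+(1-z)^{\gamma-\alpha-\beta}\,\frac{\Gamma(\gamma)\Gamma(\alpha+\beta-\gamma)}{\Gamma(\alpha)\Gamma(\beta)}\,\hgf(\gamma-\alpha,\gamma-\beta;\gamma-\alpha-\beta+1;1-z)
\]
holds on the cut plane $\C\setminus[1,\infty)$. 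A direct substitution gives $\gamma-\alpha=\tfrac12-\lambda=a$, $\gamma-\beta=\tfrac12+\lambda=b$, $\alpha+\beta-\gamma+1=\tfrac d2$, $\gamma-\alpha-\beta+1=\tfrac{4-d}{2}$, and $\Gamma(\alpha+\beta-\gamma)=\Gamma(\tfrac{d-2}{2})$. The first summand, with parameter $\tfrac d2$ which is not a non-positive integer, is a power series in $1-z$ that is holomorphic in a neighbourhood of $z=1$, hence contributes no jump across $(1,\infty)$; all of the jump of $F$ there comes from the factor $(1-z)^{\gamma-\alpha-\beta}$ in the second summand.

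Next I would compute this jump and match signs. By Theorem~\ref{thm:psi}(1) and Corollary~\ref{cor:Diff}, for $y\in I^\pm(x)$ the distribution $(D^+_{\lambda,x}-D^-_{\lambda,x})(y)$ equals $\pm\big(F(w-i0)-F(w+i0)\big)$ with $w=\frac{1-\beta(x,y)}{2}\ge 1$, taken as a distributional boundary value. Put $t:=w-1=-\frac{1+\beta(x,y)}{2}\ge 0$ and $\mu:=\gamma-\alpha-\beta=\frac{2-d}{2}$. Using the principal branch, $1-(w\mp i0)=-t\pm i0$ and $(-t\pm i0)^{\mu}=t^{\mu}e^{\pm i\pi\mu}$, so that
\[
\big(1-(w-i0)\big)^{\mu}-\big(1-(w+i0)\big)^{\mu}=2i\sin(\pi\mu)\,t^{\mu},
\]
and hence
\[
F(w-i0)-F(w+i0)=2i\sin\!\Big(\pi\tfrac{2-d}{2}\Big)\,\frac{\Gamma(d/2)\Gamma((d-2)/2)}{\Gamma(\drho+\lambda)\Gamma(\drho-\lambda)}\,t^{(2-d)/2}\,\hgf\!\Big(a,b;\tfrac{4-d}{2};1-w\Big).
\]
Since $\sin\!\big(\pi\tfrac{2-d}{2}\big)=\sin\!\big(\tfrac{\pi d}{2}\big)=(-1)^{(d-1)/2}$ for odd $d$, the scalar prefactor is exactly $c_\lambda$; substituting $t=-\frac{1+\beta(x,y)}{2}$ and $1-w=\frac{1+\beta(x,y)}{2}$ and inserting the sign pattern from Corollary~\ref{cor:Diff} (zero off $\overline{I(x)}$, $+$ on $I^+(x)$, $-$ on $I^-(x)$) produces the asserted formula.

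The routine but delicate step is the $\pm i0$ and branch bookkeeping: one must align the convention of Theorem~\ref{thm:psi}(1) (where $D^+_{\lambda,x}$ uses the $-i0$ prescription on $I^+(x)$ and $+i0$ on $I^-(x)$) with the principal branch of $(1-z)^{\gamma-\alpha-\beta}$ in the connection formula, so that the sign of $c_\lambda$ and the $\pm$ in the second and third cases come out consistently; it is here that an error would most easily creep in. A secondary point worth a remark is that for $d\ge 5$ the power $t^{(2-d)/2}$ fails to be locally integrable near $t=0$, so the identity is first established as an identity of holomorphic functions on $\C\setminus[1,\infty)$ and then promoted to the stated equality of distributions by passing to boundary values, which is legitimate because Theorem~\ref{thm:psi} guarantees the existence of these limits in $C^{-\infty}(\dS^d)$. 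Alternatively, one may simply invoke \cite[Thm.~A.1]{OS23}, of which this is the special case $y=\be_1$ spelled out equivariantly.
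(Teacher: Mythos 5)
Your proof is correct and follows essentially the same route as the paper's: combine Corollary~\ref{cor:Diff} with the connection formula of Theorem~\ref{thm:gamma-limit}(iv), isolate the jump as coming solely from the factor $(1-z)^{\gamma-\alpha-\beta}$, and evaluate $\sin(\pi\mu)=(-1)^{(d-1)/2}$ for odd $d$. The only presentational difference is that you compute $F(w-i0)-F(w+i0)$ directly (the quantity Corollary~\ref{cor:Diff} actually asks for), whereas the paper computes $F(t+i\varepsilon)-F(t-i\varepsilon)$ and silently absorbs the resulting sign when invoking Corollary~\ref{cor:Diff}; both yield $c_\lambda$ with the correct sign.
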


\begin{proof} Let $\alpha = \frac{d-1}{2} +\lambda$, $\beta = \frac{d -1}{2} -\lambda$
  and $\gamma = \frac{d}{2}$, so that
    $\gamma - \alpha - \beta = 1 - d/2 \not\in \Z$.
By Theorem~\ref{thm:gamma-limit}(iv),
  we get  
\begin{align*}
&  \lim_{\varepsilon\searrow0}\left({_2F_1}(\alpha ,\beta ;\gamma ;t+i\varepsilon)-{_2F_1}(\alpha ,\beta ;\gamma ;t-i\varepsilon)\right)\\
  &=C(\lambda )\cdot {_2F_1}\Big(\frac{1}{2} -\lambda   ,
    \frac{1}{2} +\lambda    ;  \frac{4-d}{2}   ;1-t\Big)\lim_{\varepsilon\searrow0}
	\left((1-t-i\varepsilon)^{\frac{2-d}{2}}   -(1-t+i\varepsilon)^{ \frac{2-d}{2}  }  \right).
\end{align*}
To compute the limit, observe that for $\mu \in\C$   and $t > 1$:
\[ (1-t\pm i\varepsilon)^\mu  = |1-t\pm i\varepsilon|^\mu  e^{i\mu \arg(1-t\pm i\varepsilon)} \to |1-t|^\mu  e^{\pm i\pi\mu }=
(t-1)^\mu e^{\pm i\pi\mu}, \]
so we obtain
\begin{align*}
&  \lim_{\varepsilon\searrow0}\left( {_2F_1}(\alpha ,\beta;\gamma ;t+i\varepsilon)-{_2F_1}(\alpha ,\beta ;\gamma ;t-i\varepsilon)\right)\\
  & =2iC(\lambda ) \cdot {_2F_1}\Big(\frac{1}{2} -\lambda , \frac{1}{2}
    +\lambda ; \frac{4-d}{2} ;1-t\Big)(t-1)^{\frac{2-d}{2} }
    \underbrace{\sin\Big(\frac{d-2}{2} \pi\Big)}_{= (-1)^{\frac{d+1}{2}}}.
\end{align*}
The claim now follows  with Corollary~\ref{cor:Diff}
  with $t = \frac{1-\beta(x,y)}{2} > 1$.
\end{proof}

 \begin{theorem}\label{thm:Main2} If $d  \geq 2$ is {\bf even}, 
   then we put 
   \[c_\lambda
     = (-1)^{\frac{d}{2}}  \frac{2\pi i}{ \Gamma(\frac{d-1}{2} + \lambda)
       \Gamma(\frac{d-1}{2} - \lambda)},
   \]
and then   
 \[
  \big(D^+_{\lambda , x}    - D^-_{\lambda ,x}\big)(y)  =
         c_\lambda \begin{cases}
            0 &\text{if } y \notin \overline{I(x)}\\
            {}_2F_1\left(\drho + \lambda,\drho-\lambda; \frac{d}{2} ;\frac{1+\beta(x,y)}{2}\right) 
          &\text{if } y \in I^+(x) \\
     -  {}_2F_1\left(\drho + \lambda,\drho-\lambda; \frac{d}{2} ;\frac{1+\beta(x,y)}{2}\right) &\text{if } y \in I^-(x).
        \end{cases}\] 
 \end{theorem}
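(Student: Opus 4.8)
The plan is to run the argument of Theorem~\ref{thm:Main1} essentially verbatim, the only change being that for even $d$ the singularity of ${}_2F_1$ at the point $1$ is logarithmic instead of of power type. Write $\alpha=\drho+\lambda$, $\beta=\drho-\lambda$, $\gamma=\tfrac d2$, so that $\alpha+\beta=d-1$, $\gamma-\alpha=\tfrac12-\lambda$, $\gamma-\beta=\tfrac12+\lambda$, and $\gamma-\alpha-\beta=1-\tfrac d2$. Since $d$ is even, $1-\tfrac d2\in\Z_{\le 0}$ (it equals $0$ for $d=2$, the logarithmic limit already recorded in \eqref{eq:ConstTwo}, and is $\le -1$ for $d\ge 4$), while $\alpha+\beta-\gamma+1=\tfrac d2$; this last identity is exactly what makes the hypergeometric function governing the behaviour of ${}_2F_1(\alpha,\beta;\gamma;z)$ near $z=1$ the one with parameters $\drho\pm\lambda$ and $\tfrac d2$ appearing in the statement. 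We are thus in the integer (logarithmic) sub-case of Theorem~\ref{thm:gamma-limit}(iv), which for $t>1$ gives
\begin{align*}
\lim_{\varepsilon\searrow 0}\Big({}_2F_1(\alpha,\beta;\gamma;t+i\varepsilon)-{}_2F_1(\alpha,\beta;\gamma;t-i\varepsilon)\Big)
 = C(\lambda)\cdot{}_2F_1\Big(\drho+\lambda,\drho-\lambda;\tfrac d2;1-t\Big)\cdot\lim_{\varepsilon\searrow 0}\big(\log(1-t-i\varepsilon)-\log(1-t+i\varepsilon)\big),
\end{align*}
where $C(\lambda)$ is the coefficient of the term ${}_2F_1(\drho+\lambda,\drho-\lambda;\tfrac d2;1-z)\log(1-z)$ in the expansion of ${}_2F_1(\alpha,\beta;\gamma;z)$ at $z=1$. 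As $1-t<0$ one has $\log(1-t\mp i\varepsilon)\to\log|1-t|\mp i\pi$, so the last limit equals $-2\pi i$, and the left-hand side equals $-2\pi i\,C(\lambda)\,{}_2F_1\big(\drho+\lambda,\drho-\lambda;\tfrac d2;1-t\big)$.

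Next I would feed this into Corollary~\ref{cor:Diff} with $t=\tfrac{1-\beta(x,y)}{2}$. By Lemma~\ref{lem:dsd-spacelike}, $y\in\overline{I(x)}$ is equivalent to $\beta(x,y)\le -1$, i.e.\ to $t\ge 1$, and then $1-t=\tfrac{1+\beta(x,y)}{2}\le 0$; for $y\notin\overline{I(x)}$ the argument $\tfrac{1-\beta(x,y)}{2}$ stays in $(-\infty,1)$ off the cut, so $D^+_{\lambda,x}(y)=D^-_{\lambda,x}(y)$, which is the first line of the claim. For $y\in I^+(x)$, Corollary~\ref{cor:Diff} gives $(D^+_{\lambda,x}-D^-_{\lambda,x})(y)=F(t-i0)-F(t+i0)=-\lim_{\varepsilon\searrow 0}\big(F(t+i\varepsilon)-F(t-i\varepsilon)\big)$ for $F={}_2F_1(\drho+\lambda,\drho-\lambda;\tfrac d2;\cdot\,)$, which by the previous paragraph equals $2\pi i\,C(\lambda)\,{}_2F_1\big(\drho+\lambda,\drho-\lambda;\tfrac d2;\tfrac{1+\beta(x,y)}{2}\big)$; for $y\in I^-(x)$ one gets the negative of this. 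Setting $c_\lambda:=2\pi i\,C(\lambda)$ and simplifying $C(\lambda)$ by the reflection formula for $\Gamma$ gives the asserted closed form of $c_\lambda$, and assembling the three cases finishes the proof.

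The only step that requires actual computation is the identification of $C(\lambda)$: one passes to the limit in the non-degenerate $z\leftrightarrow 1-z$ connection formula for ${}_2F_1$ as $\gamma-\alpha-\beta$ tends to the non-positive integer $1-\tfrac d2$, so that its two then individually singular terms combine into the $\log(1-z)$-term, and then one regroups the resulting Gamma- and Pochhammer-factors, via $\Gamma(\tfrac{d-1}{2}+\lambda)\Gamma(\tfrac{d-1}{2}-\lambda)=(\tfrac12+\lambda)_{d/2-1}(\tfrac12-\lambda)_{d/2-1}\,\Gamma(\tfrac12+\lambda)\Gamma(\tfrac12-\lambda)$, to reach $c_\lambda$; for $d=2$ this collapses to \eqref{eq:ConstTwo}. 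I expect this $\Gamma$-bookkeeping, which must be run uniformly in even $d\ge 2$ and is the place where the $d\ge 4$ subtleties of the degenerate hypergeometric connection formula enter, to be the only nontrivial part; the geometric dictionary between the slit $[1,\infty)$ in the variable $\tfrac{1-\beta(x,y)}{2}$ and the causal cones $I^\pm(x)$, and the sign conventions for $D^\pm_{\lambda,x}$, are exactly as in the proof of Theorem~\ref{thm:Main1}.
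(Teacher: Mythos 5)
Your overall plan matches the paper's in outline: for even $d$ the only discontinuity of ${}_2F_1(\drho+\lambda,\drho-\lambda;\tfrac d2;\cdot)$ across $[1,\infty)$ comes from a $\log(1-z)$ term, and the jump is then $\pm 2\pi i$ times the coefficient of that term; your sign bookkeeping (including the flip $F(t-i0)-F(t+i0)=-[F(t+i0)-F(t-i0)]$ coming from Corollary~\ref{cor:Diff}) is correct, in fact cleaner than the paper's, where two compensating sign mistakes make the final constant come out right.

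However, there is a genuine gap at precisely the place where the content lives. You invoke ``the integer (logarithmic) sub-case of Theorem~\ref{thm:gamma-limit}(iv),'' but (iv) as stated applies only when $\gamma-\alpha-\beta\notin\Z$; with $\gamma-\alpha-\beta=1-\tfrac d2\in\Z_{\le 0}$ both terms of (iv) individually diverge (the Gamma-factors and the ${}_2F_1$ with third parameter in $-\N_0$), so what you actually need is a new identity, either obtained by a careful L'H\^opital-type limit or, as the paper does, quoted directly. The paper's route is to first use Lemma~\ref{lem:jump-hypergeo}(i) to rewrite ${}_2F_1(\drho+\lambda,\drho-\lambda;\tfrac d2;t)=(1-t)^{1-\frac d2}{}_2F_1(\shalf+\lambda,\shalf-\lambda;\tfrac d2;t)$, which puts the third parameter into the ``nice'' form $\alpha'+\beta'+n$ with $n=\tfrac d2-1\ge 0$, and then to read off the $\log$-coefficient from the explicit degenerate expansion in Lemma~\ref{lem:jump-hypergeo}(ii) (quoted from Lebedev--Silverman). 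Your proposal replaces this two-step direct computation with a sketch of a limiting argument; that path can be made to work, but it is not carried out, and determining $C(\lambda)=(-1)^{d/2}/\bigl(\Gamma(\drho+\lambda)\Gamma(\drho-\lambda)\bigr)$ is the entire nontrivial content of the theorem. You should also note that the degenerate expansion is only guaranteed for $|z-1|<1$; the paper then extends the resulting identity to all of $I^+(\be_1)$ by observing that both sides are analytic $H$-invariant solutions of the same hypergeometric ODE, a step you do not address (your limit-of-connection-formula approach would in principle sidestep it, since (iv) is global, but that again requires actually doing the limit).
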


 \begin{prf}
We apply Lemma~\ref{lem:jump-hypergeo}(i)
with $n = \frac{d}{2}- 1 = \frac{d-1}{2}-\shalf$ to obtain with
\[ \alpha = \frac{d-1}{2} + \lambda, \quad 
  \beta = \frac{d-1}{2} - \lambda \quad
  n = \frac{d}{2} - 1 \]
that   
 \begin{equation}\label{eq:tr3}
   {}_2F_1\big(\drho + \lambda  , \drho -\lambda ;
   {\textstyle\frac{d}{2}}; t\big)=
 (1-t)^{1- \frac{d}{2}} {}_2F_1\big(\shalf+\lambda,
 \shalf-\lambda; {\textstyle\frac{d}{2}}; t\big).
\end{equation}
Now we use Lemma~\ref{lem:jump-hypergeo}(ii)
with 
\[ \alpha' := \alpha - n = \shalf + \lambda , \quad 
  \beta' := \beta- n = \shalf -\lambda, \quad
  \alpha' + \beta' + n = \alpha + \beta - n = \frac{d}{2}\]
and 
\[ a(n)
  = \frac{\Gamma(\alpha'+\beta' + n)}{\Gamma(\alpha'+n)
  \Gamma(\beta'+n)}
  = \frac{\Gamma(\frac{d}{2})}{\Gamma(\alpha)
  \Gamma(\beta)}.
= \frac{\Gamma(n+1)}{\Gamma(\frac{d-1}{2} + \lambda)
  \Gamma(\frac{d-1}{2} - \lambda)}. 
\]
We thus obtain 
\begin{align*}
\lim_{\eps \to 0} \Big( {}_2F_1&\Big(\shalf+ \lambda  , \shalf-\lambda ;
  \frac{d}{2}; t + i\eps\Big) -  {}_2F_1\Big(\shalf+ \lambda,\shalf -\lambda ;
  \frac{d}{2}; t - i\eps\Big)\Big)\\
 &  =  a(n)\lim_{\eps\to 0} \sum_{k=0}^\infty \frac{(\alpha' +n)_k(\beta'+n)_k}{(n+k)!k! }
          \Big[\psi (k+1) + \psi (n+k+1)\\
         &\qquad  - \psi(\alpha' +n +k ) - \psi(\beta' +n+k)\Big]((1-(t+i\eps ))^k - (1-(t-i\eps ))^k)\\
&\qquad  -  a(n)(t-1)^n\lim_{\eps\to 0}\sum_{k=0}^\infty \frac{(\alpha' +n)_k(\beta'+n)_k}{(n+k)!k! }\times  \\
&\quad \times   (\log (1-(t+i\eps))
(1-(t+i\eps))^ k   -\log (1-(t-i\eps))(1-(t-i\eps))^k\\
   &= -  a(n)(t-1)^n
  \Big(\sum_{k=0}^\infty \frac{(\alpha' +n)_k(\beta'+n)_k}{(n+1)_k \Gamma(n+1)k! }(1-t)^k\Big) 
       \Big(\lim_{\eps\to 0}(\underbrace{\log (1-(t+i\eps))}_{\to\pi i}
     -   \underbrace{\log (1-(t-i\eps))}_{\to - \pi i}\Big)
\\
 &= -\frac{2\pi i\cdot  a(n)(t-1)^n}
 {\Gamma (n+1)}\cdot {}_2F_1(\alpha' +n,\beta' +n; n+1; 1-t) \\
 &= (-1)^{n+1}\frac{2\pi i (1-t)^n}{\Gamma(\frac{d-1}{2} + \lambda)
  \Gamma(\frac{d-1}{2} - \lambda)} \cdot 
   {}_2F_1(\alpha,\beta;   n+1; 1-t)  \\ 
 &= (-1)^{d/2}\frac{2\pi i (1-t)^n}{\Gamma(\frac{d-1}{2} + \lambda)
  \Gamma(\frac{d-1}{2} - \lambda)} \cdot  
   {}_2F_1\Big(\frac{d-1}{2} + \lambda, \frac{d-1}{2} - \lambda; \frac{d}{2};  1-t \Big) \\
 &= (1-t)^n c_\lambda \cdot 
   {}_2F_1\Big(\frac{d-1}{2} + \lambda, \frac{d-1}{2} - \lambda; \frac{d}{2};  1-t \Big).
\end{align*} 
Now the assertion follows with $1 - t = \frac{1 + \beta(x,y)}{2}$
  for $|1-t| < 1$, resp., for $|\beta(x,y) + 1| < 2$.
  By $G$-invariance of the kernel, it suffices to consider $x = \be_1$.
Then the distribution $ D_{\lambda, \be_1}^+ - D^-_{\lambda, \be_1}$
is $H$-invariant and satisfies the differential equation
\[ \Delta (D_{\lambda, \be_1}^+ - D^-_{\lambda, \be_1})
  =(\lambda^2-\left(\drho \right)^2)(D_{\lambda, \be_1}^+ - D_{\lambda, \be_1}^-)\]
on the open set $I^+(\be_1)$. In particular
$t\mapsto (D_{\lambda, \be_1}^+ - D_{\lambda, \be_1}) (ha_t.\be_1)$ satisfies the hypergeometric 
differential equation  and hence is analytic on $\R_+$. The same
holds for the function
\[ (1-t)^n c_\lambda \cdot   {}_2F_1\Big(\frac{d-1}{2} + \lambda, \frac{d-1}{2} - \lambda; \frac{d}{2}; 1-t\Big).\]
Hence those functions agree on $I^+(\be_1)$.
 \end{prf}

 Recall that we are only considering
 \[\lambda \in i\R_{\ge 0} \cup (0,(d-1)/2).\] 
 Combining Theorems \ref{thm:Main1}  and  \ref{thm:Main2}, 
we get the following theorem. 

\begin{theorem}{\rm(Huygens' Principle)} \label{Thm:HP} If  $d \geq 2$ and
  $\lambda \in i\R_{\ge 0} \cup (0,(d-1)/2)$, then the distribution
$D^+_{\lambda ,x} - D^-_{\lambda ,x}$ is supported on  $\partial \overline{I(x)}$ if and only if $d\ge 4$ is even and
$\lambda \in \left\{\frac{1+2k}{2}\: k = 0, 1,\ldots ,
 \lfloor\frac{d}{2} - 2\rfloor \right\}$.
\end{theorem}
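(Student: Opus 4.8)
The plan is to compute the restriction of $D^+_{\lambda,x}-D^-_{\lambda,x}$ to the open complement of $\partial\overline{I(x)}$ and to show it vanishes exactly in the asserted cases. Since this distribution is supported in $\overline{I(x)}$ and, by $G$-covariance, the property of being supported on $\partial\overline{I(x)}$ is independent of $x\in\dS^d$, we may take $x=\be_1$. By Corollary~\ref{cor:Diff} the restriction is $0$ on $\dS^d\setminus\overline{I(x)}$ (there $\tau:=\tfrac{1-\beta(x,y)}2<1$, and $F(z):={}_2F_1(\drho+\lambda,\drho-\lambda;\tfrac d2;z)$ is holomorphic off $[1,\infty)$), while on $I^\pm(x)$ it equals $\pm\bigl(F(\tau-i0)-F(\tau+i0)\bigr)$ with $\tau\ge1$; thus the whole question is the nature of the singularity of $F$ at $z=1$, and Theorems~\ref{thm:Main1} and~\ref{thm:Main2} already record it. Suppose first that $d$ is odd, $d\ge3$. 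Then $\tfrac d2-(\drho+\lambda)-(\drho-\lambda)=1-\tfrac d2$ is a half-integer, the connection formula for $F$ at $1$ contains a term with the non-integer exponent $(1-z)^{1-d/2}$, and the jump of $F$ across $\tau>1$ is the explicit expression of Theorem~\ref{thm:Main1}: the nonzero constant $c_\lambda=2i(-1)^{(d-1)/2}\Gamma(\tfrac d2)\Gamma(\tfrac{d-2}2)/\bigl(\Gamma(\drho+\lambda)\Gamma(\drho-\lambda)\bigr)$ times $(\tau-1)^{1-d/2}\,{}_2F_1(\tfrac12-\lambda,\tfrac12+\lambda;\tfrac{4-d}2;\tfrac{1+\beta(x,y)}2)$. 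All four Gamma-values are finite and nonzero for $\lambda\in i\R_{\ge0}\cup(0,(d-1)/2)$ (then $\Re(\drho\pm\lambda)>0$), so away from the light cone this is a genuine nonzero real-analytic function (blowing up like $(\tau-1)^{1-d/2}$ as $\tau\to1^+$); hence $D^+_{\lambda,x}-D^-_{\lambda,x}$ does not vanish on $I^+(x)$ and is not supported on $\partial\overline{I(x)}$. This excludes every odd $d$.

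Now let $d$ be even and write $n=\tfrac d2-1\in\Z_{\ge0}$. Combining the Euler transformation $F(z)=(1-z)^{-n}\,{}_2F_1(\tfrac12-\lambda,\tfrac12+\lambda;\tfrac d2;z)$ with the logarithmic connection formula for ${}_2F_1(\tfrac12-\lambda,\tfrac12+\lambda;\tfrac d2;\cdot)$ at $z=1$ (applicable since $n\in\Z$; this is Lemma~\ref{lem:jump-hypergeo}) yields, near $z=1$,
\[
F(z)=\sum_{j=1}^{n}c_j(1-z)^{-j}+C_\lambda\log(1-z)\bigl(1+O(1-z)\bigr)+g(z),\qquad
C_\lambda=\frac{(-1)^{n}\Gamma(\tfrac d2)}{\Gamma(\tfrac12+\lambda)\,\Gamma(\tfrac12-\lambda)\,n!},
\]
with $g$ holomorphic at $1$ and the first sum empty for $d=2$. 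Forming $F(\tau-i0)-F(\tau+i0)$ for $\tau>1$: the single-valued terms $(1-z)^{-j}$ have $\pm i0$-boundary values differing by a multiple of $\delta^{(j-1)}(\tau-1)$ and so contribute only on $\{\tau=1\}$, i.e.\ on $\partial\overline{I(x)}$; $g$ contributes nothing; and $\log(1-z)$ jumps by $\pm2\pi i$, so the logarithmic term contributes $\pm2\pi i\,C_\lambda\bigl(1+O(\tau-1)\bigr)$, which is real-analytic near $\tau=1$. Consequently $D^+_{\lambda,x}-D^-_{\lambda,x}$ is supported on $\partial\overline{I(x)}$ if and only if $C_\lambda=0$, i.e.\ if and only if $\Gamma(\tfrac12+\lambda)$ or $\Gamma(\tfrac12-\lambda)$ has a pole, i.e.\ $\lambda\in\tfrac12+\Z$.

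It remains to intersect with the admissible range and to check non-triviality. For $\lambda\in i\R_{\ge0}\cup(0,(d-1)/2)$ the condition $\lambda\in\tfrac12+\Z$ means $\lambda=k+\tfrac12$ with $k\in\Z_{\ge0}$ and $k+\tfrac12<\tfrac{d-1}2$, i.e.\ $0\le k\le\tfrac d2-2$; this set is non-empty exactly for $d\ge4$. For $d=2$ one has $n=0$, hence no polar part, and $C_\lambda=1/\bigl(\Gamma(\tfrac12+\lambda)\Gamma(\tfrac12-\lambda)\bigr)\ne0$ (here $\Gamma(\tfrac12\pm\lambda)$ are finite), so the jump is never supported on $\partial\overline{I(x)}$, consistently with the restriction $d\ge4$. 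Finally, for even $d\ge4$ and $\lambda=k+\tfrac12$ with $0\le k\le\tfrac d2-2$ the jump is not identically zero: then $\tfrac12-\lambda=-k$, so ${}_2F_1(\tfrac12-\lambda,\tfrac12+\lambda;\tfrac d2;\cdot)={}_2F_1(-k,k+1;\tfrac d2;\cdot)$ is a polynomial of degree $k\le n-1$ with nonvanishing leading coefficient and value $\Gamma(\tfrac d2)\Gamma(\tfrac d2-1)/\bigl(\Gamma(\tfrac d2+k)\Gamma(\tfrac d2-1-k)\bigr)\ne0$ at $z=1$ (both remaining Gamma-factors are finite, since $\tfrac d2+k\ge2$ and $\tfrac d2-1-k\ge1$); hence not all $c_j$ vanish and the resulting $\delta$-type distribution on $\partial\overline{I(x)}$ is nonzero. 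Combining with the odd case, $D^+_{\lambda,x}-D^-_{\lambda,x}$ is supported on $\partial\overline{I(x)}$ exactly when $d\ge4$ is even and $\lambda\in\{\tfrac{1+2k}2\:k=0,\dots,\lfloor\tfrac d2-2\rfloor\}$.

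The step I expect to be the main obstacle is the even-$d$ bookkeeping in the displayed expansion: one must separate cleanly the negative-integer powers $(1-z)^{-j}$ — single-valued, hence producing only $\delta$-derivatives supported on the light cone — from the $\log(1-z)$-term, which is the sole source of an interior contribution to the jump, and must pin down the single Gamma-quotient $C_\lambda$ as the exact obstruction. Once this is in place, the odd-$d$ case, the $d=2$ degeneracy and the non-vanishing statements are routine; the only care needed is to keep the Gamma-arguments $\drho\pm\lambda,\ \tfrac12\pm\lambda,\ \tfrac d2+k$ and $\tfrac{d-2}2-k$ in the appropriate half-planes throughout the admissible range of~$\lambda$.
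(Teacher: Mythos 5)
Your proof is correct, and it follows essentially the same strategy as the paper: reduce to $x=\be_1$, use the connection formula for $\hgf$ at $z=1$, and determine when the coefficient of the logarithmic singularity vanishes. The odd-$d$ case, the $d=2$ degeneracy, the final bookkeeping of the admissible $\lambda$'s, and the non-vanishing of the resulting $\delta$-type distribution on the light cone are all handled cleanly. One genuinely useful feature of your write-up is the explicit separation, in the even case, of the polar terms $(1-z)^{-j}$ (whose $\pm i0$-boundary values differ only by $\delta^{(j-1)}$-distributions on $\{z=1\}$, hence live on $\partial\overline{I(x)}$) from the single $\log(1-z)$-term (the only source of an interior contribution on $I^\pm(x)$). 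The paper's Theorem~\ref{thm:Main2} only records the restriction to $I^\pm(x)$, so this is made implicit there.

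There is, however, one point worth flagging. You arrive at a log-coefficient $C_\lambda$ proportional to $\bigl(\Gamma(\tfrac12+\lambda)\Gamma(\tfrac12-\lambda)\bigr)^{-1}$, which indeed vanishes precisely when $\lambda\in\tfrac12+\Z$, giving the claimed list of exceptional values. The paper's proof of the statement also uses the condition $\tfrac12-\lambda\in-\N_0$. But the prefactor displayed in Theorem~\ref{thm:Main2} is $c_\lambda=(-1)^{d/2}\,2\pi i\,\bigl(\Gamma(\drho+\lambda)\Gamma(\drho-\lambda)\bigr)^{-1}$, and this expression is \emph{not} zero for $\lambda=\tfrac12+k$ with $0\le k\le\tfrac d2-2$ (e.g.\ $d=4$, $\lambda=\tfrac12$ gives $\Gamma(2)\Gamma(1)=1$). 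The discrepancy traces back to Lemma~\ref{lem:jump-hypergeo}(ii): the prefactor $a(n)=\Gamma(\alpha'+\beta'+n)/\bigl(\Gamma(\alpha'+n)\Gamma(\beta'+n)\bigr)$ is the correct normalization for the \emph{finite} sum, whereas the correct normalization of the logarithmic sum is $\Gamma(\alpha'+\beta'+n)/\bigl(\Gamma(\alpha')\Gamma(\beta')\bigr)$; factoring the same $a(n)$ out of both yields the wrong Gamma-arguments in $c_\lambda$. Your independent rederivation from the standard connection formula produces the consistent coefficient and hence the consistent vanishing condition — so it in fact repairs this issue rather than introducing a gap.
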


\begin{proof}  For $\lambda \in i\R^\times$, we have $c_\lambda\not=0$,
    independently
  of the parity of $d$. 
For $\lambda \in (0, \frac{d-1}{2})$,
we have $c_\lambda \not= 0 $ if $d$ is odd. For $d $ even we have 
$c_\lambda = 0$ if and only if $\frac{1}{2}- \lambda\in -\N_0$ or
\[\lambda \in \Big(\frac{1}{2} +\N_0\Big)\cap \Big(0, \frac{d-1}{2}\Big)\]
which implies the statement. 
\end{proof}

\begin{remark} Note that  the values for $\lambda$ in Theorem \ref{Thm:HP}
are the same as the knot points in \cite[p. 7]{Ya13}.
\end{remark}
 
 \subsection{Evaluation of the constants}

 We now evaluate the constants that appear in the limit in \eqref{eq:ConstOne} and \eqref{eq:ConstTwo}. For the
 value of the $\Gamma$-function see \cite[5.4]{DLMF}. 

 \begin{lemma}  
   Let $\lambda = is  \in i\R_+$. Then   
\[\Gamma (\drho + \lambda ) \Gamma (\drho -\lambda) = |\Gamma (\drho + \lambda)|^2=\begin{cases}
  \pi \frac{\prod_{j=0}^{k-1}(s^2+ (j+1/2)^2))}{\cosh \pi s}   & \text{ if } d= 2(k+1) \text{ is even }\\
 \frac{\pi \prod_{j=0}^{k-1}  (s^2 +j^2)  }{s\sinh (\pi s)}  & \text{ if } d= 2k + 1 \text{ is odd. }
\end{cases} \]

\end{lemma}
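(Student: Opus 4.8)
The plan is to evaluate $\Gamma(\drho+\lambda)\Gamma(\drho-\lambda)$ for $\lambda=is$ with $s>0$ by reducing, via the recursion $\Gamma(w+1)=w\Gamma(w)$, to one of the two "base" cases $d=2$ (i.e. $\drho=\shalf$) or $d=1$ (i.e. $\drho=0$), where the product collapses to an elementary function through the reflection formula. Write $\drho=\frac{d-1}{2}$. Since $\lambda=is$ is purely imaginary, $\drho-\lambda=\overline{\drho+\lambda}$, so $\Gamma(\drho+\lambda)\Gamma(\drho-\lambda)=\Gamma(\drho+is)\overline{\Gamma(\drho+is)}=|\Gamma(\drho+is)|^2$, which is the first claimed equality and also explains why the answer is manifestly positive.

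For the \textbf{even case} $d=2(k+1)$, so $\drho=k+\shalf$, I would iterate
\[
\Gamma\Big(k+\shalf+is\Big)=\prod_{j=0}^{k-1}\Big(j+\shalf+is\Big)\cdot\Gamma\Big(\shalf+is\Big),
\]
and similarly for $\Gamma(k+\shalf-is)$, so that
\[
\Gamma\Big(\drho+is\Big)\Gamma\Big(\drho-is\Big)=\prod_{j=0}^{k-1}\Big(\big(j+\shalf\big)^2+s^2\Big)\cdot\Gamma\Big(\shalf+is\Big)\Gamma\Big(\shalf-is\Big).
\]
Then I invoke the reflection formula $\Gamma(z)\Gamma(1-z)=\pi/\sin(\pi z)$ with $z=\shalf+is$: since $1-z=\shalf-is$ and $\sin\big(\pi(\shalf+is)\big)=\cos(\pi is)=\cosh(\pi s)$, this gives $\Gamma(\shalf+is)\Gamma(\shalf-is)=\pi/\cosh(\pi s)$, yielding exactly the stated formula with the product $\prod_{j=0}^{k-1}(s^2+(j+1/2)^2)$ over the factor $\cosh\pi s$.

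For the \textbf{odd case} $d=2k+1$, so $\drho=k$, the iteration must start one step differently because $\Gamma(k+is)=\prod_{j=0}^{k-1}(j+is)\cdot\Gamma(is)$ only for $k\ge 1$, with the empty product when $k=0$. This gives
\[
\Gamma(k+is)\Gamma(k-is)=\prod_{j=0}^{k-1}\big(j^2+s^2\big)\cdot\Gamma(is)\Gamma(-is).
\]
Here I use $\Gamma(is)\Gamma(1-is)=\pi/\sin(\pi is)=\pi/(i\sinh(\pi s))$ together with $\Gamma(1-is)=-is\,\Gamma(-is)$, so $\Gamma(is)\Gamma(-is)=\dfrac{\pi}{(-is)\cdot i\sinh(\pi s)}=\dfrac{\pi}{s\sinh(\pi s)}$, which is the claimed $\frac{\pi\prod_{j=0}^{k-1}(s^2+j^2)}{s\sinh(\pi s)}$ (the $j=0$ factor making it clear how the $k=0$, $d=1$ case degenerates). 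The only genuine care needed is bookkeeping of the index ranges and the $k=0$ edge cases in each parity, plus correctly tracking the sign and the $i$'s in the reflection-formula manipulation for the odd case; there is no real obstacle beyond this routine verification.
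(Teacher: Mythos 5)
Your proof is correct and follows essentially the same route as the paper: reduce via the recursion $\Gamma(w+1)=w\Gamma(w)$ to the base cases $\Gamma(is)\Gamma(-is)$ (odd $d$) or $\Gamma(\shalf+is)\Gamma(\shalf-is)$ (even $d$) and then apply the reflection formula. The paper simply quotes the values $|\Gamma(is)|^2=\pi/(s\sinh\pi s)$ and $|\Gamma(\shalf+is)|^2=\pi/\cosh\pi s$ from DLMF where you derive them; this is a cosmetic difference only.
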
 \begin{proof}  Assume first that $d= 2k+1$ is odd. Then $\drho = k$ and
  \[\Gamma (\drho \pm \lambda ) = (\pm \lambda  +k -1)
    \cdots (\pm \lambda) \Gamma ( \pm\lambda).\]
Hence with $\lambda = is$:
\begin{align*}
  \Gamma (\drho +\lambda )\Gamma (\drho  -\lambda)  & = |\lambda + k -1|^2\cdots |\lambda |^2 |\Gamma (\lambda )|^2
   = \frac{\pi}{s\sinh (\pi s)} \prod_{j = 0}^{k-1} (s^2 + j^2)  .
 \end{align*}
 For $d = 2(k+1)$ even, we likewise obtain with $\drho  = k +1/2$:
 \begin{align*}
   \Gamma (\drho +\lambda )\Gamma (\drho  -\lambda)  & = \prod_{j=0}^{k-1}(s^2+ (j+1/2)^2))  |\Gamma (\lambda +1/2)|^2
                                                     = \pi \frac{\prod_{j=0}^{k-1}(s^2+ (j+1/2)^2))}{\cosh \pi s}.
 \qedhere
 \end{align*}
\end{proof}

\begin{theorem} If $\lambda =is \in i\R_+$, then
 the spherical function from \eqref{eq:philambda-def} satisfies
\[\lim_{t\to  \pi-} \cos (t/2)^{ d-2}\varphi_\lambda (a_{it}) =
2^{k} \begin{cases} \cosh (\pi s)\frac{ k! (k-1)!} {\pi \prod_{j=0}^{k-1}(s^2+ (j+1/2)^2)) } & \text{ if } d = 2(k+1),  k \in \N \\
s \sinh (\pi s) \frac{ \prod_{j=1}^{k-1} (\frac{1}{2} + j) \prod_{j=1}^{k-2} (\frac{1}{2} + j)}{  \prod_{j=0}^{k-1}  (s^2 +j^2) } &  \text{ if } d = 2k+1, k\in \N.
\end{cases} \]
\end{theorem}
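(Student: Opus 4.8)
The plan is to read the limit off directly from the hypergeometric asymptotics already recorded in \eqref{eq:ConstOne}, and then to turn the resulting quotient of $\Gamma$-values into the advertised elementary expression by a case split on the parity of $d$, the denominator being handled by the preceding Lemma.

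First I would specialize \eqref{eq:ConstOne}---equivalently, apply Euler's transformation ${}_2F_1(\alpha,\beta;\gamma;\xi)=(1-\xi)^{\gamma-\alpha-\beta}{}_2F_1(\gamma-\alpha,\gamma-\beta;\gamma;\xi)$, here with $\gamma-\alpha-\beta=\tfrac{2-d}{2}$, followed by Gauss's theorem ${}_2F_1(a,b;c;1)=\Gamma(c)\Gamma(c-a-b)/\big(\Gamma(c-a)\Gamma(c-b)\big)$, which is legitimate for $d>2$---to the curve $z=z_t:=\exp(ith).i\be_0$ and $w=i\be_0$. Since $\cosh(it)=\cos t$ and $\overline{i\be_0}=-i\be_0$, one has $\beta(z_t,\overline{i\be_0})=\cos t\in(-1,1)$, hence $\tfrac{1+\beta(z_t,\overline{i\be_0})}{2}=\cos^2(t/2)$, and by \eqref{eq:philambda-def} we have $Q_\lambda(z_t,i\be_0)=\varphi_\lambda(a_{it})$. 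Therefore
\[
 \cos(t/2)^{d-2}\,\varphi_\lambda(a_{it})
 =\Big(\tfrac{1+\beta(z_t,\overline{i\be_0})}{2}\Big)^{\frac{d-2}{2}}Q_\lambda(z_t,i\be_0),
\]
and as $t\to\pi^-$ the argument $\beta(z_t,\overline{i\be_0})$ tends to $-1$ through the real interval $(-1,1)$, so \eqref{eq:ConstOne} applies and gives
\[
 \lim_{t\to\pi^-}\cos(t/2)^{d-2}\,\varphi_\lambda(a_{it})
 =\frac{\Gamma(\tfrac d2)\,\Gamma(\tfrac{d-2}{2})}{\Gamma(\drho+\lambda)\,\Gamma(\drho-\lambda)}.
\]
As a sanity check, for $d=3$ this reproduces $\tfrac{\sinh(\pi s)}{2s}$, which one also gets from the closed form $\varphi_\lambda(\exp th)=\lambda^{-1}\sinh(\lambda t)/\sinh t$ in the examples of Section~\ref{sec:8}.

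It then remains to put $\lambda=is$ and simplify the right-hand side. Its denominator $\Gamma(\drho+is)\Gamma(\drho-is)=|\Gamma(\drho+is)|^2$ is exactly what the preceding Lemma evaluates, namely $\pi\prod_{j=0}^{k-1}\big(s^2+(j+\tfrac12)^2\big)/\cosh(\pi s)$ when $d=2(k+1)$ and $\pi\prod_{j=0}^{k-1}(s^2+j^2)/\big(s\sinh(\pi s)\big)$ when $d=2k+1$. For the numerator I would use $\Gamma(k+1)=k!$ and $\Gamma(k)=(k-1)!$ in the even case, and in the odd case expand $\Gamma(k+\tfrac12)$ and $\Gamma(k-\tfrac12)$ by the recursion $\Gamma(x+1)=x\Gamma(x)$ down to $\Gamma(\tfrac12)=\sqrt{\pi}$, so that $\Gamma(\tfrac d2)\Gamma(\tfrac{d-2}{2})$ becomes $\pi$ times a product of half-integers; regrouping these into the two finite products $\prod_{j=1}^{k-1}(\tfrac12+j)$ and $\prod_{j=1}^{k-2}(\tfrac12+j)$ occurring in the statement and keeping track of the accompanying power of $2$ (conveniently via $\Gamma(k+\tfrac12)=\tfrac{(2k)!}{4^k k!}\sqrt{\pi}$) then yields the stated right-hand sides, the factors of $\pi$ cancelling against those coming from the Lemma.

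The reduction in the first step is routine once \eqref{eq:ConstOne} is available. The part that requires care is the final simplification: matching the half-integer $\Gamma$-values in the odd case to the precise products in the statement, and pinning down the overall power of $2$. I expect this elementary but fiddly bookkeeping to be the only genuine obstacle.
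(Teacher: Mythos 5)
Your proposal is correct and follows essentially the same route as the paper: specialize \eqref{eq:ConstOne} to $z=a_{it}.i\be_0$, $w=i\be_0$ (so that $\beta(z,\oline w)=\cos t$ and the prefactor becomes $\cos(t/2)^{d-2}$), obtain the $\Gamma$-quotient $\Gamma(\tfrac d2)\Gamma(\tfrac{d-2}{2})/\big(\Gamma(\drho+\lambda)\Gamma(\drho-\lambda)\big)$, then plug in $\lambda=is$, use the preceding Lemma for the denominator, and evaluate the numerator by parity. Your added sanity check at $d=3$ and the explicit remark about tracking the power of $2$ via $\Gamma(k+\tfrac12)=\tfrac{(2k)!}{4^k k!}\sqrt{\pi}$ are worthwhile elaborations of the paper's terse final step, but the argument is the same.
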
 

\begin{proof}  First we apply \eqref{eq:ConstOne} with
  $z = a_{it}.i\be_0$ and $w = i\be_0$ to get with
  \[ \beta(z,\oline w) = \cosh(it) = \cos(t) \to -1 \]
    and \eqref{eq:philambda-def} that  
\begin{align*}
 \lim_{t \to \pi-} \cos(t/2)^{d-2} \phi_\lambda(a_{it})
  &= \lim_{t \to \pi-} \Big(\frac{1 + \cos t}{2}\Big)^{\frac{d-2}{2}}
    Q_\lambda(a_{it}.i\be_0, i \be_0)
=    \frac{\Gamma(\frac{d}{2}) \Gamma(\frac{d-2}{2})}
   {\Gamma (\frac{d-1}{2}+ \lambda )\Gamma (\frac{d-1}{2}-\lambda)}.
\end{align*}

It remains to evaluate this constant.
If $d = 2(k+1) $ is even, then $\Gamma (d/2) = \Gamma (k+1) = k!$ and
$\Gamma ((d-2)/2) = \Gamma (k) =(k-1)!$. If $d = 2k +1$ is odd, then 
\[\Gamma (d/2) = \Gamma (k+1/2)
  =  \Gamma (1/2) \prod_{j=1}^{k-1} \Big(\frac{1}{2} + j\Big)
  =\sqrt{\pi } \prod_{j=1}^{k-1} \Big(\frac{1}{2} + j\Big)\]
and
\[\Gamma ((d-2)/2) = \Gamma ((k-1)+1/2)
  = \Gamma (1/2)\prod_{j=1}^{k-2} \Big(\frac{1}{2} + j\Big)
  =\sqrt{\pi } \prod_{j=1}^{k-2} \Big(\frac{1}{2} + j\Big)\]
and the claim follows 
\end{proof}

\begin{corollary} If $\lambda \in i\R$, then
\[\cos(t/2)^{d-2} \varphi_\lambda (a_t)   \stackrel{|\lambda|\to \infty}{\longrightarrow } \infty.\]
\end{corollary}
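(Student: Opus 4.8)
The plan is to read the quantity off the preceding Theorem and then estimate the resulting constant as $|\lambda|\to\infty$. Since the hypergeometric formula \eqref{eq:philambda-def} is invariant under $\lambda\mapsto-\lambda$, we have $\varphi_\lambda=\varphi_{-\lambda}$, so it suffices to take $\lambda=is$ with $s\to+\infty$. Applying \eqref{eq:ConstOne} with $z=a_{it}.i\be_0$ and $w=i\be_0$ — so that $\beta(z,\oline w)=\cosh(it)=\cos t\to-1$ as $t\to\pi-$, and $\big(\tfrac{1+\beta(z,\oline w)}{2}\big)^{\frac{d-2}{2}}=\cos(t/2)^{d-2}$ for $t\in(0,\pi)$ — the limit studied in the preceding Theorem equals
\[ C_d(s):=\lim_{t\to\pi-}\cos(t/2)^{d-2}\varphi_{is}(a_{it})
 =\frac{\Gamma(d/2)\,\Gamma\big(\tfrac{d-2}{2}\big)}
        {\Gamma\big(\tfrac{d-1}{2}+is\big)\,\Gamma\big(\tfrac{d-1}{2}-is\big)}
 =\frac{\Gamma(d/2)\,\Gamma\big(\tfrac{d-2}{2}\big)}
        {\big|\Gamma\big(\tfrac{d-1}{2}+is\big)\big|^2}. \]

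First I would note that the numerator $\Gamma(d/2)\Gamma\big(\tfrac{d-2}{2}\big)$ is a fixed positive real number (here $d\ge3$, so both $\Gamma$-arguments are positive). It then remains only to show that $\big|\Gamma\big(\tfrac{d-1}{2}+is\big)\big|^2\to0$ as $s\to\infty$, which is immediate from the $\Gamma$-product Lemma established above: with $\drho=\tfrac{d-1}{2}$ that Lemma expresses $\big|\Gamma(\drho+is)\big|^2$ as $\pi\prod_{j=0}^{k-1}\big(s^2+(j+\tfrac12)^2\big)/\cosh(\pi s)$ for $d=2(k+1)$ and as $\pi\prod_{j=0}^{k-1}(s^2+j^2)/(s\sinh(\pi s))$ for $d=2k+1$; in either case the numerator is a polynomial in $s$ of degree $d-2$, whereas $\cosh(\pi s)$ resp.\ $s\sinh(\pi s)$ grows exponentially, so the quotient tends to $0$. (Equivalently, Stirling's formula gives $\big|\Gamma(\drho+is)\big|^2\sim 2\pi\,s^{\,d-2}e^{-\pi s}$ directly, avoiding the even/odd split.) Hence $C_d(s)\to+\infty$, and since $C_d(s)$ is real this is the assertion; the reduction from $\lambda\in i\R$ to $\lambda=is$ with $s>0$ was made at the outset.

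There is essentially no obstacle: the entire content sits in the preceding Theorem, and what remains is the elementary fact that a fixed numerator over an exponentially small denominator diverges. The only points needing a line of care are (i) the symmetry $\varphi_\lambda=\varphi_{-\lambda}$, which lets us replace $|\lambda|\to\infty$ by $s\to+\infty$, and (ii) the restriction to $d\ge3$: for $d=2$ the correct normalizing factor is $-\log\big(\tfrac{1+\beta(z,\oline w)}{2}\big)$ rather than $\cos(t/2)^{d-2}$, by \eqref{eq:ConstTwo}, but the same argument applied to $1/\big|\Gamma\big(\tfrac12+is\big)\big|^2=\cosh(\pi s)/\pi$ again yields divergence.
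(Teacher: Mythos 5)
Your proof is correct and takes the only natural route, which is exactly what the paper leaves implicit: read off the $t\to\pi-$ limit from the preceding Theorem (equivalently, from \eqref{eq:ConstOne}), observe that the numerator $\Gamma(d/2)\,\Gamma\big(\tfrac{d-2}{2}\big)$ is a fixed positive constant, and use the preceding Lemma to see that $\big|\Gamma\big(\tfrac{d-1}{2}+is\big)\big|^2$ decays like a polynomial over an exponential, hence tends to $0$ as $s\to\infty$. The reduction from $\lambda\in i\R$ to $s>0$ via $\varphi_\lambda=\varphi_{-\lambda}$ is right, and your Stirling shortcut $|\Gamma(\drho+is)|^2\sim 2\pi\,s^{d-2}e^{-\pi s}$ is a clean way to avoid the even/odd split. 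One cosmetic slip: in the odd case $d=2k+1$ the polynomial $\prod_{j=0}^{k-1}(s^2+j^2)$ has degree $2k=d-1$, not $d-2$ (it is only after cancelling one factor of $s$ against the $s$ in $s\sinh(\pi s)$ that you get $d-2$); this does not affect the conclusion since the exponential dominates any polynomial.
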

 
\begin{theorem} The function $(0, \drho) \to \R_+ $, $\lambda \mapsto \lim_{t\to \pi^-} \cos (t/2)^{d-2} \varphi_\lambda (a_t)$,
is bounded and goes to zero as $\lambda \to \drho$.
\end{theorem}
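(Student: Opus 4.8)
The plan is to read off the constant $c(\lambda) := \lim_{t\to\pi^-}\cos(t/2)^{d-2}\varphi_\lambda(a_t)$ from the theorem just proved, namely
\[ c(\lambda) = \frac{\Gamma(\tfrac d2)\,\Gamma(\tfrac{d-2}{2})}{\Gamma(\tfrac{d-1}{2}+\lambda)\,\Gamma(\tfrac{d-1}{2}-\lambda)}, \]
so that the assertion becomes a statement purely about the function $\lambda\mapsto c(\lambda)$ on the segment $(0,\drho)$ with $\drho=\tfrac{d-1}{2}$. The two claims are: (i) $c$ is bounded on $(0,\drho)$, and (ii) $c(\lambda)\to 0$ as $\lambda\to\drho^-$. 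Since $\Gamma(\tfrac d2)\Gamma(\tfrac{d-2}{2})$ is a fixed positive constant (for $d>2$; the case $d=2$ needs the logarithmic normalization of \eqref{eq:ConstTwo} and is handled separately, or excluded since the statement is about $\cos(t/2)^{d-2}$ with $d\ge 3$), everything reduces to analyzing $f(\lambda):=1/\big(\Gamma(\tfrac{d-1}{2}+\lambda)\Gamma(\tfrac{d-1}{2}-\lambda)\big)=1/\big(\Gamma(\drho+\lambda)\Gamma(\drho-\lambda)\big)$ for $\lambda\in(0,\drho)$.

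First I would observe that on the closed interval $[0,\drho]$ the function $\lambda\mapsto \Gamma(\drho+\lambda)\Gamma(\drho-\lambda)$ is continuous and strictly positive on $[0,\drho)$, because $\drho+\lambda\in[\drho,d-1]$ stays in the region where $\Gamma$ is finite and positive, and $\drho-\lambda\in(0,\drho]$ likewise stays positive where $\Gamma$ is finite and positive. Hence $f$ is continuous and positive on $[0,\drho)$. For boundedness near $\lambda=\drho$ I note that $\Gamma(\drho-\lambda)\to\Gamma(0^+)=+\infty$, while $\Gamma(\drho+\lambda)\to\Gamma(d-1)$ is a finite positive number; therefore $\Gamma(\drho+\lambda)\Gamma(\drho-\lambda)\to+\infty$, so $f(\lambda)\to 0$. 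This simultaneously gives (ii) and shows $f$ extends continuously to $[0,\drho]$ by the value $0$ at $\drho$; a continuous function on a compact interval is bounded, which gives (i). Multiplying back by the positive constant $\Gamma(\tfrac d2)\Gamma(\tfrac{d-2}{2})$ preserves both properties, and positivity of $c$ on $(0,\drho)$ (so that the codomain is $\R_+$ as claimed) follows from positivity of both factors there.

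I would then add one sanity remark tying this to the reciprocal Gamma function: $1/\Gamma$ is entire, so $f(\lambda)=\tfrac{1}{\Gamma(\drho+\lambda)\Gamma(\drho-\lambda)}$ is in fact real-analytic on all of $\R$, which is a cleaner way to see continuity on $[0,\drho]$ and the vanishing at $\drho$ (where the factor $1/\Gamma(\drho-\lambda)$ has a simple zero coming from the pole of $\Gamma$ at $0$). This also makes clear the limit is approached to first order in $\drho-\lambda$. No delicate estimate is needed; the only mild subtlety — and the step I expect to need the most care in writing up — is making sure the normalization constant $\Gamma(\tfrac d2)\Gamma(\tfrac{d-2}{2})$ is the correct finite positive prefactor for all $d\ge 3$ in the statement's range (it is, since $\tfrac d2,\tfrac{d-2}{2}\ge\tfrac12>0$), and citing \eqref{eq:ConstOne} together with the preceding theorem as the source of the closed form for $c(\lambda)$ rather than re-deriving it.
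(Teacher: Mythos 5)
Your proof is correct and is essentially the paper's argument: the paper's own proof consists of the single observation that the limit equals $\Gamma(d/2)\Gamma((d-2)/2)/\big(\Gamma(\drho+\lambda)\Gamma(\drho-\lambda)\big)$ (from the preceding theorem, ultimately \eqref{eq:ConstOne}), and that $\Gamma(\drho-\lambda)$ has a pole at $\lambda=\drho$, forcing the quotient to tend to $0$. You spell out the same reasoning a bit more explicitly — separating the finite factor $\Gamma(\drho+\lambda)\to\Gamma(d-1)$ from the blowing-up factor, noting positivity, and packaging boundedness as continuity on a compact interval — and you add the (equivalent but cleaner) remark that $1/\Gamma$ is entire with a simple zero at $0$, which also gives the first-order rate of vanishing; these are elaborations, not a different route.
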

\begin{proof} This follows as $\Gamma( \drho - \lambda)$ has a pole as $\lambda = \drho$.
\end{proof}

\subsection{Jumps at the cut}

In this section we discuss the jump
distributions 
$D^+_{\lambda, x} - D^-_{\lambda , x}$ on the regions $I^\pm(x)$. In view of the
relation  
$D_{\lambda ,g.x}^\pm (y)= 
D^\pm_{\lambda , x}(g^{-1}.y)$, 
we can assume that $x =\be_1$. Hence $y\mapsto D_{\lambda}^\pm := D_{\lambda ,\be_1}^\pm$.

We start by recalling the Legendre functions (see \cite[(5), (6), \S 3.2,
p.~122]{Er53}):

\begin{align}
Q^\mu_\nu (z) &= e^{\mu i \pi} 2^{-\nu -1} \pi^{1/2}\frac{\Gamma( \nu + \mu +1)}{\Gamma (\nu +3/2)}
z^{-\nu-\mu -1}(z^2-1)^{\frac{\mu}{2}}\label{eq:Q2} \\
&\quad \times  {}_2F_1\left(\frac{\nu + \mu }{2}+1, \frac{\nu + \mu +1}{2};\nu + \frac{3}{2}; \frac{1}{z^2}\right) \quad \nu \not\in -\frac{3}{2} -\N_0\nonumber\\
P^\mu_\nu (z) &=\frac{2^\mu }{\Gamma (1-\mu) (z^2-1)^{\mu/2}}{}_2F_1\left(1-\mu +\nu, -\mu -\nu; 1-\mu;\frac{1-z}{2}\right),
\mu \not\in \N .
\nonumber
\end{align} 
Note that $P^\mu_{-\nu -1 } = P^\mu_{\nu}$. We collect some of the facts here:

\begin{theorem}\label{thm:Qs}
The following holds:
\begin{itemize}
\item[\rm (i)] We have for $s>0$, $\Re (\nu + \mu +1)>0$ and $\Re \mu <\frac{1}{2}$  
\begin{align*}
  Q^\mu_\nu (\cosh s ) &= \frac{ e^{i\mu \pi}\sqrt{\pi}
    \Gamma (\nu +\mu +1)}  
    {2^\mu \Gamma (\nu -\mu +1)
    \Gamma (\mu +\frac{1}{2})}\sinh (s)^\mu     \times\\
&\quad \times \int_0^\infty (\cosh (s) + \sinh (s)\cosh (t))^{-\nu - \mu -1}\sinh (t)^{2\mu}\, dt
\end{align*}
\item[\rm (ii)] $e^{-i\mu \pi}Q^\mu_\nu (x\pm i0)=
e^{\pm \frac{ i\mu \pi }{2}}\left( Q^\mu_\nu (x) \mp \frac{\pi i}{2} P^\mu_\nu (x)\right)$.
\end{itemize}
\end{theorem}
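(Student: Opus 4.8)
\textbf{Proof plan for Theorem \ref{thm:Qs}.}

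The plan is to derive both statements from the classical theory of Legendre functions, using the explicit hypergeometric representations \eqref{eq:Q2} as the starting point, rather than re-deriving everything from scratch. For part (i), the strategy is to recognize the integral on the right-hand side as (a constant multiple of) one of the standard integral representations of the Legendre function of the second kind. Concretely, I would start from the classical formula (see, e.g., \cite[\S 3.7]{Er53} or the Mehler--Dirichlet type representations) expressing $Q^\mu_\nu(\cosh s)$ as an integral over $(0,\infty)$ of $(\cosh s + \sinh s \cosh t)^{-\nu-\mu-1}$ against $\sinh(t)^{2\mu}$, valid precisely under the stated conditions $s > 0$, $\Re(\nu+\mu+1) > 0$, $\Re \mu < \tfrac12$ (the last condition ensures convergence of the $t$-integral near $t = 0$, and the middle one governs convergence at $t = \infty$). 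The only real work is bookkeeping of the normalizing constant: one matches the $s \to \infty$ asymptotics of both sides, or alternatively substitutes $u = \cosh s + \sinh s \cosh t$ and reduces the integral to a Beta integral, comparing with the prefactor $2^{-\nu-1}\pi^{1/2}\Gamma(\nu+\mu+1)/\Gamma(\nu+3/2)$ in \eqref{eq:Q2}. I expect the constant $\tfrac{e^{i\mu\pi}\sqrt\pi\,\Gamma(\nu+\mu+1)}{2^\mu \Gamma(\nu-\mu+1)\Gamma(\mu+\tfrac12)}$ to drop out after applying the Legendre duplication formula to $\Gamma(\nu+3/2)$ and the reflection/recurrence relations linking $\Gamma(\nu+\mu+1)$ and $\Gamma(\nu-\mu+1)$.

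For part (ii), the approach is to compute the boundary values of $Q^\mu_\nu$ from above and below the cut $(-\infty,1]$ (equivalently, on the real interval where the argument approaches a point $x > -1$, or more precisely on the relevant branch-cut interval) using the connection formulas between $Q^\mu_\nu$ on the two sides of the cut and the Legendre function $P^\mu_\nu$. The key inputs are the standard relations (cf. \cite[\S 3.2, \S 3.3]{Er53}) expressing the "cut" values $Q^\mu_\nu(x \pm i0)$ in terms of the Ferrers functions $\mathsf{P}^\mu_\nu, \mathsf{Q}^\mu_\nu$ of the first and second kind on $(-1,1)$; after absorbing the phase factors $e^{\mp i\mu\pi/2}$ and $e^{-i\mu\pi}$ one arrives at the asserted formula $e^{-i\mu\pi}Q^\mu_\nu(x\pm i0) = e^{\pm i\mu\pi/2}\bigl(Q^\mu_\nu(x) \mp \tfrac{\pi i}{2}P^\mu_\nu(x)\bigr)$, where on the right $Q^\mu_\nu(x)$, $P^\mu_\nu(x)$ now denote the Ferrers functions. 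The half-integer phases come from the factor $(z^2-1)^{\mu/2}$ in \eqref{eq:Q2}: as $z = x + i\epsilon$ approaches $x \in (-1,1)$ from above or below, $z^2 - 1$ approaches a negative real number from the lower or upper half-plane respectively, contributing $e^{\mp i\pi\mu/2}$ to $(z^2-1)^{\mu/2}$; simultaneously the argument $1/z^2$ of the hypergeometric function in \eqref{eq:Q2} stays away from its branch cut, so no further phase arises from that factor, and one combines this with the analogous computation for $z^{-\nu-\mu-1}$.

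The main obstacle I anticipate is purely the careful tracking of branch choices and the resulting constants: the various standard references normalize $Q^\mu_\nu$, the Ferrers functions, and the phase conventions slightly differently, so the bulk of the effort is verifying that the conventions fixed by \eqref{eq:Q2} are the ones under which the clean identities in (i) and (ii) hold, with no stray factor of $2$, $\pi$, or $e^{i\mu\pi}$. For part (i), convergence at the endpoints of the $t$-integral must be checked to see that the stated hypotheses are exactly what is needed and sufficient. For part (ii), the subtle point is that the two instances of "$Q^\mu_\nu$" on the two sides of the displayed equation refer to different functions (the analytic function off the cut on the left, the Ferrers function on the cut on the right), and this must be stated unambiguously in the write-up; once that is pinned down, the identity reduces to known connection formulas and the proof is a one-line citation together with the phase bookkeeping above.
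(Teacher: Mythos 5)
Your approach is exactly the paper's: the proof consists of nothing more than the two citations you propose, with (i) being \cite[(3), \S 3.7, p.~155]{Er53} and (ii) being \cite[(9), \S 3.4, p.~144]{Er53}. (A minor caveat in your verification sketch for (ii): the direction from which $z^2-1$ approaches the negative real axis as $z\to x\pm i0$ depends also on the sign of $x\in(-1,1)$, not only on the sign of $\pm i0$, but this does not affect anything since the identity is taken directly from Erd\'elyi.)
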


\begin{proof}  (i) is \cite[(3),\S 3.7, p.~155]{Er53} and (ii)
  is \cite[(9), \S 3.4, p.~144]{Er53}.
\end{proof} 
On $G$, the function  
\begin{equation}\label{eq:pl}
  p_\lambda (g) = \beta(g.(\be_0+\be_1), -\be_1)^{-\lambda -\drho} 
\end{equation}
 is defined on the open set $\{g \in G\: g_{10} > -g_{11}\}$. Furthermore we have
 \[p_\lambda (h'gma_t n) = e^{(-\lambda - \drho)t}p_\lambda (g)
   \quad \mbox{ for } \quad
   h' \in H = G_{\be_1}, g \in G, m \in Z_K(A), n \in N, a_t = \exp(th).\]
If $\Re \lambda \le -\drho$ then $p_\lambda$ is continuous and hence defines a $H$-invariant distribution
vector in the principal series representation with parameter $-\lambda$.
  
 The first part of the following theorem follows from the fact that $p_\lambda$ is an eigenfunction. The second
 part is \cite[Thm.  10.1]{FHO95}
but we give the proof here.
\begin{theorem}\label{thm:FHO} The integral
\[\Psi_\lambda (g)=\int_{H}p_\lambda (gh)dh\]
converges for $\Re \lambda > \frac{3-d}{2}$ and we have
\begin{itemize}
\item[\rm i)] As a function on $\dS^d$, we have $\Delta \Psi_\lambda = \left(\lambda^2 - \frac{d-1}{2}\right)\Psi_\lambda$.
\item[\rm ii)] For 
\[c^+(\lambda) =2^{-\lambda + \frac{d-3}{2}}\frac{\Gamma \left(\drho \right) \Gamma\left(\lambda -\frac{d-3}{2}\right)}
{\Gamma (\lambda + 1)}  \]
we have
\begin{align*}
\Psi_\lambda (a_t. \be_1) &
=  \frac{e^{\frac{(2-d)\pi i}{2}} 2^{\frac{d-2}{2}} \Gamma \left(\frac{d-1}{2}\right)\Gamma\left(\lambda - \frac{d-3}{2}\right)}
{\sqrt{\pi}\Gamma\left(\lambda + \drho\right) \sinh (t)^{\frac{d}{2}-1}}
Q^{\frac{d}{2}-1}_{\lambda - \frac{1}{2}} (\cosh (t))\\
&= \frac{c^+(\lambda)}{\cosh(t)^{\lambda + \drho}}
 {}_2F_1\left(\frac{\lambda  + \drho +1}{2}, \frac{\lambda + \drho}{2}; 
1+\lambda ;\frac{1}{\cosh^2(t)}\right) . 
\end{align*}
\end{itemize}
\end{theorem}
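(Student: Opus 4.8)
The plan is to prove the two displayed formulas for $\Psi_\lambda(a_t.\be_1)$ by a direct computation, reducing everything to a one-dimensional integral over $H$ that can be recognized as an integral representation of the Legendre function $Q^\mu_\nu$.

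\textbf{Step 1: Reduce the $H$-integral to a one-dimensional integral.}
First I would unwind the definition of $p_\lambda$ in \eqref{eq:pl} and evaluate $p_\lambda(a_t h)$ for $h \in H = G_{\be_1}$. Since $p_\lambda(h' g m a_s n) = e^{(-\lambda-\drho)s} p_\lambda(g)$, the integrand is constant on the cosets needed to pass to an Iwasawa-type decomposition of $H \cong \SO_{1,d-1}(\R)_e$. Using the $K_H = H \cap K \cong \SO_{d-1}(\R)$ polar/Iwasawa structure of $H$, the integral $\int_H p_\lambda(a_t h)\, dh$ collapses to a single integral over the split part $\exp(\R h_H)$ of $H$ (times a compact factor contributing a constant), which after substituting $u = \cosh$ of that parameter takes the shape
\[
\Psi_\lambda(a_t.\be_1)
  = c \int_0^\infty \big(\cosh t + \sinh t \cosh u\big)^{-\lambda - \drho}\,
  (\sinh u)^{d-2}\, du
\]
for an explicit elementary constant $c$ depending on normalization of Haar measure on $H$. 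The key bookkeeping is to compute $\beta(a_t h.(\be_0+\be_1), -\be_1)$ explicitly; this is a short matrix computation using the block form of $h$ acting trivially on the $\be_1$-axis and the boost $a_t$ in the $(\be_0,\be_1)$-plane.

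\textbf{Step 2: Identify the integral as a Legendre function.}
Next I would invoke Theorem~\ref{thm:Qs}(i) with the parameters $\mu = \frac{d}{2}-1$ and $\nu = \lambda - \frac{1}{2}$, so that $2\mu = d-2$ and $-\nu-\mu-1 = -\lambda - \drho$, matching the integrand in Step 1 exactly. This requires checking the hypotheses of Theorem~\ref{thm:Qs}(i): $s = t > 0$, $\Re(\nu+\mu+1) = \Re(\lambda + \drho) > 0$, and $\Re\mu = \frac{d-2}{2} < \frac12$ — the last of which fails for $d \geq 3$, so the honest route is to establish the integral representation for the convergent range $\Re\lambda > \frac{3-d}{2}$ directly (this is where the stated convergence condition enters) and then appeal to analytic continuation in $\lambda$; both sides are holomorphic in $\lambda$ on the relevant domain. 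Comparing the constant $c$ from Step 1 with the prefactor in Theorem~\ref{thm:Qs}(i) and simplifying the ratio of $\Gamma$-factors and powers of $2$ yields the first displayed formula
\[
\Psi_\lambda(a_t.\be_1)
= \frac{e^{\frac{(2-d)\pi i}{2}} 2^{\frac{d-2}{2}} \Gamma(\frac{d-1}{2})
  \Gamma(\lambda - \frac{d-3}{2})}{\sqrt{\pi}\,\Gamma(\lambda + \drho)\,
  (\sinh t)^{\frac{d}{2}-1}}\, Q^{\frac{d}{2}-1}_{\lambda-\frac12}(\cosh t).
\]

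\textbf{Step 3: Pass to the hypergeometric form.}
Finally I would substitute the defining hypergeometric expression \eqref{eq:Q2} for $Q^\mu_\nu(z)$ with $z = \cosh t$, $\mu = \frac{d}{2}-1$, $\nu = \lambda - \frac12$. Then $z^{-\nu-\mu-1} = (\cosh t)^{-\lambda - \drho}$, the factor $(z^2-1)^{\mu/2} = (\sinh t)^{\frac{d}{2}-1}$ cancels the $(\sinh t)^{\frac{d}{2}-1}$ in the denominator, and the ${}_2F_1$ parameters become $\frac{\nu+\mu}{2}+1 = \frac{\lambda+\drho+1}{2}$, $\frac{\nu+\mu+1}{2} = \frac{\lambda+\drho}{2}$, $\nu+\frac32 = \lambda + 1$, with argument $\frac{1}{z^2} = \frac{1}{\cosh^2 t}$. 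Collecting the remaining $\Gamma$-factors, powers of $2$, and the phase $e^{i\mu\pi}$ against those already present produces the constant $c^+(\lambda) = 2^{-\lambda + \frac{d-3}{2}}\frac{\Gamma(\drho)\Gamma(\lambda - \frac{d-3}{2})}{\Gamma(\lambda+1)}$ and hence the second displayed formula. Part (i) of the theorem — that $\Psi_\lambda$ is a $\Delta$-eigenfunction with eigenvalue $\lambda^2 - \big(\frac{d-1}{2}\big)^2$ — follows separately and more cheaply: $p_\lambda$ is, by construction, a matrix coefficient-type function associated with the principal series of parameter $-\lambda$, so it is annihilated by $\Delta - (\lambda^2 - \drho^2)$ as a function on $\dS^d$ (the Laplacian being the image of the Casimir), and this eigenvalue property is preserved under the $H$-average $\Psi_\lambda(g) = \int_H p_\lambda(gh)\, dh$ since $\Delta$ acts on the left while the averaging is on the right; one only needs to justify differentiating under the integral sign, which is legitimate in the convergence range $\Re\lambda > \frac{3-d}{2}$.

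\textbf{Main obstacle.}
The routine part is the $\Gamma$-function and power-of-$2$ bookkeeping; the genuine difficulty is Step 1 — producing the exact integral representation with the correct elementary constant, which requires a careful explicit parametrization of $H$ together with a compatible normalization of its Haar measure and the computation of $\beta(a_t h.(\be_0+\be_1), -\be_1)$. A secondary subtlety is the analytic-continuation argument needed because the hypothesis $\Re\mu < \frac12$ in Theorem~\ref{thm:Qs}(i) is violated for $d \geq 3$; here one works in the convergence strip $\Re\lambda > \frac{3-d}{2}$ first and then extends, noting that both the Legendre function and the ${}_2F_1$ expressions are meromorphic in $\lambda$ and agree on an open set.
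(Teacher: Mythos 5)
Your outline reproduces the paper's own argument almost step for step: the paper reduces the $H$-integral via the polar decomposition $H = K_H \exp(\fb) K_H$ of $H\cong\SO_{1,d-1}(\R)_e$ with Jacobian $\sinh(s)^{d-2}$ (citing Helgason, Thm.~I.5.8), computes $p_\lambda(a_t b_s)=(\cosh t + \sinh t\cosh s)^{-\lambda-\drho}$, and then matches the resulting one-dimensional integral against Theorem~\ref{thm:Qs}(i) and the hypergeometric formula \eqref{eq:Q2}, exactly as in your Steps 1--3, while dispatching part~(i) as you do, since $p_\lambda$ is a principal-series eigenfunction and the left-invariant $\Delta$ commutes with the right $H$-average. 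You also correctly flag a point that the paper's one-line appeal to Theorem~\ref{thm:Qs}(i) silently passes over: the hypothesis $\Re\mu<\tfrac12$ there fails for $\mu=\tfrac{d}{2}-1$ once $d\geq 3$, so strictly one must either re-derive the integral representation of $Q^\mu_\nu$ in the range where your $H$-integral converges and then continue meromorphically in $\lambda$, or appeal to the well-known analytic continuation of both sides --- precisely the remedy you propose.
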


\begin{proof} Let $x_h =E_{0,d-1} + E_{d-1,0}$
  and $\fb = \R x_h$ where $E_{ij}$ denotes the matrix with $1$ 
in row $i$ and column $j$ and otherwise
zero. Then $\fb$ is maximal abelian in $\fh\cap \Sym_{d+1}(\R)$.
Let $B=\exp \fb$ and $b_s= \exp (s x_h)$. Then we can normalize the invariant
measure on $H$ such that 
\[\int_H f(h)dh = \int_{K\cap H}\int_0^\infty \int_{K\cap H} f (k_1b_s k_2)
  \sinh (s)^{d-2}dk_2ds dk_1 , \quad f\in C_c (H) \]
(see \cite[Thm.~I.5.8]{He84}). 
For $k_1, k_2 \in K\cap H$, we have
\[p_\lambda (a_tk_1b_s k_2)= p_\lambda (a_t b_s )=(\cosh (t) + \sinh(t)\cosh(s))^{-\lambda - \drho}.\]
The claim now follows from Theorem~\ref{thm:Qs}(i) and \eqref{eq:Q2}.
\end{proof} 

The function
\[t\mapsto  {}_2F_1\left(\frac{\drho +\lambda}{2}, \frac{\drho -\lambda}{2}; \frac{d}{2};  -\sinh (t)^2\right)
 = {}_2F_1\left(\frac{\drho +\lambda}{2},
    \frac{\drho -\lambda}{2}; \frac{d}{2};  \frac{1- \cosh(2t)}{2}\right)\]
satisfies the same hypergeometric differential equation
\[\dfrac{d^2\varphi}{dt^2} +  (d-1)(\coth t)\dfrac{d\varphi}{dt} +(\left(\drho \right)^2 - \lambda^2)\varphi = 0\]
as the function $t\mapsto Q_\lambda  (a_t.\be_1,\be_1)$ (compare \cite[p. 283]{OP04} and \cite[(4.4)]{NO20}), see
also \cite[(9.6.1) p. 309]{LS65}). As both are
analytic on $\R$ and take the value $1$ at $t=0$ it follows that
\begin{equation}\label{eq:D}
Q_\lambda (ha_t.\be_1,\be_1)=  {}_2F_1\left(\frac{\drho +\lambda}{2}, \frac{\drho -\lambda}{2}; \frac{d}{2};  -\sinh (t)^2\right).
\end{equation}

\begin{lemma}\label{lem:Q} Let $a,b,c, \lambda \in \C$, $c, 1\pm \lambda\not\in -\N$ and $a-b\not\in \Z$. Then
\begin{align*}
  {}_2F_1(a,b;c;z)  & = \frac{\Gamma (c)\Gamma (b-a)}{\Gamma (c-a)\Gamma (b)}
   (1-z)^{-a} \cdot {}_2 F_1 \left(a, c-b; a+1-b;\frac{1}{1-z}\right)\\
&\quad
                                                                          + \frac{\Gamma (c)\Gamma (a-b)}{\Gamma (c-b)\Gamma (a)} (1-z)^{-b}
   {}_2 F_1\left(b, c-a; b+1-a;\frac{1}{1-z}\right) .
\end{align*}
\end{lemma}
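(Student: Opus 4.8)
The final statement is Lemma~\ref{lem:Q}, a classical connection formula for the Gauss hypergeometric function $\hgf(a,b;c;z)$ expressing it near $z=\infty$ in terms of the two Kummer solutions built from powers $(1-z)^{-a}$ and $(1-z)^{-b}$. Here is how I would prove it.

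\textbf{Approach.} The cleanest route is to reduce everything to the standard connection formula relating the solution at $0$ to the two solutions at $\infty$, namely
\[
\hgf(a,b;c;z) = \frac{\Gamma(c)\Gamma(b-a)}{\Gamma(b)\Gamma(c-a)}(-z)^{-a}\,\hgf\!\Big(a,a{-}c{+}1;a{-}b{+}1;\tfrac1z\Big)
+ \frac{\Gamma(c)\Gamma(a-b)}{\Gamma(a)\Gamma(c-b)}(-z)^{-b}\,\hgf\!\Big(b,b{-}c{+}1;b{-}a{+}1;\tfrac1z\Big),
\]
valid for $a-b\notin\Z$ (see e.g.\ \cite[15.8]{DLMF} or \cite{Er53}). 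I would then make the change of variable $z \mapsto 1-z$ bringing $\infty$ to $\infty$ but $1$ to $0$: using the Euler/Pfaff-type identity $\hgf(a,b;c;z) = (1-z)^{c-a-b}\hgf(c{-}a,c{-}b;c;z)$ together with the elementary linear fractional substitution, one checks that $w := \tfrac{1}{1-z}$ is the right variable. Concretely, apply the connection formula above with $z$ replaced by a suitable argument and then use a quadratic-free Pfaff transformation $\hgf(\alpha,\beta;\gamma;u) = (1-u)^{-\alpha}\hgf(\alpha,\gamma{-}\beta;\gamma;\tfrac{u}{u-1})$ to each of the two $\hgf(\cdot;\tfrac1z)$ terms so that the inner argument becomes $\tfrac{1}{1-z}$ and the inner parameters match those in the claim ($a,c-b;a{+}1{-}b$ and $b,c-a;b{+}1{-}a$). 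The gamma-factor prefactors are unchanged by Pfaff, so they already have the asserted form.

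\textbf{Key steps, in order.} First I would state the standard $0\!\to\!\infty$ connection formula as a known result (with the hypotheses $c\notin -\N$, $a-b\notin\Z$, guaranteeing both $\hgf$'s on the right are well defined and the gamma factors finite; the condition $1\pm\lambda\notin -\N$ in the lemma is there to keep the shifted parameters $a+1-b$, $b+1-a$ out of $-\N$ in the intended application, so I would note it ensures the third parameters $a{+}1{-}b,\ b{+}1{-}a\notin -\N$). Second, perform the substitution $z\mapsto z$ but reorganize: set $u=\tfrac1z$ and apply Pfaff to pull out $(1-u)^{-a}=(1-\tfrac1z)^{-a}=(-z)^{a}(1-z)^{-a}$ from the first inner $\hgf$ and similarly $(-z)^{b}(1-z)^{-b}$ from the second. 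Third, collect the powers: $(-z)^{-a}\cdot(-z)^{a}(1-z)^{-a} = (1-z)^{-a}$, matching the claimed $(1-z)^{-a}$, and likewise for the $b$-term; verify that the inner arguments become $\tfrac{u}{u-1} = \tfrac{1/z}{1/z-1} = \tfrac{1}{1-z}$. Fourth, read off the inner parameters after Pfaff: from $\hgf(a,a{-}c{+}1;a{-}b{+}1;\tfrac1z)$ the Pfaff transform $\hgf(a, (a{-}b{+}1)-(a{-}c{+}1); a{-}b{+}1; \tfrac{1}{1-z}) = \hgf(a, c{-}b; a{+}1{-}b; \tfrac{1}{1-z})$, exactly as asserted; symmetrically for the other term. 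Finally, observe both sides are meromorphic in $z$ (say on $\C\setminus[1,\infty)$ with the principal branch of $(1-z)^{\cdot}$) and the identity of their convergent expansions on an open set extends by analytic continuation.

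\textbf{Main obstacle.} The only real subtlety is branch-cut bookkeeping for the multivalued factors $(-z)^{-a}$, $(1-z)^{-a}$ and $(1-\tfrac1z)^{-a}$: one must fix a consistent cut (the formula is typically stated for $|\arg(-z)|<\pi$, i.e.\ $z\notin[0,\infty)$, or equivalently here $z\notin[1,\infty)$ after the shift) so that the factorization $(1-\tfrac1z)^{-a}=(-z)^{a}(1-z)^{-a}$ holds without an extra phase. Since in the application (Theorem~\ref{thm:Main1}, proof of \eqref{eq:D} and the jump computations) the argument is real with $1-t<1$ resp.\ lies in a cut neighborhood of $[1,\infty)$, one checks the branch conventions are consistent there; alternatively, as the excerpt does elsewhere, one proves the identity first for real $z$ in an interval where all factors are positive, then both sides solve the same hypergeometric ODE (or agree as convergent power series) and extend by analytic continuation. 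I would present it via the latter, noting that $\hgf(a,b;c;z)$ and the right-hand side are each analytic on a common domain and agree on a set with a limit point, hence everywhere on that domain; this sidesteps delicate phase tracking. I would omit the routine verification of the parameter arithmetic, citing \cite{Er53} or \cite{DLMF} for the underlying Kummer connection and Pfaff formulas.
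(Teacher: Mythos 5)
Your proposal is correct. The paper's ``proof'' is a one-line citation: it combines Erd\'elyi 2.9 formulas (11), (15) and (34), which are precisely the Kummer solutions $(1-z)^{-a}\,{}_2F_1(a,c-b;a-b+1;\tfrac1{1-z})$, $(1-z)^{-b}\,{}_2F_1(b,c-a;b-a+1;\tfrac1{1-z})$ around $z=\infty$ in the variable $\tfrac1{1-z}$, together with the linear relation expressing ${}_2F_1(a,b;c;z)$ in that basis; it does not rederive anything. You instead start from the standard $0\!\to\!\infty$ connection formula (DLMF~15.8) and convert each inner ${}_2F_1(\cdot;1/z)$ by a Pfaff transformation so that the variable becomes $\tfrac1{1-z}$ and the powers recombine to $(1-z)^{-a}$, $(1-z)^{-b}$. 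That is a legitimate alternative derivation of the same classical identity, and you handle the two genuine subtleties correctly: the hypothesis $a-b\notin\Z$ keeps the $\Gamma$-factors finite and the third parameters $a\pm(1-b)$ etc.\ out of $-\N_0$ (you also correctly read the paper's somewhat loosely phrased $1\pm\lambda\notin-\N$ condition as aimed at those shifted parameters), and the branch-phase issue in $(-z)^{-a}(1-1/z)^{-a}=(1-z)^{-a}$, which you sidestep by first verifying on a real interval and then invoking analytic continuation. In effect the paper cites the relevant Kummer relation directly whereas you reconstitute it from two other standard ones; the mathematical content is the same, but your version is more self-contained.
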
 

\begin{proof} This follows by combining \cite[(11), 2.9]{Er53}, \cite[(15), 2.9]{Er53} and \cite[(34), 2.9]{Er53}.
\end{proof}

\begin{theorem} For $x\in  H\exp (\R_+h )\be_1 = I^+(\be_1)$ and $\lambda \not\in \Z$ let
\[\Phi_\lambda (x ) =   \frac{1}{c^+(\lambda )}\Psi_\lambda (x)
\quad \text{and}\quad c(\lambda) = \frac{2^{\drho-\lambda  -1}}{\sqrt{\pi}}\frac{\Gamma \left(\frac{d}{2}\right)\Gamma (-\lambda )}{\Gamma\left( \drho -\lambda \right) }.
  \]
Then
\[Q_\lambda (x,\be_1) = c(\lambda ) \Phi_\lambda (x) + c(-\lambda ) \Phi_{-\lambda } (x)\quad \mbox{ for } \quad x\in I^+(\be_1) = H\exp(\R_+ h).\be_1.\]
\end{theorem}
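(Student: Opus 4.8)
The plan is to reduce the identity to a single connection formula for the Gauss hypergeometric function, evaluated at $z = -\sinh^2 t$, together with the Legendre duplication formula for the constants, and then to promote the resulting equality of functions of $t$ to an equality of $H$-invariant functions on $I^+(\be_1)$.

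First I would record the two explicit formulas on the ray $a_t.\be_1$, $t>0$. By \eqref{eq:D} we have
\[ Q_\lambda(a_t.\be_1,\be_1)
   = {}_2F_1\Big(\tfrac{\drho+\lambda}{2},\tfrac{\drho-\lambda}{2};
   \tfrac{d}{2};-\sinh^2 t\Big), \]
and Theorem~\ref{thm:FHO}(ii) gives
\[ \Phi_\lambda(a_t.\be_1)
   = \cosh(t)^{-(\lambda+\drho)}\,
   {}_2F_1\Big(\tfrac{\lambda+\drho+1}{2},\tfrac{\lambda+\drho}{2};
   1+\lambda;\tfrac{1}{\cosh^2 t}\Big). \]
Since $1-(-\sinh^2 t)=\cosh^2 t>1$ for $t>0$, Lemma~\ref{lem:Q} applies with $a=\tfrac{\drho+\lambda}{2}$, $b=\tfrac{\drho-\lambda}{2}$, $c=\tfrac{d}{2}$ and $z=-\sinh^2 t$ (its hypotheses $c,1\pm\lambda\notin-\N$ and $a-b=\lambda\notin\Z$ hold since $\lambda\notin\Z$). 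Using $2\drho+1=d$ one checks that the parameters match exactly: $a+1-b=1+\lambda$, $c-b=\tfrac{\drho+1+\lambda}{2}$, $c-a=\tfrac{\drho+1-\lambda}{2}$, $b-a=-\lambda$, and $(1-z)^{-a}=\cosh(t)^{-(\lambda+\drho)}$. Hence the first summand of Lemma~\ref{lem:Q} equals
\[ \frac{\Gamma(\tfrac{d}{2})\,\Gamma(-\lambda)}
   {\Gamma\!\big(\tfrac{\drho+1-\lambda}{2}\big)\,
   \Gamma\!\big(\tfrac{\drho-\lambda}{2}\big)}\,
   \Phi_\lambda(a_t.\be_1), \]
and the second summand is obtained from the first by the substitution $\lambda\mapsto-\lambda$ (which interchanges $a$ and $b$), i.e.\ it is the same expression with $\lambda$ replaced by $-\lambda$, multiplied by $\Phi_{-\lambda}(a_t.\be_1)$.

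It then remains to identify the two constants with $c(\pm\lambda)$. Applying the duplication formula $\Gamma(s)\Gamma(s+\tfrac12)=2^{1-2s}\sqrt{\pi}\,\Gamma(2s)$ with $s=\tfrac{\drho-\lambda}{2}$ gives $\Gamma\!\big(\tfrac{\drho-\lambda}{2}\big)\Gamma\!\big(\tfrac{\drho+1-\lambda}{2}\big)=2^{1-(\drho-\lambda)}\sqrt{\pi}\,\Gamma(\drho-\lambda)$, so that
\[ \frac{\Gamma(\tfrac{d}{2})\,\Gamma(-\lambda)}
   {\Gamma\!\big(\tfrac{\drho+1-\lambda}{2}\big)\,
   \Gamma\!\big(\tfrac{\drho-\lambda}{2}\big)}
   = \frac{2^{\drho-\lambda-1}}{\sqrt{\pi}}\,
   \frac{\Gamma(\tfrac{d}{2})\,\Gamma(-\lambda)}{\Gamma(\drho-\lambda)}
   = c(\lambda), \]
and, by the same computation with $s=\tfrac{\drho+\lambda}{2}$, the second constant equals $c(-\lambda)$. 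This yields $Q_\lambda(a_t.\be_1,\be_1)=c(\lambda)\Phi_\lambda(a_t.\be_1)+c(-\lambda)\Phi_{-\lambda}(a_t.\be_1)$ for all $t>0$. Finally, since $Q_\lambda(\cdot,\be_1)$ is $H=G_{\be_1}$-invariant (because $Q_\lambda$ is $G_\C$-invariant and $H$ fixes $\be_1$) and $\Psi_\lambda$, hence $\Phi_\lambda$, is left $H$-invariant by construction of the integral $\Psi_\lambda(g)=\int_H p_\lambda(gh)\,dh$, and since $I^+(\be_1)=H\exp(\R_+ h).\be_1$, the identity on the representatives $a_t.\be_1$ propagates to all of $I^+(\be_1)$.

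The only genuinely delicate point is the bookkeeping that matches the hypergeometric parameters and then the gamma-factor constant; once Lemma~\ref{lem:Q} and the duplication formula are in hand, this step is purely formal. A minor issue concerns the range of $\lambda$: for $\lambda\notin\Z$ one should read $\Phi_\lambda$ through the closed-form hypergeometric expression in Theorem~\ref{thm:FHO}(ii) rather than through the integral $\Psi_\lambda$, which converges only for $\Re\lambda>\tfrac{3-d}{2}$; alternatively, one establishes the identity first on that half-plane and then extends it by analyticity in $\lambda$.
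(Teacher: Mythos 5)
Your proof is correct and follows exactly the route taken in the paper: identify $Q_\lambda(a_t.\be_1,\be_1)$ via \eqref{eq:D}, apply the connection formula of Lemma~\ref{lem:Q} with $a=\tfrac{\drho+\lambda}{2}$, $b=\tfrac{\drho-\lambda}{2}$, $c=\tfrac{d}{2}$, $z=-\sinh^2 t$, match the resulting hypergeometric factors with $\Phi_{\pm\lambda}(a_t.\be_1)$ from Theorem~\ref{thm:FHO}(ii), and reduce the gamma quotient to $c(\pm\lambda)$ via the Legendre duplication formula. Your closing remarks — on propagating from the ray $a_t.\be_1$ to all of $I^+(\be_1)$ by $H$-invariance, and on reading $\Phi_\lambda$ through the closed-form hypergeometric expression rather than the integral when $\Re\lambda\leq\tfrac{3-d}{2}$ — are sensible elaborations of points the paper leaves implicit.
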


\begin{proof} Let $a =\frac{\drho + \lambda}{2}$, $b= \frac{\drho - \lambda}{2}$ and $c= \frac{d}{2}$. 
Then $b-a = -\lambda$, $c-b = \frac{\drho +1+\lambda}{2}$ and $a-b +1=1+\lambda$. Then
\[\Gamma (c-a)\Gamma (b) =\Gamma \left(\frac{\drho - \lambda}{2} +\frac{1}{2}\right)
\Gamma \left(\frac{\drho -\lambda}{2}\right) = 2^{\lambda - \drho +1}\sqrt{\pi} \Gamma \left(\drho -\lambda\right)\]
by the relation
$\Gamma(z + \shalf)\Gamma(z) = 2^{-2z+1}\sqrt \pi \Gamma(2z)$ 
(\cite[(1.2.2), p. 19]{LS65}).
  The claim now
follows from \eqref{eq:D}, Theorem \ref{thm:FHO} and Lemma~\ref{lem:Q}.
\end{proof}

\section{Matrix-valued spherical functions for
  $\SO_{1,3}(\R)$} 
\mlabel{sec:msphfunc}

To give an idea of how matrix-valued spherical functions look like, we describe them explicitly in the case $d=3$. The description works also for the universal cover $\widetilde{G}=\mathrm{Spin}_{1,3}(\mathbb{R})_e\simeq\mathrm{SL}_2(\mathbb{C})$. The main statements make use of \cite{GPT02}.

By the Casselman Embedding Theorem (see e.g.~\cite[\S 3.8.3]{Wa88}), every irreducible unitary representation $(U,\mathcal{H})$ of $\mathrm{SL}_2(\mathbb{C})$ occurs, on the level of smooth vectors, as a subrepresentation of a (possibly non-unitary) principal series representation $(U_{\mu,\lambda},\mathcal{H}_{\mu,\lambda})$ (see Section~\ref{sec:PSofSL2C} below for the definition). Consequently, every spherical function $\varphi_{\mathcal E}$ associated to a $K$-type $\mathcal E\subseteq\mathcal H^\infty\subseteq\mathcal H_{\mu,\lambda}^\infty$ is of the following form
\begin{equation}
  \varphi_{\mathcal E}(g) = P_{\mathcal E}\circ U_{\mu,\lambda}(g)\circ I_{\mathcal E}\in\mathrm{GL}(\mathcal E) \quad \mbox{ for } \quad
  g\in\mathrm{SL}_2(\mathbb{C}),\label{eq:SphFctInPS}
\end{equation}
where $I_{\mathcal E}:\mathcal E\to\mathcal H_{\mu,\lambda}$ denotes the embedding and $P_{\mathcal E}:\mathcal H_{\mu,\lambda}\to\mathcal E$ the orthogonal projection onto the $K$-type $\mathcal E$. The right hand side of \eqref{eq:SphFctInPS} makes sense even if $\mathcal E$ is not a $K$-type of a unitarizable subrepresentation $\mathcal H^\infty$ of $\mathcal H_{\mu,\lambda}^\infty$. We provide an explicit expression for all spherical functions of the form \eqref{eq:SphFctInPS} in Theorem~\ref{thm:SphericalFunctionsSL2C}, and in Proposition~\ref{prop:PosDefSphFctSL2C} we specify which of them are positive definite,
i.e., belong to an irreducible unitary representation.

\subsection{Principal series representations of $\mathrm{SL}_2(\mathbb{C})$}\label{sec:PSofSL2C}

For $\mu\in\mathbb{Z}$ and $\lambda\in\mathbb{C}$ we define a character $\chi_{\mu,\lambda}$ of the parabolic subgroup $B$ of upper triangular matrices in $\mathrm{SL}_2(\mathbb{C})$ by
$$ \chi_{\mu,\lambda}:B\to\mathbb{C}^\times, \quad \begin{pmatrix}a&b\\0&a^{-1}\end{pmatrix} \mapsto \left(\frac{a}{|a|}\right)^\mu|a|^\lambda \qquad (a\in\mathbb{C}^\times,b\in\mathbb{C}). $$
The normalized parabolically induced representation $U_{\mu,\lambda}=\mathrm{Ind}_B^{\mathrm{SL}_2(\mathbb{C})}(\chi_{\mu,\lambda})$ is the left-regular representation of $\mathrm{SL}_2(\mathbb{C})$ on the Hilbert space
$$ \mathcal H_{\mu,\lambda} = \Bigg\{f:\mathrm{SL}_2(\mathbb{C})\to\mathbb{C}\mbox{ measurable }\Bigg|\begin{array}{l}f(gb)=\chi_{\mu,\lambda+1}(b)^{-1}f(g)\mbox{ for }g\in\mathrm{SL}_2(\mathbb{C}),b\in B,\\f|_{\mathrm{SU}_2(\mathbb{C})}\in L^2(\mathrm{SU}_2(\mathbb{C}))\end{array}\Bigg\}. $$
Note that $U_{\mu,\lambda}$ resp. any subrepresentation of $U_{\mu,\lambda}$ factors through $G=\mathrm{SO}_{1,3}(\mathbb{R})$ if and only if $\mu$ is even.

The restriction of $U_{\mu,\lambda}$ to the maximal compact subgroup $\mathrm{SU}_2(\mathbb{C})$ decomposes into a direct sum of irreducible representations of $\mathrm{SU}_2(\mathbb{C})$. The irreducible representations of $\mathrm{SU}_2(\mathbb{C})$ are of the form $(\pi_\ell,S^\ell(\mathbb{C}^2))$, $\ell\in\mathbb{Z}_{\geq0}$, where $\pi_\ell$ is the symmetric power of the standard representation of $\operatorname{SU}_2(\mathbb{C})$ on $\mathbb{C}^2$. By Frobenius reciprocity, the multiplicity of $\pi_\ell$ in $U_{\mu,\lambda}|_{\mathrm{SU}_2(\mathbb{C})}$ is equal to the multiplicity of $\chi_{\mu,\lambda+1}|_{B\cap\mathrm{SU}_2(\mathbb{C})}$ in $\pi_\ell|_{B\cap\mathrm{SU}_2(\mathbb{C})}$. Since $B\cap\mathrm{SU}_2(\mathbb{C})$ is a maximal torus in $\mathrm{SU}_2(\mathbb{C})$, it is a standard fact in the representation theory of $\mathrm{SU}_2(\mathbb{C})$ that the latter multiplicity is one if $\mu\in\{-\ell,-\ell+2,\ldots,\ell-2,\ell\}$ and zero otherwise. It follows that $U_{\mu,\lambda}$ contains the $K$-type $\pi_\ell$ if and only if $\mu\in\{-\ell,-\ell+2,\ldots,\ell-2,\ell\}$, and in this case it occurs with multiplicity one. We denote the spherical function $\varphi_{\mathcal E}$ corresponding to $\mathcal E=S^\ell(\mathbb{C}^2)$ by $\varphi_{\mu,\lambda}^\ell$.

\begin{lem}
	Let $\ell\in\mathbb{Z}_{\geq0}$ and $\mu\in\{-\ell,-\ell+2,\ldots,\ell-2,\ell\}$, $\lambda\in\mathbb{C}$. Then
	$$ \varphi_{\mu,\lambda}^\ell = \varphi_{-\mu,-\lambda}^\ell. $$
\end{lem}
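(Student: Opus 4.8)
The plan is to exploit the equivalence of principal series representations $U_{\mu,\lambda}$ and $U_{-\mu,-\lambda}$, together with the uniqueness (up to scalar) of the $K$-type embedding. Concretely, it is a standard fact for $\SL_2(\C)$ that the normalized principal series $U_{\mu,\lambda}$ and $U_{-\mu,-\lambda}$ are infinitesimally equivalent; more precisely, the standard (Knapp--Stein) intertwining operator $A_{\mu,\lambda}\colon \cH_{\mu,\lambda}^\infty \to \cH_{-\mu,-\lambda}^\infty$ is a nonzero $\SL_2(\C)$-equivariant map, and in the generic range of parameters it is a topological isomorphism; in all cases it is nonzero on every $K$-type. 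First I would record that both $U_{\mu,\lambda}$ and $U_{-\mu,-\lambda}$ contain the $K$-type $\pi_\ell \cong S^\ell(\C^2)$ with multiplicity exactly one, for $\mu \in \{-\ell,-\ell+2,\ldots,\ell\}$ (note $-\mu$ lies in the same set), as established in the discussion preceding the lemma.

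Next I would use the multiplicity-one statement to pin down $A_{\mu,\lambda}$ on the $\pi_\ell$-isotypic component. Since $A_{\mu,\lambda}$ is $\SU_2(\C)$-equivariant and both source and target contain $\pi_\ell$ with multiplicity one, Schur's lemma forces $A_{\mu,\lambda}$ to restrict to a scalar multiple $c_\ell(\mu,\lambda)\cdot \Theta_\ell$ of a fixed unitary isomorphism $\Theta_\ell$ between the two copies of $S^\ell(\C^2)$; moreover $c_\ell(\mu,\lambda) \neq 0$ because the Knapp--Stein operator never annihilates a $K$-type in the minimal $\SL_2$-case (this is where I would cite the standard reference on intertwining operators, e.g.\ along the lines already invoked via \cite{GPT02} and \cite{Wa88}). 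Then for $v,w \in S^\ell(\C^2)$ and $g \in \SL_2(\C)$ I would compute, writing $I_{\cE}^{\pm}$ and $P_{\cE}^{\pm}$ for the embeddings/projections into $\cH_{\pm\mu,\pm\lambda}$,
\[
  \varphi_{-\mu,-\lambda}^\ell(g)
  = P_{\cE}^- \, U_{-\mu,-\lambda}(g)\, I_{\cE}^-
  = P_{\cE}^- \, A_{\mu,\lambda}\, U_{\mu,\lambda}(g)\, A_{\mu,\lambda}^{-1}\, I_{\cE}^-,
\]
using equivariance of $A_{\mu,\lambda}$; the scalars $c_\ell(\mu,\lambda)$ coming from $A_{\mu,\lambda}$ and $c_\ell(\mu,\lambda)^{-1}$ from its inverse cancel, and $P_{\cE}^- A_{\mu,\lambda} I_{\cE}^+ = c_\ell \Theta_\ell$, $P_{\cE}^+ A_{\mu,\lambda}^{-1} I_{\cE}^- = c_\ell^{-1}\Theta_\ell^{-1}$, so the right-hand side collapses to $\Theta_\ell\, \varphi_{\mu,\lambda}^\ell(g)\, \Theta_\ell^{-1}$. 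Finally, since our identification of $\cE = S^\ell(\C^2)$ is canonical and $\Theta_\ell$ is $\SU_2(\C)$-equivariant on an irreducible, $\Theta_\ell$ is itself a scalar times the identity; hence $\varphi_{-\mu,-\lambda}^\ell = \varphi_{\mu,\lambda}^\ell$ as $\GL(\cE)$-valued functions.

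The main obstacle is the non-generic parameter values: when $A_{\mu,\lambda}$ fails to be invertible (at reducibility points), the argument above via $A_{\mu,\lambda}^{-1}$ breaks down. To handle this I would argue by analytic continuation in $\lambda$: for fixed $\mu$ and $\ell$, both sides of the claimed identity are, for each fixed $g = a_t = \exp(th)$ on the relevant torus, meromorphic (indeed, by the explicit hypergeometric formulas to be proved in Theorem~\ref{thm:SphericalFunctionsSL2C}, actually holomorphic) functions of $\lambda \in \C$; since they agree on the open dense set of generic $\lambda$, they agree identically. Using the $K\times K$-covariance \eqref{eq:sphefunc} and the $KAK$ (here $\SU_2(\C)\cdot A\cdot \SU_2(\C)$) decomposition of $\SL_2(\C)$, equality on $A$ upgrades to equality on all of $\SL_2(\C)$. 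Alternatively, and perhaps more cleanly, I would simply defer the proof to after Theorem~\ref{thm:SphericalFunctionsSL2C} and read the symmetry $\varphi_{\mu,\lambda}^\ell = \varphi_{-\mu,-\lambda}^\ell$ directly off the explicit formula, noting that the hypergeometric data and the combinatorial coefficients there are manifestly invariant under $(\mu,\lambda)\mapsto(-\mu,-\lambda)$; this makes the lemma essentially a corollary of the explicit computation, with the intertwining-operator argument serving as the conceptual explanation of why such a symmetry must hold.
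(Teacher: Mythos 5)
Your main argument — multiplicity-one of the $K$-type $\pi_\ell$, the standard (Knapp--Stein) intertwining operator between $U_{\mu,\lambda}$ and $U_{-\mu,-\lambda}$, and Schur's lemma to reduce the comparison to a cancelling scalar — is correct and is essentially a fleshed-out version of the paper's one-line justification via ``$U_{\mu,\lambda}$ and $U_{-\mu,-\lambda}$ have the same composition series,'' which in turn rests on exactly this Weyl-group equivalence of principal series. The computation $\varphi_{-\mu,-\lambda}^\ell = \Theta_\ell\,\varphi_{\mu,\lambda}^\ell\,\Theta_\ell^{-1}$ and the observation that $\Theta_\ell$ is a scalar are both fine; the argument is sound at generic parameters.

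Two small cautions about the non-generic case and your proposed ``alternative.'' First, you cannot justify the analytic continuation in $\lambda$ by invoking the hypergeometric formulas of Theorem~\ref{thm:SphericalFunctionsSL2C}, because the Lemma is logically prior: the text preceding that theorem uses the Lemma to reduce to $\Re(\lambda-\mu)\geq 0$, and the explicit formula is only stated on that half-space. For the same reason, the ``cleaner alternative'' of reading the symmetry directly off the formula is circular. What you should invoke instead is the standard fact that in the compact picture all the $U_{\mu,\lambda}$ are realized on a fixed Hilbert space $L^2(\SU_2(\C))$ with the operators $U_{\mu,\lambda}(g)$ depending holomorphically on $\lambda$; hence for each fixed $g$ the matrix coefficient $\varphi_{\mu,\lambda}^\ell(g)$ is holomorphic in $\lambda$, and the identity established on the (open, dense) set of generic $\lambda$ extends to all $\lambda$. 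With that replacement your proof is complete and runs parallel to the second route indicated in the paper.
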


\begin{proof}
	This follows either from \cite[Prop's~6.9 and 6.11]{GPT02} or from the fact that $U_{\mu,\lambda}$ and $U_{-\mu,-\lambda}$ have the same composition series.
\end{proof}

In view of the previous lemma,
it suffices to describe $\varphi_{\mu,\lambda}^\ell$ for $\operatorname{Re}(\lambda-\mu)\geq0$. Note that $\varphi_{\mu,\lambda}^\ell$ satisfies
\begin{equation}
  \varphi_{\mu,\lambda}^\ell(k_1gk_2) = \pi_\ell(k_1)\circ\varphi_{\mu,\lambda}^\ell(g)\circ\pi_\ell(k_2) \quad \mbox{ for } \quad
  g\in\mathrm{SL}_2(\mathbb{C}),k_1,k_2\in\mathrm{SU}_2(\mathbb{C}).\label{eq:EquivarianceSphFctSL2C}
\end{equation}

\subsection{Explicit matrix-valued spherical functions}

The Euler element $h\in\mathfrak{so}_{1,3}(\mathbb{R})$ defined in \eqref{eq:h} corresponds to
$$ h = \frac{1}{2}\begin{pmatrix}1&0\\0&-1\end{pmatrix}\in\mathfrak{sl}_2(\mathbb{C}). $$
Every element $g\in\mathrm{SL}_2(\mathbb{C})$ can be written as $g=k_1\exp(th)k_2$ with $k_1,k_2\in\mathrm{SU}_2(\mathbb{C})$ and $t\in\mathbb{R}$. Together with the equivariance property \eqref{eq:EquivarianceSphFctSL2C}, this implies that $\varphi_{\mu,\lambda}^\ell$ is uniquely determined by its values at $\exp(th)$, $t\in\mathbb{R}$.

Since the Euler element $h$ commutes with the Cartan subgroup $\mathrm{SU}_2(\mathbb{C})\cap B$ of $\mathrm{SU}_2(\mathbb{C})$, the operators $\varphi_{\mu,\lambda}^\ell(\exp(th))$ leave each of the corresponding weight
spaces in $\cE$ invariant. These weight spaces are one-dimensional, so $\varphi_{\mu,\lambda}^\ell(\exp(th))$ acts on each weight vector by a scalar. Following \cite[p.740]{GPT02}, we let $v_0,\ldots,v_\ell$ be a basis of $S^\ell(\mathbb{C}^2)$ such that
\begin{align}
	d\pi_\ell\begin{pmatrix}1&0\\0&-1\end{pmatrix}v_i &= (\ell-2i)v_i,\label{eq:PiEllOnH}\\
	d\pi_\ell\begin{pmatrix}0&1\\0&0\end{pmatrix}v_i &= (\ell-i+1)v_{i-1},\label{eq:PiEllOnE}\\
	d\pi_\ell\begin{pmatrix}0&0\\1&0\end{pmatrix}v_i &= (i+1)v_{i+1},\label{eq:PiEllOnF}
\end{align}
where we use the convention $v_{-1}=v_{\ell+1}=0$ and extend $d\pi_\ell$ to a representation of the complexification $\mathfrak{su}_2(\mathbb{C})_{\mathbb C}=\mathfrak{sl}_2(\mathbb{C})$. With respect to this basis, $\varphi_{\mu,\lambda}^\ell(\exp(th))$ is diagonal, and we write
\[  \varphi_{\mu,\lambda}^\ell(\exp(th))v_i
=\varphi_{\mu,\lambda}^{\ell,i}(t)\cdot v_i
\quad \mbox{ for }\quad 0\leq i\leq\ell \] 
for some scalar-valued analytic functions $\varphi_{\mu,\lambda}^{\ell,i}$ on $\mathbb{R}$.

\begin{theorem}\label{thm:SphericalFunctionsSL2C}	
	Let $\ell\in\mathbb{Z}_{\geq0}$ and $\mu\in\{-\ell,-\ell+2,\ldots,\ell-2,\ell\}$, $\lambda\in\mathbb{C}$. Assume that $\operatorname{Re}(\lambda-\mu)\geq0$. Then we have for $0\leq i\leq\ell$ and $t\in\mathbb{R}$ 
	$$ \varphi_{\mu,\lambda}^{\ell,i}(t) = e^{(\frac{\ell}{2}-i)t}\sum_{j=0}^\ell u_{ij}a_j(2\sinh t)^j{_2F_1}\left(\frac{\lambda-\mu+2}{2}+j,\frac{\mu-\lambda+2}{2}+j;j+\frac{3}{2};\frac{1 - \cosh(t)}{2}\right), $$
	where
	$$ u_{ij} = {_3F_2}\big(-j,-i,j+1;1,-\ell;1\big) $$
	and
	$$ a_j = \alpha_j\times\begin{cases}\frac{(\frac{\lambda-\mu-2}{2})_{j+1}}{(j+1)_{j+1}}&\mbox{if $\frac{\lambda-\mu}{2}\not\in\mathbb{Z}$ or $\frac{\lambda-\mu}{2}\in\mathbb{Z}$ and $0\leq j\leq\frac{\lambda-\mu}{2}-1$,}\\\frac{(-\frac{\lambda-\mu-2}{2})_{j+1}}{(j+1)_{j+1}}&\mbox{if $\frac{\lambda-\mu}{2}\in\mathbb{Z}$ and $\frac{\lambda-\mu}{2}\leq j\leq\ell$,}\end{cases} $$
	with $(\alpha_0,\ldots,\alpha_\ell)^\top$ the unique eigenvector of $U^{-1}LU$ of eigenvalue $-\mu(\lambda-\mu-2)-(\ell+\mu)(\ell-\mu+2)$ such that $a_0=1$ resp. $b_0=1$ with $U=(u_{ij})_{0\leq i,j\leq\ell}$ and
	$$ L = \sum_{i=0}^\ell \Big(4i(\ell-i+1)E_{i,i-1}+\big[(\ell-2i)(\lambda-\mu-2)-4i(\ell-i+1)\big]E_{i,i}\Big). $$
\end{theorem}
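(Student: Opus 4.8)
The plan is to reduce the matrix-valued identity to a scalar recursion on the weight-vector functions $\varphi_{\mu,\lambda}^{\ell,i}$, and then solve that recursion explicitly. First I would use the equivariance \eqref{eq:EquivarianceSphFctSL2C} together with the $KAK$-decomposition $g = k_1\exp(th)k_2$ to record that $\varphi_{\mu,\lambda}^\ell$ is determined by the diagonal matrix $\varphi_{\mu,\lambda}^\ell(\exp(th))$, whose entries are the $\varphi_{\mu,\lambda}^{\ell,i}(t)$. The next step is to write down the radial part of the Casimir (or equivalently the system of differential equations satisfied by a spherical function of $K$-type $\pi_\ell$ in the principal series $U_{\mu,\lambda}$); here I would invoke the computations of \cite{GPT02}, in whose normalization the function $\Phi(t) = \bigl(\varphi_{\mu,\lambda}^{\ell,0}(t),\ldots,\varphi_{\mu,\lambda}^{\ell,\ell}(t)\bigr)^\top$ satisfies a hypergeometric-type matrix ODE whose coefficient matrix is conjugate, via the change of basis $U=(u_{ij})$ with $u_{ij}={_3F_2}(-j,-i,j+1;1,-\ell;1)$, to the lower-Hessenberg matrix $L$ displayed in the statement. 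Passing to the variable $x = \tfrac{1-\cosh t}{2}$ and pulling out the factor $e^{(\frac{\ell}{2}-i)t}$ turns the single regular-singular point at $t=0$ (i.e.\ $x=0$) into a hypergeometric-type system; the eigenvalue $-\mu(\lambda-\mu-2)-(\ell+\mu)(\ell-\mu+2)$ of $U^{-1}LU$ is exactly the eigenvalue of the Casimir on the irreducible constituent of $U_{\mu,\lambda}$ of lowest $K$-type $\pi_\ell$, which is why it appears.

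Second, I would construct the solution as a power series in $2\sinh t$ (equivalently in $x$) with the stated hypergeometric blocks. The ansatz $\varphi_{\mu,\lambda}^{\ell,i}(t) = e^{(\frac{\ell}{2}-i)t}\sum_{j=0}^\ell u_{ij}\,a_j\,(2\sinh t)^j\,{_2F_1}\bigl(\tfrac{\lambda-\mu+2}{2}+j,\tfrac{\mu-\lambda+2}{2}+j;j+\tfrac32;x\bigr)$ is motivated by the $d=3$ scalar case treated in Section~\ref{sec:sphfunc}, where the spherical function was a single such ${_2F_1}$; the extra sum over $j$ accounts for the $(\ell+1)$-dimensional $K$-type. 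Substituting this ansatz into the $U$-conjugated system, one checks that each basis function $(2\sinh t)^j\,{_2F_1}(\ldots)$ is an eigenfunction of the diagonal part and that the off-diagonal (Hessenberg) part of $L$ forces a three-term recursion on the coefficients $a_j$; this recursion is precisely the eigenvector equation $U^{-1}LU\,\mathbf a = (\text{eigenvalue})\,\mathbf a$ after absorbing the Pochhammer ratios. Solving the recursion explicitly, using contiguous relations for ${_2F_1}$ to match the shift $j\mapsto j+1$, yields the quotient of Pochhammer symbols $\frac{(\frac{\lambda-\mu-2}{2})_{j+1}}{(j+1)_{j+1}}$ in the generic case; the second (truncated) case occurs exactly when $\tfrac{\lambda-\mu}{2}\in\mathbb Z_{\geq 0}$, where the principal series becomes reducible and the series terminates at $j=\tfrac{\lambda-\mu}{2}-1$, so one must switch to the other fundamental solution, giving $\frac{(-\frac{\lambda-\mu-2}{2})_{j+1}}{(j+1)_{j+1}}$.

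Third, I would pin down the normalization: the spherical function satisfies $\varphi_{\mu,\lambda}^\ell(e)=\mathrm{id}_{\mathcal E}$, i.e.\ $\varphi_{\mu,\lambda}^{\ell,i}(0)=1$ for all $i$. Since at $t=0$ every ${_2F_1}(\ldots;0)=1$ and $(2\sinh t)^j$ vanishes for $j\geq 1$, this reduces to $\sum_{i} u_{i0}a_0 = 1$, i.e.\ $u_{i0}=1$ for all $i$ (which holds because ${_3F_2}(0,-i,1;1,-\ell;1)=1$) and $a_0=1$, whence $\alpha_0=1$; this is the normalization recorded in the statement. The condition $\operatorname{Re}(\lambda-\mu)\geq 0$ guarantees that the chosen solution is the one regular at $t=0$ (the other exponent at the singular point being the one excluded by this inequality, via the previous Lemma $\varphi_{\mu,\lambda}^\ell=\varphi_{-\mu,-\lambda}^\ell$). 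Finally I would appeal to the holonomicity of the system: both sides are analytic on $\mathbb R$, satisfy the same ODE system, and agree to sufficiently high order at $t=0$, hence coincide everywhere; by the $KAK$-equivariance this determines $\varphi_{\mu,\lambda}^\ell$ on all of $\mathrm{SL}_2(\mathbb C)$.

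The main obstacle I anticipate is the algebra of the recursion in the second step: verifying that, after the conjugation by $U$, the Hessenberg matrix $L$ acting on the hypergeometric blocks $(2\sinh t)^j\,{_2F_1}(\tfrac{\lambda-\mu+2}{2}+j,\tfrac{\mu-\lambda+2}{2}+j;j+\tfrac32;x)$ produces exactly the three-term relation whose solution is the stated ratio of Pochhammer symbols. This requires a careful bookkeeping with Gauss' contiguous relations and with the ${_3F_2}$-evaluation defining the $u_{ij}$ (which are essentially Hahn polynomials), and the bifurcation into the two cases for $a_j$ has to be tracked through the reducibility locus $\tfrac{\lambda-\mu}{2}\in\mathbb Z$. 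Much of this is implicit in \cite{GPT02}, so a sizeable part of the argument is transcribing their results into the present normalization rather than reproving them from scratch; the genuinely new content is the identification of the coefficient matrix with $L$ and of the eigenvalue with the Casimir value, together with the explicit closed form for the $a_j$.
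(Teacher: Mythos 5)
Your proposal identifies the correct structural backbone --- reduce by $KAK$-equivariance to the diagonal entries on $\exp(th)$, lean on \cite{GPT02} for the radial system and the $U$-conjugation, and normalize at $t=0$ --- but the route you sketch (substitute an ansatz into the $U$-conjugated ODE and solve a three-term recursion for the $a_j$ via contiguous relations) is genuinely different from, and substantially more laborious than, what the paper does. The paper's proof never verifies any recursion: it writes $\varphi_{\mu,\lambda}^\ell=\Phi_{(p,k)}$ in GPT02's parametrization (with $p=\tfrac14(\lambda-\mu+2)$, $k=\tfrac12(\mu+\ell)$), cites GPT02's factorization $\Phi_{(p,k)}(g)=H(g)\pi_\ell(g)$ and diagonalization $\tilde H(s)=UH^\vee(s)$, then simply \emph{quotes} GPT02's Theorem~6.8 for the closed form $h_j^\vee(s)=a_j\,s^{1-p}(1-s)^j\,{_2F_1}(j+1,\,j+2-2p;\,2j+2;\,1-s)$ including the defining recursion for the $a_j$, and finally converts this expression to the stated one by a quadratic transformation of ${_2F_1}$ (from \cite{AAR99}) together with the substitution $s=e^{-2t}$. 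This quadratic-transformation step is the only nontrivial new computation in the paper, and it is absent from your plan; conversely, the recursion-solving and Hahn-polynomial bookkeeping that you flag as the main obstacle is exactly the part the paper avoids by citation. Your approach would give a more self-contained derivation at the cost of re-proving a chunk of GPT02, whereas the paper trades that for a short hypergeometric-transformation argument. Two small inaccuracies worth flagging: the normalization at $t=0$ gives $u_{i0}a_0=1$ for each $i$ separately, not a sum $\sum_i u_{i0}a_0=1$ (you recover the right conclusion anyway since $u_{i0}={_3F_2}(0,-i,1;1,-\ell;1)=1$); and the hypothesis $\operatorname{Re}(\lambda-\mu)\geq 0$ is not about selecting the solution regular at $t=0$ (both branches are regular there) but is the normalization convention that makes the correspondence with GPT02's parameter $p$ and the explicit form of $a_j$ come out as stated, with the opposite half-plane handled by the symmetry $\varphi_{\mu,\lambda}^\ell=\varphi_{-\mu,-\lambda}^\ell$ from the preceding lemma.
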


\begin{proof}
  In the notation of \cite[Prop.~6.11]{GPT02}, we have $\varphi_{\mu,\lambda}^\ell=\Phi_{(p,k)}$, where $p=\frac{1}{4}(\lambda-\mu+2)$ and $k=\frac{1}{2}(\mu+\ell)$. (Note that $\lambda$ corresponds to $iv$ and $\mu$ corresponds to $r$ in the definition of the principal series in
  \cite[p.~760]{GPT02}.) In \cite[p.~733]{GPT02}, the spherical function $\Phi_{(p,k)}$ is written as
	\begin{equation}
          \Phi_{(p,k)}(g) = H(g)\pi_\ell(g)
 \quad \mbox{ for } \quad g\in\mathrm{SL}_2(\mathbb{C}),\label{eq:SphFctSL2CEq1}
	\end{equation}
	where $\pi_\ell$ is extended to a holomorphic representation of $\mathrm{SL}_2(\mathbb{C})$ and $H:\mathrm{SL}_2(\mathbb{C})\to\operatorname{End}(S^\ell(\mathbb{C}^2))$. By \eqref{eq:PiEllOnH}, the values of $\pi_\ell$ on $\exp(th)$ are given by
	\begin{equation}
		\pi_\ell(\exp(th))v_i = e^{(\frac{\ell}{2}-i)t}v_i \qquad (0\leq i\leq\ell).\label{eq:SphFctSL2CEq2}
	\end{equation}
	To find an expression for $H(\exp(th))$, the authors introduce in \cite[p.~736, eq. (4.2), and p.~741]{GPT02} the notation
\begin{equation}
\tilde H(s) =  H(\exp(-\tfrac{1}{2}\log(s)h))
   = \operatorname{diag}(h_0(s),\ldots,h_\ell(s))
   \quad \mbox{ for } \quad s>0.\label{eq:SphFctSL2CEq3}
\end{equation}
(Note that this is abusing notation since $H$ is used both for a function on $\mathrm{SL}_2(\mathbb{C})$ and for a function on $(0,\infty)$.) They identify the diagonal matrix $\tilde H(s)$
with the corresponding vector
$(h_0(s),\ldots,h_\ell(s))^\top\in\mathbb{C}^{\ell+1}$.
In \cite[p.744]{GPT02} the vector $\tilde H(s)
=(h_0(s),\ldots,h_\ell(s))^\top$ is further written as
	\begin{equation}
	\tilde H(s) = UH^\vee(s),\label{eq:SphFctSL2CEq4}
	\end{equation}
	where $U=(u_{ij})_{0\leq i,j\leq\ell}$ is the matrix defined above and
	$$ H^\vee(s) = (h^\vee_0(s),\ldots,h^\vee_\ell(s))^\top. $$
	The functions $h^\vee_j(s)$ are found in \cite[Thm~6.8]{GPT02}. For the case $2p\not\in\mathbb{Z}$ they read
	$$ h^\vee_j(s) = a_j s^{1-p}(1-s)^j\cdot {_2F_1}\big(j+1,j+2-2p;2j+2;1-s\big) \qquad (0\leq j\leq\ell) $$
	with $a_j$ as defined in the statement of the theorem. To match this function with the claimed expression above, we apply the following quadratic transformation which follows from \cite[Thm~3.1.3]{AAR99} combined with Pfaff's transformation \cite[Thm~2.2.5]{AAR99}:
	$$ {_2F_1}(a,b;2a;z) = (1-z)^{-\frac{b}{2}}{_2F_1}\left(\frac{b}{2},a-\frac{b}{2};a+\frac{1}{2};\frac{z^2}{4(z-1)}\right). $$
	This yields
	$$ h^\vee_j(s) = a_js^{-\frac{j}{2}}(1-s)^j{_2F_1}\left(p+\frac{j}{2},1-p+\frac{j}{2};j+\frac{3}{2};-\frac{(1-s)^2}{4s}\right). $$
	We put $s=e^{-2t}$, then
	$$ \frac{(1-s)^2}{4s}=\sinh^2(t)=4\sinh^2(\tfrac{t}{2})(1+\sinh^2(\tfrac{t}{2})) $$
	and \cite[(3.1.3)]{AAR99} implies
	\begin{equation}
		h^\vee_j(e^{-2t}) = a_j(2\sinh(t))^j{_2F_1}\left(2p+j,2-2p+j;j+\frac{3}{2}; \frac{1 - \cosh(t)}{2}\right).\label{eq:SphFctSL2CEq5}
	\end{equation}
	Putting \eqref{eq:SphFctSL2CEq1}, \eqref{eq:SphFctSL2CEq2}, \eqref{eq:SphFctSL2CEq3}, \eqref{eq:SphFctSL2CEq4} and \eqref{eq:SphFctSL2CEq5} together shows the claimed formula for $2p\not\in\mathbb{Z}$. The case $2p\in\mathbb{Z}$ is handled in the same way.
\end{proof}

\begin{ex}
	\begin{enumerate}
		\item[\rm(a)] For $\ell=0$ we find $u_{00}=1$ and $a_0=1$, hence
		$$ \varphi_{0,\lambda}^{0,0}(t) = {_2F_1}\left(1+\frac{\lambda}{2},1-\frac{\lambda}{2};\frac{3}{2}; \frac{1 - \cosh(t)}{2}\right). $$
		This agrees with the function in the formula \eqref{eq:qlambda} for $d=3$.
		\item[\rm(b)] For $\ell=1$ we have $\mu=\pm1$ and $u_{00}=u_{10}=u_{01}=-u_{11}=1$. A short computation shows that for $\mu=-1$:
		$$ a_0=1 \qquad \mbox{and} \qquad a_1=-\frac{3+\lambda}{12}, $$
		and for $\mu=+1$:
		$$ a_0=1 \qquad \mbox{and} \qquad a_1=-\frac{3-\lambda}{12}. $$
		This implies, for $\mu=-1$ and $i=0,1$:
		\begin{multline*}
                  \varphi_{-1,\lambda}^{1,i}(t) = e^{(\frac{1}{2}-i)t}\Bigg({_2F_1}\left(\frac{3}{2}+\frac{\lambda}{2},\frac{1}{2}-\frac{\lambda}{2};\frac{3}{2};
                     \frac{1 - \cosh(t)}{2}\right)\\
			-(-1)^i\frac{3+\lambda}{6}\sinh(t){_2F_1}\left(\frac{5}{2}+\frac{\lambda}{2},\frac{3}{2}-\frac{\lambda}{2};\frac{5}{2}; \frac{1 - \cosh(t)}{2}\right)\Bigg),
		\end{multline*}
		and for $\mu=+1$ and $i=0,1$:
		\begin{multline*}
  \varphi_{+1,\lambda}^{1,i}(t) = e^{(\frac{1}{2}-i)t}\Bigg({_2F_1}\left(\frac{1}{2}+\frac{\lambda}{2},\frac{3}{2}-\frac{\lambda}{2};\frac{3}{2};
                    \frac{1 - \cosh(t)}{2}\right)\\
			-(-1)^i\frac{3-\lambda}{6}\sinh(t)\cdot {_2F_1}\left(\frac{3}{2}+\frac{\lambda}{2},\frac{5}{2}-\frac{\lambda}{2};\frac{5}{2};
			\frac{1 - \cosh(t)}{2}\right)\Bigg).
		\end{multline*}
	\end{enumerate}
\end{ex}

\subsection{Positive definite matrix-valued spherical functions}

By inspecting unitarizability of the principal series $U_{\mu,\lambda}$ and its subrepresentations/quotients, we can determine which spherical functions $\varphi_{\mu,\lambda}^\ell$ are positive definite.

\begin{prop}\label{prop:PosDefSphFctSL2C}
	Let $\ell\in\mathbb{Z}_{\geq0}$ and $\mu\in\{-\ell,-\ell+2,\ldots,\ell-2,\ell\}$, $\lambda\in\mathbb{C}$. The spherical function $\varphi_{\mu,\lambda}^\ell$ is positive definite if and only if one of the following statements holds:
	\begin{itemize}
		\item (unitary principal series) $\lambda\in i\mathbb{R}$,
		\item (complementary series) $\mu=0$ and $\lambda\in(-2,2)$,
		\item (trivial representation) $\mu=0$, $\lambda=\pm2$ and $\ell=0$,
		\item (relative discrete series) $\mu=0$, $\lambda\in\pm\{2,4,\ldots,\ell\}$ with $\ell$ even.
	\end{itemize}
\end{prop}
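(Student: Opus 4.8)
The plan is to determine exactly when the matrix-valued spherical function $\varphi_{\mu,\lambda}^\ell$ is positive definite by reducing the question to the unitarizability of the (sub)representations of the principal series $U_{\mu,\lambda}$ that contain the $K$-type $\pi_\ell = S^\ell(\mathbb{C}^2)$. The key structural fact is that a $B(\mathcal{E})$-valued function of the form $\varphi_{\mathcal{E}}(g) = P_{\mathcal{E}} V(g) I_{\mathcal{E}}$ is positive definite if and only if the cyclic subrepresentation of $V$ generated by the image of $\mathcal{E}$ is unitarizable and $\mathcal{E}$ sits isometrically inside it; conversely positive definiteness produces a GNS-type unitary representation realizing $\varphi_{\mathcal{E}}$ as a diagonal matrix coefficient. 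So first I would make this equivalence precise: $\varphi_{\mu,\lambda}^\ell$ is positive definite iff there is an irreducible (or at least unitarizable) subquotient $\mathcal{H}$ of $U_{\mu,\lambda}$ containing $\pi_\ell$ with multiplicity one, on which the given $\mathcal{SU}_2(\mathbb{C})$-invariant inner product on $\mathcal{E}$ is (a positive multiple of) the restriction of the Hilbert inner product. Because each $K$-type in $U_{\mu,\lambda}$ has multiplicity at most one, ``containing $\pi_\ell$'' pins down the subquotient uniquely once we know the composition series.

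Next I would recall the composition series and unitarizability of the $\mathrm{SL}_2(\mathbb{C})$ principal series $U_{\mu,\lambda}$ with $\mu \in \mathbb{Z}$, $\lambda \in \mathbb{C}$, normalized so that $\lambda \in i\mathbb{R}$ is the tempered (unitary principal series) point. The standard facts (see e.g.\ Gelfand--Graev--Vilenkin, or Knapp) are: $U_{\mu,\lambda}$ is irreducible and unitarizable precisely for $\lambda \in i\mathbb{R}$; for $\mu = 0$ there is in addition the complementary series $\lambda \in (-2,2)$ (with the normalization of the excerpt, where $\rho$ corresponds to $\lambda = \pm 2$ being the reducibility point); the representation is reducible exactly when $\lambda \mp \mu \in 2\mathbb{Z}$ with $|\lambda| > |\mu|$ in the appropriate sense, and at those points $U_{\mu,\lambda}$ has a finite-dimensional subquotient and an infinite-dimensional one. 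For $\mu = 0$ the reducible points $\lambda \in \{2,4,\ldots\}$ yield a finite-dimensional quotient (the trivial representation when $\lambda = 2$) and a unitarizable ``discrete series'' subrepresentation; the $K$-types of the discrete series piece at $\lambda \in \{2,4,\ldots\}$ are $\pi_\ell$ with $\ell \ge \lambda$ and $\ell$ even, while the finite-dimensional piece carries $\pi_\ell$ with $\ell \le \lambda - 2$ even. For $\mu \ne 0$ the reducible cases produce no new unitarizable infinite-dimensional subquotients (the only nontrivial unitarizable subquotients with nonzero $\mu$ are the tempered ones, $\lambda \in i\mathbb{R}$). Assembling these, an $\mathcal{E} = S^\ell(\mathbb{C}^2)$ occurs in a unitarizable subquotient of $U_{\mu,\lambda}$ exactly in the four listed cases, and in each of those cases the inner product is forced (up to scale) by $\mathrm{SU}_2(\mathbb{C})$-invariance and irreducibility.

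Then I would run the argument in both directions. For the ``if'' direction: in each of the four cases exhibit the unitarizable subquotient $\mathcal{H}$, note $\pi_\ell$ occurs in it with multiplicity one, and observe $\varphi_{\mu,\lambda}^\ell(g) = P_{\mathcal{E}} U_{\mathcal{H}}(g) I_{\mathcal{E}}$ with $\mathcal{E} \hookrightarrow \mathcal{H}$ isometric, hence positive definite as a diagonal matrix coefficient; for the unitary principal series this is immediate from \eqref{eq:SphFctInPS} with $\mathcal{H} = \mathcal{H}_{\mu,\lambda}$ itself. For the ``only if'' direction: suppose $\varphi_{\mu,\lambda}^\ell$ is positive definite; by the GNS construction it is realized in a unitary representation $(U,\mathcal{H})$ containing $\pi_\ell$, and by the Casselman embedding theorem $\mathcal{H}^\infty$ embeds in some $\mathcal{H}_{\mu',\lambda'}^\infty$; matching infinitesimal characters and $K$-types (using the $\mathrm{SL}_2(\mathbb{C})$ structure theory and the parametrization in the excerpt) forces $(\mu',\lambda')$ to lie in the same Weyl orbit as $(\mu,\lambda)$, so $\mathcal{H}$ is a unitarizable subquotient of $U_{\mu,\lambda}$ containing $\pi_\ell$, which by the classification above puts $(\mu,\lambda,\ell)$ in one of the four cases. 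I would also use the lemma $\varphi_{\mu,\lambda}^\ell = \varphi_{-\mu,-\lambda}^\ell$ to keep everything on the slice $\operatorname{Re}(\lambda - \mu) \ge 0$, matching Theorem~\ref{thm:SphericalFunctionsSL2C}, and to symmetrize the final answer to ``$\lambda \in \pm\{2,4,\ldots,\ell\}$''. The main obstacle is the bookkeeping at the reducibility points $\lambda \mp \mu \in 2\mathbb{Z}$: one must argue carefully that for $\mu \ne 0$ none of the (finite- or infinite-dimensional) proper subquotients is both unitarizable and a carrier of $\pi_\ell$, and that for $\mu = 0$, $\lambda \in \{2,4,\ldots,\ell\}$ it is precisely the discrete-series summand — not the finite-dimensional one — that contains $\pi_\ell$; this is where I would need to invoke the explicit $K$-type decomposition of each subquotient rather than just the abstract composition series.
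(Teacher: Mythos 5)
Your proposal takes essentially the same route as the paper: reduce positive definiteness of $\varphi_{\mu,\lambda}^\ell$ to the question of whether the $K$-type $S^\ell(\mathbb{C}^2)$ lies in a unitarizable subrepresentation or quotient of $U_{\mu,\lambda}$, then read off the answer from the classification of unitarizable subquotients of the $\mathrm{SL}_2(\mathbb{C})$ principal series (the paper simply cites Knapp, Ch.~XVI, \S 1). You flesh out the reduction and the $K$-type bookkeeping in considerably more detail than the paper's two-sentence proof; the only minor point worth tightening is that in the ``only if'' direction one should first decompose the cyclic GNS representation into irreducibles (using that it is quasi-simple with infinitesimal character $\chi_{\mu,\lambda}$) before invoking the Casselman/Harish-Chandra subquotient theorem, since cyclicity alone does not give irreducibility.
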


\begin{proof}
	$\varphi_{\mu,\lambda}^\ell$ is positive definite if and only if the $K$-type $S^\ell(\mathbb{C}^2)$ of $U_{\mu,\lambda}$ belongs to a unitarizable subrepresentation/quotient. The unitarizable subrepresentations/quotients of the principal series of $\mathrm{SL}_2(\mathbb{C})$ are determined in \cite[Ch.~XVI.\S1]{Kn86}.
\end{proof}

Note that the relative discrete series representation which occurs as unitarizable subrepresentation/quotient of the non-unitary principal series $U_{0,\lambda}$ for $\lambda\in\pm\{2,4,\ldots,\ell\}$, $\ell$ even, is isomorphic to the unitary principal series $U_{\lambda,0}$. This is reflected by the symmetry property (cf.~\cite[Prop.~6.9]{GPT02})
$$\varphi_{\mu,\lambda}^\ell=\varphi_{\lambda,\mu}^\ell \qquad \mbox{whenever }\lambda\in\{-\ell,-\ell+2,\ldots,\ell-2,\ell\}. $$

\section{Perspectives}
\mlabel{sec:pers}

\subsection{Covering groups acting on Sitter space}

Our discussion of locality in de Sitter space in Section~\ref{sec:6}
was based on the fact that the set of spacelike pairs in
$\dS^d \times \dS^d$ coincides with $G.(W \times -W)$ for any wedge region
$W \subeq \dS^d$ and $G = \SO_{1,d}(\R)_e$.
This situation changes if we pass to covering spaces.

As de Sitter space $\dS^d$ is homotopy equivalent to the sphere $\bS^{d-1}$,
it is simply connected for $d \geq 3$, so that non-trivial coverings
only exist for $d = 2$. In this case $\pi_1(\dS^2) \cong \Z$, so that
we have for each $n\in \{2,3,\ldots \} \cup \{ \infty\}$
an $n$-sheeted  covering
$M^{(n)}$. These are causal symmetric spaces on which
covering groups $G^{(n)}$ of $G = \SO_{1,2}(\R)_e$ act transitively.

A key difference between $\dS^2$ and its coverings is that in
the coverings the positivity domain $W_{M^{(n)}}^+(h)$ of the boost vector
field decomposes into several connected components
which are natural candidates for wedge regions~$W$. Likewise the ``causal complement''
$W'$ is no longer uniquely specified, and this creates intricate
locality properties that should be understood from the representation
theoretic perspective. For some first steps in this direction, we
refer to the discussion of twisted locality 
in \cite{MN21}.

For $d \geq 3$, the group $G$ has a simply connnected $2$-fold
covering $\tilde G = \Spin_{1,d}(\R)$, and we may also consider $\dS^d$ as a
causal symmetric space of this group. The passage to this covering
group leads to more homogeneous vector bundles, associated to
representations of $\Spin_{1,d-1}(\R)$, including in particular spinor bundles.

\subsection{Invariant Hilbert spaces in $C^{-\infty}(\dS^d)$}
\mlabel{subsec:pers.2}

In the present paper we use the realization of
de Sitter space $\dS^d$ as a submanifold of the boundary
of the two domains $\Xi_\pm \subeq \dS^d_\C$. From the construction
in \cite{FNO23}, we obtain realizations of all irreducible
unitary representations of the group $G = \SO_{1,d}(\R)_e$ in
holomorphic sections of vector bundles over $\Xi_\pm$ 
and we have seen how to derive by boundary value 
realizations in bundle-valued distributions on~$\dS^d$.

For the special case of spherical representations,
this construction leads to the reproducing kernel Hilbert spaces 
$\cH_m^\pm \subeq \cO(\Xi_\pm)$ and further to Hilbert subspaces
$\cH_m^\pm \subeq C^{-\infty}(\dS^d), m \geq 0,$ 
where $m = 0$ corresponds to the space of constant functions.
These embeddings are implemented by
$H$-fixed distribution vectors for $H = G_{\be_1} = \SO_{1,d-1}(\R)_e$.
Any such distribution vector defines an $H$-invariant
positive definite distribution $D_m^\pm \in C^{-\infty}(\dS^d)$.

In this context it makes sense to ask
to which extent are $H$-invariant positive definite distributions on
$\dS^d$ integrals of the distributions $D_m^\pm$? The existence of the
integral representations follows from \cite{FT99}, once the extremal
rays in the cone of positive  definite $H$-invariant distributions are
parametrized. In the QFT context this corresponds to
so-called K\"allen--Lehmann representations of
``$2$-point functions'', resp., to generalized free fields
(cf.~\cite[Prop.~3.3]{BV96}).
 The extent to which such explicit integral decompositions,
  resp., parametrizations of extremal rays can be carried
  out has to be clarified. Related results that can most likely
  be combined to solve this problem can be found
  in \cite{Fa79}, \cite{OS80},  \cite{Sh95} and \cite{vD09}.

\subsection{Aspects of interacting QFTs}

In \cite{BJM23} Barata, J\"akel and Mund construct a family
of quantum field theories in the sense of Haag--Kastler on
$2$-dimensional de Sitter space. It would be interesting to connected
the present project with these constructions.
For scalar theories, the Hilbert spaces consist of distributions on
$\dS^d$, hence are represented by $H$-invariant positive definite
distributions on $\dS^d$ (cf.\ Subsection~\ref{subsec:pers.2}). 

The construction of local nets of operator algebras on de Sitter space
in \cite{BJM23} is based on euclidean models satisfying
a reflection positivity condition (cf.\ \cite{NO18}).
Presently we do not know to which extent such euclidean models
exist for non-spherical irreducible unitary representations of
$\SO_{1,d}(\R)_e$, but the discussion of the general
connection between standard subspaces of
reflection positivity on the circle group
(acting by rotations on the complex de Sitter space $\dS^1_\C
\cong \C^\times$),  \cite{NO19} suggests non-trivial phenomena
also for $d \geq 2$.

\subsection{Realizing reducible representations by   boundary values} 

The construction of realizations of unitary representations
of $\SO_{1,d}(\R)_e$ in distributional sections of vector bundles
on de Sitter space starts, according to \cite{FNO23},
with $K$-finite vectors in unitary representations which
are finite sums of irreducible ones.
In view of the relevance of more complicated representations
in the context of interacting QFTs (cf.\ \cite{BJM23}),
it would be interesting to obtain similar results for
representations that are direct integrals or infinite sums of
irreducible representations. 
Following the strategy outlined in \cite{FNO23}
and the discussion of the case $d = 1$ in the present paper,
we would like to use $J$-fixed $K$-finite vectors $v \in
\cH^J_{\rm temp}$, i.e.,
for which there exists some $C> 0$ with 
\[ \|e^{i t \partial U(h)}v\| \leq \frac{C}
  {(\frac{\pi}{2}-t)^N} \quad \mbox{ for } \quad t \in [0,\frac{\pi}{2}).\]
A characterization of  such $K$-finite vectors can probably
be achieved in terms of the spectral decomposition 
with respect to the Casimir operator or more general invariant
differential operators on the corresponding $K$-isotypical subspaces,
which could be infinite-dimensional. 
This line of investigation requires in particular a
better quantitative understanding of decompositions of finite
tensor products of irreducible representations.
Tensor products of two irreducible unitary representations of
  $\SO_{1,d}(\R)_e$ decompose with finite multiplicities, 
  but for triple tensor products the multiplicities
  become  infinite (unless tensoring exclusively
  highest/lowest weight representations for $d = 2$).
  The first statement follows from the fact that the pair
  $(\SO_{1,d}(\R)_e^3, {\rm diag}(\SO_{1,d}(\R)_e))$ is
  real spherical  (cf.\ \cite[Thm.~1.3, Fact 1.8]{KM14}).
  Note that, up to coverings, $G = \SO_{1,d}(\R)_e$ is the only
  non-compact connected simple real Lie group for which
  $(G^3, {\rm diag}(G))$  has this property.
  The second one can be derived with the holomorphic
  character of the corresponding semigroup representation.

\subsection{Huygens' Principle}

In Theorem~\ref{Thm:HP}
we have seen that, for certain parameters,
the support of the jump distributions lies inside the
``lightlike'' part of the sets $I^\pm(x)$ and 
this has strong implications for the locality behavior of the
corresponding Klein--Gordon equation.

In the vector-valued case the generalized spherical function
$\phi_\cE$ also permits us to study jump distributions on
de Sitter space. In this case it would be interesting
to determine if these matrix-valued distributions also
exhibit some Huygens' Principle in the sense that their
support lies in~$\partial I^\pm(\be_1)$. 

It may also be possible to determine the jump singularities
  for the operator-valued spherical functions for $\SL_2(\C) \cong
  \Spin_{1,3}(\R)$ that we find in Section~\ref{sec:sphfunc}.
  
\appendix

\section{Hypergeometric functions}

Our standard references for this section are \cite[Ch.~II]{Er53}
and \cite[Ch.~9]{LS65}. 

The Gau\ss{} hypergeometric functions, or simply the
{\it hypergeometric function}, $\hgfabc $ is given by the series
\begin{equation}\label{def:hgf}
  \hgfabc =\sum_{k=0}^\infty \frac{ (\alpha)_k (\beta)_k}{(\gamma)_k}\,
  \frac{z^k}{k!},\quad \gamma \not= 0, -1, -2, \ldots 
\end{equation}
where 
\[(a)_k =a(a+1)\cdots (a+k-1), \quad k =0, 1,2, \ldots.\]
We note the following, see \cite[Ch. II]{Er53} and \cite[Ch. 9]{LS65}:
\begin{lemma} 
\begin{itemize} 
\item[\rm (a)] If $\alpha $ or $\beta$ is contained in $-\N_0$, 
  then the series in \eqref{def:hgf} is finite and $\hgf$ is a polynomial function.
\item[\rm (b)] The series in \eqref{def:hgf} converges absolutely for $|z|<1$.
\item[\rm (c)] The hypergeometric function $\hgfabc$ is the unique solution to the differential equation
\[z(1-z)u^{\prime\prime} +(\gamma - (\alpha + \beta +1)z)u^\prime - \alpha\beta u=0\]
which is regular at $z=0$ and takes the value $1$ at that point.
\end{itemize}
\end{lemma}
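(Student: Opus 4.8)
The plan is to read off all three statements directly from the defining series \eqref{def:hgf} and from the three-term-free recurrence it satisfies.

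For (a), suppose $\alpha = -n$ with $n \in \N_0$ (the case where $\beta \in -\N_0$ is identical, using symmetry in $\alpha,\beta$). The Pochhammer symbol $(\alpha)_k = \alpha(\alpha+1)\cdots(\alpha+k-1)$ then contains the factor $\alpha + n = 0$ as soon as $k \geq n+1$, so $(\alpha)_k = 0$ for all $k > n$. Hence the series in \eqref{def:hgf} has only finitely many nonzero terms and $\hgfabc$ is a polynomial in $z$ of degree at most $n$.

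For (b), we may assume $\alpha,\beta \notin -\N_0$, since otherwise (a) applies and the claim is trivial; then all coefficients $a_k := (\alpha)_k(\beta)_k/((\gamma)_k\,k!)$ are nonzero, where we use the standing hypothesis $\gamma \notin \{0,-1,-2,\ldots\}$ so that $(\gamma)_k \neq 0$. The quotient of successive coefficients is
\[ \frac{a_{k+1}}{a_k} = \frac{(\alpha+k)(\beta+k)}{(\gamma+k)(k+1)}, \]
which tends to $1$ as $k \to \infty$. By the ratio test the radius of convergence equals $1$, so the series converges absolutely for $|z| < 1$.

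For (c), I would substitute a formal power series $u = \sum_{k\geq 0} c_k z^k$ into the hypergeometric equation $z(1-z)u'' + (\gamma - (\alpha+\beta+1)z)u' - \alpha\beta u = 0$ and collect the coefficient of $z^k$. A short computation gives the two-term recurrence
\[ (k+1)(\gamma+k)\,c_{k+1} = (\alpha+k)(\beta+k)\,c_k \qquad (k \geq 0). \]
Since $\gamma \notin \{0,-1,-2,\ldots\}$, the factor $\gamma + k$ is nonzero for every $k \geq 0$, so the recurrence determines $c_{k+1}$ uniquely from $c_k$; imposing $c_0 = u(0) = 1$ then forces $c_k = (\alpha)_k(\beta)_k/((\gamma)_k\,k!)$, i.e.\ $u = \hgfabc$, which by (b) is a genuine holomorphic function near $0$. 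This simultaneously shows that $\hgfabc$ solves the equation, is regular at $0$ with value $1$, and is the only power series solution with that normalization. I expect the computation to be routine bookkeeping; the only point requiring a little care is the uniqueness clause — one should note that every solution holomorphic at $z=0$ is represented by its Taylor series there, and that the second Frobenius solution at the regular singular point $z=0$ (with exponent $1-\gamma$, involving $\log z$ when $\gamma = 1$) fails to be holomorphic at $0$ precisely under the hypothesis $\gamma \notin -\N_0$, so that ``regular at $0$ with value $1$'' indeed pins down $\hgfabc$.
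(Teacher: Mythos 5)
Your proof is correct on all three points; the paper itself does not supply an argument here — it simply states the lemma with a pointer to Erd\'elyi et al.\ and Lebedev--Silverman, so there is no in-paper proof to compare against. Your arguments are the standard ones: termination of the Pochhammer symbol for (a), the ratio test for (b), and for (c) the two-term recurrence $(k+1)(\gamma+k)c_{k+1}=(\alpha+k)(\beta+k)c_k$ obtained by inserting a power series and equating coefficients, with the hypothesis $\gamma\notin-\N_0$ ensuring the recurrence determines $c_{k+1}$ from $c_k$ without obstruction. The closing remark about the second Frobenius exponent $1-\gamma$ is sound but not logically necessary — once you observe that any solution holomorphic at $0$ is given by its Taylor series and that the recurrence pins down that series from $c_0=1$, uniqueness is already established. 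One small stylistic note: in (c) it is worth saying explicitly that regularity of a solution at $z=0$ means it admits a convergent Taylor expansion there, so the recurrence argument really does exhaust all candidates; you gesture at this but it is the load-bearing step.
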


The following can be found in \cite[p.~241,\ 246,\ 245]{LS65}. For the second
theorem we use in our formulation that
$\arg (1-z)< \pi $ is equivalent to $z\in \C\setminus [1,\infty)$. 

\begin{theorem} The hypergeometric function $\hgfabc$
  extends holomorphically to the slit plane $\C \setminus [1,\infty)$,
  and for fixed $z\in \C\setminus [1,\infty)$, the
  function
\[ (\alpha,\beta,\gamma) \mapsto \frac{\hgfabc}{\Gamma (\gamma)}  \]
is an entire function of $\alpha,\beta$ and $\gamma$.
\end{theorem}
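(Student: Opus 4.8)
The plan is to prove the slightly stronger assertion that
$w(z;\alpha,\beta,\gamma):=\hgfabc/\Gamma(\gamma)$ extends to a function on
$(\C\setminus[1,\infty))\times\C^3$ which is holomorphic in $z$ and jointly
entire in $(\alpha,\beta,\gamma)$; the two claims of the theorem then follow by
fixing, respectively, the parameters and $z$. First I would treat the disc
$|z|<1$. Using $(\gamma)_k\,\Gamma(\gamma)=\Gamma(\gamma+k)$, the defining series
of ${}_2F_1$ rewrites as
\[ w(z;\alpha,\beta,\gamma)=\sum_{k=0}^\infty
   \frac{(\alpha)_k(\beta)_k}{\Gamma(\gamma+k)\,k!}\,z^k, \]
in which every summand is a polynomial in $(\alpha,\beta)$ times the entire
function $1/\Gamma(\gamma+k)$, hence holomorphic in $z$ and entire in
$(\alpha,\beta,\gamma)$. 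A routine estimate (bound $|(\alpha)_k|$ by a Pochhammer
symbol with real argument and apply Stirling to $\Gamma(\gamma+k)$) shows that
the $k$-th coefficient is bounded by $C(R)\,k^{2R}$ uniformly for
$(\alpha,\beta,\gamma)$ in the ball of radius $R$, so the series converges
uniformly on $\{|z|\le r\}\times C$ for every $r<1$ and every compact
$C\subseteq\C^3$. By the Cauchy integral formula / Weierstrass convergence
theorem in each variable, $w$ is jointly holomorphic there; this already settles
the statement on the disc $|z|<1$, in particular at $z=0$, and covers also the
degenerate parameter values where $\Gamma(\gamma)$ has a pole.

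Next I would propagate the $z$-variable to all of $\C\setminus[1,\infty)$ by
means of the hypergeometric differential equation. On $|z|<1$ the function $w$
solves $z(1-z)w''+(\gamma-(\alpha+\beta+1)z)w'-\alpha\beta\,w=0$ (part~(c) of the
lemma above, divided by the constant $\Gamma(\gamma)$), and in normal form this
equation has coefficients holomorphic on $(\C\setminus\{0,1\})\times\C^3$. Fix a
base point $z_0$ with $0<|z_0|<1$; by the first paragraph its Cauchy data
$(w(z_0),\partial_z w(z_0))$ are entire in $(\alpha,\beta,\gamma)$. Any
$z_1\in\C\setminus[1,\infty)$ with $z_1\neq0$ is joined to $z_0$ by a path inside
$(\C\setminus[1,\infty))\setminus\{0\}\subseteq\C\setminus\{0,1\}$, and analytic
continuation of the solution along such a path yields a holomorphic extension of
$w$ which, by the standard theorem on holomorphic dependence of solutions of
linear ODEs on parameters entering holomorphically, is again entire in
$(\alpha,\beta,\gamma)$. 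Single-valuedness on $(\C\setminus[1,\infty))\setminus\{0\}$
reduces to showing that the monodromy of $w$ around $z=0$ is trivial, since the
fundamental group of that region is generated by a loop about $0$; and this
monodromy is trivial precisely because $w$ is already holomorphic on the full
disc $|z|<1$. Gluing this single-valued continuation with the disc (on which both
agree) produces the desired holomorphic function on $\C\setminus[1,\infty)$.

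I expect the single-valuedness/monodromy step to be the only genuinely delicate
point: dropping the observation that this particular solution is unramified at
the regular singular point $z=0$, the ODE argument would only give holomorphy on
$\C\setminus((-\infty,0]\cup[1,\infty))$. As an alternative that bypasses
monodromy — and which I would record as a cross-check, since \cite{LS65} is cited
for this fact anyway — one may invoke Euler's integral representation
\[ \hgfabc=\frac{\Gamma(\gamma)}{\Gamma(\beta)\Gamma(\gamma-\beta)}
   \int_0^1 t^{\beta-1}(1-t)^{\gamma-\beta-1}(1-zt)^{-\alpha}\,dt,
   \qquad \Re\gamma>\Re\beta>0, \]
whose integrand never meets the branch cut of $\zeta\mapsto\zeta^{-\alpha}$ when
$z\notin[1,\infty)$, so the integral exhibits the extension and its joint
holomorphy on that parameter strip at once; the restriction $\Re\gamma>\Re\beta>0$
is then removed either through the contiguous relations or by analytic
continuation in $(\beta,\gamma)$ via the identity theorem, both sides being
holomorphic in those parameters away from the poles of the Gamma factors.
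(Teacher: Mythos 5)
The paper does not actually give a proof here; it simply cites Lebedev--Silverman for all three facts in the theorem. Your argument is a self-contained and essentially correct proof, so there is no ``paper's route'' to compare against beyond that citation.

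Your structure is sound. The first step, rewriting the series with $\Gamma(\gamma+k)$ in the denominator and estimating to get uniform convergence on $\{|z|\le r\}\times C$ for $r<1$ and compact $C\subseteq\C^3$, cleanly handles both the joint holomorphy at the degenerate $\gamma\in-\N_0$ and the base of the continuation. (A small quibble: your quoted coefficient bound $C(R)\,k^{2R}$ is not quite what Stirling gives; working it out one finds something closer to $C(R)\,k^{3R}$, but any polynomial bound, or even the cleaner ratio-test estimate $\bigl|\tfrac{(\alpha+k)(\beta+k)}{(\gamma+k)(k+1)}\bigr|\le\tfrac{(R+k)^2}{(k-R)(k+1)}\to1$, gives radius of convergence $\ge1$ uniformly, which is all you use.) The propagation step is the genuinely interesting part, and you correctly flag the delicate point: the ODE alone only hands you a single-valued extension on $\C\setminus((-\infty,0]\cup[1,\infty))$, and single-valuedness across $(-\infty,0)$ requires knowing the monodromy about $z=0$ is trivial. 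Your observation that this is automatic because the particular solution $\hgfabc$ is already unramified at the regular singular point $z=0$ (it is holomorphic on the full disc $|z|<1$) is exactly the right way to close that gap, using that $\pi_1\bigl((\C\setminus[1,\infty))\setminus\{0\}\bigr)\cong\Z$ is generated by a loop about $0$. The holomorphic dependence on $(\alpha,\beta,\gamma)$ then follows from standard holomorphic parameter-dependence of solutions of linear ODEs whose coefficients are entire in the parameters, applied together with the entirety of the Cauchy data $(w(z_0),w'(z_0))$ established in the first step. The Euler-integral alternative you record as a cross-check is a legitimate shortcut for the $z$-extension, though, as you note, it needs either the contiguous relations or some care to strip off the constraint $\Re\gamma>\Re\beta>0$; the ODE argument is cleaner precisely because it avoids that. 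Overall this is a correct proof that supplies what the paper merely cites.
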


In particular, the functions 
$\gamma \mapsto \hgfabc$ are meromorphic with simple poles at
$\gamma \in - \N_0$. 
Some more facts that we will use are collected in the following  theorem.

\begin{thm} \label{thm:gamma-limit} The following assertions hold:
  \begin{itemize}
  \item[\rm(i)] 
    If $\Re (\gamma - \alpha - \beta) >0$
   and $\gamma \not \in -\N_0$, then
   \[ \lim_{z\to 1} {}_2F_1(\alpha,\beta;\gamma;z)
          = \frac{\Gamma (\gamma)
            \Gamma (\gamma - \alpha-\beta )}{\Gamma (\gamma - \alpha)\Gamma(\gamma -\beta)},\]
        provided there exists an $0<\eps <\pi$ with
        $|\arg (1-z)| <\pi - \eps$.   
\item[\rm(ii)] 
  For $z \in \C \setminus [1,\infty)$, we have
\[\hgfabc= (1-z)^{\gamma - \alpha -\beta} \hgf (\gamma - \alpha ,\gamma - \beta; \gamma ; z).\]
\item[\rm(iii)] 
  If $\gamma =\alpha +\beta
 \not\in -\N_0$, then 
 \[\lim_{t\to 1-}\frac{{}_2F_1(\alpha ,\beta ;\gamma ;t)}
{-\log (1-t)}    
   = \frac{\Gamma (\alpha + \beta)}{\Gamma (\alpha ) \Gamma (\beta)} 
   = \frac{\Gamma (\gamma)}{\Gamma (\alpha ) \Gamma (\beta)}.\]
\item[\rm (iv)] If $\gamma -\alpha -\beta \not\in\mathbb{Z}$ and 
\[ C(\lambda ) =
  \frac{\Gamma (\gamma) \Gamma (\alpha + \beta -\gamma)}
  {\Gamma (\alpha )\Gamma (\beta )} \quad \mbox{ and } \quad
  D(\lambda) = \frac{\Gamma (\gamma) \Gamma (\gamma-\alpha -\beta )}
  {\Gamma (\gamma - \alpha )\Gamma (\gamma - \beta )},\]
then
\begin{align*}
	{}_2F_1(\alpha ,\beta ;\gamma ;z) & = 
	(1-z)^{\gamma -\alpha -\beta }\, C (\lambda) \cdot 
	{}_2F_1(\gamma -\alpha ,\gamma -\beta ;\gamma -\alpha -\beta +1;1-z)\\
	& \quad +D(\lambda ) \cdot {}_2F_1(\alpha ,\beta ;\alpha +\beta -\gamma +1;1-z).
\end{align*}
  \end{itemize}
\end{thm}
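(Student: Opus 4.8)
\emph{Plan of proof.}
All four assertions are classical facts about the Gau\ss{} hypergeometric function, so the main task here is organisational: I would arrange matters so that (i) and (iii) fall out of the two ``connection identities'' relating the behaviour of $\hgfabc$ at $z=0$ to its behaviour at $z=1$, citing \cite[Ch.~II]{Er53} and \cite[Ch.~9]{LS65} for the identities themselves. First I would record the Euler transformation (ii): it is standard (see \cite[Ch.~II]{Er53}), and for a self-contained argument one observes that $(1-z)^{\gamma-\alpha-\beta}\hgf(\gamma-\alpha,\gamma-\beta;\gamma;z)$ is holomorphic on $\C\setminus[1,\infty)$, solves the hypergeometric differential equation with parameters $(\alpha,\beta,\gamma)$ (it is one of Kummer's solutions), and takes the value $1$ at $z=0$, hence coincides with $\hgfabc$ by the characterisation of the latter as the solution of the hypergeometric equation that is regular at $0$ with value $1$ there (the Lemma above). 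So (ii) costs essentially nothing.

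Next I would invoke the $z\leftrightarrow 1-z$ connection formula in the non-resonant case $\gamma-\alpha-\beta\notin\Z$, which is precisely assertion (iv); this is classical, see \cite[\S 2.10]{Er53} or \cite[Ch.~9]{LS65}. Its content is that for $\gamma-\alpha-\beta\notin\Z$ the functions $\hgf(\alpha,\beta;\alpha+\beta-\gamma+1;1-z)$ and $(1-z)^{\gamma-\alpha-\beta}\hgf(\gamma-\alpha,\gamma-\beta;\gamma-\alpha-\beta+1;1-z)$ span the space of local solutions at $z=1$, and the connection coefficients work out to the stated $C(\lambda)$, $D(\lambda)$; I would take this from the references rather than redo the monodromy computation.

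Granting (ii) and (iv), assertion (i) is immediate in the generic case: assuming $\Re(\gamma-\alpha-\beta)>0$ and letting $z\to 1$ with $|\arg(1-z)|<\pi-\eps$, the sectorial restriction fixes the branch of $(1-z)^{\gamma-\alpha-\beta}$ near $z=1$ and forces it to $0$, while both ${}_2F_1$-series on the right of (iv) converge at their argument $1-z=0$ to the value $1$; hence the right-hand side of (iv) tends to $D(\lambda)=\Gamma(\gamma)\Gamma(\gamma-\alpha-\beta)/(\Gamma(\gamma-\alpha)\Gamma(\gamma-\beta))$, which is Gau\ss{}'s summation theorem. (For the remaining values $\gamma-\alpha-\beta\in\Z_{>0}$, not covered by (iv), one invokes either Gau\ss{}'s theorem directly or the resonant $0\leftrightarrow 1$ connection formula, in which the $(1-z)^{\gamma-\alpha-\beta}$-branch again vanishes in the limit.) For (iii), where $\gamma=\alpha+\beta$ is resonant and (iv) is unavailable, I would instead use the logarithmic connection formula at $z=1$ (\cite[\S 2.3.1]{Er53}, see also \cite[Ch.~9]{LS65}): as $z\to 1$ in the cut plane the leading singularity of $\hgf(\alpha,\beta;\alpha+\beta;z)$ is $-\tfrac{\Gamma(\alpha+\beta)}{\Gamma(\alpha)\Gamma(\beta)}\log(1-z)$, with a remainder that is $O\big((1-z)\log(1-z)\big)$, uniformly for $|\arg(1-z)|<\pi-\eps$; dividing by $-\log(1-z)$ and passing to the limit yields the stated value $\Gamma(\gamma)/(\Gamma(\alpha)\Gamma(\beta))$.

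The one point that genuinely needs care is the treatment of the multivalued factors: the hypotheses $|\arg(1-z)|<\pi-\eps$ in (i) and (iii) must be used consistently both to pin down the branches of $(1-z)^{\gamma-\alpha-\beta}$ and of $\log(1-z)$ and to make the remainder terms uniform, since without the sectorial restriction the limits are false. Otherwise there is no real obstacle: the proof amounts to citing \cite{Er53} and \cite{LS65} and substituting, with no new estimates required.
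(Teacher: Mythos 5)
Your proposal is correct, and it takes a mildly different (more structured) route than the paper. The paper's proof is pure citation: (i) is referred to \cite[p.~244 (9.3.4)]{LS65} (the Gau\ss{} summation result, deduced there from an integral representation), (ii) to \cite[p.~248]{LS65}, (iii) to \cite[(15.4.21)]{DLMF}, and (iv) to \cite[Cor.~2.3.3]{AAR99}, with nothing further. By contrast you give a self-contained argument for (ii) via the characterisation of $\hgfabc$ as the regular solution of the hypergeometric ODE with value $1$ at $z=0$, and you derive (i) as a corollary of (iv) rather than invoking the integral representation. This extra structure is a genuine if modest gain: it makes explicit that, once the non-resonant $0\leftrightarrow 1$ connection formula (iv) is in place, the Gau\ss{} limit (i) follows from nothing more than $(1-z)^{\gamma-\alpha-\beta}\to 0$ in the sector and $\hgf(\,\cdot\,;\,\cdot\,;\,\cdot\,;0)=1$. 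The cost is that you must separately dispose of the resonant cases $\gamma-\alpha-\beta\in\Z_{>0}$, which (iv) excludes; you do flag this, and the resonant $0\leftrightarrow 1$ formula (whose $(1-z)^m$ and $(1-z)^m\log(1-z)$ terms still vanish as $z\to 1$ when $m\geq 1$) does close the gap, though it would be slightly cleaner to note that for these values the series converges absolutely and uniformly up to $z=1$ so Abel/continuity already gives the sectorial limit. Your treatment of (iii) via the logarithmic connection formula is exactly the content of the \cite[(15.4.21)]{DLMF} reference the paper uses, so there you agree with the paper in substance.
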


\begin{prf} (i) is \cite[p.~244 (9.3.4)]{LS65}  which relies on
  the general integral representation \cite[(9.1.6)]{LS65}. 

  \nin (ii) is  \cite[p.~248]{LS65}. Here
  we use that $\arg (1-z) < \pi$
  if and only if $z\in \C\setminus [1,\infty)$
  (see also \cite[(15.4.20,15.4.22]{DLMF}).

  \nin (iii)  We use \cite[(15.4.21)]{DLMF} with $\gamma =\alpha + \beta$.

\nin (iv) This is \cite[Cor.~2.3.3]{AAR99}.
\end{prf}

If $\Re (\gamma - \alpha -\beta)<0$, then we can
use the hypergeometric identity (ii) to evaluate limits as in (i) because 
$\gamma - (\gamma - \alpha) - (\gamma - \beta) = - \gamma +\alpha + \beta$ 
has positive real part.

\begin{lemma} \mlabel{lem:jump-hypergeo}
  We have the following:
\begin{itemize}
\item[\rm (i)]  If $n \in \Z$ and
  $\alpha +\beta \pm n\not\in -\N_0$, then with $\alpha^\prime = \alpha -n$ and $\beta^\prime = \beta -n$ we have 
\[{}_2F_1 (\alpha ,\beta ; \alpha + \beta -n ; z)= (1-z)^{-n } {}_2F_1(\alpha^\prime , \beta^\prime   ;\alpha^\prime + \beta^\prime + n;z).\] 
\item[\rm (ii)] Let $n\in \N$, $\psi(z) = \Gamma'(z)/\Gamma(z)$,
  $a(n)  = \frac{\Gamma(\alpha' + \beta' +n)}{\Gamma(\alpha' +n)
\Gamma(\beta'+n)}$
and 
\[ F(n; z ) = {}_2F_1(\alpha' ,\beta' ;\alpha' + \beta' +n;z).\]
Then we have for $|z-1| < 1$ and $z\not\in [1,2)$ the identity
\begin{align*}
F(n; z) &= a (n)\Big(\sum_{k=0}^{n-1} \frac{(-1)^k(n-k-1)! (\alpha')_k(\beta')_k}{k!} (1-z)^k\\
\quad & + (z-1)^n\sum_{k=0}^\infty \frac{(\alpha' +n)_k(\beta'+n)_k}{(n+k)!k! }
          \Big[\psi (k+1) + \psi (n+k+1)\\
&\qquad\qquad\qquad\qquad\qquad - \psi(\alpha' +n +k ) - \psi(\beta' +n+k) -\log (1-z)\Big](1-z)^k\Big)
\end{align*} 
    \end{itemize}
 \end{lemma}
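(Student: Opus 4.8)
\begin{prf}
For~(i) the plan is to invoke Euler's transformation, namely Theorem~\ref{thm:gamma-limit}(ii): applying it with $\gamma = \alpha+\beta-n$ gives $\gamma-\alpha-\beta = -n$, $\gamma-\alpha = \beta-n = \beta'$, $\gamma-\beta = \alpha-n = \alpha'$ and $\gamma = \alpha'+\beta'+n$, hence $\hgf(\alpha,\beta;\alpha+\beta-n;z) = (1-z)^{-n}\hgf(\beta',\alpha';\alpha'+\beta'+n;z)$, and the symmetry of $\hgf$ in its first two arguments yields the claim; the hypotheses $\alpha+\beta\pm n\not\in-\N_0$ guarantee that the $\gamma$-parameters occurring are admissible and that all functions are holomorphic on $\C\setminus[1,\infty)$.

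For~(ii) the plan is to realize the logarithmic connection formula at $z=1$ as a confluent limit of the generic formula in Theorem~\ref{thm:gamma-limit}(iv). Put $w=1-z$, so $|w|<1$ and, since $z\not\in[1,2)$, $w\not\in(-1,0]$, and apply~(iv) with $(\alpha,\beta,\gamma) = (\alpha',\beta',\alpha'+\beta'+n+\eps)$ for small $\eps\neq0$, so that now $\gamma-\alpha-\beta = n+\eps\not\in\Z$. With $C_\eps = \frac{\Gamma(\alpha'+\beta'+n+\eps)\Gamma(-n-\eps)}{\Gamma(\alpha')\Gamma(\beta')}$ and $D_\eps = \frac{\Gamma(\alpha'+\beta'+n+\eps)\Gamma(n+\eps)}{\Gamma(\alpha'+n+\eps)\Gamma(\beta'+n+\eps)}$ this gives
\[ \hgf(\alpha',\beta';\alpha'+\beta'+n+\eps;z)
 = w^{\,n+\eps}C_\eps\,\hgf(\beta'+n+\eps,\alpha'+n+\eps;n+1+\eps;w)
 + D_\eps\,\hgf(\alpha',\beta';1-n-\eps;w). \]
Since $\alpha'+\beta'+n\not\in-\N_0$, the left-hand side is holomorphic in $\eps$ near $0$, hence converges as $\eps\to0$; the plan is to compute that limit from the right-hand side (we focus on $n\ge1$; the case $n=0$, where $\Gamma(n+\eps)$ also has a pole, is the classical $\gamma=\alpha'+\beta'$ situation and is handled the same way).

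The first $n$ terms $k=0,\dots,n-1$ of $\hgf(\alpha',\beta';1-n-\eps;w)$ converge as $\eps\to0$, and together with $D_\eps\to(n-1)!\,a(n)$ and the identity $(1-n)_k^{-1} = (-1)^k(n-1-k)!/(n-1)!$ they reproduce exactly the polynomial part $a(n)\sum_{k=0}^{n-1}\frac{(-1)^k(n-k-1)!(\alpha')_k(\beta')_k}{k!}w^k$. The tail $k\ge n$ of that series diverges like $1/\eps$ (the factor $j-n-\eps$ with $j=n$ in $(1-n-\eps)_k$ vanishes at $\eps=0$), and the first summand also diverges like $1/\eps$ because $\Gamma(-n-\eps)$ has a simple pole at $\eps=0$; these two singularities cancel, as they must since the sum is regular. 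The finite remainder is produced by the first-order expansions $w^{\,n+\eps} = w^n(1+\eps\log w+O(\eps^2))$ --- the source of the $-\log(1-z)$ term --- and of the $\Gamma$-prefactors and the Pochhammer ratios $(\alpha'+n+\eps)_k,(\beta'+n+\eps)_k$ --- the source of the combination $\psi(k+1)+\psi(n+k+1)-\psi(\alpha'+n+k)-\psi(\beta'+n+k)$; reindexing the tail by $k\mapsto k+n$ then produces the stated series.

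The main obstacle is the bookkeeping of the $1/\eps$ cancellation: one must match the shifted tail of $\hgf(\alpha',\beta';1-n-\eps;w)$ term-by-term against the $\eps$-expansion of $w^{n+\eps}C_\eps\hgf(\beta'+n+\eps,\alpha'+n+\eps;n+1+\eps;w)$ and keep track of every $O(\eps)$ contribution; this is routine but lengthy. Alternatively, since the identity is the classical logarithmic connection formula, one may cite \cite[\S2.10]{Er53} or \cite[15.8.10]{DLMF} after matching notation, or verify it a posteriori by checking that its right-hand side solves the hypergeometric differential equation and agrees with $F(n;z)$ near $z=1$.
\end{prf}
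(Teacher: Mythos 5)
Part~(i) is exactly the paper's own one-line argument (apply Theorem~\ref{thm:gamma-limit}(ii) with $\gamma=\alpha+\beta-n$ and use symmetry of $\hgf$ in the first two arguments), so there is nothing to add there.

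For part~(ii) the paper merely cites \cite[p.~315]{LS65}, whereas you propose deriving the logarithmic connection formula as a confluent limit $\eps\to 0$ of Theorem~\ref{thm:gamma-limit}(iv) applied with $\gamma=\alpha'+\beta'+n+\eps$. That is indeed the standard route, and your reduction of the finite sum $k\le n-1$ to the displayed polynomial part — via $D_\eps\to(n-1)!\,a(n)$ and $(1-n)_k^{-1}=(-1)^k(n-1-k)!/(n-1)!$ — checks out. The gap is in the part you declare ``routine but lengthy'': you simply assert that the $O(\eps)$ bookkeeping ``produces the stated series,'' but if you carry it through, the $\log(1-z)$-coefficient comes from the $C_\eps$-term, whose $\Gamma$-factors are $\Gamma(\alpha'+\beta'+n+\eps)\Gamma(-n-\eps)/(\Gamma(\alpha')\Gamma(\beta'))$; after cancelling the $1/\eps$ singularity against the $k\ge n$ tail, the surviving prefactor on the $(z-1)^n\log(1-z)$-series is $\Gamma(\alpha'+\beta'+n)/(\Gamma(\alpha')\Gamma(\beta'))$, not $a(n)=\Gamma(\alpha'+\beta'+n)/(\Gamma(\alpha'+n)\Gamma(\beta'+n))$, and these agree only for $n=0$. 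This is also what one reads off \cite[15.8.10]{DLMF} or Abramowitz--Stegun 15.3.11, and it can be verified directly: with $\alpha'=1$, $\beta'=2$, $n=1$ one has $\hgf(1,2;4;z)=\tfrac{6}{z^2}-\tfrac{3}{z}-\tfrac{6(z-1)}{z^3}\log(1-z)$, so the true $\log$-coefficient is $-6(z-1)/z^3$, while the displayed formula (with $a(1)=3$ and $\sum_k\tfrac{(2)_k(3)_k}{(k+1)!\,k!}(1-z)^k=z^{-3}$) gives $-3(z-1)/z^3$. So the lemma as printed appears to carry a prefactor error on the second sum, and by stopping short of the actual $\eps$-expansion your sketch does not detect that it cannot terminate in the stated identity.
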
 
 \begin{proof} (i) follows from Theorem \ref{thm:gamma-limit}(ii)
   by taking $\gamma = \alpha  + \beta - n$.

   \nin (ii) is \cite[p. 315]{LS65}.
 \end{proof}

\section{The complex $\cosh$-function}
\mlabel{sec:bap}

\begin{rem} \mlabel{rem:cosh} (The complex $\cosh$ function)
  We consider the holomorphic function
  \[ \gamma \: \C^\times \to \C,\quad \gamma(z) = \frac{1}{2}(z + z^{-1}) \]
  and note that $\cosh = \gamma \circ \exp \: \C \to \C.$ 
  This function is surjective and satisfies
  \begin{equation}
    \label{eq:cosh1}
 \cosh(z + \pi i) = -\cosh(z) \quad \mbox{ and } \quad
 \cosh\Big(z + \frac{\pi i}{2}\Big) = i \sinh(z).
\end{equation}
The criticial points of $\cosh$ are the zeros of $\sinh$, i.e.,
$\pi i \Z$, and in these points we have
$\cosh( \pi i k) = (-1)^k$. Hence the critical values of $\cosh$
are $\pm 1$. We also conclude from \eqref{eq:cosh1} that
$\cosh(\cS_\pi)$ is invariant under multiplication with $-1$, so that
\[  \cosh(\oline{\cS_\pi}) = \C.\]
On the boundary we have
\begin{equation}
  \label{eq:cosh2}
 \cosh(\R) = [1,\infty) \quad \mbox{ and } \quad 
 \cosh(\R + \pi i) = (-\infty,-1],
\end{equation}
and
\[ \cosh  \: \cS_\pi \to \C \setminus ((-\infty,-1] \cup [1,\infty)) \]
is biholomorphic.

From
\[ \cosh(x + i y) = \cosh(x) \cos(y) + i \sinh(x) \sin(y) \]
we obtain 
\[ \Re\cosh(x + i y) = \cosh(x) \cos(y)\quad \mbox{ and }  \quad
  \Im \cos(x + i y) = \sinh(x) \sin(y). \]
\end{rem}

\end{document}